\documentclass[11pt,letterpaper]{article}
\usepackage{cmap}
\usepackage[utf8x]{inputenc}
\usepackage[T1]{fontenc}
\usepackage{amsthm}
\usepackage{listings}
\usepackage{multirow}
\usepackage{amsfonts}
\usepackage{parskip}
\usepackage{enumitem}

\usepackage{caption}
\usepackage{subcaption}

\usepackage[pdftex]{graphicx} %
\usepackage[english]{babel} %
\usepackage[colorlinks,citecolor=blue,linkcolor=red]{hyperref}
\hypersetup{pdftitle={Random Quantum Circuits Beyond Moment Matching},pdfauthor={Shih-Han Hung}}
\usepackage{calc} %
\usepackage{enumitem} %
\setlist{nosep}

\usepackage[margin=0.94in]{geometry}

\usepackage[all]{nowidow} %
\usepackage[protrusion=true,expansion=true]{microtype} %

\usepackage{svg}
\usepackage{amsmath}
\usepackage{pdfpages}

\usepackage{tikz}
\usetikzlibrary{quantikz2}

\newcommand{\eq}[1]{\hyperref[eq:#1]{(\ref*{eq:#1})}}
\renewcommand{\sec}[1]{\hyperref[sec:#1]{Section~\ref*{sec:#1}}}
\newcommand{\thm}[1]{\hyperref[thm:#1]{Theorem~\ref*{thm:#1}}}
\newcommand{\lem}[1]{\hyperref[lem:#1]{Lemma~\ref*{lem:#1}}}
\newcommand{\cor}[1]{\hyperref[cor:#1]{Corollary~\ref*{cor:#1}}}
\newcommand{\dfn}[1]{\hyperref[dfn:#1]{Definition~\ref*{dfn:#1}}}
\newcommand{\rmk}[1]{\hyperref[rmk:#1]{Remark~\ref*{rmk:#1}}}
\newcommand{\prop}[1]{\hyperref[prop:#1]{Proposition~\ref*{prop:#1}}}

\theoremstyle{definition}
\newtheorem{theorem}{Theorem}[section]
\newtheorem{lemma}[theorem]{Lemma}
\newtheorem{definition}[theorem]{Definition}
\newtheorem{proposition}[theorem]{Proposition}
\newtheorem{corollary}[theorem]{Corollary}

\newtheorem{remark}[theorem]{Remark}

\renewcommand{\ket}[1]{\vert #1\rangle}
\renewcommand{\bra}[1]{\langle #1\vert}
\newcommand{\Id}[0]{I}
\DeclareMathOperator{\op}{op}
\renewcommand{\i}[0]{\mathrm{i}}
\renewcommand{\epsilon}[0]{\varepsilon}
\newcommand{\M}[0]{\mathcal{M}}

\DeclareMathOperator{\diag}{diag}

\title{Random Quantum Circuits Beyond Moment Matching}
\author{Shih-Han Hung\footnote{Department of Electrical Engineering and Center for Quantum Science and Engineering, National Taiwan University. Email: shihhanh@ntu.edu.tw.}}

\date{}

\newcommand{\bit}[0]{\{0,1\}}
\DeclareMathOperator*{\Exp}{\mathbb{E}}
\DeclareMathOperator*{\Var}{\mathrm{Var}}
\DeclareMathOperator*{\Sd}{\mathrm{Sd}}
\renewcommand{\d}[0]{\mathrm{d}}
\DeclareMathOperator{\Beta}{Beta}
\DeclareMathOperator{\Dir}{Dir}
\DeclareMathOperator{\Uniform}{Uniform}
\newcommand{\poly}[0]{\mathrm{poly}}
\newcommand{\Haar}[0]{\mathrm{Haar}}
\newcommand{\R}[0]{\mathbb{R}}

\newcommand{\e}[0]{e}
\newcommand{\dd}[0]{\mathrm{d}}

\begin{document} %

\pagenumbering{roman} %
\pagestyle{empty}
\maketitle
\thispagestyle{empty}

\begin{abstract}
Random quantum circuits aim to efficiently reproduce the statistical properties of ideal random quantum evolution. One approach is to construct approximate unitary designs, which match the moments of Haar-random unitaries up to a prescribed order with controlled error. We establish quantitative guarantees for how accurately these designs reproduce the distribution of each output probability. We show that, for every strong $\varepsilon$-approximate unitary $k$-design on $n$ qubits, the distribution of each output probability is within $O(k^{-1/2}(1+k/2^n)^{-1/2}+\varepsilon)$ in Kolmogorov distance of the Porter--Thomas distribution, a beta distribution with parameters $1$ and $2^n-1$. This bound is optimal up to constant factors, implying that a substantially better uniform bound requires additional structure. We then show that a mild form of local invariance suffices. We prove that any such design that is invariant under unitary translations acting on a single qubit approximates Porter--Thomas to total variation distance $O(k^{-1/2}+\varepsilon)$. Every strong $\varepsilon$-approximate unitary $k$-design invariant on $\log k+O(1)$ qubits achieves error $2^{-\Omega(k)}+O(\varepsilon)$ in Kolmogorov distance and $2^{-\Omega(k)}+O(\varepsilon\log(2/\varepsilon))$ in total variation distance. These results identify the distributional accuracy guaranteed by moment matching and the structure that improves it. As an application, we show that the Shannon entropy of the output distribution of a strong $\varepsilon$-approximate $k$-design is concentrated around its Haar value $n-\Theta(1)$: its mean differs from the Haar mean by $O(k^{-1/4}+\varepsilon^{1/2})$ and its standard deviation is $O(2^{-n/2}+k^{-1/8}+\varepsilon^{1/4})$. Consequently, a trusted device sampling such circuits generates $n-O(1)$ bits of entropy per sample, which makes them useful for random number generation. 
\end{abstract}

\newpage
{
  \hypersetup{linkcolor=black}
  \tableofcontents
}
\newpage
\pagestyle{plain}
\pagenumbering{arabic} %

\section{Introduction}\label{sec:introduction}

After more than three decades of theoretical and experimental research in quantum information science, we are now in a new era in which practical quantum hardware can solve certain computational problems that are widely believed to be intractable for classical computers \cite{aruteQuantumSupremacyUsing2019,zhongQuantumComputationalAdvantage2020,wuStrongQuantumComputational2021,madsenQuantumComputationalAdvantage2022,aaronsonCertifiedRandomnessQuantum2023,decrossComputationalPowerRandom2025a,liuCertifiedRandomnessUsing2025a}. \ 

Among the approaches to leveraging the power of existing quantum hardware, a random circuit is an essential building block useful in near-term feasible quantum experiments \cite{aruteQuantumSupremacyUsing2019,wuStrongQuantumComputational2021,aaronsonCertifiedRandomnessQuantum2023,decrossComputationalPowerRandom2025a,liuCertifiedRandomnessUsing2025a,emersonScalableNoiseEstimation2005,knillRandomizedBenchmarkingQuantum2008,elbenMixedStateEntanglementLocal2020,elbenRandomizedMeasurementToolbox2023,brydgesProbingRenyiEntanglement2019,huangPredictingManyProperties2020,zhaoFermionicPartialTomography2021,movassaghHardnessRandomQuantum2023,boulandComplexityVerificationQuantum2019,morvanPhaseTransitionsRandom2024,fisherRandomQuantumCircuits2023,nahumOperatorSpreadingRandom2018,cotlerFluctuationsSubsystemEntropies2022,choiPreparingRandomStates2023,cotlerEmergentQuantumState2023}. \ 
Despite the fact that a practical random circuit must be statistically different from a truly random unitary---one distributed according to the Haar measure on the unitary group---researchers often draw intuitions from the latter to get a sense of its statistical properties in the most idealized manner. \ 
The study of random circuits is hence deeply connected to random matrix theory, whose mathematical framework provides powerful tools for their analysis. \ 

The demand for using a random circuit in near-term quantum experiments \cite{webbCliffordGroupForms2016,zhuMultiqubitCliffordGroups2017,aruteQuantumSupremacyUsing2019,huangPredictingManyProperties2020,zhaoFermionicPartialTomography2021,aaronsonCertifiedRandomnessQuantum2023,schusterRandomUnitariesExtremely2025,cuiUnitaryDesignsNearly2025} and the mathematical foundations in random matrix theory \cite{Fyodorov2005} have guided a line of research on constructing practical random circuits. 
However, a simple counting argument shows that implementing a Haar random unitary is prohibitively expensive in terms of both gate count and circuit depth. \ 
Hence for these experiments to work on a near-term device, it is essential to build a practical random circuit that uses as few quantum resources as possible, and that is provably close to certain statistical properties of a Haar random unitary. \ 

A key concept enabling this approximation is the $\epsilon$-approximate unitary $k$-design, which is provably sufficient for many applications \cite{brandaoEfficientQuantumPseudorandomness2016,brandaoLocalRandomQuantum2016,brandaoModelsQuantumComplexity2021,haferkampRandomQuantumCircuits2022a,schusterRandomUnitariesExtremely2025,cuiUnitaryDesignsNearly2025}. \ 
Let $\tilde\Phi$ be the quantum channel formed by $k$ parallel applications of the quantum circuit and $\Phi$ be the corresponding quantum channel with a Haar random unitary. \
A random circuit forms an $\epsilon$-approximate unitary $k$-design if the channels satisfy the following closeness condition:
\begin{align}\label{eq:unitary-design}
    (1-\epsilon)\Phi\leq \tilde\Phi \leq (1+\epsilon)\Phi,
\end{align}
where the ordering $\Phi\leq\Psi$ between two linear maps means that $\Psi-\Phi$ is a completely positive map.
This notion is also called a unitary design with relative error $\epsilon$, in contrast to one defined with respect to additive error. We use the former notion as in \eq{unitary-design} throughout this work.
More precisely, we work with the \emph{strong} variant of this notion \cite{schusterStrongRandomUnitaries2025}, in which the channel also acts with $k$ copies of the complex conjugate $U^*$, so that it controls algorithms that query both $U$ and $U^\dag$; see \dfn{strong-design} and \rmk{design}.
In tasks such as quantum state tomography and the demonstration of quantum computational advantage, a small value of $k$ already suffices, suggesting that these experiments can be implemented with shallow quantum circuits. \ 
Although applications of unitary $k$-designs with $k>3$ remain comparatively sparse \cite{cuiUnitaryDesignsNearly2025}, the study of improving the efficiency for asymptotically increasing $k$ remains active \cite{harrowRandomQuantumCircuits2009,brandaoLocalRandomQuantum2016,harrowApproximateUnitaryTDesigns2023,haferkampRandomQuantumCircuits2022a,chenIncompressibilitySpectralGaps2025,schusterRandomUnitariesExtremely2025,schusterStrongRandomUnitaries2025,parelladilmeStrongUnitaryDesigns2026}. \ 

A unitary $k$-design reproduces the moments of a Haar random unitary that are of degree $k$ in the matrix entries and of degree $k$ in their complex conjugates. \
The moments of a single entry, under the Haar measure, are integrals against the marginal distribution of that entry. \
While previous works have shown that practical constructions can reproduce these moments with small resource overhead, very little is known about the marginal distribution of these matrix elements themselves. \ 

This question becomes particularly relevant in the context of demonstrating quantum computational advantage. \ 
The linear cross-entropy benchmark (XEB), a standard benchmarking protocol for random circuit sampling experiments, presupposes that the output probabilities follow the Porter--Thomas distribution. \
Indeed, it has been a folklore conjecture, supported by empirical observations, that sampling from a random circuit reproduces outcome probabilities distributed according to the Porter--Thomas distribution, a beta distribution with parameters $(1,d-1)$ for a $d$-dimensional Haar random unitary \cite{boixoCharacterizingQuantumSupremacy2018}.
To see why it is beta distributed, recall how we sample a Haar random vector in a $d$-dimensional Hilbert space: \
one draws $d$ independent and identically distributed samples from the standard complex normal distribution to form a random vector, then normalizes it to unit norm. \ 
This random vector $\ket{\psi}$ forms a probability vector $p$ whose entries $p(x)=|\langle x|\psi\rangle|^2$ follow a beta distribution with shape parameters $(1,d-1)$, which approaches an exponential distribution of mean $1/d$ as $d$ grows large \cite{donatimartinRandomTruncationsHaar2013,aaronsonCertifiedRandomnessQuantum2023}. \ 

To make the question precise, recall that the convergence of probability measures can have different meanings in modern probability theory by considering different metrics between probability measures. 
In particular, a sequence of random variables converges to its limit in distribution if the distribution function converges to its limit pointwise \cite[Chapter~14.1, Definition~1]{fristedtModernApproachProbability1997d}. That is, at every continuity point $x\in\mathbb R$ of the limiting distribution function, the sequence of distribution functions at $x$ converges to its limit. 
This notion of convergence already guarantees convergence of expectations for every fixed bounded continuous test function \cite[Chapter~14.4, Proposition~7]{fristedtModernApproachProbability1997d}. 
Despite the guarantee on ``well-behaved'' test functions, the convergence rate can be very slow at particular $x$ in the support. 

A quantitative notion is uniform convergence of the distribution functions, which states that the Kolmogorov distance between the $k$th random variable in the sequence and its limit converges to zero as $k\to\infty$. 
Here, the Kolmogorov distance between two probability distributions $\mu,\nu$ is 
\begin{align}
    d_\mathrm{K}(\mu,\nu) = \sup_{z\in\mathbb R}\Big|\Pr_{X\sim\mu}[X\leq z]-\Pr_{Y\sim\nu}[Y\leq z]\Big|, 
\end{align}
i.e., the largest difference between the distribution functions.
This notion implies convergence in distribution and provides a rate that is uniform in $x\in\mathbb R$; for a continuous limit such as the Porter--Thomas distribution, pointwise convergence of the distribution functions already implies uniform convergence (P\'olya's theorem).
Hence, the two notions differ only quantitatively: what a Kolmogorov bound adds is an explicit rate of convergence that holds uniformly in $x$. 

One of the strongest notions regarding convergence of probability measures is the convergence in total variation, which implies the previous two notions we discussed. 
In particular, for two probability distributions $\mu,\nu$, the total variation distance is defined as 
\begin{align}
    d_{\mathrm{TV}}(\mu,\nu) = \sup_A \Big|\Pr_{X\sim\mu}[X\in A] - \Pr_{Y\sim\nu}[Y\in A]\Big|
\end{align}
where the supremum is over measurable sets.
A sequence converges to its limit in total variation if the total variation distance between the $k$th random variable and its limit converges to zero as $k\to\infty$. 
It also guarantees that the expectation of any bounded measurable function converges to its limit, uniformly over functions bounded by a fixed constant, as opposed to only bounded continuous functions under convergence in distribution. 
Conversely, convergence in Kolmogorov distance does not imply convergence in total variation, since one can construct a sequence of discrete measures that converges to a continuous distribution in Kolmogorov distance, while their total variation distance from it does not decrease.

Given that a unitary $k$-design matches the first $k$ moments of a Haar random unitary, a natural question arises: \ 
does a unitary $k$-design converge to Porter--Thomas as $k\to\infty$? \ 
This is a special form of a fundamental question: when are two distributions that agree on their first $k$ moments necessarily close to each other? \ 
This question is closely related to the moment problem, which studies when a probability distribution is uniquely determined by its moments. \ 

Indeed, in general, for a continuous distribution, one can always construct another discrete distribution that matches the first few moments through the framework of Gaussian quadrature and the theory of orthogonal polynomials. 
The two distribution functions must look different at the jumps. \ 
This sets a fundamental obstacle toward proving the closeness between a continuous distribution and its moment-matching counterpart. 
The expectation of functions that may not be approximated by a low-degree polynomial can be drastically different. \ 

Taking a step back, to our knowledge, no quantitative guarantee was known on how well the output probabilities of an efficient random circuit approximate their Haar counterparts in a sense beyond moment matching and anti-concentration. \ 
In this work, we address the fundamental problem, and show that strong unitary designs do approximate every single entry marginally to its Haar counterpart in Kolmogorov distance and, under a mild local invariance property, also in total variation distance. 
As an application of our new understanding of approximate designs, we show that the expectation and standard deviation of the Shannon entropy of the output distribution converge at a dimension-independent rate, which supports the use of these random quantum circuits for random number generation with a trusted device.

\subsection{Our results}

In this work, we provide a quantitative analysis of how fast an approximate design converges to Porter--Thomas, both in Kolmogorov distance and in total variation distance. 
First, we identify the closeness between the distribution functions in Kolmogorov distance for every strong $\epsilon$-approximate unitary $k$-design.
Throughout, $\mu(k,\epsilon)$ denotes the distribution of a strong $\epsilon$-approximate unitary $k$-design (a distribution on the unitary group), and for every $x,y\in\bit^n$, $\nu(k,\epsilon)$ denotes the distribution of the single entry $|\langle x|U|y\rangle|^2$ for $U\sim\mu(k,\epsilon)$; $\nu_{\Haar(2^n)}=\Beta(1,2^n-1)$ is its Haar counterpart. 
We also write $(k,k)$-design for a strong $k$-design, since the design condition involves $k$ copies of $U$ and $k$ copies of $U^*$.

\begin{theorem}[Informal statement of \thm{approximate-design-upper-bound}]\label{thm:ks-generic}
    For any strong $\epsilon$-approximate $(k,k)$-design on $n$ qubits and every fixed pair $x,y$, $d_{\mathrm K}(\nu(k,\epsilon), \nu_{\Haar(2^n)}) 
        = O\left(k^{-1/2}(1+k/2^n)^{-1/2}+\epsilon\right)$.
\end{theorem}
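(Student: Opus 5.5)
We reduce the statement to a classical moment problem on $[0,1]$. Fix $x,y$ and set $p=|\langle x|U|y\rangle|^2$. Then $p^j=\langle x|^{\otimes j}U^{\otimes j}|y\rangle\langle y|^{\otimes j}(U^\dagger)^{\otimes j}|x\rangle^{\otimes j}$, so $p^j$ is the expectation of a positive operator under the strong $(j,j)$-channel.

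\textbf{Step 1: moment comparison.} Because $(1-\epsilon)\Phi\le\tilde\Phi\le(1+\epsilon)\Phi$ in the completely positive order, every nonnegative polynomial $q$ of degree at most $k$ that can be written as an expectation of a positive operator satisfies
\begin{align*}
(1-\epsilon)\,\Exp_{\Haar}q(p)\le\Exp_{\mu(k,\epsilon)}q(p)\le(1+\epsilon)\,\Exp_{\Haar}q(p).
\end{align*}
Polynomials of the form $q(p)=\sum_{j\le k}c_jp^j$ with $c_j\ge0$ are of this type, and so are squares of polynomials in $U$ entries. More generally, any $q$ that is nonnegative on $[0,1]$ admits a representation $\sigma_0(p)+p(1-p)\sigma_1(p)$ with SOS terms (Markov--Luk\'acs). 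Each such term is realized by an operator $A^\dagger A\ge0$ on the $k$-copy space (using $1-p=\sum_{x'\neq x}|\langle x'|U|y\rangle|^2$), so the CP ordering applies.

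\textbf{Step 2: Chebyshev--Markov--Stieltjes sandwich.} For each threshold $z$, choose polynomials $q_-\le\mathbf 1_{[0,z]}\le q_+$ on $[0,1]$ of degree at most $k$. Taking them to be the extremal (Gauss--Radau) quadrature polynomials associated with $\Beta(1,d-1)$, $d=2^n$, gives
\begin{align*}
\Exp_{\Haar}(q_+-q_-)=w_k(z),
\end{align*}
where $w_k(z)$ is the Christoffel function $\lambda_{\lfloor k/2\rfloor}(z)$ of the beta weight. Combining this with Step~1 yields
\begin{align*}
|F_\mu(z)-F_{\Haar}(z)|\le w_k(z)+\epsilon\,\Exp_{\Haar}q_+ \le w_k(z)+O(\epsilon).
\end{align*}
Here $\Exp_{\Haar}q_+\le F_{\Haar}(z)+w_k(z)\le 2$.

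\textbf{Step 3: bound the Christoffel function uniformly.} We need
\begin{align*}
\sup_{z}\lambda_m(z)=O\!\left(m^{-1/2}(1+m/d)^{-1/2}\right).
\end{align*}
The orthogonal polynomials for $\Beta(1,d-1)$ are Jacobi polynomials $P_j^{(d-2,0)}(1-2z)$. We use $\lambda_m(z)^{-1}=\sum_{j<m}\hat P_j(z)^2$ with explicitly known norms. For $m\ll d$, rescale $z=t/d$: the weight tends to $e^{-t}$, the Jacobi polynomials tend to Laguerre $L_j(t)$, and the classical bound $\lambda_m\asymp m^{-1/2}\,e^{-t}\sqrt{t}\,$-type (Christoffel function of the Laguerre weight near the bulk) gives $\sup\lambda_m=O(m^{-1/2})$. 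This regime is worst near $t\approx1$, where the mass is of order 1. For $m\gtrsim d$, the weight $(1-z)^{d-2}$ on $[0,1]$ behaves like a Jacobi weight, and the Nevai--Totik estimates $\lambda_m(z)\asymp \frac{\sqrt{z}+1/m}{m}\,w(z)$ give a supremum of order $1/\sqrt{md}$.

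This uniform Christoffel-function estimate, across the crossover $m\sim d$ and with constants independent of $d$, is the main obstacle. I would first try direct Christoffel--Darboux plus Plancherel--Rotach-free bounds, namely the explicit formula for $\sum_j\hat P_j(0)^2$ combined with a Markov-type comparison between $\lambda_m(z)$ and $\lambda_m(0)$ shifted. Failing that, I would cite the uniform Jacobi bounds of Nevai--Erd\'elyi--Magnus. Substituting $m=\lfloor k/2\rfloor$ then gives the stated $O\!\left(k^{-1/2}(1+k/2^n)^{-1/2}+\epsilon\right)$.
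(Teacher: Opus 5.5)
Your Steps 1 and 2 are a correct reduction and match the paper's in substance. The paper also brackets the indicator between a nonnegative Hermite-type majorant and a minorant that differs from it by the square of a Lagrange polynomial, whose Haar expectation is the Christoffel function. It then transfers this through the relative-error condition. The paper works at the Gauss nodes of $\Beta(1,N-1)$ with degree $2k$, which a strong $(k,k)$-design controls via squares of degree-$k$ polynomials in $p$, and then extends by monotonicity; your pointwise version at degree $k$ costs only constants. One minor point: on the lower side you cannot apply the relative-error bound to $q_-$ itself, since it is not nonnegative. Use $1-q_-\ge 0$ or $q_-=q_+-h^2$ instead, which yields $w+O(\epsilon)$.

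The gap is Step 3, which carries the entire quantitative content of the statement and which you do not prove. The tools you propose do not deliver it:
\begin{itemize}
\item The Laguerre limit is asymptotic, so it gives no bound for finite $d$.
\item The Nevai--Totik estimates have constants depending on the Jacobi parameters, but here the parameter $d-2$ is unbounded and you need an inequality for every finite $m,d$ with absolute constants.
\item The uniform Erd\'elyi--Magnus--Nevai bounds are upper bounds on orthonormal Jacobi polynomials, i.e.\ lower bounds on $\lambda_m$, which is the wrong direction.
\end{itemize}
Your order in the regime $m\gtrsim d$ is also wrong: the supremum there is of order $\sqrt d/m$, not $1/\sqrt{md}$. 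By \thm{quadrature-weights-lower-bound}, the Gauss weights already reach $\Omega(\sqrt d/m)$ in that regime, so $1/\sqrt{md}$ is false for $d\le m\le d^2$.

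The missing idea is that the variational form makes $\lambda$ monotone under domination of measures. The law $\Beta(1,D)$ is dominated, exactly and for every $D$, by $e$ times the exponential measure of rate $D$ and by $\pi\sqrt{D/2}$ times the arcsine measure on $[0,1]$. For the arcsine measure, the Chebyshev kernel gives $\lambda_\ell\le 4/\ell$, hence $O(\sqrt D/\ell)$. For the exponential measure, three pieces give $\sup_u\lambda_\ell(u)=O(\ell^{-1/2})$ (\lem{exponential}), which transfers to $\Beta(1,D)$ by the scaling $u=Dx$:
\begin{itemize}
\item comparison with the arcsine measure on $[0,8\ell]$, where the density ratio is $O(\sqrt\ell)$;
\item a bound on the tail via Chebyshev growth;
\item the test polynomial $(t/u)^{\ell-1}$ for $u>8\ell$.
\end{itemize}
The minimum of the two bounds is $O(\sqrt{D/(\ell(\ell+D))})$, with no crossover analysis and no asymptotics needed.
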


Since this result applies to any approximate design, continuous or discrete, it cannot provide any guarantee on the closeness in total variation distance. 

The rate in \thm{ks-generic} cannot be improved in general.
The proof of the upper bound compares the distribution functions at the nodes of a Gauss quadrature rule for the Porter--Thomas distribution; by a separate construction, which embeds such a quadrature measure into a unitary ensemble, we show that a Kolmogorov distance proportional to the largest quadrature weight is attained by some design.

\begin{theorem}[Informal statement of \thm{approximate-design-lower-bound}]\label{thm:ks-generic-lower}
    For every $n\geq 1$, $k\geq 1$, and $\epsilon\in[0,1]$, there exists a strong $\epsilon$-approximate $(k,k)$-design on $n$ qubits for which the Kolmogorov distance of a fixed output-probability marginal is $\Omega\left(k^{-1/2}(1+k/2^n)^{-1/2}+\epsilon\right)$.
    In particular, the bound in \thm{ks-generic} is optimal up to constant factors in the worst case.
\end{theorem}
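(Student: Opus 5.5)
The lower bound is the maximum of two terms, so I would give two separate constructions: one with Kolmogorov distance $\Omega(\epsilon)$ and one with distance $\Omega(k^{-1/2}(1+k/2^n)^{-1/2})$. Since $\max(a,b)\ge (a+b)/2$, this yields the stated bound. Throughout, $d=2^n$, $p=|\langle x|U|y\rangle|^2$, and the target CDF of $\Beta(1,d-1)$ is continuous.

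\emph{The $\epsilon$ term.} Let $m$ be the median of $\Beta(1,d-1)$. Reweight Haar measure by the density
\[
f(U)=1+\epsilon\bigl(\mathbf 1[p\le m]-\mathbf 1[p>m]\bigr),
\]
which integrates to $1$ because $m$ is the median. Pointwise $1-\epsilon\le f\le 1+\epsilon$. The strong $k$-fold channel is $\tilde\Phi=\int f(U)\,\Phi_U\,\d U$, where each $\Phi_U$ (the map $X\mapsto (U^{\otimes k}\otimes U^{*\otimes k})X(\cdots)^\dag$) is completely positive. Hence $\tilde\Phi-(1-\epsilon)\Phi$ and $(1+\epsilon)\Phi-\tilde\Phi$ are integrals of CP maps against nonnegative weights, so the ensemble is a strong $\epsilon$-approximate $(k,k)$-design. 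The marginal satisfies $\Pr[p\le m]=(1+\epsilon)/2$, so the Kolmogorov distance is at least $\epsilon/2$.

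\emph{The $k$ term.} Write Haar measure as a mixture $\int \mu_q\,\nu_{\Haar(d)}(\d q)$ of the conditional laws $\mu_q$ of $U$ given $p=q$. Then replace the mixing law by the Gauss quadrature measure $\sum_j w_j\delta_{z_j}$ for $\Beta(1,d-1)$ with enough nodes (about $k/2+1$) to match all moments of degree $\le k$. The key claim is that for every polynomial $P$ of degree $(k,k)$ in the entries of $U,U^*$, the function $g_P(q)=\int P\,\d\mu_q$ is a polynomial in $q$ of degree at most $k$. Granting this, the new ensemble reproduces every $(k,k)$ moment exactly, so it is an exact strong design.

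To prove the claim I would realize $\mu_q$ concretely. Take the column $U|y\rangle=\sqrt q\,e^{\i\theta}|x\rangle+\sqrt{1-q}\,|\phi\rangle$, with $\theta$ uniform and $\phi$ Haar on $|x\rangle^\perp$, and complete the other columns by a Haar unitary on the orthogonal complement of that column. I would then check that phase invariance (under $\theta$, under phases of $\phi$, and under the right action of $U(d-1)$ on the remaining columns) kills every monomial except those with balanced powers. Balanced monomials contribute only terms of the form $q^a(1-q)^b$ with $a+b\le k$. The completion step, where the remaining columns depend on $q$ through the complement, is what I expect to be the main obstacle. I would try to handle it by writing the completion as $W R_q$, where $W$ is Haar on $U(d-1)$ and $R_q$ is a fixed Householder-type reflection whose entries are polynomial in $\sqrt q,\sqrt{1-q}$, and then averaging over $W$ and over the phases.

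Finally, the quadrature marginal is discrete while the target CDF is continuous. At the node carrying the largest weight $w_{\max}$, the discrete CDF jumps by $w_{\max}$, so one side of the jump misses the target CDF by at least $w_{\max}/2$. Gauss weights equal the Christoffel function $\lambda(z_j)$. I would lower bound $w_{\max}$ via Christoffel function asymptotics for the Jacobi weight $(1-t)^{d-2}$ on $[0,1]$, which behave like $\lambda(z)\asymp \sqrt{z(1-z)}\,(1-z)^{d-2}/k$ in the bulk.
\begin{itemize}
\item When $k\lesssim d$, the weight is effectively Laguerre on the scale $t\sim 1/d$, and a node at $dz\asymp 1$ gives $w\gtrsim k^{-1/2}$.
\item When $k\gtrsim d$, a node in the bulk gives $w\gtrsim d^{1/2}/k$.
\end{itemize}
Both cases combine to $\Omega(k^{-1/2}(1+k/d)^{-1/2})$. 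If precise asymptotics are awkward, I would instead use the extremal characterization $\lambda(z)=\min\{\int P^2:P(z)=1,\ \deg P<k\}$ and exhibit an explicit node near which no low-degree polynomial can concentrate.
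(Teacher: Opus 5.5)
Your proposal has a genuine gap at its central step: the lower bound $w_{\max}=\Omega(\sqrt{d/(k(k+d))})$ is not proved. That bound is the entire content of the optimality claim.

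\textbf{Why the Christoffel-asymptotics route does not close the gap.} The asymptotics you invoke are limit statements. The bulk formula $\lambda(z)\asymp w(z)\sqrt{z(1-z)}/k$ holds for a fixed Jacobi weight as $k\to\infty$. The Laguerre picture on scale $1/d$ holds for fixed $k$ as $d\to\infty$. You need a bound uniform over all pairs $(k,d)$, with the Jacobi parameter $d-2$ arbitrary relative to $k$. You also need to know that a node actually sits where you evaluate, which is itself a statement about zeros of Jacobi polynomials with a large parameter.

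Moreover, a lower bound on a Christoffel function is the hard direction of the variational formula. An upper bound needs only one test polynomial; this is how \thm{jacobi-weight-upper-bound} is proved. A lower bound needs control of every $P$ with $P(z)=1$. So your fallback, ``exhibit a node near which no low-degree polynomial can concentrate,'' restates the claim rather than proving it.

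\textbf{How the paper avoids this.} It never lower-bounds a Christoffel function; it counts nodes instead.
\begin{enumerate}
\item Bruns' inequality for the Legendre zeros, combined with the Dimitrov--Rafaeli monotonicity of $A_\ell(\alpha,\beta)(1-z_j^{(\alpha,\beta)})$ in $\beta$, gives $x_j\geq 2j^2/(27\ell(\ell+D))$ for the nodes of $\Beta(1,D)$.
\item Hence at most $T=O(\sqrt{\ell(\ell+D)/D})$ nodes lie in $[0,1/D]$.
\item Since $F(1/D)\geq 1-1/e$, the bracketing $F(x_{T+1})\leq S_{T+1}$ of \lem{markov-stieltjes} forces some weight to be at least $(1-1/e)/(T+1)$.
\end{enumerate}

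\textbf{The rest of your argument.} Your $\epsilon$-construction, reweighting Haar measure by $1\pm\epsilon$ on either side of the median, is correct. It is simpler than the paper's multiplicative perturbation of the quadrature weights, and taking the better of your two designs suffices.

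Your conditional laws $\mu_q$ are exactly the paper's induced ensemble $U(q)=(1\oplus L)W(q)(1\oplus R)$. Averaging over the phase of $\phi$ together with the right $\mathrm U(d-1)$ action does kill odd powers of $\sqrt{1-q}$. The paper's sign flip $M$ is the phase-$\pi$ case of this symmetry.

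You do undercount degrees, however. A strong $(k,k)$-design must reproduce $\Exp|f(U)|^2$ for every $f$ of degree $(k,k)$. That means all moments with $2k$ entries of $U$ and $2k$ of $U^*$, so the averaged polynomial has degree up to $2k$ in $q$. You therefore need $k+1$ nodes (exact to degree $2k+1$), not about $k/2+1$. As written, your construction gives only roughly a $(k/2,k/2)$-design. This is harmless for the $\Omega$-bound, since changing the number of nodes by a factor of two changes $\sqrt{d/(\ell(\ell+d))}$ only by a constant factor.
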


Consequently, in all regimes of $k$ relative to the dimension $2^n$ of the unitary matrix, no guarantee that holds uniformly over all strong approximate designs can improve on the rate $1/k$ up to a constant, ruling out a superpolynomial convergence rate without further assumptions (individual designs, such as the Haar measure itself, may of course do better). 

We then show that mild forms of local invariance suffice for stronger convergence.
A design is $m$-locally invariant if applying any fixed unitary $V$ on a fixed set of $m$ qubits after the circuit does not change the distribution of the unitary ensemble. 
This invariance introduces a strong smoothing effect to $\nu(k,\epsilon)$: without it the distribution $\nu(k,\epsilon)$ can come from a discrete design whose distribution function is discontinuous at its atoms. 

\begin{theorem}[{Informal statement of \thm{one-local-invariance}}]\label{thm:informal-one-local-invariance}
    For every $1$-locally invariant strong $\epsilon$-approximate $(k,k)$-design on $n$ qubits,
    $d_{\mathrm{TV}}(\nu(k,\epsilon),\nu_{\Haar(2^n)}) \leq O(k^{-1/2}+\epsilon)$.
\end{theorem}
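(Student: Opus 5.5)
The plan is to use the local invariance to write the output probability as an independent product $p = R\,T$ with $T$ uniform on $[0,1]$, and then to convert a Kolmogorov-type bound on $R$ into a total variation bound on $RT$, since multiplying by an independent uniform smooths out any atoms.

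\emph{Step 1 (factorization).} Let $j$ be the invariant qubit and write $x=(b,x')$ with $b$ the bit on qubit $j$. Set $\psi=U\ket{y}$. By $1$-local invariance, $U$ has the same law as $(V\otimes\Id)U$ with $V$ Haar on $U(2)$ and independent of $U$. Let $w=(\psi_{0x'},\psi_{1x'})\in\mathbb C^2$ and let $v$ be row $b$ of $V$, which is a Haar-random unit vector in $\mathbb C^2$. Then
\[
p=|\langle x|(V\otimes\Id)U|y\rangle|^2=|\langle v|w\rangle|^2=R\,T,\qquad R=\|w\|^2,
\]
where $T\sim\Beta(1,1)=\Uniform[0,1]$ is independent of $R$. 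The same identity holds for Haar, with $R_H\sim\Beta(2,2^n-2)$, and indeed $R_H T\sim\Beta(1,2^n-1)$.

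Here $R$ is the probability of outcome $x'$ on the other $n-1$ qubits after tracing out qubit $j$. It is therefore a degree-$1$ polynomial in $U\otimes U^*$, so the strong design condition makes the moments $\Exp R^m$, $m\le k$, match those of $R_H$ up to a factor $(1\pm\epsilon)$.

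\emph{Step 2 (reduction to a log-scale $L^1$ bound).} For a measurable $A\subseteq[0,1]$ define $\varphi_A(r)=\Pr[rT\in A]=|A\cap[0,r]|/r$. Then
\[
d_{\mathrm{TV}}(\nu,\nu_{\Haar})=\sup_A\bigl|\Exp\varphi_A(R)-\Exp\varphi_A(R_H)\bigr|.
\]
Integrating by parts with $\Delta=F_R-F_{R_H}$ gives $\Exp\varphi_A(R)-\Exp\varphi_A(R_H)=-\int_0^1\Delta(r)\varphi_A'(r)\,\dd r$. Since $\varphi_A'(r)=(1_A(r)-\varphi_A(r))/r$ satisfies $|\varphi_A'(r)|\le 1/r$, we get
\[
d_{\mathrm{TV}}\le\int_0^1\frac{|\Delta(r)|}{r}\,\dd r .
\]
So it suffices to bound the $L^1$ distance between the distribution functions of $\log R$ and $\log R_H$.

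\emph{Step 3 (bounding $\Delta$ pointwise).} I would reuse the Gauss-quadrature machinery from the proof of \thm{approximate-design-upper-bound}, now applied to $\Beta(2,2^n-2)$ with the $(1\pm\epsilon)$ moment matching. The Chebyshev--Markov--Stieltjes inequalities bound $|\Delta(r)|$ by the quadrature weight $\lambda_k(r)$ (the Christoffel function) of the node interval containing $r$, plus $O(\epsilon)$. After rescaling $r=s/2^n$, the weight behaves like $\lambda_k\approx s\,e^{-s}\cdot(\text{node spacing})$.

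The main obstacle is integrating $\lambda_k(r)/r$. The $O(\epsilon)$ term cannot simply be integrated against $1/r$. Two regions need separate treatment.
\begin{itemize}
\item Near $0$, I would not use the $\epsilon$ bound at all. Instead I would use the one-sided bound $F_R(r)\le(1+\epsilon)F_{R_H}(\xi_1)$, where $\xi_1$ is the first node. Together with $F_{R_H}(r)=O((2^nr)^2)$, this makes $\int_0^{\xi_1}|\Delta|/r$ of order $O(k^{-1})$.
\item In the tail, I would control both distributions by the matched $k$-th moment via Markov's inequality, which is exponentially small in $k$.
\end{itemize}
In the bulk, the factor $s$ in the Christoffel function cancels the $1/r$. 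Using the Laguerre-type spacing of order $\sqrt{s/k}$, I expect $\int\lambda_k(r)/r\,\dd r=O(k^{-1/2})$.

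The $\epsilon$ contribution over the bulk would be handled by treating the error as a signed measure of mass $O(\epsilon)$ absorbed into $\varphi_A$, rather than integrating it against $1/r$. This should give the claimed $O(k^{-1/2}+\epsilon)$.
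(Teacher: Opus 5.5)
Your Steps~1 and~2 are correct. They mirror \lem{block-factorization} and \lem{total-variation-df}; your inequality is essentially the $b\to 0$ limit of the latter. However, dropping the cutoff $b$ is exactly what breaks your near-zero step. A bound $F_R(r)\le c$ on $[0,\xi_1]$ says nothing about $\int_0^{\xi_1}F_R(r)\,r^{-1}\,\d r=\Exp\big[\ln(\xi_1/R)\,\mathbf 1_{\{R<\xi_1\}}\big]$. This quantity is not $O(k^{-1})$, and it can be infinite for a legitimate design. For $\epsilon>0$, mix the Haar measure with a small weight $\epsilon'$ on a fixed unitary sending $\ket{y}$ outside the block. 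The Haar moment operator dominates the rank-one moment operator of any fixed unitary up to a finite constant, so this mixture is a strong $\epsilon$-approximate $(k,k)$-design for $\epsilon'$ small. Then append a Haar gate on the invariant qubit as in \thm{locally-invariant-design-haar}. Now $R$ has an atom at $0$, so your right-hand side is $+\infty$, while the true distance is just $\epsilon'$. The paper keeps $b>0$ and charges $[0,b]$ only with the masses $F_R(b)+F_{R_H}(b)$, without the weight $1/r$. With $b=1/k$ in the scaled variable, both masses are $O(k^{-3/2})$, using $F_R(b)\le(1+\epsilon)\big(F_{R_H}(b)+\lambda(b)\big)$.

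The second gap is the $\epsilon$-term in the bulk. Node bracketing gives an error $\epsilon S_i\approx\epsilon F_{R_H}(r)$. On $1\le s\lesssim k$, before your Markov tail becomes small, this integrates against $\d s/s$ to order $\epsilon\log k$. Your ``signed measure of mass $O(\epsilon)$'' would require the law of $R$ to be $O(\epsilon)$-close in total variation to a law matching the Haar moments exactly, and the design condition does not provide that. The missing idea is to apply the design condition also to the nonnegative polynomial $1-L_x$, a majorant of $\mathbf 1_{\{t>x\}}$. This gives $|\Delta(x)|\le\epsilon\min\{F_{R_H}(x),1-F_{R_H}(x)\}+(1+\epsilon)\lambda(x)$ (\lem{pointwise-df-to-christoffel}). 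In the scaled variable, $\int_0^\infty\min\{F,1-F\}\,\d s/s\le\frac14+\Exp[NR_H]=\frac94$, so the $\epsilon$-term costs $O(\epsilon)$ everywhere and no tail argument is needed.

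Finally, the estimate that yields $k^{-1/2}$ is only heuristic in your plan. You need something like $\lambda(s)\le Ck^{-1/2}(s+1/k)$ for $s\le 1$ and $\lambda(s)\le Ck^{-1/2}s^{-2}$ for $s\ge 1$. This must hold uniformly in $N\ge 4$, including $k\gg N$, where Laguerre asymptotics are unavailable. Your node-interval version would also need control of $\ln(x_{i+1}/x_i)$. The paper gets a bound at every $x$ by taking the roots of the Christoffel minimizer $p_x$ as nodes (\lem{saturating-polynomial}). It then proves the estimate from the domination $f_\sigma(s)\le se^{-s/4}$ (\lem{pointwise-christoffel}). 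For $s\ge 1$ it uses the test polynomial $(u/s)\,r(u)$ with the uniform exponential bound of \lem{exponential}. For $s\le 1$ it uses the square of the arcsine minimizer, localized by \lem{localization}.
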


That is, local invariance on a single qubit already lifts the convergence to the total variation distance. 

Closeness in total variation controls every event concerning a single output probability: for every measurable $A\subseteq[0,1]$, the probability that the output probability lies in $A$ differs from its Haar value by at most $O(k^{-1/2}+\epsilon)$.
Likewise, the expectation of every bounded measurable function of a single output probability is reproduced up to this error times its supremum norm.

Local invariance on a single qubit is a property of the circuit ensemble rather than of the design notion, and it holds for every random circuit whose output layer consists of Haar random gates.
Indeed, consider an ensemble in which the last gate acting on a fixed qubit is drawn from the Haar measure on a block of qubits containing it, independently of the remaining gates.
By the left invariance of the Haar measure, applying any fixed unitary on that qubit after the circuit does not change the distribution of the ensemble, so the ensemble is $1$-locally invariant. 
For example, every brickwork circuit whose final layer consists of independent Haar random two-qubit gates is $1$-locally invariant on every qubit acted on by that final layer.

Many existing constructions of approximate unitary designs use Haar random two-qubit gates.
Local random circuits with Haar random two-qubit gates, in one dimension and on general architectures, form approximate designs \cite{harrowRandomQuantumCircuits2009,brandaoLocalRandomQuantum2016,hunterjonesUnitaryDesignsStatistical2019,brandaoModelsQuantumComplexity2021,haferkampRandomQuantumCircuits2022a,harrowApproximateUnitaryTDesigns2023,mittalLocalRandomQuantum2023,belkinApproximateTDesignsGeneric2024,chenIncompressibilitySpectralGaps2025}, as do the low-depth random circuits of \cite{schusterRandomUnitariesExtremely2025}, and recent constructions of strong unitary designs also use random circuits of Haar random two-qubit gates \cite{schusterStrongRandomUnitaries2025,parelladilmeStrongUnitaryDesigns2026}.
The layered constructions among them, such as brickwork circuits, end with a layer of Haar random two-qubit gates and are therefore $1$-locally invariant.
Sequential constructions, which apply one Haar random gate to a random pair of qubits in each step, need not be exactly $1$-locally invariant, but any approximate design can be made so by appending a Haar random gate on a single qubit, which preserves the strong design property.
We stress that most of these ensembles have only been shown to be designs in the standard sense, possibly with additive error, whereas \thm{informal-one-local-invariance} requires a strong design with relative error. 

Combining \thm{informal-one-local-invariance} with the strong design result of Schuster, Ma, Lombardi, Brand\~ao, and Huang \cite[Lemma~3]{schusterStrongRandomUnitaries2025}, we show that one-dimensional brickwork circuits of Haar random two-qubit gates approximate the Porter--Thomas distribution in total variation distance.
By their result, a brickwork circuit of depth $O(\log^7 k\,(nk+\log(1/\epsilon)))$ on $n$ qubits is a strong $\epsilon$-approximate $(k,k)$-design whenever $k\geq 2$ and $\epsilon\geq 8k^2/2^n$ (their bound is stated in terms of the total order $p+q$, which equals $2k$ for $(k,k)$-designs).
Since the circuit is $1$-locally invariant, the distribution of each output probability is within total variation distance $O(k^{-1/2}+\epsilon)$ of Porter--Thomas (\cor{brickwork}).
In particular, linear depth $O(n)$ suffices for any constant total variation distance, and, for every fixed $c$, depth $O(n^{2c+1}\log^7 n)$ suffices for total variation distance $n^{-c}$.

A natural question raised by \thm{informal-one-local-invariance} is whether local invariance on more qubits yields faster convergence.
We answer this question in the affirmative: strong designs that are locally invariant on $\log k+O(1)$ qubits converge to Porter--Thomas at a rate exponential in $k$. 
The proof is based on a Fourier analysis.
Local invariance allows us to insert an independent Haar random unitary on the invariant qubits without changing the ensemble, which multiplies the probability of a block of outcomes by an independent beta random variable.
This multiplication removes the high-frequency components of the distribution function and thereby smooths out the discontinuities. 
This yields the following bound in Kolmogorov distance.

\begin{theorem}[Informal statement of \thm{convergence-distribution} and \cor{locally-invariant-entry}]\label{thm:ks-locally-invariant}
Let $k$ be a positive integer, and let $m$ be such that $a=2^m\geq 2\pi k$ and $N=2^n>2a$.
    For an $m$-locally invariant strong $\epsilon$-approximate $(k,k)$-design on $n$ qubits, 
    $d_{\mathrm{K}}(\nu(k,\epsilon),\nu_{\Haar(2^n)}) \leq 2^{-\Omega(k)} + O(\epsilon)$.
\end{theorem}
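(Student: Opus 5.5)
Let $a=2^m$ and $N=2^n$; write $p=|\langle x|U|y\rangle|^2$ with $U\sim\mu(k,\epsilon)$. The plan follows the sketch given before the statement. By $m$-local invariance, $U$ and $(V\otimes I)U$ have the same distribution when $V$ is Haar on the $m$ invariant qubits and independent of $U$. Fix $y$ and the outcome bits $x''$ outside the invariant register. Put $\ket{\phi}=(I\otimes\bra{x''})U\ket{y}$, a vector in $\mathbb C^a$, and $W=\|\phi\|^2$, the probability of the whole block of $a$ outcomes. Then
\[
p \stackrel{d}{=} W\cdot B,\qquad B=|\langle x'|V|\hat\phi\rangle|^2\sim\Beta(1,a-1),
\]
with $B$ independent of $W$. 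The same identity holds under Haar, with $W_{\Haar}\sim\Beta(a,N-a)$. The problem therefore becomes comparing two scale mixtures of the same smooth kernel.

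The second step transfers the moment information to $W$. Since $W$ is a degree-$(1,1)$ polynomial in $U,U^*$, the strong $(k,k)$-design condition gives $\Exp[W^j]=(1\pm\epsilon)\Exp[W_{\Haar}^j]$ for all $j\leq k$. The same relative-error control holds for the positive combinations one gets by evaluating $\tilde\Phi$ on positive test operators. This lets me write the law of $W$ as $(1-\epsilon)$ times something that exactly matches Haar's first $k$ moments plus an $O(\epsilon)$ remainder, or equivalently bound each signed moment difference by $O(\epsilon)\Exp[W_{\Haar}^j]$. The $O(\epsilon)$ term in the final bound comes from here.

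The third step is the Fourier smoothing. For the Kolmogorov distance I will use Esseen's smoothing inequality on the variable $\log p=\log W+\log B$, which has the same Kolmogorov distance as $p$. The characteristic function then factorises:
\[
\Exp[e^{\i t\log p}]=\Exp[W^{\i t}]\cdot\Exp[B^{\i t}].
\]
The factor $\Exp[B^{\i t}]=\Gamma(1+\i t)\Gamma(a)/\Gamma(a+\i t)$ decays like $|t|e^{-\pi|t|/2}a^{-\i t}$. This kills frequencies $|t|\gtrsim T$ with an error of order $e^{-\Omega(T)}$.

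For $|t|\leq T$ I need $|\Exp[W^{\i t}]-\Exp[W_{\Haar}^{\i t}]|$ to be small. Here I use concentration: $W$ concentrates near $a/N$ with relative fluctuations of size $a^{-1/2}$. I expand $(W/(a/N))^{\i t}=\exp(\i t\log(1+\delta))$ as a polynomial in $\delta=WN/a-1$ of degree $\leq k$. The truncation error is controlled by $\Exp|t\delta|^{k+1}/(k+1)!$, which is at most $(C T/\sqrt a)^{k+1}\cdot$ polynomial factors. Both sides share these polynomial moments up to the $\epsilon$ relative error. Choosing $T\asymp k$ together with $a\geq 2\pi k$ makes every term $2^{-\Omega(k)}$, and the Esseen integral over $|t|\leq T$ contributes at most $\log T$ times that. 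Combining the low- and high-frequency pieces gives $2^{-\Omega(k)}+O(\epsilon)$.

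The main obstacle is this low-frequency step. The function $\log(1+\delta)$ is not entire near $\delta=-1$, so a naive Taylor bound needs tail control of $W$ near $0$ under the design. I would handle this by splitting on the event $|\delta|\leq 1/2$. Its complement has probability bounded by a $k$-th moment Markov inequality, which is transferable through the design, giving $(C/a)^{k/2}$ up to factorials, which is again $2^{-\Omega(k)}$ when $a\gtrsim k$. This condition is why the hypothesis asks for $a\geq 2\pi k$: it balances the needed cutoff $T$, set by the Gamma-ratio decay rate $\pi/2$, against the number of matched moments. The condition $N>2a$ keeps $W_{\Haar}$ a nondegenerate Beta variable.
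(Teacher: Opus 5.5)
Your architecture is the paper's: block factorization, logarithmic coordinates, the factorized characteristic function, a Taylor expansion of $(1+\delta)^{\i t}$, and a Markov-type removal of $\{\delta<-1/2\}$. The $2^{-\Omega(k)}$ part can be made to work. However, the argument as written does not deliver the $O(\epsilon)$ term.

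\emph{The transfer step.} Per-moment control $|\Exp W^j-\Exp W_{\Haar}^j|\le\epsilon\,\Exp W_{\Haar}^j$ is far too weak. $\Exp\delta^j$ is an alternating sum, and its Haar cancellations (which make it of order $(j/a)^{j/2}$) are lost. You are left only with $|\Exp\delta^j-\Exp\delta_{\Haar}^j|\le\epsilon\,\Exp(2+\delta_{\Haar})^j\approx\epsilon 2^j$, which summed against $\binom{\i t}{j}$ gives $\epsilon 2^{\Omega(k)}$. Your alternative, writing the law of $W$ as $(1-\epsilon)\rho+\epsilon\sigma$ with $\rho$ exactly moment matching, is not implied. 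The design condition gives $\Exp_{\mu(k,\epsilon)}=(1-\epsilon)\Exp_{\Haar}+\epsilon L$ only as functionals on low-degree polynomials, with $L$ positive on squares. It is not a mixture of laws, so it cannot be applied to $W^{\i t}$. What works is the whole-polynomial bound that the operator sandwich yields via Cauchy--Schwarz, $|\Exp q(X_k)-\Exp q(X)|\le\epsilon\sqrt{\Exp|q(X)|^2}$ (see \eq{relative-error-pq}), applied to $q=P_\ell-1$.

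\emph{The integration against $\d t/|t|$.} Even with that bound, integrating the $\epsilon$-term against $\d t/|t|$ over $|t|\le T$, as your ``$\log T$ times'' accounting does, gives at best $O(\epsilon\log k)$ when $T\asymp a\asymp k$. With the termwise bound $\sum_j|\binom{\i t}{j}|\sqrt{\Exp\delta^{2j}}$, the integrand is even exponentially large in $k$ at $|t|\asymp k$. The missing idea is to keep $|\varphi_{\ln Y}(t)|=\prod_{m=1}^{a-1}(1+t^2/m^2)^{-1/2}$ in the integrand for \emph{all} $t$. It cancels the growth of the Taylor coefficients exactly: $|t|^{-1}|\binom{\i t}{j}|\,|\varphi_{\ln Y}(t)|=j^{-1}\prod_{m=j}^{a-1}(1+t^2/m^2)^{-1/2}$, whose integral is $O(\sqrt a/j)$ by \lem{binomial-integral}. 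Summed against $\sqrt{\Exp\delta_{\Haar}^{2j}}\le(4j/(ea))^{j/2}$, this is $O(1)$, which is exactly where the paper's $O(\epsilon)$ comes from.

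Three smaller points also need repair.
\begin{enumerate}
\item \emph{Esseen cutoff.} With a finite cutoff, Esseen's inequality carries the term $24M/(\pi T)$, which is only $\Theta(1/k)$ for $T\asymp k$. The exponential rate needs $T\to\infty$. This is legitimate because the density of $\ln Y$ is at most $1$ and the integrals converge. The range $|t|>T$ is then bounded by $\frac{2}{\pi}\int_{|t|>T}|\varphi_{\ln Y}(t)|\,\d t/|t|$, which is $e^{-\Omega(T)}$ for $T\le a-1$. Note that $|\varphi_{\ln Y}|$ does not decay the way you state: it behaves like $\sqrt{2\pi|t|}\,e^{-\pi|t|/2}$ only for $|t|\ll\sqrt a$, and like $(a-1)!\,|t|^{1-a}$ for $|t|\gg a$.
\item \emph{Taylor remainder.} Because of the branch point at $\delta=-1$, the remainder of $(1+\delta)^{\i t}$ is $|\binom{\i t}{\ell}|\,|\delta|^\ell/(1+\min\{0,\delta\})$, not $|t\delta|^{\ell+1}/(\ell+1)!$. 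For bounded $t$ the coefficient is of order $|t|/\ell$, not $|t|^\ell/\ell!$. Consequently a split at $T=ck$ works only for a small absolute constant $c$.
\item \emph{Removing the bad event.} The removal of $\{\delta<-1/2\}$ must be done by a coupling (the paper's muted variable) so that the bound on the characteristic-function difference vanishes as $t\to 0$. Otherwise the $\d t/|t|$ integral diverges at the origin.
\end{enumerate}
The paper avoids choosing $T$ at all. It integrates the polynomial and remainder bounds, each carrying a factor $|\binom{\i t}{j}|$ with $j\ge 1$, against $|\varphi_{\ln Y}(t)|/|t|$ over all of $\mathbb R$.
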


Local invariance on $m$ qubits is a mild assumption when $m$ is small enough to allow for efficient sampling: as above, a circuit whose last layer contains a Haar random gate on $m$ qubits is automatically $m$-locally invariant on those qubits, and any approximate design can be made $m$-locally invariant by appending an $m$-qubit Haar random unitary. 
As analyzed in \sec{locally-invariant-properties}, a $d$-dimensional Haar random unitary can be replaced by an efficiently sampled circuit of $d^2(\log d + \log(1/\epsilon))^{O(1)}$ gates at the cost of an additional $O(\epsilon)$ in Kolmogorov distance for the squared magnitude of any fixed matrix element; since the resulting ensemble is discrete, this replacement preserves the Kolmogorov guarantee but not the total variation guarantee below.

By \thm{ks-locally-invariant} and the aforementioned smoothing effects from a locally invariant design, we can further lift the closeness to the stronger notion of convergence in total variation. 

\begin{theorem}[Informal statement of \thm{convergence-total-variation} and \cor{locally-invariant-entry}]\label{thm:tv-locally-invariant}
    Let $k$ be a positive integer, and let $m$ be such that $a=2^m\geq 2\pi k$ and $N=2^n>2a$.
    For an $m$-locally invariant strong $\epsilon$-approximate $(k,k)$-design on $n$ qubits, 
    $d_{\mathrm{TV}}(\nu(k,\epsilon),\nu_{\Haar(2^n)}) \leq 2^{-\Omega(k)} + O(\epsilon\log(2/\epsilon))$.
\end{theorem}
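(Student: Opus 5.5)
Our plan builds the total variation bound on the Kolmogorov bound of \thm{ks-locally-invariant}, with local invariance supplying the needed smoothness. Write $N=2^n$ and $a=2^m$. By $m$-local invariance we may replace $U$ by $(V\otimes I)U$ with $V$ an independent Haar unitary on the $m$ invariant qubits. Let $x=(x_1,x_2)$ with $x_1$ the $m$ invariant qubits, and define the block weight $W=\sum_{z\in\{0,1\}^m}|\langle z,x_2|U|y\rangle|^2$. Conditionally on $U$, the entry $|\langle x|VU|y\rangle|^2$ then equals $W\cdot B$, where $B\sim\Beta(1,a-1)$ is independent of $W$ (the Porter--Thomas law of a Haar vector in dimension $a$). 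The Haar counterpart has the same structure: $\nu_{\Haar(N)}$ is the law of $W_H B$ with $W_H\sim\Beta(a,N-a)$ independent of $B$, since Beta variables compose multiplicatively in this way. Hence $\nu(k,\epsilon)$ and $\nu_{\Haar(N)}$ are the laws of $WB$ and $W_HB$, i.e.\ multiplicative convolutions of the two block-weight laws with a common smooth kernel.

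Write $F_W$ and $F_{W_H}$ for the distribution functions of $W$ and $W_H$. The densities of $WB$ and $W_HB$ at $t$ are $\Exp[\,f_B(t/W)/W\,]$ and the same expression with $W_H$, where $f_B(s)=(a-1)(1-s)^{a-2}$ on $[0,1]$. Integrating by parts in $w$ gives
\[
2\,d_{\mathrm{TV}}=\int_0^1\Bigl|\int (F_W(w)-F_{W_H}(w))\,\partial_w\bigl[f_B(t/w)/w\bigr]\,\d w\Bigr|\,\d t .
\]
It therefore suffices to bound $\sup_w|F_W(w)-F_{W_H}(w)|$ together with the total variation of the kernel $w\mapsto f_B(t/w)/w$, integrated over $t$. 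The proof of \thm{convergence-distribution} (via the Fourier argument) already controls the distribution of the block weight $W$ through its first $k$ moments, giving $\sup_w|F_W-F_{W_H}|\leq 2^{-\Omega(k)}+O(\epsilon)$. I would reuse that bound directly for $W$; the design condition controls the moments of $W$ as a degree-$(k,k)$ polynomial in the entries.

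The main obstacle is the kernel estimate. The function $w\mapsto f_B(t/w)/w$ is unimodal-ish, so its total variation in $w$ is about $2\sup_w f_B(t/w)/w\approx 2a/t$ for $t$ bounded away from $0$. This gives $\int_0^1(a/t)\,\d t$, which diverges logarithmically near $t=0$. To handle it, I would split $[0,1]$ at a threshold $\delta$. For $t\leq\delta$ there is no cancellation available, so I bound the contribution by the probability mass of each law on $[0,\delta]$; this mass is controlled by the Kolmogorov distance plus the Haar mass $O(N\delta)$ on $[0,\delta]$. For $t>\delta$ the kernel bound contributes $O(a\log(1/\delta))$ times the Kolmogorov error of $W$. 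Working in the rescaled variable $Nt$ removes the apparent factor $a$: the relevant range of $W$ is concentrated near $a/N$, so after normalization the kernel variation is $O(\log(1/\delta'))$ in natural units. Choosing $\delta'\approx\epsilon$ yields $2^{-\Omega(k)}+O(\epsilon\log(2/\epsilon))$, where the exponential term absorbs its own logarithmic factor $\log(2^{\Omega(k)})=O(k)$. Finally, \cor{locally-invariant-entry} transfers the result to every fixed pair $x,y$, which matches \thm{tv-locally-invariant}.
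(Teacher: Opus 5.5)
The step that fails is your claim that the proof of \thm{convergence-distribution} gives $\sup_w|F_W(w)-F_{W_H}(w)|\leq 2^{-\Omega(k)}+O(\epsilon)$ for the block weight itself. That theorem bounds $d_{\mathrm K}(X_kY,XY)$, i.e., the distance \emph{after} multiplication by the independent $Y\sim\Beta(1,a-1)$, and this multiplication is indispensable. In logarithmic coordinates the design controls the characteristic function of $\ln W$ only at moderate frequencies, through a degree-$k$ Taylor expansion of $(1+z)^{\i t}$ whose error grows with $|t|$. It is the decay of $\varphi_{\ln Y}$ that suppresses the uncontrolled high frequencies.

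For the unsmoothed block weight an exponential bound is false. The $(k+1)$-node Gauss quadrature law of $\Beta(a,N-a)$ matches all moments up to order $2k+1$ and has an atom of mass at least $1/(k+1)$. By \lem{kolmogorov-distance-continuous-discrete} it is therefore at Kolmogorov distance at least $1/(2(k+1))$ from $\Beta(a,N-a)$. Adapting the embedding of \sec{convergence-lower-bound} to a block realizes this law as the block weight of an exact design: take first column $(\sqrt X\,u,\sqrt{1-X}\,v)$ with independent uniform unit vectors $u\in\mathbb C^a$, $v\in\mathbb C^{N-a}$, completed by a Haar-random basis of its orthogonal complement. Appending a Haar unitary on the invariant qubits (\thm{locally-invariant-design-haar}) makes the design $m$-locally invariant without changing $W$, since $V\otimes I$ maps the block to itself. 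This is consistent with \rmk{block-kolmogorov}, which only offers $O(\max_i w_i+\epsilon)$ for blocks.

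Your estimate $|\int\Delta\,\partial_wK_t\,\d w|\leq\sup|\Delta|\cdot\|\partial_wK_t\|_{L^1}$ discards all cancellation between the signed measure $\mu_W-\mu_{W_H}$ and the smooth kernel. So it cannot certify anything better than order $1/k$, however you choose the cutoffs. A telltale sign is that your argument never uses $a\geq 2\pi k$, which is exactly what drives the exponential rate.

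The missing idea is to compare the two laws with the smoothing kernel already inside, at the level of densities. The paper does this in four steps:
\begin{enumerate}
\item It mutes $Z=NX/a-1$ and $Z_k=NX_k/a-1$ on $\{Z<-1/2\}$ and $\{Z_k<-1/2\}$ to stay away from the branch point of $(1+z)^{\i t}$. This costs $2^{-\Omega(k)}$, by \cor{absolute-moment} and the design's control of $\Exp[Z_k^{2k}]$.
\item It bounds the difference of the characteristic functions of $\ln(1+\tilde Z_k)$ and $\ln(1+\tilde Z)$ at frequency $t$ by $(\epsilon+2^{-\Omega(k)})\sum_{j=1}^{k-1}|\binom{\i t}{j}|\,\Exp[Z^{2j}]^{1/2}$, plus a Taylor remainder of order $|\binom{\i t}{k}|(4k/(ea))^{k/2}$.
\item It multiplies this by $|\varphi_{\ln Y}(t)|=|\binom{\i t+a-1}{a-1}|^{-1}$, which makes the product integrable with an $O(1)$ integral (\lem{integral-bound}). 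Fourier inversion then gives a sup-norm bound $\gamma_k=O(\epsilon)+2^{-\Omega(k)}$ on the difference of the densities of $\ln(\tilde X_kY)$ and $\ln(\tilde XY)$.
\item It integrates this bound over a window of length $2R$, and bounds the mass outside the window by the Haar tail $e^{-R}$ plus the Kolmogorov bound of \thm{convergence-distribution}, which is now legitimately applied to the smoothed variables. This gives $2R\gamma_k+4e^{-R}+O(\epsilon)+2^{-\Omega(k)}$. Choosing $R=\ln(2/\gamma_k)$ and using the subadditivity of $u\mapsto u\ln(2/u)$ yields $O(\epsilon\log(2/\epsilon))+2^{-\Omega(k)}$.
\end{enumerate}
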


At $\epsilon=0$, the second term is interpreted as zero. 
The dependence on $\epsilon$ is slightly weaker than in \thm{ks-locally-invariant}; the logarithmic factor arises from the truncation used to pass from pointwise closeness of the densities to closeness in $L^1$ norm.

As an application of the aforementioned results, we prove that strong unitary designs can be used to generate random numbers. 
Fix an input $y\in\bit^n$, and for every $x\in\bit^n$ let $P_x(U) = |\bra{x} U \ket{y}|^2$ be the probability of sampling $x$ given $U$.
The Shannon entropy of the output distribution is defined as 
    $S(U) = -\sum_{x\in\bit^n} P_x(U) \log P_x(U)$.
\sec{random-number} relates the mean and the standard deviation of the entropy to the closeness of marginal distributions. 

\begin{theorem}[Informal statement of \thm{shannon-entropy-mean} and \thm{shannon-entropy-variance}]\label{thm:informal-entropy}
    For $U\sim\mu(k,\epsilon)$, if the marginal distribution of every output probability $P_x(U)$ is $\eta$-close to its Haar counterpart in Kolmogorov distance, then the expected Shannon entropy differs from its Haar value by $O(\eta^{1/2})$. 
    In addition, if the law of every nonnegative linear combination $sP_x+tP_{x'}$ of two output probabilities is $\delta$-close to its Haar counterpart in Kolmogorov distance, then the standard deviation of the Shannon entropy differs from its Haar value by $O(\delta^{1/4})$.
    For $k\geq 2$, one can take $\delta=O(k^{-1/2}+\epsilon)$.
\end{theorem}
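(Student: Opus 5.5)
We bound the mean and the standard deviation separately. Throughout, write $N=2^n$ and $h(p)=-p\log p$, so that $S(U)=\sum_x h(P_x(U))$.

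\emph{Mean.} By linearity, $\Exp S=\sum_x \Exp h(P_x)$, so the mean deviation is bounded by
\[
\sum_x \bigl|\Exp_\nu h-\Exp_{\Haar} h\bigr| .
\]
For each $x$ we use the integration-by-parts identity
\[
\Exp h(P)=\int_0^1 h'(p)\,\Pr[P>p]\,\d p ,
\]
which holds because $h(0)=h(1)=0$. The difference of the two expectations is then $\int_0^1 h'(p)\,(F_{\Haar}(p)-F_\nu(p))\,\d p$. The factor $|h'(p)|=|\log p+1|$ is only logarithmically singular, but a crude bound $\eta\int|h'|$ loses a factor $N$ after summing over $x$. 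We therefore split the integral at a threshold $\tau/N$.
\begin{itemize}
\item On $[\tau/N,1]$ we use Kolmogorov closeness together with tail control. The second moment matches Haar up to $(1+\epsilon)$ (since $k\ge 2$), so the tail of $\nu$ beyond $\tau/N$ has mass $O(\tau^{-2})$ (Chebyshev/Markov, using the design moments). Combined with $|F_\nu-F_{\Haar}|\le\min(\eta,\text{tail})$, the $\log$-weighted contribution is $O\bigl(N^{-1}(\eta\tau+\tau^{-1})\log N\bigr)$ per $x$.
\item On $[0,\tau/N]$ we use instead that $h$ is Lipschitz-like up to a logarithm. The $\log N$ factors cancel against the normalization $\sum_x P_x=1$: write $h(p)=p\log(1/p)=p\log N - p\log(Np)$, so $\sum_x\Exp h(P_x)=\log N-\sum_x\Exp g(NP_x)/N$ with $g(u)=u\log u$ identical in both cases. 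It therefore suffices to compare $\Exp g(NP_x)$, which is a dimension-free problem for the rescaled variable $u=NP_x$.
\end{itemize}
After this reduction, split $g$ at $u=\tau$. The part with $u\le\tau$ contributes at most $\eta\int_0^\tau|g'|=O(\eta\tau\log\tau)$. The part with $u>\tau$ is bounded by the second moments, giving $O(\tau^{-1})$. Optimizing at $\tau\approx\eta^{-1/2}$ gives $O(\eta^{1/2})$ up to a logarithm. To remove the logarithm, I would bound the $u>\tau$ tail using the third or fourth moment when $k\ge 2$ permits it, or simply accept the log.

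\emph{Variance.} Expand
\[
\Var S=\sum_{x,x'}\mathrm{Cov}\bigl(h(P_x),h(P_{x'})\bigr).
\]
The diagonal terms are controlled as above. For the off-diagonal terms we need $\Exp[h(P_x)h(P_{x'})]$, which is a functional of the joint law of $(P_x,P_{x'})$. The hypothesis only controls laws of the nonnegative combinations $sP_x+tP_{x'}$, i.e.\ one-dimensional projections. The plan is to approximate $g(u)g(v)$, in the rescaled variables, by a smooth function. By a Cramér--Wold/Fourier-type mixture argument it can be written as an integral over $(s,t)$ of one-variable functions $\phi(su+tv)$, for instance via the polarization identity
\[
uv=\tfrac12\bigl((u+v)^2-u^2-v^2\bigr)
\]
applied after approximating $\log$ by a Laplace integral,
\[
\log u=\int_0^\infty \frac{e^{-r}-e^{-ru}}{r}\,\d r ,
\]
which turns products of logarithms into exponentials $e^{-ru-r'v}$. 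Each exponential is a monotone function of $ru+r'v$, so its expectation is controlled by $\delta$ via integration by parts, exactly as in the mean step. The truncations at small and large $r$ and at large $u,v$ each cost a power, and two nested truncation optimizations degrade $\delta$ to $\delta^{1/2}$ for the variance. Taking the square root, $\bigl|\Sd S-\Sd_{\Haar}S\bigr|\le\sqrt{|\Var S-\Var_{\Haar}S|}$ then gives $O(\delta^{1/4})$.

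The main obstacle is that the sum is over $N^2$ pairs. The covariance differences must therefore be shown to be $O(\delta^{1/2}/N^2)$ each in the original scale, i.e.\ $O(\delta^{1/2})$ after rescaling by $N$ per variable. This again requires using $\sum_x P_x=1$ to cancel the leading $\log N$ terms before estimating. The Haar $\Var S=O(1/N)$ accounts for the $2^{-n/2}$ term in the abstract.

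Finally, we check $\delta=O(k^{-1/2}+\epsilon)$. The variable $sP_x+tP_{x'}=\langle\psi|\bigl(s\ket{x}\bra{x}+t\ket{x'}\bra{x'}\bigr)|\psi\rangle$ is a quadratic form of the output state. Its moments up to order $k$ match Haar within relative error $\epsilon$, so the same Gauss-quadrature argument as in \thm{ks-generic} applies to this one-dimensional variable. Its Haar law (a mixture of Beta-type variables) has bounded density after rescaling, and the quadrature weights are $O(k^{-1/2})$.
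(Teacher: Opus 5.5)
The main gap is in the variance step, specifically the reduction to pure exponentials. The hypothesis $\delta$ only controls expectations of bounded functions with interval superlevel sets of a single combination $sP_x+tP_{x'}$ with $s,t\ge 0$. You therefore need to write the truncated product $g(u)g(v)$ as a mixture of such functions, with explicitly bounded total weight. Your mechanism does not produce this. After expanding the logarithms by Frullani's integral you are left with terms $uv\,e^{-ru-r'v}$. Polarizing $uv$ gives terms such as $(u+v)^2e^{-ru-r'v}$, which depend on two different directions whenever $r\neq r'$. A Fourier/Cram\'er--Wold expansion fails too: it needs mixed-sign directions $(s,-t)$, which $\delta$ does not cover, and oscillatory test functions, whose Kolmogorov control degrades with the frequency. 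So neither the claimed $O(\delta^{1/2})$ for the variance nor the $N^2$-pair bookkeeping you flag as the obstacle is actually established.

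The idea you are missing is a smooth truncation that absorbs the prefactor $x$ exactly. Since $\frac{x}{u+Nx}=\frac1N-\frac{u}{N(u+Nx)}$ and $\frac1{u+Nx}=\int_0^\infty e^{-s(u+Nx)}\,\d s$,
\[ x\ln\Big(1+\frac{B}{Nx}\Big)=\int_0^B\frac{x\,\d u}{u+Nx}=\frac BN-\frac1N\int_0^\infty w_B(s)\,e^{-sNx}\,\d s,\qquad w_B(s)=\int_0^B u\,e^{-su}\,\d u\ge 0, \]
with $\int_0^\infty w_B=B$. Write $S=\log(N/B)+G_B-R_B$ with $G_B=c(B-\Lambda/N)$ and $c=1/\ln 2$. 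Then $\Lambda^2$ is a \emph{nonnegative} mixture, of total mass $N^2B^2$, of the monotone functions $e^{-N(sP_x+tP_{x'})}$, each reproduced within $2\delta$. This gives $|\Var_{\mu(k,\epsilon)}[G_B]-\Var_\mu[G_B]|\le 6\delta c^2B^2$ with no separate per-pair covariance estimate. The remainder uses $x\ln(1+Nx/B)\le Nx^2/B$ and has standard deviation $O(1/B)$ by Cauchy--Schwarz and fourth moments; this is where $k\ge 2$ enters. The $L^2$ triangle inequality with $B\asymp\delta^{-1/4}$ then gives $O(\delta^{1/4})$.

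Your Laplace idea is repairable in exactly this way. Using $\sum_x u_x=N$, one has $\sum_x u_x\ln u_x=\int_0^\infty r^{-2}\sum_x(e^{-ru_x}-e^{-r})\,\d r$. The weight $w_B$ is obtained from $r^{-2}=\int_0^\infty u e^{-ru}\,\d u$ by cutting the $u$-integral at $B$, which makes this non-integrable weight summable.

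Two smaller points follow.

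\emph{The mean.} Your outline (rescale by $N$, cut at $\tau$, Kolmogorov on the bounded part, moments on the tail) matches the paper's. However, centering the logarithm at $u=1$ costs a logarithm in both pieces. The truncated part is $O(\eta\tau\log\tau)$, including the jump at $\tau$. The second-moment tail is $O(\tau^{-1}\log\tau)$, not $O(\tau^{-1})$. So as written you only get $O(\sqrt\eta\log(1/\eta))$. Your fourth-moment fix does work when $k\ge 2$: the tail becomes $O(\tau^{-3}\log\tau)$, and $\tau=\eta^{-1/4}$ gives $O(\eta^{3/4}\log(1/\eta))=O(\eta^{1/2})$. But the mean statement also covers $k=1$. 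The paper removes the logarithm for every $k\ge 1$ by subtracting $\log(N/B)$ instead of $\log N$. Then $u\log(B/u)\mathbf 1[u\le B]$ has sup norm $B/(e\ln 2)$ and interval superlevel sets, while $u\log(u/B)\mathbf 1[u>B]\le u^2/(eB\ln 2)$, giving $O(\eta B+1/B)$ directly.

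\emph{The bound $\delta=O(k^{-1/2}+\epsilon)$.} Your quadrature sketch is the right idea, but a bounded rescaled density is not what makes the Christoffel function, and hence the weights, $O(k^{-1/2})$. The paper uses a domination argument instead. The Haar density of $P_x+\theta P_{x'}$, $\theta\in[0,1]$, is at most $2$ times that of $\Beta(2,N-2)$ when $\theta\ge 1/2$, and at most $2$ times that of $\Beta(1,N-1)$ when $\theta\le 1/2$. Both beta laws have uniformly $O(k^{-1/2})$ Christoffel functions by comparison with the exponential measure.
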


The Haar value of the expected Shannon entropy is $n-\Theta(1)$, more precisely $n-(1-\gamma)/\ln 2+o(1)$ with $\gamma$ the Euler--Mascheroni constant \cite{wolpertEstimatingFunctionsProbability1995,boixoCharacterizingQuantumSupremacy2018}, so \thm{informal-entropy} shows that the output distribution of a strong approximate design carries $n-O(1)$ bits of entropy in expectation as soon as $\eta=O(1)$.
While the Shannon entropy is a global quantity that depends on $2^n$ entries of $U$, our proof bounds the difference using only entrywise or pairwise closeness in Kolmogorov distance. 

The bound on the standard deviation implies that the entropy of a single circuit drawn from the design is concentrated.
The Haar standard deviation is $\Theta(2^{-n/2})$ \cite{wolpertEstimatingFunctionsProbability1995,nemenmanEntropyInferenceRevisited2002}, so the standard deviation of the Shannon entropy of the design is $O(2^{-n/2}+\delta^{1/4})$.
By Chebyshev's inequality, the Shannon entropy of a circuit drawn from the design deviates from its Haar value by more than $t+O(\eta^{1/2})$ with probability at most $O\big((2^{-n}+\delta^{1/2})/t^2\big)$.
Although the bound $O(2^{-n/2}+\delta^{1/4})$ on the standard deviation exceeds its Haar value by far unless $k$ is exponentially large and $\epsilon$ exponentially small in $n$, it tends to zero as $k\to\infty$ and $\epsilon\to 0$.
Thus not only the expected entropy but also the entropy of a typical circuit from the design is nearly maximal.
In particular, a circuit that is sampled once and then reused also produces nearly maximal entropy.

We now combine \thm{informal-entropy} with \thm{ks-generic} and \thm{ks-locally-invariant}.
For a generic strong $\epsilon$-approximate $k$-design, the expected Shannon entropy is within $O(k^{-1/4}+\epsilon^{1/2})$ of its Haar value $n-\Theta(1)$.
Moreover, a single circuit drawn from the design has Shannon entropy $n-O(1)$ with probability $1-O(2^{-n}+k^{-1/4}+\epsilon^{1/2})$; for instance, this probability is $1-O(1/n)$ when $k=\Omega(n^4)$ and $\epsilon=O(1/n^2)$.
For a design that is locally invariant on $\log k+O(1)$ qubits, the expected Shannon entropy is within $2^{-\Omega(k)}+O(\epsilon^{1/2})$ of its Haar value. 

Such designs can be implemented efficiently.
As discussed after \thm{informal-one-local-invariance}, one-dimensional brickwork circuits of depth $O(\log^7 k\,(nk+\log(1/\epsilon)))$, using $O(n\log^7 k\,(nk+\log(1/\epsilon)))$ gates, are strong $\epsilon$-approximate $k$-designs with relative error for $\epsilon\geq 8k^2/2^n$ \cite[Lemma~3]{schusterStrongRandomUnitaries2025}.
For instance, a brickwork circuit of depth $O(n^5\log^7 n)$ has Shannon entropy $n-O(1)$ with probability $1-O(1/n)$.

To extract randomness from such random circuits, we sample a fresh circuit in each round and measure its output.
Since the rounds are independent, the asymptotic equipartition property \cite{tomamichelFullyQuantumAsymptotic2009}
already shows that $M$ rounds accumulate $M(h-o(1))$ bits of smooth min-entropy, where $h$ is the expected Shannon entropy per round.
A randomness extractor then turns the outcomes into nearly uniform bits. 
However, when verifiability is a concern, whether this random quantum circuit can be resistant to an adversary and robustly generate random numbers requires further study.

\subsection{Technical overview}

We provide a technical overview of the proofs of our main results. 

\subsubsection{Bounding distribution functions by expected polynomials}\label{sec:overview-upper-bound}

Let $N=2^n$. 
Recall from \eq{unitary-design} and \dfn{strong-design} that a strong $\epsilon$-approximate unitary $k$-design reproduces, up to a relative error $\epsilon$, the expectation of $|f(U)|^2$ for every polynomial $f$ of degree $k$ in the entries of $U$ and degree $k$ in their complex conjugates. 
Our goal is to compare an individual output probability $P=|U_{xy}|^2$ with its Haar counterpart.
Our starting point is therefore the relative error control of the expectation of $|p(P)|^2$ for every polynomial $p$ of degree at most $k$ (\sec{block-encoding}). 
To establish closeness in Kolmogorov distance, we must turn this control into a uniform bound on the difference of the two distribution functions.
This is the main challenge we tackle in our analysis.
Recall that the distribution function at $x\in\mathbb R$ is the expectation of the indicator function $I_x(t)=1_{t\leq x}$. 
This function is discontinuous and thus cannot be uniformly approximated by polynomials. 

We address this challenge with polynomial majorants and minorants of the indicator function, together with a discrete measure supported on a grid of $k+1$ points that reproduces the Haar expectations of these polynomials.
Since the Haar marginal is a beta measure, a natural choice of the grid is the set of roots of an orthogonal polynomial with respect to this beta measure. 
These orthogonal polynomials are shifted Jacobi polynomials, so the grid consists of the nodes of a shifted Jacobi--Gauss quadrature rule.
The quadrature rule with $k+1$ nodes is exact for every polynomial of degree at most $2k+1$.
Hence the quadrature measure matches the expectations of these polynomials exactly under the Haar measure and, by the design condition, up to a relative error $\epsilon$ under the design. 

At each node, we construct a polynomial majorant and a polynomial minorant of the corresponding step function, both of degree at most $2k$. 
The majorant is nonnegative on the real line, so it is a sum of squares of polynomials of degree at most $k$.
The strong design condition therefore transfers its expectation with relative error $\epsilon$. 
The minorant differs from the majorant by the square of a Lagrange interpolation polynomial, whose Haar expectation equals the weight of that node.
These facts bound the difference between the two distribution functions at the quadrature nodes by $O(w_{\max}+\epsilon)$, where $w_{\max}$ is the largest quadrature weight.
By the monotonicity of distribution functions, the same bound holds at every $x\in\mathbb R$.

It remains to bound $w_{\max}$, which we do through the Christoffel function of the beta measure.
The Chebyshev--Markov--Stieltjes inequality \cite{akhiezerClassicalMomentProblem1965,lindsayMomentsDetermineTail2000,gilboaDifferentialVersionChebyshev2015} states that, for two measures whose moments agree up to order $2k$, the difference of the distribution functions at $x\in\mathbb R$ is at most the Christoffel function $\lambda(x)$ of degree $k$ of either measure, and our node-based argument is an approximate version of this inequality.
The Christoffel function admits a variational form: $\lambda(x)$ is the minimum of the expectation of $p^2$ over all polynomials $p$ with $\deg(p)\leq k$ and $p(x)=1$. 
At a quadrature node $x_i$, $\lambda(x_i)$ equals the quadrature weight $w_i$.
The problem thus reduces to bounding the Christoffel function at the nodes.

By the variational form, if the beta measure is dominated by a constant multiple of another measure, then its Christoffel function is dominated by the same multiple of the Christoffel function of that measure.
We compare with two measures \cite{szegoOrthogonalPolynomials1975}: the exponential measure, which is the limit of the beta measure rescaled by $N$ as $n\to\infty$, and the arcsine (Chebyshev) measure, which captures the behavior of the quadrature weights when $N$ is small compared with $k$.
The former gives $w_{\max}=O(1/\sqrt k)$ and the latter gives $w_{\max}=O(\sqrt N/k)$. 
Combining the two bounds covers all regimes of $k$ relative to $2^n$ and yields $w_{\max}=O(\min\{\sqrt N/k,1/\sqrt k\})=O(k^{-1/2}(1+k/N)^{-1/2})$.
Together with the node bound $O(w_{\max}+\epsilon)$, this proves the Kolmogorov distance bound in \thm{ks-generic}.

\subsubsection{Embedding quadrature measures into unitary designs}

The same quadrature that gives the upper bound as discussed in \sec{overview-upper-bound} also provides a distribution that saturates the bound up to a constant factor.
A discrete distribution with an atom of weight $w$ has Kolmogorov distance at least $w/2$ from any continuous distribution, since across the jump, a continuous distribution function cannot be close to both one-sided values. 

A remaining issue is that a distribution matching the moments of one output probability need not automatically arise from a unitary design. 
We resolve this issue by explicitly embedding a scalar random variable $X\in[0,1]$ into a unitary ensemble with $|U_{00}|^2=X$. 
The same argument extends to $|U_{xy}|^2$ for any fixed $x,y\in\bit^n$ since both the Haar measure and the design condition are invariant under permutations of the computational basis. 
The construction randomizes the phase of the distinguished entry $(x,y)$ and uses independent Haar random unitaries on the remaining $N-1$ dimensions.
When $X$ has the beta distribution, the resulting unitary is Haar distributed, and when $X$ is sampled from the quadrature measure, the resulting unitary is a design. 

Knowing that every moment-matching distribution can be embedded into a unitary design, it suffices to lower bound the Kolmogorov distance between $\Beta(1,N-1)$ and its Jacobi--Gauss quadrature with $k+1$ nodes. 
Let $\kappa:=k^{1/2}(1+k/N)^{1/2}$, so that the target bound is $\Omega(1/\kappa)$.
We show that the interval $[0,1/N]$ contains at most $O(\kappa)$ nodes. 
To this end, we compare the nodes with the roots of the Legendre polynomial of the same degree, which is the Jacobi polynomial with parameters $(0,0)$.
By Bruns' inequality \cite[Theorem~6.21.2]{szegoOrthogonalPolynomials1975}, the $j$th smallest root of the shifted Legendre polynomial on $[0,1]$ is $\Omega(j^2/k^2)$. 
Our nodes are the roots of a shifted Jacobi polynomial with parameters $(0,N-2)$. 
To pass from $(0,0)$ to $(0,N-2)$, we apply a monotonicity result of Dimitrov and Rafaeli \cite{dimitrovMonotonicityZerosJacobi2007},
which bounds how far the roots move toward zero as the second parameter increases from $0$ to $N-2$, and obtain $x_j=\Omega(j^2/(k(k+N)))$. 
Hence $x_j\leq 1/N$ forces $j=O(\kappa)$. 
On the other hand, $\Pr[X\leq 1/N]\geq 1/2$ for $X\sim\Beta(1,N-1)$. 
Therefore, either some node in $[0,1/N]$ carries weight $\Omega(1/\kappa)$, or the distribution function of the quadrature is $\Omega(1)$-far from that of $\Beta(1,N-1)$ at $x=1/N$. 
In either case, the Kolmogorov distance is $\Omega(1/\kappa)$. 

To establish optimality in $\epsilon$ as well, we perturb the quadrature weights multiplicatively by factors in $[1-\epsilon,1+\epsilon]$ while preserving their total mass. 
This preserves the relative error design condition while allowing an $\Omega(\epsilon)$ change in the cumulative probability. 
Together, the quadrature construction and the embedding into a unitary ensemble establish the lower bound in \thm{ks-generic-lower}, and hence the optimality of the convergence rate in \thm{ks-generic}.

\subsubsection{Smoothing with local invariance}\label{sec:overview-smoothing}

Let $a=2^m$ be the dimension of the subsystem on which the ensemble is invariant under local unitary translations.
Local invariance allows us to apply an independent Haar random unitary on this subsystem without changing the distribution of the ensemble.
Let $X_k$ be the total output probability of the corresponding block of $a$ basis states.
Conditioned on $X_k$, a single output probability $P$ in the block has the distribution of $X_kY$, where $Y\sim\Beta(1,a-1)$ is independent of $X_k$.
For a Haar random unitary, the block probability $X$ satisfies $X\sim\Beta(a,N-a)$, and the product $XY$ has precisely the Haar marginal distribution $\Beta(1,N-1)$.
This factorization separates the part controlled by the design from the independent smoothing supplied by the local Haar random unitary.
Multiplying by $Y$ therefore smooths the distribution of $X_k$, which may be discrete.
The two results below exploit this smoothing in two different ways.

\textbf{One invariant qubit.}
For $a=2$, the factor $Y$ is uniform on $[0,1]$.
Multiplication by a uniform variable acts explicitly on densities: the density of $X_kY$ at $x$ is $\Exp[X_k^{-1}1_{X_k>x}]$, whether or not $X_k$ has a density.
Integrating by parts, the difference between the densities of $X_kY$ and $XY$ is expressed through the difference $\Delta(t)=F_{X_k}(t)-F_X(t)$ of the distribution functions.
This bounds the total variation distance by the mass of both laws below a threshold $b>0$, to be chosen later, plus a weighted $L^1$ norm of $\Delta$
\begin{align}\label{eq:informal-tv-one-invariance}
d_{\mathrm{TV}}\leq F_{X_k}(b)+F_X(b)+\int_{b}^\infty|\Delta(t)|\,t^{-1}\d t.
\end{align} 

By \eq{informal-tv-one-invariance}, it suffices to show that the integral of $|\Delta(t)|/t$ over $[b,\infty)$ is small, together with the two mass terms.
To this end, we prove an approximate version of the Chebyshev--Markov--Stieltjes inequality that accommodates the design approximation error $\epsilon$: 
\begin{align}\label{eq:informal-approximate-christoffel-bound}
    |\Delta(t)|\leq\epsilon\min\{F_X(t),1-F_X(t)\}+(1+\epsilon)\lambda(t),
\end{align}
where $\lambda$ is the Christoffel function of $\Beta(2,N-2)$ for polynomials of degree at most $k$.
Its proof uses polynomial majorants and minorants of the indicator function $I_t(z)=1_{z\leq t}$, as in \sec{overview-upper-bound}, with nodes at $t$ and at the roots of the polynomial that attains the variational form of $\lambda(t)$.
The $\epsilon$-term contributes $O(\epsilon)$ to \eq{informal-tv-one-invariance}, since $\int_0^\infty\min\{F_X(t),1-F_X(t)\}\,t^{-1}\d t=O(1)$.

Then we prove a pointwise bound on $\lambda$.
The Christoffel function is invariant under rescaling, and so is the total variation distance, so we may work in the scaled variable $s=Nt$, in which $X$ has mean $2$.
We show that 
\begin{align}\label{eq:informal-pointwise-bound}
    \lambda(s/N) \leq 
    \begin{cases}
        Ck^{-1/2}(s+1/k) & 0\leq s\leq 1, \\
        Ck^{-1/2}s^{-2} & s\geq 1.
    \end{cases}
\end{align}
This bound makes the integral in \eq{informal-tv-one-invariance} small.
On $[1,\infty)$, the integrand decays like $k^{-1/2}s^{-3}$.
On $[b,1]$, the factor $s$ in the bound cancels the weight $1/s$, so that $\int_b^1\lambda(s/N)\,s^{-1}\d s\leq Ck^{-1/2}(1+\ln(1/b)/k)$.
Choosing $b=\Theta(1/k)$ in the scaled variable, the mass terms are $O(k^{-3/2})$, and we obtain the bound $O(k^{-1/2}+\epsilon)$.
A uniform bound $\lambda=O(k^{-1/2})$ would instead give $k^{-1/2}\ln(1/b)$, losing a factor $\log k$.

It remains to explain how we obtain \eq{informal-pointwise-bound}.
The proof follows that of \thm{ks-generic} closely.
For the tail $s\geq 1$, we rely on the fact that the beta measure is dominated by an exponential measure. 
The Christoffel function of the exponential measure is only bounded uniformly by $O(k^{-1/2})$.
To gain the factor $s^{-2}$, we use the test polynomial $p(x)=(x/s)\,r(x)$, where $r$ attains the Christoffel function of the exponential measure with the degree lowered by one.
The extra polynomial factor is absorbed by passing to an exponential measure with a slower decay rate, which does not change the order of the Christoffel function.
For $0\leq s\leq 1$, we compare the beta measure with an arcsine measure on $[0,R]$, where $R=\Theta(k)$, as in the proof of \thm{ks-generic}.
The test polynomial is the square of the polynomial attaining the Christoffel function of the arcsine measure at $s$.
We prove that this polynomial is localized within distance $v=O(\sqrt{s/k}+1/k)$ of $s$, and $v=O(k^{-1/2})$ for $s\leq 1$.
The density of the scaled variable $NX$ at $u$ is at most $u$ for $u$ in the window $[s-v,s+v]$, so its integral over the window is at most $2v(s+v)=O\big(k^{-1/2}(s+\sqrt{s/k}+1/k)\big)$.
By the inequality of arithmetic and geometric means, this is $O(k^{-1/2}(s+1/k))$.

\textbf{Invariance on $\log k+O(1)$ qubits.}
Larger blocks make the block probability $X\sim\Beta(a,N-a)$ concentrate, and we use this fact to prove a convergence at exponential rates.
Since $X$ has mean $a/N$, we introduce the rescaled random variable $Z=NX/a-1$, which has mean zero and variance $O(1/a)$.
We derive a recurrence relation for the central moments of $X$ and use it to show that the moments of $Z$ of order up to $k$ are exponentially small when $a$ is a sufficiently large constant multiple of $k$. 
The strong design condition transfers these moment bounds to $X_k$.

The concentration allows us to discard the exponentially rare event that $X$ or $X_k$ is below $\frac{a}{2N}$.
We then work in logarithmic coordinates, where the product $XY$ becomes a sum of independent random variables $\log X+\log Y$.
Consequently, the characteristic function factorizes.
The moment bounds on $Z$ allow us to expand the characteristic function of $\log X$ to degree $k$ with an exponentially small remainder, and the design condition transfers the expansion to $\log X_k$.
Hence the characteristic functions of $\log X_k$ and $\log X$ are close in a neighborhood of the origin, and only their high-frequency components remain uncontrolled.
The characteristic function of $\log Y$ has a bell shape, whose decaying tails suppress these high-frequency components.
These facts, together with the Berry--Esseen smoothing inequality, give the Kolmogorov bound $2^{-\Omega(k)}+O(\epsilon)$ of \thm{ks-locally-invariant}.

The same argument controls the difference of the densities in logarithmic coordinates through the Fourier inversion formula, which bounds the $L^\infty$ distance of the densities by the $L^1$ distance of the characteristic functions.
To pass from the $L^\infty$ norm to the total variation distance, which is the $L^1$ distance of the densities, we choose a window $I$ of length $2R$ centered at the logarithm of the mean.
On $I$, the $L^\infty$ bound costs a factor $2R$.
Outside $I$, we use the tail bound of the Haar law, which decays like $e^{-R}$ and which the design inherits up to the Kolmogorov distance of \thm{ks-locally-invariant}. 
Optimizing over $R$ gives the bound in \thm{tv-locally-invariant}; the optimal $R$ is logarithmic in the target accuracy, which is the source of the factor $\log(2/\epsilon)$.

\subsubsection{Bounding the mean and the fluctuations of the entropy}\label{sec:overview-entropy}

\textbf{Entropy mean and fluctuations.}
The entropy analysis in \sec{random-number} must account for all $N$ output probabilities without incurring a factor of $N$ in the error bound.
For the expectation, we subtract the distribution-independent term $\log(N/B)$ and split the remaining contribution at the threshold probability $B/N$, for a cutoff parameter $B$ to be optimized later.
Below the threshold, each summand is a function with interval superlevel sets and magnitude $O(B/N)$.
The marginal Kolmogorov bound $\eta$ therefore controls each summand up to $O(\eta B/N)$, and the $N$ summands together up to $O(\eta B)$.
Above the threshold, the contribution is bounded by a sum of second moments of the output probabilities, which the strong design matches up to relative error; this part contributes $O(1/B)$.
Choosing $B=\Theta(\eta^{-1/2})$ yields the dimension-independent bound $O(\eta^{1/2})$ of \thm{informal-entropy}.

The standard deviation involves pairs of output probabilities.
We use the same decomposition with smooth truncations, $S=\log(N/B)+G_B-R_B$, where $G_B$ collects the contributions $x\ln(1+B/(Nx))$ and $R_B$ the tail contributions $x\ln(1+Nx/B)$, up to the factor $1/\ln 2$.
The tail part has standard deviation $O(1/B)$ under both measures, by the Cauchy--Schwarz inequality and the fourth moments of the output probabilities, which the design matches up to relative error.
For the smooth part, we write each summand as a Laplace transform, $x\ln(1+B/(Nx))=B/N-N^{-1}\int_0^\infty w_B(s)e^{-sNx}\d s$, with an explicit weight $w_B\geq 0$ of total mass $B$.
The second moment of $G_B$ then becomes a double integral, over pairs $x,x'$ and over $s,t\geq 0$, of expectations of $e^{-N(sP_x+tP_{x'})}$.
Each of these is the expectation of a bounded monotone function of the linear combination $sP_x+tP_{x'}$, so it is controlled by the Kolmogorov distance between the laws of this linear combination under the design and under the Haar measure.
The supremum $\delta$ of these distances over all pairs and all nonnegative coefficients is the pair distance in \thm{informal-entropy}.
This gives $|\Var_{\mu(k,\epsilon)}[G_B]-\Var_\mu[G_B]|=O(\delta B^2)$, and optimizing $B=\Theta(\delta^{-1/4})$ against the tail part bounds the difference of the standard deviations by $O(\delta^{1/4})$.

It remains to bound $\delta$ for a strong design.
A polynomial in $sP_x+tP_{x'}$ is a polynomial in the two output probabilities, so the design condition controls the squared polynomial moments of $sP_x+tP_{x'}$ up to degree $k$ with relative error $\epsilon$.
Then we use \eq{informal-approximate-christoffel-bound}, with the Haar law of $sP_x+tP_{x'}$ in place of $\Beta(2,N-2)$, to bound the Kolmogorov distance by $\epsilon/2+(1+\epsilon)\sup_t\lambda(t)$, where $\lambda$ is now the Christoffel function of this law.
This Haar law has a density dominated by a constant multiple of the density of $\Beta(2,N-2)$ or of $\Beta(1,N-1)$, depending on the ratio $t/s$, and the Christoffel functions of both beta laws are $O(k^{-1/2})$ uniformly.
Hence $\delta=O(k^{-1/2}+\epsilon)$.
Finally, Chebyshev's inequality turns the bounds on the mean and on the standard deviation into the concentration statement of \thm{shannon-entropy-concentration}: a single circuit drawn from the design has entropy within $t+O(\eta^{1/2})$ of its Haar value, except with probability $O\big((2^{-n}+\delta^{1/2})/t^2\big)$.

\textbf{Randomness generation.}
Combining these estimates with the distributional bounds from \sec{convergence-upper-bound} and \sec{local-translation-invariant} gives quantitative guarantees on the entropy of measurements from approximate designs. The following paragraphs explain how this connects to randomness generation.

To explain why an expected entropy bound suffices for randomness generation, consider a trusted implementation that independently samples a fresh circuit $U_j$ in each round, prepares $U_j|0^n\rangle$, and measures in the computational basis to obtain $X_j$. The circuit descriptions may be public. The conditional Shannon entropy per round is
$h:=H(X_j\mid U_j)=\mathbb{E}[S(U_j)]$.
Thus our expectation bound directly lower-bounds the entropy generated per round, even when the circuit choices are known.

For $M$ independent rounds, the conditional asymptotic equipartition property gives
\begin{align}
H_{\min}^{\xi}(X_1,\ldots,X_M\mid U_1,\ldots,U_M)
\ge M(h-o(1))
\end{align}
for any fixed smoothing parameter $\xi>0$ \cite{tomamichelFullyQuantumAsymptotic2009,tomamichelLeftoverHashingQuantum2011}. A strong seeded randomness extractor then converts the measurement outcomes into nearly uniform bits that are approximately independent of both the circuit descriptions and the extractor seed. Consequently, the asymptotically achievable extraction rate approaches $h$, which our result lower-bounds by the Haar expected entropy minus $O(\sqrt{\eta})$.

This argument does not require a separate concentration inequality for the entropy of an individually sampled circuit. The required typicality concerns the accumulated conditional information across independent rounds and is already supplied by the asymptotic equipartition property. The concentration bound of \thm{shannon-entropy-concentration} addresses the different setting in which a circuit is sampled once and reused, since it guarantees that most individual circuits have high output entropy. Our randomness-generation application assumes fresh independent sampling in each round and trusted execution of the prescribed circuits.

\subsection{Related work}\label{sec:related-work}

\emph{Unitary designs and their constructions.}
The notion of a unitary design was formalized in \cite{grossEvenlyDistributedUnitaries2007,dankertExactApproximateUnitary2009}, and the Clifford group is an exact unitary $3$-design in the standard sense of \dfn{design} \cite{webbCliffordGroupForms2016,zhuMultiqubitCliffordGroups2017}.
Local random circuits were shown to form approximate $2$-designs by Harrow and Low \cite{harrowRandomQuantumCircuits2009} and approximate $k$-designs for every $k$ by Brand\~ao, Harrow, and Horodecki \cite{brandaoLocalRandomQuantum2016}; the required depth was subsequently improved in \cite{hunterjonesUnitaryDesignsStatistical2019,haferkampRandomQuantumCircuits2022a,harrowApproximateUnitaryTDesigns2023,chenIncompressibilitySpectralGaps2025}, the results were extended to arbitrary circuit architectures \cite{mittalLocalRandomQuantum2023,belkinApproximateTDesignsGeneric2024}, and, by gluing together small random unitaries, the depth was reduced to logarithmic in $n$ \cite{schusterRandomUnitariesExtremely2025,cuiUnitaryDesignsNearly2025}.
Other efficient constructions of designs and of pseudorandom unitaries are based on random Pauli rotations, random permutations, and related structured ensembles \cite{haahEfficientApproximateUnitary2025,chenEfficientUnitaryDesigns2024,metgerSimpleConstructionsLineardepth2024,maHowConstructRandom2025}.
The strong notion of design used in this paper, which also controls algorithms that query $U^\dag$, $U^*$, and $U^\top$, was introduced by Schuster, Ma, Lombardi, Brand\~ao, and Huang \cite{schusterStrongRandomUnitaries2025}; strong designs with measurable error have been constructed in optimal logarithmic depth without ancillas \cite{parelladilmeStrongUnitaryDesigns2026}.
All of these works bound the error of the twirling channel, that is, the quality of moment matching, in one of several error notions.
Our results assume the strong relative-error condition of \dfn{strong-design} and apply to any construction satisfying it as a black box; the question of what this condition implies for the law of a single output probability is independent of the construction.

\emph{Statistics of output probabilities.}
The Porter--Thomas law of the output probabilities of random circuits was conjectured and numerically supported by Boixo et al.\ \cite{boixoCharacterizingQuantumSupremacy2018}; it underlies the linear cross-entropy benchmark and enters the complexity-theoretic arguments for random circuit sampling \cite{aaronsonComplexityTheoreticFoundationsQuantum2017,aaronsonClassicalHardnessSpoofing2020,boulandComplexityVerificationQuantum2019,movassaghHardnessRandomQuantum2023}.
Rigorous results have so far concerned moments.
Anti-concentration, a bound on the collision probability of the output distribution, follows from the $2$-design property \cite{hangleiterAnticoncentrationTheoremsSchemes2018} and holds for local random circuits already in logarithmic depth \cite{dalzellRandomQuantumCircuits2022}; the behavior of the benchmark for shallow or noisy circuits has been analyzed in \cite{barakSpoofingLinearCrossEntropy2021,gaoLimitationsLinearCrossEntropy2024,wareSharpPhaseTransition2023,morvanPhaseTransitionsRandom2024}.
Recently, Hunter-Jones and Haferkamp \cite{hunterjonesIdealRandomQuantum2026} used higher moments, viewing brickwork circuits of near-quadratic depth as unitary $n$-designs, to show that noiseless random circuits pass the benchmark with high probability and that the tails of their output probabilities then resemble Porter--Thomas.
A different kind of anti-concentration, namely small-ball estimates for polynomials in the entries of a Haar random unitary, was obtained in \cite{feffermanAntiConcentrationUnitaryHaar2026}.
In contrast, we control the entire distribution function of a single output probability, for an arbitrary strong $k$-design rather than a specific circuit architecture, with rates that are tight in the worst case, and we identify the additional structure, local invariance, that upgrades the rate to exponential and yields guarantees in total variation.

\emph{Haar random unitaries and projected ensembles.}
For a Haar random unitary, the law of the squared magnitude of a single entry is a beta distribution \cite{donatimartinRandomTruncationsHaar2013}, the moments of the entries are computed by the Weingarten calculus \cite{collinsIntegrationRespectHaar2006}, and the entries and truncations of Haar random matrices have been studied in detail in random matrix theory \cite{zyczkowskiTruncationsRandomUnitary2000,jiangHowManyEntries2006}.
In contrast, the design condition alone does not determine the law of a single output probability: \sec{convergence-lower-bound} constructs strong designs for which the squared magnitude of an entry is supported on only $k+1$ points and attains the worst-case Kolmogorov separation.
A complementary line of work studies the ensemble of states obtained by measuring part of a chaotic many-body state and shows that it forms an approximate state design \cite{cotlerEmergentQuantumState2023,choiPreparingRandomStates2023,ippolitiSolvableModelDeep2022,markMaximumEntropyPrinciple2024}; there, too, closeness to Haar randomness is measured through moments, and distribution-level guarantees of the kind established here are a natural question in that setting.

\emph{Randomness generation.}
Certified randomness protocols have been proposed based on random circuit sampling \cite{aaronsonCertifiedRandomnessQuantum2023} and, under different assumptions, on Fourier sampling \cite{bassirianCertifiedRandomnessFourier2026}; the former has been demonstrated experimentally \cite{liuCertifiedRandomnessUsing2025a,liuCertifiedRandomnessAmplification2025}.
In the protocol by Aaronson and Hung \cite{aaronsonCertifiedRandomnessQuantum2023}, the behavior of an honest device is analyzed in the Haar random (Porter--Thomas) model, and the accumulation of entropy over many rounds relies on entropy accumulation theorems \cite{metgerGeneralisedEntropyAccumulation2024}.
\sec{random-number} supplies the ingredient needed to replace the Haar random unitary by an efficiently implementable strong design in the trusted-device setting, namely that the expected output entropy of such a design matches its Haar value up to a dimension-independent error.

\emph{Techniques.}
Bounding a distribution function from finitely many moments is the subject of the classical moment problem, and the Chebyshev--Markov--Stieltjes inequalities together with Gauss quadrature \cite{szegoOrthogonalPolynomials1975,akhiezerClassicalMomentProblem1965,lindsayMomentsDetermineTail2000,gilboaDifferentialVersionChebyshev2015} are the tools behind \sec{convergence-upper-bound} and \sec{convergence-lower-bound}. 
To our knowledge, the embedding of the extremal measures of the moment problem into unitary designs is new.
In \sec{local-translation-invariant}, the exponential rates under local invariance are obtained by combining the moment information with the Berry--Esseen smoothing inequality \cite{Feller1971} in logarithmic coordinates.
This combination is new as well.

\emph{Concurrent work.}
In concurrent and independent work, Sen and Hunter-Jones \cite{senLocalRandomQuantum2026} prove that the output probability of a fixed bitstring in a one-dimensional brickwork circuit converges to Porter--Thomas in total variation distance $O(\sqrt{\log k/k})$ once the circuit forms an additive $2^{-nk}$-approximate $k$-design, i.e., $O(n^{-c})$ at depth $\Theta(n^{2c+1}\log n)$ for fixed $c$.
Their result is specific to brickwork circuits, whereas \thm{informal-one-local-invariance} applies to every strong design whose last layer contains a Haar random gate on a single qubit, with the rate $O(k^{-1/2}+\epsilon)$, slightly improving the bound in \cite{senLocalRandomQuantum2026} by a factor $O(\log^{1/2}k)$.
By \thm{ks-generic-lower}, no better rate holds without local structure.
They note that their techniques do not reach inverse-exponential rates. 
\thm{tv-locally-invariant} achieves the rate $2^{-\Omega(k)}$ once the last layer contains a Haar random gate on $\log k+O(1)$ qubits, while the question for plain brickwork circuits remains open.
The Kolmogorov bounds for arbitrary strong designs, the matching lower bounds, and the entropy results are specific to our work.

\subsection{Discussion}

We established quantitative guarantees for the distributional accuracy of strong approximate unitary designs. Our Kolmogorov bounds characterize, up to constant factors and in the worst case over designs, how closely individual output probabilities follow their Haar counterparts. Additional local invariance substantially improves the dependence on the design order and also yields guarantees in total variation distance. As an application, we showed that marginal distributional closeness controls the expected Shannon entropy, supporting randomness generation with fresh, independently sampled circuits in a trusted implementation.

A natural next question is whether these guarantees extend to the joint distribution of multiple output probabilities. We conjecture that, under the same strong-design assumption, the Kolmogorov bound extends to joint distribution functions with at most a linear loss in the number of entries. Here, the comparison is with the corresponding joint Haar distribution, and closeness is measured uniformly over coordinatewise thresholds. Establishing such a result would quantify how accurately designs reproduce correlations among output probabilities. It would also be interesting to determine whether local invariance yields comparable improvements for joint distributions.

Another direction is to identify further applications that benefit from distributional guarantees beyond moment matching. Can these guarantees improve the analysis of tasks whose performance depends on non-polynomial functions of output probabilities or on joint probability events? In particular, it remains to determine when the stronger convergence enabled by local invariance translates into lower design orders or more efficient circuit constructions for such tasks.

\section{Preliminaries}\label{sec:preliminaries}

In this paper, all logarithms are base two unless indicated otherwise. 
Throughout the paper, we use the Kolmogorov distance $d_{\mathrm K}$ and the total variation distance $d_{\mathrm{TV}}$ between probability distributions defined in \sec{introduction}, and for a distribution $\rho$ on $\mathbb R$ we write $F_\rho$ for its distribution function.
If $\mu$ and $\nu$ have densities $f_\mu$ and $f_\nu$ with respect to a common $\sigma$-finite measure $\lambda$ (for instance the Lebesgue measure, the counting measure, or their sum when the distributions have both an absolutely continuous part and atoms), then the total variation distance is half the $L^1$ distance between the densities,
\begin{align}\label{eq:tv-l1}
    d_{\mathrm{TV}}(\mu,\nu)=\frac12\int|f_\mu-f_\nu|\d\lambda,
\end{align}
which is attained by the set $A=\{f_\mu>f_\nu\}$ in the definition of $d_{\mathrm{TV}}$ (Scheff\'e's identity); since $\lambda=\mu+\nu$ dominates both distributions, such densities always exist.
For random variables $X$ and $Y$, we write $F_X$ for the distribution function of $X$ and, by a slight abuse of notation, $d_{\mathrm K}(X,Y)$ and $d_{\mathrm{TV}}(X,Y)$ for the corresponding distances between their laws, so that $d_{\mathrm K}(X,Y)=\sup_{x\in\mathbb R}|F_X(x)-F_Y(x)|$.

\subsection{Approximate unitary designs}\label{sec:design}

We will analyze the convergence of a unitary design to Porter--Thomas. 
A standard notion of an approximate unitary design was defined by Brand\~ao, Harrow, and Horodecki as follows.
\begin{definition}[{$\epsilon$-approximate unitary $k$-design~\cite[Definition~2]{brandaoLocalRandomQuantum2016}}]\label{dfn:design}
    Let $\nu$ be a distribution on $\mathbb U(N)$ and 
    \begin{align}
        \Delta_\nu(X) := \Exp_{U\sim\nu}\left[U^{\otimes k} X (U^\dag)^{\otimes k} \right].
    \end{align}
    Then $\nu$ is an $\epsilon$-approximate unitary $k$-design if 
    \begin{align}
        (1-\epsilon)\Delta_{\Haar}
        \leq 
        \Delta_\nu \leq (1+\epsilon)\Delta_{\Haar},
    \end{align}
    where $\Haar$ is the Haar measure on $\mathbb U(N)$, and $\mathcal A\leq\mathcal B$ denotes that $\mathcal B-\mathcal A$ is completely positive.
\end{definition}

In our analysis to show statistical convergence, we require the approximation of expected polynomials of degree at most $2k$, and therefore we require a stronger notion. 
\begin{definition}[{Strong $\epsilon$-approximate unitary $(p,q)$-design}]\label{dfn:strong-design}
    Let $\nu$ be a distribution on $\mathbb U(N)$ and 
    \begin{align}
        \Delta_\nu^{(p,q)}(X) := \Exp_{U\sim\nu}\left[
            U^{\otimes p}\otimes (U^*)^{\otimes q} 
            X
            (U^\dag)^{\otimes p}\otimes (U^\top)^{\otimes q}
        \right],
    \end{align}
    where $U^*,U^\top$ are the elementwise complex conjugate of $U$ and the transpose of $U$ respectively. 
    Then $\nu$ is a strong $\epsilon$-approximate unitary $(p,q)$-design if 
    \begin{align}
        (1-\epsilon)\Delta_{\Haar}^{(p,q)}
        \leq 
        \Delta_\nu^{(p,q)} \leq (1+\epsilon)\Delta_{\Haar}^{(p,q)},
    \end{align}
    where $\Haar$ is the Haar measure on $\mathbb U(N)$, and $\mathcal A\leq\mathcal B$ denotes that $\mathcal B-\mathcal A$ is completely positive.
\end{definition}

The unitary design definition in \dfn{design} is also called a unitary design with relative error. 
Nevertheless, we can also define a unitary design with an additive error to mean a distribution $\nu$ satisfying $\|\Delta_{\Haar}^{(p,q)}-\Delta_{\nu}^{(p,q)}\|_{\diamond}\leq\epsilon$.
In this paper, we prove statistical convergence for a strong $\epsilon$-approximate $(k,k)$-design with relative error.

\begin{remark}\label{rmk:design}
Throughout this paper, when we refer to an $\epsilon$-approximate unitary $k$-design in our own statements, we consistently mean a strong $\epsilon$-approximate $(k,k)$-design with relative error for integer $k$ (see \dfn{strong-design} for the definition).
The only exception is our discussion of results from the literature, in particular in \sec{related-work}, where a design is understood in the sense of the cited work: often the standard notion of \dfn{design}, or a notion with additive (also called measurable) error, neither of which implies the relative-error condition assumed in our theorems.
We will also use the notation $\mu(k,\epsilon)$ to refer to the distribution of a strong $\epsilon$-approximate $(k,k)$-design, and, when the two are compared, $\mu$ for the Haar measure on unitaries of the same dimension.
Note that a strong $\epsilon$-approximate $(k,k)$-design is also a strong $\epsilon$-approximate $(k',k')$-design for every integer $0\leq k'\leq k$: for density matrices $\rho,\sigma$, the map $\Delta_\nu^{(k',k')}(X)$ is obtained from $\Delta_\nu^{(k,k)}(X\otimes\rho^{\otimes(k-k')}\otimes\sigma^{\otimes(k-k')})$ (with the tensor factors arranged so that $\rho$ occupies the last $k-k'$ copies of $U$ and $\sigma$ the last $k-k'$ copies of $U^*$) by tracing out those copies, and pre- and post-composition with completely positive maps preserves the ordering in \dfn{strong-design}.
\end{remark}

\subsection{Berry--Esseen smoothing lemma}\label{sec:berry-esseen}

We will be using the well-known Berry--Esseen smoothing lemma for our analysis in \sec{local-translation-invariant}.
In particular, this allows us to bound the closeness between two random variables in Kolmogorov distance by bounding the pointwise closeness between their characteristic functions. 
We state the lemma here. 

\begin{lemma}[{Berry--Esseen smoothing lemma \cite[Chapter~XVI]{Feller1971}}]\label{lem:berry-esseen}
Let $X$ and $Y$ be two real-valued random variables with finite first moments and with distribution functions $F_X$ and $F_Y$, respectively. 
If the probability density function of $Y$ is bounded by $M$, then for $T>0$,
\begin{align}
\sup_{x\in\R}|F_X(x)-F_Y(x)|
\leq \frac{1}{\pi}\int_{-T}^{T} \left|\frac{\varphi_{X}(t)-\varphi_Y(t)}{t}\right| \d t + \frac{24M}{\pi T}
\end{align}
where $\varphi_X$ and $\varphi_Y$ are the characteristic functions of $X$ and $Y$, respectively.
\end{lemma}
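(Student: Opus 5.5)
The statement is the classical Berry--Esseen smoothing lemma, and we follow Feller's argument: we compare the two distribution functions after convolving both with a smoothing kernel whose Fourier transform is supported in $[-T,T]$. Let $\Delta=F_X-F_Y$, which is bounded and vanishes at $\pm\infty$. Finite first moments give $\int|\Delta|<\infty$, so $\Delta$ has a well-defined Fourier transform. Integration by parts shows that this transform equals $-(\varphi_X(t)-\varphi_Y(t))/(\i t)$.

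Take the Fejér-type kernel
\begin{align}
v_T(x)=\frac{1-\cos(Tx)}{\pi T x^2},
\end{align}
which is a probability density with characteristic function $(1-|t|/T)_+$, supported in $[-T,T]$. Set $\Delta_T=\Delta*v_T$. By Fourier inversion,
\begin{align}
\Delta_T(x)=\frac{1}{2\pi}\int_{-T}^{T}e^{-\i tx}\,\frac{\varphi_X(t)-\varphi_Y(t)}{-\i t}\,\Big(1-\frac{|t|}{T}\Big)\d t ,
\end{align}
and therefore
\begin{align}
\sup_x|\Delta_T(x)|\le \frac{1}{2\pi}\int_{-T}^{T}\Big|\frac{\varphi_X(t)-\varphi_Y(t)}{t}\Big|\d t .
\end{align}
This gives the first term in the statement, with room to spare.

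The main step is to show that $\eta=\sup|\Delta|$ is at most roughly twice $\sup|\Delta_T|$ plus a term of order $M/T$. Assume without loss of generality that $\Delta(x_0)=\eta$; a limit argument handles a supremum that is not attained or that is approached from the left. Since $F_X$ is nondecreasing and $F_Y$ is $M$-Lipschitz,
\begin{align}
\Delta(x_0+s)\ge \eta-Ms \quad\text{for } s\ge0 .
\end{align}
Put $h=\eta/(2M)$ and evaluate $\Delta_T$ at $x_0+h$. On $|s|\le h$ we have $\Delta(x_0+h-s)\ge \eta/2+Ms$, while $\Delta\ge-\eta$ everywhere. Because $v_T$ is even, the linear term integrates to zero, and so
\begin{align}
\Delta_T(x_0+h)\ge \frac{\eta}{2}\Big(1-\int_{|s|>h}v_T\Big)-\eta\int_{|s|>h}v_T .
\end{align}
The tail mass satisfies
\begin{align}
\int_{|s|>h}v_T\le \frac{4}{\pi T h}.
\end{align}
Substituting and rearranging yields
\begin{align}
\eta\le 2\sup|\Delta_T|+\frac{24M}{\pi T}.
\end{align}
The constant $24$ comes out of this bookkeeping, as in Feller. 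Combined with the Fourier bound, this gives the stated inequality.

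The delicate point is this last step. The sign conventions, the reduction to the case $\Delta(x_0)=\pm\eta$ attained, and the exact tail integral of the Fejér kernel all have to be tracked carefully to land on the constant $24/\pi$. Since the lemma is classical, in the paper one can simply cite \cite[Chapter~XVI.3, Lemma~2]{Feller1971}.
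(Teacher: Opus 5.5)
Your proposal is correct and is Feller's smoothing argument from Section~XVI.3. The paper states the lemma without proof and cites that section, so your route is the same as the paper's, and your constants check out: $\int_{|s|>h}v_T\le 4/(\pi T h)$ with $h=\eta/(2M)$ gives $\sup|\Delta_T|\ge \eta/2-12M/(\pi T)$. One small correction: the bound $\sup|\Delta_T|\le\frac{1}{2\pi}\int_{-T}^{T}|\varphi_X(t)-\varphi_Y(t)|/|t|\,\d t$ leaves no room to spare, because the factor $2$ from the main step turns $\frac{1}{2\pi}$ into exactly the $\frac{1}{\pi}$ of the statement.
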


The first part grows when $T$ becomes larger, so to get the best achievable bound using \lem{berry-esseen}, we can optimize over the choices of $T$. 

When the probability density functions of $X$ and $Y$ as in \lem{berry-esseen} exist and the characteristic functions are integrable, we can use the fact that the probability density function is the inverse Fourier transform of the characteristic function (we drop the factor $1/(2\pi)$ of the inversion formula; keeping it would only sharpen the bound) to show that the maximum pointwise difference is at most 
\begin{align}
    \max_{x\in\mathbb R}|f_X(x)-f_Y(x)| \leq \int_{-\infty}^\infty \d t |\varphi_X(t)-\varphi_Y(t)|.
\end{align}
We will be using this closeness notion to reason about the closeness in total variation distance in \sec{locally-invariant-tv}. 

\subsection{Beta and Dirichlet random variables}\label{sec:preliminaries-beta}

For parameters $\alpha=(\alpha_1,\ldots,\alpha_d)$ with $\alpha_i>0$, the Dirichlet distribution $\Dir(\alpha)$ is the distribution on the simplex $\{x\in[0,1]^d: x_1+\cdots+x_d=1\}$ with probability density function
\begin{align}
    f_{\alpha}(x_1,\ldots,x_d) = \frac{1}{B(\alpha)} x_1^{\alpha_1-1}\cdots x_d^{\alpha_d-1}, \qquad B(\alpha)=\frac{\Gamma(\alpha_1)\cdots\Gamma(\alpha_d)}{\Gamma(\alpha_1+\cdots+\alpha_d)},
\end{align}
with respect to the Lebesgue measure on the simplex (in the coordinates $x_1,\ldots,x_{d-1}$, with $x_d=1-x_1-\cdots-x_{d-1}$); $B(\alpha)$ is the multivariate beta function.
We use the following two standard facts; see, e.g., \cite[Chapter~49]{kotzContinuousMultivariateDistributions2000}.
\begin{enumerate}
    \item \emph{Aggregation.} If $(X_1,\ldots,X_d)\sim\Dir(\alpha)$ and $I_1,\ldots,I_r$ is a partition of $[d]$, then
    \begin{align}
        \Big(\sum_{i\in I_1}X_i,\ldots,\sum_{i\in I_r}X_i\Big)\sim\Dir\Big(\sum_{i\in I_1}\alpha_i,\ldots,\sum_{i\in I_r}\alpha_i\Big).
    \end{align}
    \item \emph{Moments.} For nonnegative integers $m=(m_1,\ldots,m_d)$, since $x_1^{m_1}\cdots x_d^{m_d}f_\alpha\propto f_{\alpha+m}$, we have $\Exp\big[X_1^{m_1}\cdots X_d^{m_d}\big]=B(\alpha+m)/B(\alpha)$ for $(X_1,\ldots,X_d)\sim\Dir(\alpha)$.
\end{enumerate}

The beta distribution $\Beta(\alpha,\beta)$ is the law of the first coordinate of $\Dir(\alpha,\beta)$, i.e., the distribution on $[0,1]$ with density
\begin{align}
    f_{\alpha,\beta}(x) = \frac{1}{B(\alpha,\beta)} x^{\alpha-1} (1-x)^{\beta-1},
\end{align}
where $B(\alpha,\beta)=\frac{\Gamma(\alpha)\Gamma(\beta)}{\Gamma(\alpha+\beta)}$ is the beta function, and its moments are
\begin{align}
    \Exp_{X\sim\Beta(\alpha,\beta)}[X^m] = \frac{B(\alpha+m,\beta)}{B(\alpha,\beta)}.
\end{align}
By aggregation, if $(X_1,\ldots,X_d)\sim\Dir(\alpha)$ and $I\subseteq[d]$ is nonempty and proper, then $\sum_{i\in I}X_i\sim\Beta\big(\sum_{i\in I}\alpha_i,\sum_{i\notin I}\alpha_i\big)$; in particular, each coordinate $X_i$ has the law $\Beta\big(\alpha_i,\sum_{j\neq i}\alpha_j\big)$.
We also use the following product rule: if $X\sim\Beta(\alpha+\beta,\gamma)$ and $Y\sim\Beta(\alpha,\beta)$ are independent, then $XY\sim\Beta(\alpha,\beta+\gamma)$.
Indeed, by independence and the moment formula, for every integer $m\geq 0$,
\begin{align}\label{eq:beta-product}
    \Exp[(XY)^m]=\frac{B(\alpha+\beta+m,\gamma)}{B(\alpha+\beta,\gamma)}\cdot\frac{B(\alpha+m,\beta)}{B(\alpha,\beta)}
    =\frac{\Gamma(\alpha+m)\,\Gamma(\alpha+\beta+\gamma)}{\Gamma(\alpha)\,\Gamma(\alpha+\beta+\gamma+m)}
    =\frac{B(\alpha+m,\beta+\gamma)}{B(\alpha,\beta+\gamma)},
\end{align}
and a distribution on the bounded interval $[0,1]$ is determined by its moments.

These distributions enter through the following observation.
A column of a $d$-dimensional Haar random unitary $U$ is a uniformly random unit vector in $\mathbb C^d$, which can be realized as $g/\|g\|$ for a vector $g$ of independent standard complex Gaussian entries.
Since $|g_1|^2,\ldots,|g_d|^2$ are independent exponential random variables, the gamma representation of the Dirichlet distribution \cite[Chapter~49]{kotzContinuousMultivariateDistributions2000} shows that the vector $(|U_{1y}|^2,\ldots,|U_{dy}|^2)$ of squared magnitudes of any column $y$ has the law $\Dir(1^d)=\Dir(1,\ldots,1)$, the uniform distribution on the simplex.
Consequently, by aggregation, the squared magnitude of a single entry has the Porter--Thomas law $\Beta(1,d-1)$, the total probability of any set of $a$ entries of a column, $1\leq a\leq d-1$, has the law $\Beta(a,d-a)$, and, for $d\geq 3$ and two distinct entries $x\neq x'$ of a column, the triple $(|U_{xy}|^2,|U_{x'y}|^2,1-|U_{xy}|^2-|U_{x'y}|^2)$ has the law $\Dir(1,1,d-2)$, so that $(|U_{xy}|^2,|U_{x'y}|^2)$ has the density $(d-1)(d-2)(1-u-v)^{d-3}$ on $\{u,v\geq 0,\ u+v\leq 1\}$.
We use these facts in \sec{local-translation-invariant} and \sec{random-number}.

\subsection{Orthogonal polynomials and Jacobi polynomials}\label{sec:jacobi}

Let $\nu$ be a probability measure on $\mathbb R$ with finite moments whose support is an infinite set.
Applying the Gram--Schmidt procedure to $1,x,x^2,\ldots$ in $L^2(\nu)$ yields the orthonormal polynomials $q_0,q_1,\ldots$ of $\nu$, which are unique up to sign and satisfy $\deg(q_n)=n$ and $\int q_nq_m\d\nu=\delta_{n,m}$, where $\delta_{n,m}$ is the Kronecker delta.
The zeros of $q_n$ are real and simple and lie in the interior of the convex hull of the support of $\nu$, and the zeros of $q_n$ and $q_{n+1}$ interlace \cite[Theorems~3.3.1 and 3.3.2]{szegoOrthogonalPolynomials1975}.
We use these facts to construct Gauss quadrature rules in \sec{jacobi-gauss}.

\paragraph{Jacobi polynomials.}
For parameters $\alpha,\beta>-1$, the Jacobi polynomials $P_n^{(\alpha,\beta)}$, $n\geq 0$, are the orthogonal polynomials on $[-1,1]$ with respect to the weight
\begin{align}
    w_{\alpha,\beta}(x) = (1-x)^\alpha (1+x)^\beta,
\end{align}
normalized as in \cite[Chapter~IV]{szegoOrthogonalPolynomials1975}; the condition $\alpha,\beta>-1$ ensures that the weight is integrable.
The polynomial $P_n^{(\alpha,\beta)}$ has degree $n$, and
\begin{align}
    \int_{-1}^1 P_{n}^{(\alpha,\beta)}(x) P_{m}^{(\alpha,\beta)}(x) w_{\alpha,\beta}(x)\d x = h_n\delta_{n,m},
\end{align}
where $h_n>0$ is a normalization constant.
The Jacobi polynomials satisfy the symmetry relation \cite[Section~4.1]{szegoOrthogonalPolynomials1975}
\begin{align}\label{eq:jacobi-symmetry}
    P_n^{(\alpha,\beta)}(-x) = (-1)^n P_n^{(\beta,\alpha)}(x).
\end{align}

The orthonormal polynomials of a beta distribution are shifted Jacobi polynomials.
Indeed, let $\mu_{a,D}$ be the law of $\Beta(a,D)$ for $a,D>0$ (\sec{preliminaries-beta}); the change of variables $t=2x-1$ gives, for all polynomials $p$ and $q$,
\begin{align}
    \int_0^1 p(x)q(x)\frac{x^{a-1}(1-x)^{D-1}}{B(a,D)}\d x = \frac{1}{2^{a+D-1}B(a,D)}\int_{-1}^1 p\Big(\frac{1+t}{2}\Big)q\Big(\frac{1+t}{2}\Big)(1-t)^{D-1}(1+t)^{a-1}\d t,
\end{align}
so that the orthonormal polynomials $\{q_n\}$ of $\mu_{a,D}$ are, up to normalization,
\begin{align}\label{eq:beta-jacobi}
    q_n(x)\propto P_n^{(D-1,a-1)}(2x-1).
\end{align}
The following two facts about the zeros of Jacobi polynomials are used in \sec{convergence-lower-bound}.

\begin{proposition}[Zeros of Jacobi polynomials]\label{prop:jacobi-zeros}
    For $\ell\geq 1$, write the zeros of $P_\ell^{(\alpha,\beta)}$ as $z_j^{(\alpha,\beta)}=\cos\theta_j^{(\alpha,\beta)}$ with $0<\theta_1^{(\alpha,\beta)}<\cdots<\theta_\ell^{(\alpha,\beta)}<\pi$, so that $z_1^{(\alpha,\beta)}>\cdots>z_\ell^{(\alpha,\beta)}$.
    \begin{enumerate}
        \item (Bruns' inequality \cite[Theorem~6.21.2]{szegoOrthogonalPolynomials1975}) For the Legendre polynomials, i.e., $\alpha=\beta=0$, we have $\theta_j^{(0,0)}\geq\frac{(2j-1)\pi}{2\ell+1}$ for every $j\in[\ell]$.
        \item (Monotonicity in the parameters \cite{dimitrovMonotonicityZerosJacobi2007}) Let $A_\ell(\alpha,\beta):=2\ell^2+2\ell(\alpha+\beta+1)+(\alpha+1)(\beta+1)$. For fixed $\ell$, $\alpha$, and $j$, the quantity $A_\ell(\alpha,\beta)\big(1-z_j^{(\alpha,\beta)}\big)$ is an increasing function of $\beta$.
    \end{enumerate}
\end{proposition}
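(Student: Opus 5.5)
Both parts of the proposition are classical facts about Jacobi polynomials. The plan is to import them from the cited sources and check carefully that the conventions agree with the ones fixed above: the normalization $w_{\alpha,\beta}(x)=(1-x)^\alpha(1+x)^\beta$, the parametrization $z_j=\cos\theta_j$, and the decreasing ordering $z_1>\cdots>z_\ell$.

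\emph{Bruns' inequality.} In \cite[Theorem~6.21.2]{szegoOrthogonalPolynomials1975}, the zeros of the Legendre polynomial $P_\ell=P_\ell^{(0,0)}$ are written as $\cos\theta_\nu$ with $0<\theta_1<\cdots<\theta_\ell<\pi$. This is exactly our ordering, since $\cos$ is decreasing on $(0,\pi)$. Szeg\H{o} asserts
\begin{align}
    \frac{(\nu-\tfrac12)\pi}{\ell+\tfrac12}<\theta_\nu<\frac{\nu\pi}{\ell+\tfrac12}.
\end{align}
Multiplying the numerator and denominator of the lower bound by two gives $\frac{(2\nu-1)\pi}{2\ell+1}$, which is the claim with $j=\nu$. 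Only the lower bound is needed, and the strict inequality implies the non-strict one stated.

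\emph{Monotonicity.} The second part is the main result of \cite{dimitrovMonotonicityZerosJacobi2007}, so the work is to translate its statement into our notation. If the reference orders the zeros increasingly, I will reindex $j\mapsto\ell+1-j$. If it states the monotonicity for $(1+x_j)$ as a function of $\alpha$, I will convert it using the symmetry relation \eq{jacobi-symmetry}, $P_\ell^{(\alpha,\beta)}(-x)=(-1)^\ell P_\ell^{(\beta,\alpha)}(x)$. Under this relation the zeros of $P_\ell^{(\alpha,\beta)}$ are the negatives of the zeros of $P_\ell^{(\beta,\alpha)}$, and $A_\ell(\alpha,\beta)$ is symmetric in $(\alpha,\beta)$. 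Hence a statement about $(1+x)$ and the first parameter becomes exactly our statement about $(1-z_j)$ and the second parameter, with $\alpha$ and $j$ held fixed.

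The main (mild) obstacle is this bookkeeping. The risk is mismatched index orders, or a reference statement that covers only a restricted parameter range (say $\alpha,\beta>-1$) or a slightly different constant in $A_\ell$. If the normalizing factor in the reference differs, I will check that our $A_\ell$ is the reference's quantity. In the application in \sec{convergence-lower-bound} we only need $\alpha=0$ and $\beta\in[0,N-2]$, so any admissible-range condition from the reference is satisfied.
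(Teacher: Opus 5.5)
Your proposal is correct and takes the same approach as the paper. The paper gives no proof of its own: it imports Bruns' inequality from Szeg\H{o}'s Theorem~6.21.2 and the monotonicity of $A_\ell(\alpha,\beta)\big(1-z_j^{(\alpha,\beta)}\big)$ in $\beta$ from Dimitrov and Rafaeli. Your convention checks are the right bookkeeping for this: the ordering by increasing $\theta_j$, rewriting $(j-\tfrac12)\pi/(\ell+\tfrac12)$ as $(2j-1)\pi/(2\ell+1)$, and passing between $(1+x,\alpha)$ and $(1-z,\beta)$ via the symmetry relation together with the symmetry of $A_\ell$ in $(\alpha,\beta)$.
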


\paragraph{Chebyshev polynomials and the arcsine measure.}
For $\alpha=\beta=-1/2$, the Jacobi polynomials are, up to normalization, the Chebyshev polynomials of the first kind $T_n$, which are characterized by $T_n(\cos\theta)=\cos(n\theta)$.
For $z\geq 1$,
\begin{align}\label{eq:chebyshev-growth}
    T_n(z)=\frac{(z+\sqrt{z^2-1})^n+(z-\sqrt{z^2-1})^n}{2}, \qquad\text{and hence}\qquad |T_n(z)|\leq(2z)^n.
\end{align}
For $R>0$, the arcsine probability measure $\rho_R$ on $[0,R]$ is
\begin{equation}
 \d\rho_R(t)=\frac{\d t}{\pi\sqrt{t(R-t)}};
 \label{eq:arcsine}
\end{equation}
it is the image of the normalized Chebyshev weight $\frac{\d x}{\pi\sqrt{1-x^2}}$ on $[-1,1]$ under $x\mapsto t=R(1+x)/2$, and its orthonormal polynomials are
\begin{align}\label{eq:arcsine-basis}
    q_0(t)=1,\qquad q_j(t)=\sqrt2\,T_j(2t/R-1)\quad(j\geq 1).
\end{align}
In the angular variable $t=R\sin^2(\phi/2)$ with $\phi\in[0,\pi]$, we have $2t/R-1=-\cos\phi$, $\d\rho_R(t)=\d\phi/\pi$, and $q_j(t)=\sqrt2(-1)^j\cos(j\phi)$.
The arcsine measure serves as a comparison measure for Christoffel functions in \sec{jacobi-gauss} and \sec{one-local-invariance}.

\subsection{Inequalities for the factorials}

\begin{proposition}[{Stirling's formula \cite{Robbins1955}}]\label{prop:stirling}
    For integer $n\geq 1$, 
    \begin{align}
    e^{\frac{1}{12n+1}}\sqrt{2\pi n}(n/e)^{n}\leq n! \leq e^{\frac{1}{12n}}\sqrt{2\pi n}(n/e)^{n}.
    \end{align}
\end{proposition}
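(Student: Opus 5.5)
This bound is a classical estimate and not a contribution of the paper, so a citation is the expected proof: \cite{Robbins1955} proves it for all integers $n\geq 1$. Below is how I would rederive it if a self-contained argument were wanted.

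Set $d_n:=\ln n!-\big(n+\tfrac12\big)\ln n+n$, so that the claim is equivalent to
\begin{align}
\ln\sqrt{2\pi}+\frac{1}{12n+1}\leq d_n\leq\ln\sqrt{2\pi}+\frac{1}{12n}.
\end{align}
The plan has three steps: control the consecutive differences $d_n-d_{n+1}$ from above and below, telescope to a limit $C=\lim_n d_n$, and identify $C$ via Wallis' product.

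For the first step, a direct computation gives
\begin{align}
d_n-d_{n+1}=\Big(n+\tfrac12\Big)\ln\frac{n+1}{n}-1 .
\end{align}
Write $x=1/(2n+1)$, so that $(n+1)/n=(1+x)/(1-x)$. Then
\begin{align}
d_n-d_{n+1}=\frac{1}{2x}\ln\frac{1+x}{1-x}-1=\sum_{j\geq1}\frac{x^{2j}}{2j+1}.
\end{align}
For the upper bound on the difference, bound each coefficient $1/(2j+1)$ by $1/3$ and sum the geometric series to get $\tfrac13\cdot\frac{x^2}{1-x^2}=\frac{1}{12}\Big(\frac1n-\frac1{n+1}\Big)$. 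Hence $d_n-\frac1{12n}$ is increasing. For the lower bound, keep the leading term to get $d_n-d_{n+1}>\frac{x^2}{3}$. One then checks the elementary inequality
\begin{align}
\frac{1}{3(2n+1)^2}>\frac{1}{12n+1}-\frac{1}{12(n+1)+1},
\end{align}
which is a polynomial comparison after clearing denominators. Hence $d_n-\frac{1}{12n+1}$ is decreasing.

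Both sequences are monotone and differ by $O(1/n)$, so they share a limit $C$. The monotonicity then gives
\begin{align}
C+\frac{1}{12n+1}<d_n<C+\frac1{12n},
\end{align}
strictly for every $n\geq1$.

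The main obstacle is identifying $C=\ln\sqrt{2\pi}$. I would use Wallis' product, in the form $\frac{(2^n n!)^4}{((2n)!)^2(2n+1)}\to\frac{\pi}{2}$. Substituting $n!=e^{d_n}n^{n+1/2}e^{-n}$ and $(2n)!=e^{d_{2n}}(2n)^{2n+1/2}e^{-2n}$, the powers of $n$ and $e$ cancel. The ratio becomes $e^{4d_n-2d_{2n}}/4$ up to factors tending to $1$. Letting $n\to\infty$ gives $e^{2C}/4=\pi/2$, hence $e^{C}=\sqrt{2\pi}$.

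Exponentiating the inequalities on $d_n$ then gives exactly the two bounds in \prop{stirling}.
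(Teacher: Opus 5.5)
Your proposal is correct: the paper gives no proof and simply cites Robbins, and your self-contained derivation is essentially Robbins's own argument. That argument shows $d_n-\frac{1}{12n}$ is strictly increasing and $d_n-\frac{1}{12n+1}$ strictly decreasing, then uses Wallis' product to identify the common limit as $\ln\sqrt{2\pi}$, and all of your computations check out. The one small difference is in the lower bound: you keep only the leading term $x^2/3$ instead of Robbins's geometric-series bound $\frac{1}{3(2n+1)^2-1}$, which still suffices because, after clearing denominators, the comparison reduces to $24n>23$.
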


We will need the bound for double factorials, which can be derived from Stirling's formula.
\begin{proposition}\label{prop:double-factorial}
    For integer $n\geq 1$, 
    \begin{align}
        (2n-1)!! \leq 2 \left(\frac{2n}{e}\right)^{n}
    \end{align}
\end{proposition}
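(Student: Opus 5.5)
The natural route is through Stirling's formula (\prop{stirling}) and the identity $(2n-1)!! = \frac{(2n)!}{2^n n!}$. This identity holds because the product of all odd numbers up to $2n-1$ equals $(2n)!$ divided by the product of the even numbers $2\cdot4\cdots(2n) = 2^n n!$. We then apply the upper Stirling bound to $(2n)!$ and the lower Stirling bound to $n!$:
\begin{align}
(2n-1)!! \leq \frac{e^{\frac{1}{24n}}\sqrt{4\pi n}\,(2n/e)^{2n}}{2^n\, e^{\frac{1}{12n+1}}\sqrt{2\pi n}\,(n/e)^n}.
\end{align}

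Next we simplify the right-hand side term by term. The square roots give $\sqrt{4\pi n}/\sqrt{2\pi n}=\sqrt2$. The powers give
\begin{align}
\frac{(2n/e)^{2n}}{2^n (n/e)^n} = \frac{2^{2n} n^{2n} e^{-2n}}{2^n n^n e^{-n}} = 2^n n^n e^{-n} = \left(\frac{2n}{e}\right)^n.
\end{align}
The exponential correction factor is $e^{\frac{1}{24n}-\frac{1}{12n+1}}$. Since $24n > 12n+1$ for $n\geq1$, we have $\frac{1}{24n} < \frac{1}{12n+1}$, so the exponent is negative and the factor is at most $1$. Altogether $(2n-1)!! \leq \sqrt2\,(2n/e)^n \leq 2(2n/e)^n$.

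No step here is a genuine obstacle; the only care needed is to pair the correct directions of Stirling's inequality, upper for the numerator and lower for the denominator. The resulting constant $\sqrt2$ is stronger than the stated constant $2$.
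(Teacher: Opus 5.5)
Your proof is correct and follows the paper's route: the identity $(2n-1)!!=(2n)!/(2^n n!)$ combined with the Stirling bounds of \prop{stirling}. The only difference is cosmetic. The paper discards the lower correction factor $e^{1/(12n+1)}$ and then bounds $\sqrt2\,e^{1/(24n)}\le\sqrt2\,e^{1/24}<2$. You instead let that factor cancel $e^{1/(24n)}$, which gives the slightly sharper constant $\sqrt2$.
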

\begin{proof}
    For integer $n\geq 1$, by \prop{stirling} we have 
    \begin{align}
        (2n-1)!! 
        &= \frac{(2n)!}{2^n n!}
        \leq \frac{e^{\frac{1}{24n}}\sqrt{4\pi n}(2n/e)^{2n}}{2^n \sqrt{2\pi n}(n/e)^{n}} 
        = \sqrt{2} \cdot e^{\frac{1}{24n}} \left(\frac{2n}{e}\right)^{n}.
    \end{align}
    The factor $e^{1/24}<1.05<\sqrt 2$ for $n\geq 1$.
\end{proof}

\prop{double-factorial-ratio} below shows that $\frac{k!!}{(k-1)!!}=\Theta(k^{1/2})$ for integer $k\geq 1$.
The following propositions adopt the convention that $0!!=(-1)!!=1$.
\begin{proposition}[{Wallis cosine formula \cite[Section~3.621]{GradshteynRyzhik2015}}]\label{prop:wallis}
    For integer $k\geq 0$, 
    \begin{align}
        \int_0^{\pi/2} \cos^k x \d x = 
        \begin{cases}
            \frac{(k-1)!!}{k!!}\cdot \frac{\pi}{2} & \text{if $k$ is even}\\
            \frac{(k-1)!!}{k!!} & \text{if $k$ is odd}
        \end{cases}
    \end{align}
    where the convention $0!!=(-1)!!=1$ is adopted.
\end{proposition}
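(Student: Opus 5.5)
The plan is to prove the Wallis formula by the standard reduction recurrence. Set $I_k=\int_0^{\pi/2}\cos^k x\,\d x$. The base cases are immediate: $I_0=\pi/2$, which matches $\frac{(-1)!!}{0!!}\cdot\frac{\pi}{2}$ under the convention $0!!=(-1)!!=1$, and $I_1=\int_0^{\pi/2}\cos x\,\d x=1=\frac{0!!}{1!!}$.

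For $k\geq 2$, I would write $\cos^k x=\cos^{k-1}x\cdot\cos x$ and integrate by parts, with $u=\cos^{k-1}x$ and $\d v=\cos x\,\d x$. The boundary term $[\cos^{k-1}x\sin x]_0^{\pi/2}$ vanishes because $k-1\geq 1$ and $\sin 0=0$. This gives
\begin{align}
I_k=(k-1)\int_0^{\pi/2}\cos^{k-2}x\,\sin^2 x\,\d x=(k-1)(I_{k-2}-I_k),
\end{align}
using $\sin^2 x=1-\cos^2 x$. Solving for $I_k$ yields the recurrence $I_k=\frac{k-1}{k}I_{k-2}$.

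Induction on $k$ then finishes the proof, with the even and odd parities handled separately. Since $\frac{k-1}{k}\cdot\frac{(k-3)!!}{(k-2)!!}=\frac{(k-1)!!}{k!!}$, the ratio form propagates through the recurrence, and each parity keeps the constant of its base case: $\pi/2$ for even $k$ and $1$ for odd $k$.

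There is no real obstacle here. The only points to check are the vanishing of the boundary term, which needs $k\geq2$, and that the double-factorial convention makes the base cases fit the stated formula.
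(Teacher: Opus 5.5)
Your proof is correct. The integration by parts, the vanishing of the boundary term for $k\geq 2$, the recurrence $I_k=\frac{k-1}{k}I_{k-2}$, and the parity-wise induction all hold; the convention $0!!=(-1)!!=1$ is exactly what makes the double-factorial identity valid at the first steps $k=2$ and $k=3$.

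The paper gives no proof of this proposition and simply cites Gradshteyn--Ryzhik. Your reduction-formula argument is therefore a self-contained version of the standard derivation behind that citation, not a competing route.
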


\begin{proposition}\label{prop:double-factorial-ratio}
    For integer $k\geq 1$, $\sqrt{\frac{2k}{\pi}} \leq \frac{k!!}{(k-1)!!}\leq \sqrt{\frac{\pi(k+1)}{2}}$, with the convention that $0!!=1$.
\end{proposition}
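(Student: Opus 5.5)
The plan is to use the Wallis formula (\prop{wallis}) together with the monotonicity of $I_k:=\int_0^{\pi/2}\cos^k x\,\d x$ in $k$. Since $0\le\cos x\le 1$ on $[0,\pi/2]$, the sequence $I_k$ is nonincreasing, so $I_{k+1}\le I_k\le I_{k-1}$. The key identity is
\begin{align}
I_kI_{k-1}=\frac{\pi}{2k}\qquad (k\ge 1).
\end{align}
It follows from \prop{wallis} by direct multiplication: exactly one of $k,k-1$ is even, so the product of the two ratios telescopes to $\frac{(k-1)!!}{k!!}\cdot\frac{(k-2)!!}{(k-1)!!}=\frac{(k-2)!!}{k!!}=\frac1k$, times a single factor $\pi/2$. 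The case $k=1$, where $(k-2)!!=(-1)!!=1$, should be checked by hand: $I_1I_0=\pi/2$.

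Next I would write $r_k:=\frac{k!!}{(k-1)!!}$. Then $I_k=c_k/r_k$, where $c_k=\pi/2$ if $k$ is even and $c_k=1$ if $k$ is odd. Monotonicity gives two inequalities.

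From $I_k^2\le I_kI_{k-1}=\pi/(2k)$ we get $I_k\le\sqrt{\pi/(2k)}$.

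From $I_k^2\ge I_kI_{k+1}=\pi/(2(k+1))$ we get $I_k\ge\sqrt{\pi/(2(k+1))}$.

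Substituting $I_k=c_k/r_k$ yields $c_k\sqrt{2k/\pi}\le r_k\le c_k\sqrt{2(k+1)/\pi}$.

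For even $k$, $c_k=\pi/2$, and this reads $\sqrt{\pi k/2}\le r_k\le\sqrt{\pi(k+1)/2}$. The lower end is at least $\sqrt{2k/\pi}$ since $\pi/2\ge 2/\pi$, so the claim holds.

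For odd $k$, $c_k=1$, and we get $\sqrt{2k/\pi}\le r_k\le\sqrt{2(k+1)/\pi}\le\sqrt{\pi(k+1)/2}$, again giving the claim.

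The only delicate point is the bookkeeping in the identity $I_kI_{k-1}=\pi/(2k)$: tracking which factor carries the $\pi/2$, and the conventions $0!!=(-1)!!=1$ at $k=1,2$. Both are routine to check directly.
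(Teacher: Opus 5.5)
Your proof is correct and follows essentially the same route as the paper. Both use Wallis's formula together with the monotonicity $I_{k+1}\le I_k\le I_{k-1}$, treated separately for even and odd $k$. Your product identity $I_kI_{k-1}=\pi/(2k)$ is just a tidier packaging of the substitution the paper performs directly via $k!!=k\,(k-2)!!$.
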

\begin{proof}
    By \prop{wallis}, since $\cos x\leq 1$, for even $k$, we have 
    \begin{align}
        \frac{k!!}{(k+1)!!} \leq \frac{\pi}{2} \cdot \frac{(k-1)!!}{k!!} \leq \frac{(k-2)!!}{(k-1)!!},
    \end{align}
    with the convention that $0!!=(-1)!!=1$.
    Since $k!!=k\cdot (k-2)!!$, we have 
    \begin{align}\label{eq:wallis-even}
        \sqrt{\frac{\pi k}{2}} \leq \frac{k!!}{(k-1)!!} \leq \sqrt{\frac{\pi(k+1)}{2}}.
    \end{align}
    For odd $k$, we can simply substitute the factor $\pi/2$ in \eq{wallis-even} by its inverse, and the bound becomes
    \begin{align}
        \sqrt{\frac{2k}{\pi}} \leq \frac{k!!}{(k-1)!!} \leq \sqrt{\frac{2(k+1)}{\pi}}.
    \end{align}
    For our purpose, we combine the two bounds by choosing the weaker one on each side to establish a unified bound for all $k\geq 1$.
\end{proof}

\section{Closeness of polynomial expectations}\label{sec:block-encoding}

In this section, we show that for a polynomial $p$ of degree at most $k$, the expectation of $|p(X)|^2$, where $X$ is the squared magnitude of an entry of $U$ (or, more generally, the total probability of a block of entries in a column) and $U$ is sampled from an $\epsilon$-approximate $k$-design, approximates that from the Haar measure to relative error $\epsilon$ (see \rmk{design} for the definition); we also derive a first-moment version of this statement.

\begin{lemma}\label{lem:strong-design-polynomial}
    Let $\nu$ be a probability distribution on $\mathrm U(D)$, and $p,q$ be nonnegative integers.
    Then $\nu$ is a strong $\epsilon$-approximate $(p,q)$-design with relative error if and only if for every polynomial $f$ whose every monomial contains $p$ entries of $U$ and $q$ entries of $U^*$, it holds that 
    \begin{align}
        (1-\epsilon) \Exp_{U\sim \Haar(D)}\left[|f(U)|^2\right]
        \leq 
        \Exp_{U\sim \nu}\left[|f(U)|^2\right]
        \leq 
        (1+\epsilon) \Exp_{U\sim \Haar(D)}\left[|f(U)|^2\right].
    \end{align}
\end{lemma}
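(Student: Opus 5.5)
The plan is to translate the completely positive ordering of the twirling channels into an ordering of quadratic forms. First I fix a convenient representation. Write $\mathcal H=(\mathbb C^D)^{\otimes p}\otimes(\mathbb C^D)^{\otimes q}$ and $W(U)=U^{\otimes p}\otimes(U^*)^{\otimes q}$. Every monomial with $p$ entries of $U$ and $q$ entries of $U^*$ has the form $\prod_{i\le p}U_{a_ib_i}\prod_{j\le q}U^*_{c_jd_j}=\bra{a,c}W(U)\ket{b,d}$. So every polynomial $f$ of the stated type is $f(U)=\mathrm{Tr}(AW(U))$ for some operator $A$ on $\mathcal H$. Conversely, every such trace is a polynomial of this type. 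Writing $A=\sum_i\ket{\beta_i}\bra{\alpha_i}$, we get
$$|f(U)|^2=\sum_{i,j}\bra{\alpha_i}W(U)\ket{\beta_i}\overline{\bra{\alpha_j}W(U)\ket{\beta_j}}.$$
This can be written as $\bra{\Omega_\alpha}(W\otimes \bar W)\ket{\ldots}$, but the cleaner route uses the Choi--Jamio{\l}kowski isomorphism.

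For the channel $\Delta_\nu^{(p,q)}(X)=\Exp_\nu[W X W^\dag]$, the Choi operator is $J_\nu=\Exp_\nu[(W\otimes I)\ket{\Phi}\bra{\Phi}(W\otimes I)^\dag]$, where $\ket\Phi=\sum_b\ket{b}\ket{b}$ is the unnormalized maximally entangled vector on $\mathcal H\otimes\mathcal H$. A map $\mathcal B-\mathcal A$ is completely positive iff its Choi operator is positive semidefinite. Hence the design condition is equivalent to
$$(1-\epsilon)J_{\Haar}\preceq J_\nu\preceq(1+\epsilon)J_{\Haar},$$
that is, to the inequalities $(1-\epsilon)\bra{v}J_{\Haar}\ket{v}\le\bra vJ_\nu\ket v\le(1+\epsilon)\bra vJ_{\Haar}\ket v$ for all $\ket v\in\mathcal H\otimes\mathcal H$.

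Next I compute the quadratic form. We have $\bra vJ_\nu\ket v=\Exp_\nu|\bra v(W\otimes I)\ket\Phi|^2$. Take $\ket v=\sum_{a,b}\overline{A_{ba}}\,\ket a\ket b$, which sets up a bijection between vectors $v$ and operators $A$. Then
$$\bra v(W\otimes I)\ket\Phi=\sum_{a,b}A_{ba}\bra a W\ket b=\mathrm{Tr}(AW(U))=f(U).$$
Thus $\bra vJ_\nu\ket v=\Exp_\nu|f(U)|^2$, and likewise for the Haar measure. Both directions of the equivalence now follow at once, since $v$ ranges over all vectors exactly when $f$ ranges over all polynomials of the stated form.

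The only step needing care is the bookkeeping showing that $\{\mathrm{Tr}(AW(U))\}$ is exactly the span of the described monomials. Each monomial is a single matrix element of $W$, so it is the trace against a rank-one $A$, and linearity gives the rest. A second point to check is that the ordering convention (complete positivity of the difference) matches PSD-ness of the Choi difference; this is standard, and the map $J\mapsto$ channel is linear. No real obstacle is expected; the main thing is keeping the complex conjugation in $v$ consistent so that the quadratic form is $|f|^2$ rather than $|\bar f|^2$. Either sign convention works, since the class of polynomials of the stated type is closed under conjugating coefficients.
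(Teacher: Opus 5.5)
Your proposal is correct and takes essentially the same route as the paper. The paper also identifies the Choi matrix of $\Delta_\nu^{(p,q)}$ with $\Exp_{U\sim\nu}[\,|v(U)\rangle\langle v(U)|\,]$, where $|v(U)\rangle=(W(U)\otimes I)|\Phi\rangle$ is the vectorization of $U^{\otimes p}\otimes(U^*)^{\otimes q}$. It then writes each admissible $f$ as $\langle\psi|v(U)\rangle$, so the Choi-ordering condition becomes exactly the stated inequalities on $\Exp|f(U)|^2$; your explicit bijection between operators $A$ and vectors $v$ simply makes this $\psi\leftrightarrow f$ correspondence concrete.
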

\begin{proof}
    Let $R_{p,q}(U) := U^{\otimes p}\otimes (U^*)^{\otimes q}$ and 
    \begin{align}
        \Delta_\mu^{(p,q)}(X) := \Exp_{U\sim\mu}\left[
            R_{p,q}(U)
            X
            R_{p,q}(U)^\dag
        \right].
    \end{align}
    Let $\ket{v(U)}$ be the vectorization of $R_{p,q}(U)$.
    Then for every polynomial $f$ whose every monomial contains $p$ entries of $U$ and $q$ entries of $U^*$, there is $\ket{\psi}\in\mathbb C^{D^{p+q}}\otimes\mathbb C^{D^{p+q}}$ such that
    \begin{align}
        f(U) = \bra{\psi} v(U)\rangle.
    \end{align}
    Define the Choi matrix of $\Delta_\mu^{(p,q)}$ as
    \begin{align}
        M_\mu := \Exp_{U\sim\mu}[\proj{v(U)}].
    \end{align}
    By the Choi criterion for complete positivity, $\Delta_\mu^{(p,q)}$ is completely positive if and only if its Choi matrix $M_\mu = \Exp_{U\sim\mu}[\proj{v(U)}]$ is positive semidefinite. 
    Then we have
    \begin{align}
        \Exp_{U\sim\mu}[|f(U)|^2] = \bra{\psi} M_\mu \ket{\psi}.
    \end{align}
    Therefore $\nu$ is a strong design if and only if \begin{align}
        (1-\epsilon) M_{\Haar(D)} \leq M_\nu \leq (1+\epsilon) M_{\Haar(D)},
    \end{align}
    i.e., if and only if $(1-\epsilon)\bra{\psi}M_{\Haar(D)}\ket{\psi}\leq\bra{\psi}M_\nu\ket{\psi}\leq(1+\epsilon)\bra{\psi}M_{\Haar(D)}\ket{\psi}$ for every vector $\ket{\psi}$.
    Since every $\ket{\psi}$ arises from a polynomial $f(U)=\langle\psi|v(U)\rangle$ of the stated form (each entry of $\ket{v(U)}$ is a product of $p$ entries of $U$ and $q$ entries of $U^*$), these scalar inequalities are exactly the inequalities in the lemma.
\end{proof}

\begin{theorem}\label{thm:polynomial-relative-error}
    For an integer $N\geq 2$, let $\mu(k,\epsilon)$ be the distribution of an $\epsilon$-approximate $k$-design on $\mathrm U(N)$ (see \rmk{design} for the clarification on the parameters).
    Fix a column index $y$ and a set $S$ of row indices, and let
    \begin{align}
        X = X(U) = \sum_{r\in S} |\bra{r}U\ket{y}|^2
    \end{align}
    be the total probability of the block $S$; for $S=\{x\}$ this is the squared magnitude $|\bra{x}U\ket{y}|^2$ of a single entry.
    Let $p(x)$ be a polynomial of degree at most $k$ with complex coefficients.
    Then
    \begin{align}\label{eq:relative-error-p}
        (1-\epsilon)\Exp_{U\sim\Haar(N)}[|p(X)|^2]
        \leq
        \Exp_{U\sim\mu(k,\epsilon)}[|p(X)|^2]
        \leq
        (1+\epsilon)\Exp_{U\sim\Haar(N)}[|p(X)|^2].
    \end{align}
    Furthermore, for every two polynomials $p,q$ of degree at most $k$ with complex coefficients,
    \begin{align}\label{eq:relative-error-pq}
        \left|\Exp_{U\sim\mu(k,\epsilon)}\big[\overline{p(X)}q(X)\big] - \Exp_{U\sim\Haar(N)}\big[\overline{p(X)}q(X)\big]\right|
        \leq \epsilon \sqrt{\Exp_{U\sim\Haar(N)}[|p(X)|^2]\cdot\Exp_{U\sim\Haar(N)}[|q(X)|^2]}.
    \end{align}
    In particular, taking $p\equiv 1$, $\big|\Exp_{U\sim\mu(k,\epsilon)}[q(X)]-\Exp_{U\sim\Haar(N)}[q(X)]\big|\leq\epsilon\sqrt{\Exp_{U\sim\Haar(N)}[|q(X)|^2]}$.
\end{theorem}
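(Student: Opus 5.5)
The plan is to reduce \eq{relative-error-p} to \lem{strong-design-polynomial}. To do so, I will write $|p(X)|^2=|f(U)|^2$ for a polynomial $f$ in which every monomial contains exactly $k$ entries of $U$ and $k$ entries of $U^*$.

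Write $p(x)=\sum_{j=0}^k c_j x^j$ and $X=\sum_{r\in S}U_{ry}\overline{U_{ry}}$. Then $X^j$ is a polynomial whose monomials contain $j$ entries of $U$ and $j$ of $U^*$, but the degrees differ across $j$. To homogenize, I use unitarity: the column $y$ has unit norm, so $1=\sum_{r=1}^N U_{ry}\overline{U_{ry}}$. Define
\begin{align}
f(U)=\sum_{j=0}^k c_j X(U)^j\Big(\sum_{r=1}^N U_{ry}\overline{U_{ry}}\Big)^{k-j}.
\end{align}
On $\mathrm U(N)$ we have $f(U)=p(X(U))$, and every monomial of $f$ has exactly $k$ entries of $U$ and $k$ of $U^*$. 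Since $|p(X)|^2=|f(U)|^2$ pointwise on the unitary group, the "only if" direction of \lem{strong-design-polynomial} with $(p,q)=(k,k)$ yields \eq{relative-error-p} directly, because $\mu(k,\epsilon)$ is a strong $\epsilon$-approximate $(k,k)$-design by \rmk{design}. The step needing the most care is this homogenization, since the lemma only applies to bidegree-$(k,k)$ polynomials. It is also the reason a plain $k$-design would not suffice: entries of $U^*$ must be controlled as well.

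For \eq{relative-error-pq}, I will work with Hermitian forms. Define $\langle p,q\rangle_\nu=\Exp_\nu[\overline{p(X)}q(X)]$ on the space of polynomials of degree at most $k$. Let $D(p,q)=\langle p,q\rangle_{\mu(k,\epsilon)}-\langle p,q\rangle_{\Haar}$, which is a Hermitian sesquilinear form. The first part gives $|D(p,p)|\leq\epsilon\langle p,p\rangle_{\Haar}$. This means $\epsilon\langle\cdot,\cdot\rangle_{\Haar}\pm D$ are both positive semidefinite forms. A generalized Cauchy--Schwarz argument then gives $|D(p,q)|\leq\epsilon\|p\|_{\Haar}\|q\|_{\Haar}$.

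Concretely, the argument runs as follows. Rotate $q$ by a phase so that $D(p,q)$ is real and nonnegative. If either norm vanishes, positivity forces $D=0$ on it; otherwise rescale so that $\|p\|=\|q\|=1$. Polarization then gives
\begin{align}
4\,\mathrm{Re}\,D(p,q)=D(p+q,p+q)-D(p-q,p-q)\leq\epsilon(\|p+q\|^2+\|p-q\|^2)=4\epsilon .
\end{align}
The final equality is the parallelogram law in the Haar inner product, and rescaling back yields \eq{relative-error-pq}. The "in particular" statement follows by taking $p\equiv1$ and noting that $\Exp_{\Haar}[1]=1$.
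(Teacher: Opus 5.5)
Your proof is correct. The first half coincides with the paper's: you homogenize with powers of $\sum_{r}|U_{ry}|^2$ and then apply the ``only if'' direction of \lem{strong-design-polynomial}.

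For \eq{relative-error-pq} your route differs. The paper goes back to the Choi matrices $M=M_{\Haar(N)}$ and $\Delta=M_{\mu(k,\epsilon)}-M$ from the proof of \lem{strong-design-polynomial}. It notes that $\epsilon M\pm\Delta$ are positive semidefinite, applies Cauchy--Schwarz to each, and then applies Cauchy--Schwarz in $\mathbb R^2$. You instead derive the cross-term bound from the diagonal bound \eq{relative-error-p} alone, using polarization of the Hermitian form $D$ and the parallelogram law for the Haar form.

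Both are proofs of the same generalized Cauchy--Schwarz inequality for a Hermitian form sandwiched between $\pm\epsilon$ times a positive semidefinite form, and both give the constant $\epsilon$. Yours works entirely at the scalar level, which makes clear that \eq{relative-error-pq} needs nothing beyond \eq{relative-error-p}. It therefore transfers verbatim to the multi-block setting of \rmk{several-blocks}. The paper's version ties the inequality directly to the operator ordering of the design.

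Your handling of the degenerate case is terse but right. If $\langle p,p\rangle_{\Haar}=0$, both forms $\epsilon\langle\cdot,\cdot\rangle_{\Haar}\pm D$ vanish on $p$, so Cauchy--Schwarz for these positive semidefinite forms gives $D(p,\cdot)=0$. Alternatively, you can skip the normalization entirely. Apply your polarization bound to the pair $(tp,q/t)$ with $t>0$, which leaves $D(p,q)$ unchanged and gives $4D(p,q)\leq 2\epsilon\big(t^2\langle p,p\rangle_{\Haar}+t^{-2}\langle q,q\rangle_{\Haar}\big)$. Then optimize over $t$, or let $t\to 0$ or $t\to\infty$ if one norm vanishes.
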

\begin{proof}
    For a polynomial $p(x)=\sum_{j=0}^k a_j x^j$, define
    \begin{align}
        f_p(U) = \sum_{j=0}^k a_j\Big(\sum_{r\in S}|U_{ry}|^{2}\Big)^{j} \left(\sum_{r=0}^{N-1} |U_{ry}|^2\right)^{k-j}.
    \end{align}
    Then $f_p(U)$ is a polynomial whose every monomial contains $k$ entries of $U$ and $k$ entries of $U^*$.
    Since $U$ is a unitary matrix, we have
    \begin{align}
        \sum_{r=0}^{N-1} |U_{ry}|^2 = 1,
    \end{align}
    which implies that $f_p(U) = p(X(U))$.
    By \lem{strong-design-polynomial}, we conclude that \eq{relative-error-p} holds.

    For \eq{relative-error-pq}, we use the notation of the proof of \lem{strong-design-polynomial}: there are vectors $\ket{\psi_p},\ket{\psi_q}$ such that $f_p(U)=\langle\psi_p|v(U)\rangle$ and $f_q(U)=\langle\psi_q|v(U)\rangle$, so that $\Exp_{U\sim\mu}[\overline{f_p(U)}f_q(U)]=\bra{\psi_q}M_\mu\ket{\psi_p}$ for every distribution $\mu$, where $M_\mu=\Exp_{U\sim\mu}[\proj{v(U)}]$.
    Let $M=M_{\Haar(N)}$ and $\Delta=M_{\mu(k,\epsilon)}-M$.
    By the proof of \lem{strong-design-polynomial}, the design condition is equivalent to $-\epsilon M\leq\Delta\leq\epsilon M$, i.e., the matrices $A_\pm:=\epsilon M\pm\Delta$ are positive semidefinite.
    For a positive semidefinite matrix $A$ and vectors $u,w$, the Cauchy--Schwarz inequality gives $|\bra{u}A\ket{w}|\leq\sqrt{\bra{u}A\ket{u}\bra{w}A\ket{w}}$.
    Since $\Delta=(A_+-A_-)/2$, applying this inequality to $A_+$ and $A_-$ and then the Cauchy--Schwarz inequality in $\mathbb R^2$, we obtain
    \begin{align}
        |\bra{u}\Delta\ket{w}|
        &\leq \frac{1}{2}\sum_{\pm}\sqrt{\bra{u}A_\pm\ket{u}\bra{w}A_\pm\ket{w}} \\
        &\leq \frac{1}{2}\sqrt{\Big(\sum_\pm\bra{u}A_\pm\ket{u}\Big)\Big(\sum_\pm\bra{w}A_\pm\ket{w}\Big)}
        = \epsilon\sqrt{\bra{u}M\ket{u}\bra{w}M\ket{w}}.\nonumber
    \end{align}
    Taking $u=\psi_q$ and $w=\psi_p$ yields \eq{relative-error-pq}, since $\bra{\psi_q}M\ket{\psi_q}=\Exp_{\Haar(N)}[|q(X)|^2]$ and $\bra{\psi_p}M\ket{\psi_p}=\Exp_{\Haar(N)}[|p(X)|^2]$.
\end{proof}

\begin{remark}\label{rmk:several-blocks}
    The proof of \thm{polynomial-relative-error} applies without change to polynomials in several block probabilities of the same column: if $X_1,\ldots,X_r$ are the probabilities of blocks $S_1,\ldots,S_r$ (not necessarily disjoint) in the column $y$ and $p(x_1,\ldots,x_r)$ is a polynomial of total degree at most $k$ with complex coefficients, then \eq{relative-error-p} and \eq{relative-error-pq} hold with $p(X)$ replaced by $p(X_1,\ldots,X_r)$.
    Indeed, multiplying each monomial of total degree $j$ by $\big(\sum_{z=0}^{N-1}|U_{zy}|^2\big)^{k-j}$ produces a polynomial $f_p(U)$ in which every monomial contains $k$ entries of $U$ and $k$ entries of $U^*$ and which coincides with $p(X_1,\ldots,X_r)$ on unitary matrices, so \lem{strong-design-polynomial} applies.
    We use this extension in \sec{variance-shannon} for linear combinations $sP_x+tP_{x'}$ of two output probabilities.
\end{remark}

We will be using \thm{polynomial-relative-error} together with polynomial approximations of step functions and exponential functions to establish our statistical convergence results in \sec{convergence-upper-bound} and \sec{local-translation-invariant}.

We will be using the following theorem to show that our construction of a unitary distribution must have a single entry far from its Haar counterpart in Kolmogorov distance in \sec{convergence-lower-bound}.

\begin{theorem}\label{thm:polynomial-to-design}
    For integer $D\geq 1$, let $\nu$ be a unitary distribution on $\mathrm U(D)$.
    If for every polynomial $f$ whose every monomial contains at most $p$ entries of $U$ and at most $q$ entries of $U^*$, it holds that 
    \begin{align}\label{eq:moment-matching-condition}
        (1-\epsilon) \Exp_{U\sim \Haar(D)}\left[|f(U)|^2\right]
        \leq 
        \Exp_{U\sim \nu}\left[|f(U)|^2\right]
        \leq 
        (1+\epsilon) \Exp_{U\sim \Haar(D)}\left[|f(U)|^2\right],
    \end{align} 
    then $\nu$ is a strong $\epsilon$-approximate $(p,q)$-design. 
\end{theorem}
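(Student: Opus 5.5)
Theorem~\ref{thm:polynomial-to-design} follows from \lem{strong-design-polynomial}; only one direction of that lemma is needed, and it is essentially immediate. \lem{strong-design-polynomial} says that $\nu$ is a strong $\epsilon$-approximate $(p,q)$-design exactly when the two-sided relative bound holds for every polynomial $f$ in which each monomial contains \emph{exactly} $p$ entries of $U$ and \emph{exactly} $q$ entries of $U^*$. Such $f$ are a subclass of the polynomials in the hypothesis \eq{moment-matching-condition}, where each monomial has at most $p$ entries of $U$ and at most $q$ entries of $U^*$. The hypothesis therefore applies in particular to every such $f$.

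The steps are as follows. First, fix an arbitrary vector $\ket{\psi}$ and use the vectorization from the proof of \lem{strong-design-polynomial} to set $f(U)=\langle\psi|v(U)\rangle$. Each coordinate of $\ket{v(U)}$ is a product of exactly $p$ entries of $U$ and exactly $q$ entries of $U^*$, so $f$ is homogeneous of bidegree $(p,q)$ and lies in the class covered by the hypothesis. Second, the hypothesis gives
\[
(1-\epsilon)\bra{\psi}M_{\Haar(D)}\ket{\psi}\leq\bra{\psi}M_\nu\ket{\psi}\leq(1+\epsilon)\bra{\psi}M_{\Haar(D)}\ket{\psi},
\]
and since $\ket{\psi}$ is arbitrary, this is the operator inequality $(1-\epsilon)M_{\Haar(D)}\leq M_\nu\leq(1+\epsilon)M_{\Haar(D)}$. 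Third, by the Choi criterion, as in the proof of \lem{strong-design-polynomial}, this operator inequality is equivalent to complete positivity of $\Delta_\nu^{(p,q)}-(1-\epsilon)\Delta_{\Haar}^{(p,q)}$ and of $(1+\epsilon)\Delta_{\Haar}^{(p,q)}-\Delta_\nu^{(p,q)}$. That is precisely \dfn{strong-design}.

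The only point needing care is the Choi-matrix correspondence. The Choi matrix of $X\mapsto\Exp[R X R^\dag]$ is $\Exp[\proj{\mathrm{vec}(R)}]$, up to a fixed reordering of tensor factors, and that reordering preserves positive semidefiniteness. This correspondence was already established in \lem{strong-design-polynomial}, so I will simply invoke it. I expect no real obstacle: the statement is a weakening of the ``if'' direction of that lemma.

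The relaxation from ``exactly'' to ``at most'' is there because in the lower-bound construction the moment conditions arise naturally for lower-degree polynomials, such as $p(|U_{00}|^2)$ with $\deg p\le k$. These could also be homogenized to exact bidegree by multiplying by powers of $\sum_r|U_{ry}|^2=1$, as in the proof of \thm{polynomial-relative-error}.
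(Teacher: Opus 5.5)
Your proposal is correct and matches the paper's proof. The paper also restricts the hypothesis to polynomials of exact bidegree $(p,q)$ and invokes \lem{strong-design-polynomial}; your vectorization and Choi-matrix step only spells out what that lemma's proof already contains.
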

\begin{proof}
    By the assumption, we have that for every polynomial $f$ whose every monomial contains exactly $p$ entries of $U$ and $q$ entries of $U^*$, \eq{moment-matching-condition} holds. 
    Then by \lem{strong-design-polynomial}, we conclude that $\nu$ is a strong $\epsilon$-approximate $(p,q)$-design.
\end{proof}

\section{Kolmogorov convergence to Porter--Thomas}\label{sec:convergence-upper-bound}

In this section, we show that every $\epsilon$-approximate unitary $k$-design (see \rmk{design} for the definition) approximates Porter--Thomas to error 
\begin{align}\label{eq:ks-upper-bound}
O\left(\sqrt{\frac{N}{k(k+N)}}+\epsilon\right) = O\left(\frac{1}{\sqrt k}\cdot \min\left\{1, \sqrt{\frac{N}{k}}\right\}+\epsilon\right)
\end{align}
in Kolmogorov distance,
where $N$ is the dimension of the unitary matrix.
We first provide the proof idea for showing the bound in \eq{ks-upper-bound}; the result is formally stated in \thm{approximate-design-upper-bound}. 

The first step is to establish a discrete measure whose distribution function is close to that of the beta distribution $\Beta(1,N-1)$ in Kolmogorov distance. 
To this end, we invoke the framework of Jacobi--Gauss quadrature to construct the discrete measure. 
In particular, the orthogonal polynomials with respect to $\Beta(1,N-1)$ can be obtained from the Jacobi polynomials by an affine change of variable, and the distribution of the roots of Jacobi polynomials is well understood. 

To show closeness in Kolmogorov distance, we view the distribution function as the expectation of a step function. 
The function is not continuous, and thus cannot be directly related to the closeness of expected polynomials we have shown in \thm{polynomial-relative-error} in \sec{block-encoding}. 
To establish a connection, we consider a degree-$(2k)$ polynomial majorant $U_i$ and a polynomial minorant $L_i$ of the step function. 
Recall that a function $f$ is a majorant of a function $g$ if $f(x)\geq g(x)$ for all $x$, and a function $f$ is a minorant of $g$ if $f(x)\leq g(x)$ for all $x$. 

We choose the quadrature nodes to be the roots of the orthogonal polynomial of degree $k+1$ with respect to the distribution $\Beta(1,N-1)$.
This implies that the expectation of any polynomial of degree at most $2k$ can be calculated using the weights of the quadrature. 
Following this idea, the expected minorant can be represented in terms of weights as well.
Since the difference of the majorant and the minorant is the square of a Lagrange interpolation polynomial, we can show that with respect to $\Beta(1,N-1)$, the distribution function at each quadrature node $x_i$ is located in an interval of length no more than the weight $w_i$. 
By proving that $w_i=O\left(\sqrt{\frac{N}{k(k+N)}}\right)$, we know that at these nodes, the distribution is restricted to a small interval.
For points other than these quadrature nodes, we can reason about their closeness using the monotonicity of the distribution functions. 

Now since the degrees of both the majorant and the minorant are at most $2k$, by \thm{polynomial-relative-error}, we can bound the closeness of the expected majorant and minorant to relative error $O(\epsilon)$ with respect to the distributions induced by the design and the Haar random unitary. 
Thus the distribution induced by the design is bounded in the same interval up to some additive error $O(\epsilon)$ at these nodes as well.
Since both distribution functions are confined to these intervals at the nodes, their difference can be bounded by $O(\max_i w_i+\epsilon)$ accordingly. 

In summary, the entire proof in \sec{convergence-upper-bound} manages to show that 
\begin{align*}
    \text{squared magnitude of Haar's entry} &\approx \text{measure induced by shifted Jacobi--Gauss quadrature} \\
    &\approx \text{squared magnitude of design's entry},
\end{align*}
where the approximation denotes the closeness in Kolmogorov distance.

\subsection{Shifted Jacobi--Gauss quadrature}\label{sec:jacobi-gauss}

    For integers $a\geq 1$ and $D\geq 1$, let $\mu_{a,D}$ be the distribution of $\Beta(a,D)$, and let $\{q_n\}$ be the orthonormal polynomials with respect to $\mu_{a,D}$, i.e., $\deg(q_n)=n$ and
    \begin{align}
        \int_{0}^1 q_n(x) q_m (x) \frac{x^{a-1}(1-x)^{D-1}}{B(a,D)} \d x = \delta_{n,m}.
    \end{align}
    By \eq{beta-jacobi}, $q_n(x)\propto P_n^{(D-1,a-1)}(2x-1)$ is, up to normalization, a shifted Jacobi polynomial of degree $n$ with parameters $(D-1,a-1)$.
    In other words, $Y_n=q_n(X)$ form a sequence of orthonormal random variables in the sense that $\Exp[Y_nY_m]=\delta_{nm}$ when $X\sim\Beta(a,D)$.
    The case of a single entry is $a=1$, and we write $\mu_D:=\mu_{1,D}$ for the distribution of $\Beta(1,D)$; this is the case used in the rest of this section.
    (The probability of a block of $a$ basis states, which appears in \sec{local-translation-invariant}, has the Haar law $\Beta(a,N-a)$; see \rmk{block-kolmogorov} below.)

    Knowing that the orthogonal polynomials are the shifted Jacobi polynomials, we can exploit the Jacobi--Gauss quadrature to construct a moment matching discrete measure. 
    In particular, let $0< x_1<x_2<\ldots <x_\ell< 1$ be the roots of the shifted Jacobi polynomial $q_\ell$ of degree $\ell$; they are real, simple, and lie in the interior of the support (\sec{jacobi}).
    Choosing the quadrature nodes to be the roots of $q_\ell$ allows us to match the expectation of any polynomial of degree at most $2\ell-1$ with respect to the measure $\mu_{a,D}$.
    In more detail, let $\ell_i$ be the Lagrange basis polynomial of degree $\ell-1$ with $\ell_i(x_j)=\delta_{ij}$, and define the weight put on $x_i$ as $w_i:=\int_0^1\ell_i(x)\d\mu_{a,D}(x)$; by Lagrange interpolation, $\int_0^1 r(x)\d\mu_{a,D}(x)=\sum_{i=1}^\ell w_i r(x_i)$ for every polynomial $r$ of degree at most $\ell-1$. 
    Any polynomial $g$ of degree at most $2\ell-1$ can be written as
    \begin{align}
        g(x) = q_\ell(x) h(x) + r(x), 
    \end{align}
    for some polynomial $h$ of degree at most $\ell-1$ and a remainder polynomial $r$ of degree at most $\ell-1$. 
    Since $h$ can always be written as a linear combination of $q_0,q_1,\ldots,q_{\ell-1}$, by the orthogonality of $q_\ell$, we have
    \begin{align}
        \int_0^1 g(x) \d\mu_{a,D}(x) 
        = \int_0^1 r(x) \d\mu_{a,D}(x) 
        = \sum_{i=1}^\ell w_i r(x_i)
        = \sum_{i=1}^\ell w_i g(x_i),
    \end{align}
    where the last equality holds since $q_\ell(x_i)=0$ for all $i\in[\ell]$.
    
    While we will be focusing on the beta distributions, the method for finding moment matching discrete measures can be applied to other distributions.
    For a distribution $\nu$ on the real line and an integer $\ell\geq 1$, the weights of the $\ell$-node quadrature rule can be found through the Christoffel function $\lambda_{\nu,\ell}$, which is defined as
    \begin{align}\label{eq:christoffel-mu-d}
        \lambda_{\nu,\ell}(x) := \min_{\deg(p)\leq\ell-1,\ p(x)=1}\int_{\mathbb R} p(t)^2 \d\nu(t).
    \end{align}
    We use this convention throughout the paper: the index $\ell$ is the number of quadrature nodes, and the admissible polynomials have degree at most $\ell-1$.
    The variational form is useful in determining the upper bound of the weight $w_i$.

    \begin{lemma}\label{lem:christoffel-function-weight}
        For a distribution $\nu$ on $\mathbb R$ with infinite support, let $x_1<\ldots<x_i<\ldots<x_\ell$ be the roots of its orthonormal polynomial of degree $\ell$ (\sec{jacobi}) and $w_1,\ldots,w_\ell$ the weights of the corresponding $\ell$-node Gauss quadrature rule, constructed exactly as above with $\nu$ in place of $\mu_{a,D}$.
        Then $\lambda_{\nu,\ell}(x_i) = w_i$ for all $i\in[\ell]$.
    \end{lemma}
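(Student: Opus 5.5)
I will prove the two inequalities $\lambda_{\nu,\ell}(x_i)\leq w_i$ and $\lambda_{\nu,\ell}(x_i)\geq w_i$ separately, using only the exactness of the $\ell$-node Gauss rule for polynomials of degree at most $2\ell-1$. That exactness was established above for $\mu_{a,D}$, and the same argument applies verbatim to $\nu$. First I record that $w_i=\int\ell_i^2\,\d\nu$. Indeed, $\ell_i^2$ has degree $2\ell-2\leq 2\ell-1$, so exactness gives $\int\ell_i^2\,\d\nu=\sum_j w_j\ell_i(x_j)^2=w_i$, since $\ell_i(x_j)=\delta_{ij}$. In particular $w_i>0$.

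\emph{Upper bound.} The Lagrange basis polynomial $\ell_i$ has degree $\ell-1$ and satisfies $\ell_i(x_i)=1$, so it is admissible in \eq{christoffel-mu-d}. Hence $\lambda_{\nu,\ell}(x_i)\leq\int\ell_i^2\,\d\nu=w_i$.

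\emph{Lower bound.} Take any $p$ with $\deg(p)\leq\ell-1$ and $p(x_i)=1$. Then $p^2$ has degree at most $2\ell-2$, so by exactness
\begin{align}
\int p^2\,\d\nu=\sum_{j=1}^\ell w_j\,p(x_j)^2\geq w_i\,p(x_i)^2=w_i.
\end{align}
The inequality drops the other terms, which is valid because all weights are nonnegative: each $w_j=\int\ell_j^2\,\d\nu\geq 0$ by the identity in the first step applied to $j$. Minimizing over admissible $p$ gives $\lambda_{\nu,\ell}(x_i)\geq w_i$.

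The only point needing care is the positivity of the weights, which enters the lower bound. It is handled by the squared-Lagrange identity, and that identity in turn rests on the degree-$(2\ell-1)$ exactness of the rule. Exactness also requires the orthogonal polynomial $q_\ell$ to exist with $\ell$ real simple roots, which holds because $\nu$ has infinite support (\sec{jacobi}). Polynomials here may be taken real, since for complex $p$ one has $|p|^2=(\Re p)^2+(\Im p)^2$ and the same argument applies. The minimum in \eq{christoffel-mu-d} is therefore attained exactly by $\ell_i$.
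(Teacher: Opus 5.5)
Your proof is correct and matches the paper's argument essentially line for line. You get $\lambda_{\nu,\ell}(x_i)\le w_i$ by testing with the Lagrange basis polynomial $\ell_i$ and using exactness on $\ell_i^2$, which also gives $w_i>0$, and $\lambda_{\nu,\ell}(x_i)\ge w_i$ by applying the quadrature rule to $p^2$ and discarding the remaining nonnegative terms.
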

    \begin{proof}
        Proving the direction $w_i\geq\lambda_{\nu,\ell}(x_i)$ is simple: take $p(x)=\ell_i(x)$, the Lagrange basis polynomial. 
        By definition and the exactness of the quadrature for the polynomial $\ell_i^2$ of degree $2\ell-2$, we have $\lambda_{\nu,\ell}(x_i)\leq\int_{\mathbb R} \ell_i(x)^2 \d\nu(x) = \sum_{j=1}^\ell w_j\ell_i(x_j)^2 = w_i$; in particular, $w_i>0$.
        To prove $w_i\leq\lambda_{\nu,\ell}(x_i)$, for every polynomial $p$ of degree at most $\ell-1$, we can use the quadrature rule to write
        \begin{align}
            \int_{\mathbb R} p(x)^2 \d\nu(x) = \sum_{j=1}^\ell w_j p(x_j)^2 \geq w_i p(x_i)^2 = w_i.
        \end{align}
        The inequality follows from the fact that $p^2\geq 0$ and $w_j\geq 0$ for every $j$ (shown above), and the final equality holds since every admissible $p$ satisfies $p(x_i)=1$. 
    \end{proof}

    In particular, we will be using the Christoffel function to bound $w_i$ by considering measures for which we know how to bound the Christoffel function.
    The following alternative form for finding the weight $w_i$ will also be useful.

    \begin{lemma}\label{lem:christoffel-function-orthogonal-polynomial}
        For a distribution $\nu$ on $\mathbb R$ with finite moments and infinite support (as in \sec{jacobi}) and an integer $\ell\geq 1$, let $\lambda_{\nu,\ell}$ be its Christoffel function, defined as in \eq{christoffel-mu-d}.
        Then for 
        \begin{align}
            p_x(t) = \frac{\sum_{j=0}^{\ell-1}q_j(x)q_j(t)}{\sum_{j=0}^{\ell-1}q_j(x)^2},
        \end{align}
        \begin{align}
        \lambda_{\nu,\ell}(x) = \int_{\mathbb R} p_x(t)^2\d\nu(t) = \frac{1}{\sum_{j=0}^{\ell-1} q_j(x)^2},
        \end{align} 
        where $q_j$ is the orthonormal polynomial of degree $j$ with respect to $\nu$.
    \end{lemma}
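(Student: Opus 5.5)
The approach is to expand an arbitrary admissible polynomial in the orthonormal basis and optimize by the Cauchy--Schwarz inequality. Since $\nu$ has finite moments and infinite support, the orthonormal polynomials $q_0,\ldots,q_{\ell-1}$ exist (\sec{jacobi}) and form a basis of the polynomials of degree at most $\ell-1$. Any admissible $p$ in \eq{christoffel-mu-d} can then be written uniquely as $p=\sum_{j=0}^{\ell-1}c_jq_j$ with real coefficients; this is possible since $\nu$ is real and, for the minimization, it suffices to consider real $p$ because $p(t)^2$ appears. By orthonormality, $\int p^2\,\d\nu=\sum_j c_j^2$, and the constraint $p(x)=1$ reads $\sum_j c_jq_j(x)=1$.

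The problem thus becomes: minimize $\|c\|^2$ subject to $\langle c,v\rangle=1$, where $v=(q_0(x),\ldots,q_{\ell-1}(x))$. The Cauchy--Schwarz inequality gives $1=\langle c,v\rangle^2\le\|c\|^2\|v\|^2$, so $\int p^2\,\d\nu\ge 1/\sum_j q_j(x)^2$. Equality holds exactly when $c$ is parallel to $v$, namely $c=v/\|v\|^2$. This choice of $c$ is precisely the polynomial $p_x$ in the statement: it satisfies $p_x(x)=1$, and its norm is $\int p_x^2\,\d\nu=\|v\|^2/\|v\|^4=1/\|v\|^2$. Hence the minimum is attained by $p_x$ and equals $1/\sum_{j=0}^{\ell-1}q_j(x)^2$.

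The only point needing care is that the denominator is nonzero. This holds because $q_0$ is a nonzero constant: $q_0\equiv 1$ for a probability measure, so $\sum_j q_j(x)^2\ge 1>0$. If complex polynomials were allowed, the same argument works with $|c_j|^2$ and the complex Cauchy--Schwarz inequality, again giving the minimizer $c_j=\overline{q_j(x)}/\|v\|^2=q_j(x)/\|v\|^2$ since the $q_j$ are real. No step here is a real obstacle. The main thing is to check that the minimum in \eq{christoffel-mu-d} is attained, which is automatic from the explicit minimizer.
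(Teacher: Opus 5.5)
Your proposal is correct and follows essentially the same route as the paper: expand an admissible $p$ in the orthonormal basis $q_0,\ldots,q_{\ell-1}$, use orthonormality to identify $\int p^2\,\mathrm{d}\nu$ with $\sum_j |a_j|^2$, apply Cauchy--Schwarz to get the lower bound $1/\sum_j q_j(x)^2$, and check that $p_x$ attains it. Your remark that $q_0^2\equiv 1$ makes the denominator nonzero is a small point the paper leaves implicit.
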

    \begin{proof}
        Fix any $x$ and write 
        $S(x)=\sum_{j=0}^{\ell-1}q_j(x)^2$.
        For every polynomial $p$ of degree less than $\ell$, $p$ can be written as $p(x)=\sum_{j=0}^{\ell-1} a_j q_j(x)$. 
        Then by orthonormality, we have 
        \begin{align}\label{eq:orthogonality}
            \int_{\mathbb R} p(t)^2 \d\nu(t) = \sum_{j=0}^{\ell-1} |a_j|^2.
        \end{align}
        If $p(x)=1$, then the Cauchy--Schwarz inequality yields
        \begin{align}\label{eq:s-christoffel}
            1 &= \left(\sum_{j=0}^{\ell-1} a_j q_j(x)\right)^2 \leq 
            S(x) \sum_{j=0}^{\ell-1} |a_j|^2 \leq 
            S(x) \int_{\mathbb R} p(t)^2 \d\nu(t).
        \end{align}
        Since the inequality in \eq{s-christoffel} holds for any polynomial $p$ of degree at most $\ell-1$ such that $p(x)=1$, taking the infimum over such $p$ in \eq{christoffel-mu-d} gives $1\leq S(x)\lambda_{\nu,\ell}(x)$. 

        To show $1\geq S(x)\lambda_{\nu,\ell}(x)$, since $p_x(x)=1$ and $\deg(p_x)\leq\ell-1$, $p_x$ is admissible.  
        By \eq{orthogonality}, we have
        \begin{align}
            \int_{\mathbb R} p_x(t)^2 \d\nu(t) = \frac{1}{S(x)^2} \sum_{j=0}^{\ell-1} q_j(x)^2 = \frac{1}{S(x)}.
        \end{align}
        By definition of $\lambda_{\nu,\ell}$ in \eq{christoffel-mu-d}, we have $\lambda_{\nu,\ell}(x)\leq 1/S(x)$. 
    \end{proof}

\lem{christoffel-function-orthogonal-polynomial} turns the variational form in \eq{christoffel-mu-d} to an expression that depends on a finite sum of the orthonormal polynomials. 
We will use the variational form to prove \thm{jacobi-weight-upper-bound}, which bounds the weights for the quadrature nodes; the second form underlies the Chebyshev computation in \lem{exponential}. 

Before doing so, we prove a lemma which bounds the Christoffel function of the exponential measure.

\begin{lemma}[Uniform exponential Christoffel bound]\label{lem:exponential}
 For the mean-one
exponential probability measure $\omega$, $\dd\omega(t)=\e^{-t}\dd t$ on $[0,\infty)$,
let $\lambda_{\omega,\ell}$ be its Christoffel function defined as in \eq{christoffel-mu-d}. 
Then there is an absolute constant $C>0$ such that, for every integer $\ell\ge1$,
\begin{equation}
 \sup_{u\ge0}\lambda_{\omega,\ell}(u)\le\frac{C}{\sqrt\ell}.
 \label{eq:exponential-bound}
\end{equation}
\end{lemma}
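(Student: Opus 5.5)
The plan is to bound $\lambda_{\omega,\ell}(u)$ through its variational definition \eq{christoffel-mu-d}: for each $u\geq 0$ I exhibit a test polynomial $p$ of degree at most $\ell-1$ with $p(u)=1$ and $\int p^2\,\d\omega=O(\ell^{-1/2})$. Small $\ell$ are trivial, since the constant polynomial gives $\lambda_{\omega,\ell}\leq 1$, so I assume $\ell$ is larger than a suitable absolute constant. Set $m=\ell$, so the admissible degree is $m-1=\ell-1$, and fix $R=c\ell$ for a large absolute constant $c$, say $c=64$. I split into the cases $u\leq R$ and $u>R$.

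\emph{Case $u>R$.} Take $p(t)=(t/u)^{\ell-1}$. Then
\begin{align}
\int_0^\infty p^2\,\d\omega=\frac{(2\ell-2)!}{u^{2\ell-2}}\leq\Big(\frac{2\ell}{R}\Big)^{2\ell-2},
\end{align}
which is exponentially small in $\ell$ once $c\geq 4$. This is far below $C/\sqrt\ell$.

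\emph{Case $0\leq u\leq R$.} Compare with the arcsine measure $\rho_R$ of \eq{arcsine}. Let $p=p_u$ be the minimizer from \lem{christoffel-function-orthogonal-polynomial} for $\rho_R$ with $\ell$ nodes, i.e. built from the basis \eq{arcsine-basis}:
\begin{align}
p_u(t)=\frac{\sum_{j<\ell}q_j(u)q_j(t)}{S(u)},\qquad S(u)=\sum_{j<\ell}q_j(u)^2.
\end{align}
In the angular variable, $q_j(u)^2=2\cos^2(j\phi)$ for $j\geq1$. A standard averaging argument on $\sum_{j<\ell}\cos^2(j\phi)=\tfrac{\ell-1}{2}+\tfrac12\sum_{1\le j<\ell}\cos(2j\phi)$ gives $S(u)\geq c'\ell$ uniformly in $\phi$. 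Note that $\sum_{j\ge 1}\cos(2j\phi)$ is a Dirichlet-type sum bounded below by $-1/2-\text{small}$; I would check this by the closed form $\tfrac12\big(\sin((2\ell-1)\phi)/\sin\phi-1\big)\geq -\tfrac12-\ldots$, which needs care near $\phi$ where $\sin\phi$ is small, though there all cosines are near $1$. Hence $\int p_u^2\,\d\rho_R=1/S(u)\leq C/\ell$.

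\emph{Bulk of the integral on $[0,R]$.} I convert this into the exponential weight using
\begin{align}
e^{-t}\leq \pi\sqrt{t(R-t)}\,e^{-t}\cdot\frac{1}{\pi\sqrt{t(R-t)}}\leq \pi\sqrt R\,\Big(\sup_{t\ge0}\sqrt t\,e^{-t}\Big)\frac{1}{\pi\sqrt{t(R-t)}},
\end{align}
so that $\int_0^R p_u^2\,\d\omega\leq C\sqrt R\int p_u^2\,\d\rho_R\leq C\sqrt{c\ell}/\ell=O(\ell^{-1/2})$. This is where the $\sqrt\ell$ loss relative to the arcsine bound appears: the exponential density is not dominated by the arcsine density near $t\approx 1$, where the ratio is of order $\sqrt R$.

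\emph{Tail $t>R$.} By \eq{chebyshev-growth}, $|q_j(t)|\leq\sqrt2(4t/R)^j$ for $t\geq R$, and $|q_j(u)|\leq\sqrt2$. Together with $S(u)\geq c'\ell$ this gives $|p_u(t)|\leq C(4t/R)^{\ell-1}$. Then
\begin{align}
\int_R^\infty p_u^2\,\d\omega\leq C\,(4/R)^{2\ell-2}(2\ell-2)!\leq C\,(8\ell/R)^{2\ell-2},
\end{align}
which is exponentially small for $c\geq 16$.

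\emph{Conclusion and main obstacle.} Both cases give $\lambda_{\omega,\ell}(u)\leq C/\sqrt\ell$ uniformly in $u$, which is \eq{exponential-bound}. I expect the main obstacle to be the uniform lower bound $S(u)\gtrsim\ell$ for the Chebyshev kernel, including near the endpoints $\phi\approx0,\pi$. There I would argue directly: for $\phi\leq\pi/(4\ell)$ every $\cos^2(j\phi)\geq1/2$, and otherwise the closed form of the oscillating sum is $O(1/(\ell\phi))\cdot\ell$. A cruder fallback that suffices is that at least a constant fraction of the $j<\ell$ satisfy $\cos^2(j\phi)\geq1/4$.
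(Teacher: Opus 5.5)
Your proposal is correct and follows the paper's proof essentially step for step. The shared ingredients are: the arcsine (Chebyshev) Christoffel kernel on $[0,R]$ with $R=\Theta(\ell)$ as the test polynomial for $u\le R$; the domination $e^{-t}\,\mathrm{d}t\le\pi\sqrt{R/(2e)}\,\mathrm{d}\rho_R(t)$ on $(0,R)$; the growth bound from \eq{chebyshev-growth} for $t\ge R$; and the monomial $(t/u)^{\ell-1}$ for $u>R$. Your larger $R$ only lets you bound the tail by the full gamma integral, where the paper uses $\ln(t/R)\le(t-R)/R$.

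One correction in the kernel lower bound: $\sin((2\ell-1)\phi)/(2\sin\phi)$ is not bounded below by a small negative number, since it dips to about $-0.22\ell$. So the case split you give at the end is genuinely needed. At your threshold $\phi=\pi/(4\ell)$ it closes only with a bound sharper than Jordan's inequality, e.g.\ $\sin\phi\ge(2\sqrt2/\pi)\phi$ on $[0,\pi/4]$, which gives $1/(2\sin\phi)\le\ell/\sqrt2$. The paper instead splits at $\sin\theta=1/\ell$ and uses only $j\le\lfloor(\ell-1)/2\rfloor$ in the small-angle case.
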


\begin{proof}
If $\ell=1$, the only admissible polynomial is $p\equiv1$, so
$\lambda_{\omega,1}(u)=1$ for every $u$. Hence assume $\ell\ge2$ and put $m=\ell-1$.

\textit{Step 1: a Chebyshev polynomial with controlled norm and growth.}
For $R>0$, let $\rho_R$ be the arcsine probability measure \eq{arcsine} on $[0,R]$, let $q_j$ be its orthonormal polynomials \eq{arcsine-basis}, and define
\begin{equation}
 K_{\ell,R}(u,t):=\sum_{j=0}^{\ell-1}q_j(u)q_j(t).
 \label{eq:kernel}
\end{equation}
We claim that
\begin{equation}
 K_{\ell,R}(u,u)\ge\frac\ell4\qquad(0\le u\le R).
 \label{eq:kernel-lower}
\end{equation}
To see this, write $2u/R-1=\cos\theta$. The squared cosines are invariant
under $\theta\mapsto\pi-\theta$, so it suffices to consider
$0\le\theta\le\pi/2$. For $\theta>0$,
\begin{align}
 K_{\ell,R}(u,u)
 &=1+2\sum_{j=1}^{\ell-1}\cos^2(j\theta)
 =\ell-\frac12+
   \frac{\sin((2\ell-1)\theta)}{2\sin\theta}.
 \label{eq:trigonometric-kernel}
\end{align}
If $\sin\theta\ge1/\ell$, the last expression is at least
$(\ell-1)/2\ge\ell/4$. If $\sin\theta<1/\ell$, the inequality
$\sin\theta\ge2\theta/\pi$ gives $\theta<\pi/(2\ell)$. Therefore
$\cos^2(j\theta)\ge1/2$ for
$1\le j\le\lfloor(\ell-1)/2\rfloor$, and
\begin{align}\label{eq:kernel-small-theta}
 K_{\ell,R}(u,u)
 \ge1+\left\lfloor\frac{\ell-1}{2}\right\rfloor
 \ge\frac\ell2.
\end{align}
This also covers $\theta=0$, proving~\eq{kernel-lower}.

For a fixed $u\in[0,R]$, take
\begin{equation}
 p_u(t):=\frac{K_{\ell,R}(u,t)}{K_{\ell,R}(u,u)}.
 \label{eq:test-polynomial}
\end{equation}
Then $\deg p_u\le m$ and $p_u(u)=1$. By orthonormality,
\begin{equation}
 \int_0^R p_u(t)^2\d \rho_R(t)
 =\frac{1}{K_{\ell,R}(u,u)}\le\frac4\ell.
 \label{eq:chebyshev-norm}
\end{equation}
By \eq{chebyshev-growth}, for $t\ge R$,
\begin{align}
 |T_j(2t/R-1)|\le(4t/R)^j.
\end{align}
Since $|q_j(u)|\le\sqrt2$ for $j\ge1$ and $u\in[0,R]$, it follows that
\begin{align}
 |K_{\ell,R}(u,t)|
 \le1+2\sum_{j=1}^{m}(4t/R)^j
 \le2\ell(4t/R)^m.
\end{align}
Together with~\eq{kernel-lower}, this yields
\begin{equation}
 |p_u(t)|\le8(4t/R)^m\qquad(t\ge R).
 \label{eq:polynomial-growth}
\end{equation}

\textit{Step 2: the range $0\le u\le8\ell$.}
Set $R=8\ell$ and use the polynomial~\eq{test-polynomial}.
On $(0,R)$, the exponential measure is dominated by a constant multiple
of the arcsine measure:
\begin{equation}
 \frac{e^{-t}\d t}{\d\rho_R(t)}
 =\pi e^{-t}\sqrt{t(R-t)}
 \le\pi\sqrt R\,\sup_{s\ge0}\sqrt s\,e^{-s}
 =\pi\sqrt{\frac{R}{2e}}.
 \label{eq:interior-domination}
\end{equation}
Consequently,
\begin{equation}
 \int_0^R p_u(t)^2 e^{-t}\d t
 \le\frac{4\pi}{\ell}\sqrt{\frac{R}{2e}}
 =\frac{8\pi}{\sqrt{e\ell}}.
 \label{eq:interior-bound}
\end{equation}
For the remaining tail, \eq{polynomial-growth} gives
\begin{equation}
 \int_R^\infty p_u(t)^2e^{-t}\d t
 \le64\int_R^\infty(4t/R)^{2m}e^{-t}\d t.
 \label{eq:tail-start}
\end{equation}
For $t\ge R$, use $\ln(t/R)\le(t-R)/R$ and $2m/R\le1/4$ to obtain
\begin{align}
 (4t/R)^{2m} e^{-t}
 &=4^{2m} e^{-R}
   \exp\bigl(2m\ln(t/R)-(t-R)\bigr)\\
 &\le4^{2m} e^{-R}\exp\bigl(-3(t-R)/4\bigr).
\end{align}
Integrating this bound in~\eq{tail-start} yields
\begin{equation}
 \int_R^\infty p_u(t)^2e^{-t}\d t
 \le\frac{256}{3}\,4^{2m}e^{-8\ell}
 \le\frac{256}{3}e^{-c\ell},
 \qquad c:=8-2\ln4>0.
 \label{eq:tail-bound}
\end{equation}
The polynomial $p_u$ is admissible in~\eq{christoffel-mu-d}.
Combining~\eq{interior-bound} and~\eq{tail-bound}, we find
\begin{equation}
 \lambda_{\omega,\ell}(u)
 \le\frac{8\pi}{\sqrt{e\ell}}+
       \frac{256}{3}e^{-c\ell}
 =O(\ell^{-1/2})
 \qquad(0\le u\le8\ell).
 \label{eq:moderate-u}
\end{equation}

\textit{Step 3: the range $u>8\ell$.}
Here use $p(t)=(t/u)^m$, which has degree $m$ and satisfies $p(u)=1$.
The gamma integral and the elementary bound $(2m)!\le(2m)^{2m}$ give
\begin{equation}
 \lambda_{\omega,\ell}(u)
 \le\int_0^\infty(t/u)^{2m} e^{-t} \d t
 =\frac{(2m)!}{u^{2m}}
 \le\left(\frac{2m}{u}\right)^{2m}
 \le4^{-2m}.
 \label{eq:large-u}
\end{equation}
Since $m=\ell-1$ and $\ell\ge2$, this is $O(\ell^{-1/2})$ with an
absolute constant. Together with~\eq{moderate-u} and the case
$\ell=1$, this proves~\eq{exponential-bound}.
\end{proof}

Now we are ready to prove an upper bound on the quadrature weights.

\begin{theorem}[Uniform upper bound on quadrature weights]\label{thm:jacobi-weight-upper-bound}
Let $D\ge1$ and let $x_1<\cdots<x_\ell$ and $w_1,\ldots,w_\ell$ be the
nodes and positive weights of the $\ell$-node Gaussian quadrature rule for
$\mu_D=\Beta(1,D)$, where $\ell\ge1$ is an integer. 
Thus
\begin{align}
 \dd\mu_D(x)=D(1-x)^{D-1}\dd x,\qquad 0<x<1,
\end{align}
and the rule is exact for every polynomial of degree at most $2\ell-1$.
There is an absolute constant $C>0$ such that
\begin{equation}
 \max_{1\le i\le\ell}w_i
 \le C\sqrt{\frac{D}{\ell(\ell+D)}}.
 \label{eq:main-bound}
\end{equation}
In particular, the constant is uniform when $D$ and $\ell$ grow together.
\end{theorem}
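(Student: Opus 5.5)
By \lem{christoffel-function-weight}, $w_i=\lambda_{\mu_D,\ell}(x_i)$. It therefore suffices to bound $\sup_{x\in[0,1]}\lambda_{\mu_D,\ell}(x)$ by $C\min\{1/\sqrt\ell,\sqrt{D}/\ell\}$, which is equivalent to \eq{main-bound} up to a factor $\sqrt2$. The tool is the monotonicity of the variational form \eq{christoffel-mu-d}. If $\mu_D\le K\nu$ as measures, then $\lambda_{\mu_D,\ell}\le K\lambda_{\nu,\ell}$ pointwise. More flexibly, if $\mu_D\le K\nu$ only on part of the space, we take the optimal test polynomial $p$ for $\nu$ and bound $\int p^2\,\d\mu_D$ directly. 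I will prove the two bounds separately, each with its own comparison measure.

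\textbf{Bound $O(1/\sqrt\ell)$.} Rescale $t=Dx$; the Christoffel function is invariant under the rescaling. The density of $Dx$ is $(1-t/D)^{D-1}$ on $[0,D]$. Since $1-s\le e^{-s}$, this is at most $e^{-t(D-1)/D}$. For $D\ge2$ that is at most $e^{-t/2}$, which is twice the density of the exponential law with mean $2$. For $D=1$ the density is $1$ on $[0,1]$, which is at most $e\cdot e^{-t}$. In either case the law of $Dx$ is dominated by a constant multiple of an exponential law with rate $1$ or $1/2$. A further scaling maps that law to $\omega$ without changing Christoffel function values. By \lem{exponential}, $\lambda_{\mu_D,\ell}\le C'/\sqrt\ell$ uniformly. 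One point needs care: the comparison must be made on the whole half-line. This is fine, because the beta law is supported in $[0,D]\subset[0,\infty)$, so its density extended by zero is still dominated.

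\textbf{Bound $O(\sqrt D/\ell)$.} This only matters when $\ell\gtrsim D$. I compare with the arcsine measure $\rho_1$ on $[0,1]$ in the original variable. The ratio $\d\mu_D/\d\rho_1=\pi D(1-x)^{D-1}\sqrt{x(1-x)}$ is at most $\pi D\sup_x \sqrt x(1-x)^{D-1/2}$. The supremum is of order $D^{-1/2}$, attained near $x\approx 1/(2D)$, so the ratio is $O(\sqrt D)$. By \eq{kernel-lower} together with \lem{christoffel-function-orthogonal-polynomial}, $\lambda_{\rho_1,\ell}\le 4/\ell$ on $[0,1]$. Hence $\lambda_{\mu_D,\ell}(x)\le C''\sqrt D/\ell$ for $x\in[0,1]$, which contains all the nodes.

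\textbf{Combining.} Take the minimum of the two bounds. Since $\min\{1/\sqrt\ell,\sqrt D/\ell\}\le\sqrt2\sqrt{D/(\ell(\ell+D))}$, this gives \eq{main-bound}. The main obstacle is the bookkeeping in the first step: the domination constant and the decay rate must stay uniform in $D$, including the edge case $D=1$, while the exponential rate changes. I would handle this by scaling everything to rate $1$ and using that Christoffel function values are invariant under affine scaling. The $\sup$ estimate in the second step is routine calculus.
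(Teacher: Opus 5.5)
Your proposal is correct and follows the paper's proof essentially step for step: you identify $w_i=\lambda_{\mu_D,\ell}(x_i)$, get $O(\sqrt D/\ell)$ by domination by the arcsine measure $\rho_1$ (same density ratio $\pi D\sqrt{x}(1-x)^{D-1/2}\le\pi\sqrt{D/2}$ and $\lambda_{\rho_1,\ell}\le 4/\ell$), get $O(1/\sqrt\ell)$ by domination by an exponential measure together with scaling and \lem{exponential}, and take the minimum. The only cosmetic difference is that the paper dominates $\mu_D$ directly by the rate-$D$ exponential with the single constant $e$ (since $e^{Dx}(1-x)^{D-1}\le e$ for every $D\ge1$), which avoids your split into $D\ge2$ (rate $1/2$, constant $2$) and $D=1$.
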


\begin{proof}
We first identify the weights with the Christoffel function, then prove
two complementary upper bounds.
By \lem{christoffel-function-weight}, 
\begin{equation}
 w_i=\lambda_{\mu_D,\ell}(x_i).
 \label{eq:weight-identity}
\end{equation}

\textit{Step 1: comparison with the arcsine measure.}
Let $\rho_1$ be the measure in~\eq{arcsine} with $R=1$. Its
density ratio with $\mu_D$ is
\begin{equation}
 \frac{\dd\mu_D}{\dd\rho_1}(x)
 =\pi D\sqrt x\,(1-x)^{D-1/2}.
 \label{eq:beta-arcsine-ratio}
\end{equation}
The function $\sqrt x\,(1-x)^{D-1/2}$ is maximized at $x=1/(2D)$.
Indeed, its logarithmic derivative is
\begin{align}
 \frac1{2x}-\frac{D-1/2}{1-x},
\end{align}
which changes sign there from positive to negative. Consequently,
\begin{equation}
 \frac{\dd\mu_D}{\dd\rho_1}(x)
 \le\pi\sqrt{\frac D2}
       \left(1-\frac1{2D}\right)^{D-1/2}
 \le\pi\sqrt{\frac D2}.
 \label{eq:beta-arcsine-domination}
\end{equation}
The Chebyshev polynomial construction in the proof of
\lem{exponential}, specifically~\eq{chebyshev-norm},
shows that $\lambda_{\rho_1,\ell}(x)\le4/\ell$ for $x\in[0,1]$.
This also holds for $\ell=1$, by using the constant polynomial.
Applying the variational definition to~\eq{beta-arcsine-domination}
and then using~\eq{weight-identity}, we obtain
\begin{equation}
 w_i\le\pi\sqrt{\frac D2}\,
              \lambda_{\rho_1,\ell}(x_i)
 \le\frac{4\pi}{\ell}\sqrt{\frac D2}
 =O\!\left(\frac{\sqrt D}{\ell}\right).
 \label{eq:first-weight-bound}
\end{equation}

\textit{Step 2: comparison with the rate-$D$ exponential measure.}
Let $\omega_D$ have density $D\e^{-Dx}$ on $[0,\infty)$, so that $\omega_1=\omega$ is the measure of \lem{exponential}. On $[0,1)$,
\begin{equation}
 \frac{\dd\mu_D}{\dd\omega_D}(x)
 =\e^{Dx}(1-x)^{D-1}\le\e.
 \label{eq:beta-exponential-domination}
\end{equation}
For $D>1$, the logarithmic derivative of this ratio is
$D-(D-1)/(1-x)$, so the maximum occurs at $x=1/D$ and equals
$\e(1-1/D)^{D-1}\le\e$. For $D=1$, the ratio is $\e^x\le\e$.
Extending $\mu_D$ by zero outside $[0,1]$, we have the measure domination
$\mu_D\le\e\omega_D$. It follows that
\begin{equation}
 \lambda_{\mu_D,\ell}(x_i)
 \le\e\,\lambda_{\omega_D,\ell}(x_i).
 \label{eq:variational-domination}
\end{equation}
The substitution $u=Dx$ gives the exact scaling identity
\begin{equation}
 \lambda_{\omega_D,\ell}(x_i)=\lambda_{\omega,\ell}(Dx_i).
 \label{eq:scaling}
\end{equation}
To verify it, associate to every polynomial $p$ the polynomial
$q(u)=p(u/D)$. This preserves degree, takes $p(x_i)=1$ to $q(Dx_i)=1$,
and satisfies
\[
 \int_0^\infty p(x)^2D\e^{-Dx}\dd x
 =\int_0^\infty q(u)^2\e^{-u}\dd u.
\]
By \lem{exponential}, \eq{weight-identity}--\eq{scaling}
therefore imply
\begin{equation}
 w_i\le\e\,\lambda_{\omega,\ell}(Dx_i)=O(\ell^{-1/2}).
 \label{eq:second-weight-bound}
\end{equation}
The uniformity of the lemma is essential here: $Dx_i$ is not restricted
to the interval $[0,1]$.

\textit{Step 3: combine the estimates.}
Equations~\eq{first-weight-bound} and~\eq{second-weight-bound}
give, with an absolute constant $C_0$,
\begin{equation}
 \max_i w_i
 \le C_0\min\left\{\frac{\sqrt D}{\ell},\frac1{\sqrt\ell}\right\}.
 \label{eq:combined-bound}
\end{equation}
For every $D,\ell>0$,
\begin{align}
 \sqrt{\frac{D}{\ell(\ell+D)}}
 \le\min\left\{\frac{\sqrt D}{\ell},\frac1{\sqrt\ell}\right\}
 \le\sqrt{\frac{2D}{\ell(\ell+D)}}.
\end{align}
The right inequality follows by separating $D\le\ell$ and $D\ge\ell$.
Taking $C=\sqrt2C_0$ proves~\eq{main-bound}.
\end{proof}

\subsection{Majorants and minorants of indicator functions}\label{sec:majorant-minorant}

We now show a few lemmas which will be used to prove our main result \thm{approximate-design-upper-bound} for this section. 
The following lemma shows that any non-negative univariate polynomial can be decomposed into a sum of two squares.
Since the idea will be used for proving \lem{minorant-indicator}, we include a brief proof here.

\begin{lemma}[{Sum of squares decomposition \cite[Lemma~3.5]{laurentSumsSquaresMoment2009}}]\label{lem:sos-decomposition}
    Any non-negative univariate polynomial of degree $2d$ is a sum of two squares of polynomials of degree at most $d$.
\end{lemma}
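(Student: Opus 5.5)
The plan is to factor the polynomial over $\mathbb C$ and split its roots between a polynomial and its conjugate. Let $f$ be a non-negative real polynomial of degree $2d$; the case $f\equiv 0$ is trivial, so assume $f\neq 0$. Its leading coefficient $c$ is positive: a non-negative polynomial cannot have a negative leading coefficient, because it would then tend to $-\infty$ at infinity.

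First I would determine the root structure. The non-real roots of $f$ come in complex-conjugate pairs $z,\bar z$ with equal multiplicity, since $f$ has real coefficients. Every real root has even multiplicity: if a real root $r$ had odd multiplicity, $f$ would change sign at $r$, which contradicts $f\geq 0$.

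Then I would write
\begin{align}
    f(x)=c\prod_{j}(x-r_j)^{2m_j}\prod_{i}(x-z_i)(x-\bar z_i),
\end{align}
where the $r_j$ are the distinct real roots and the $z_i$ list the roots in the upper half-plane with multiplicity. Next I would define
\begin{align}
    g(x)=\sqrt c\prod_j (x-r_j)^{m_j}\prod_i(x-z_i).
\end{align}
This is a complex polynomial of degree exactly $d$. Writing $\bar g$ for the polynomial with conjugated coefficients, we get $f=g\bar g$. On the real line this gives $f(x)=|g(x)|^2$.

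Finally I would separate real and imaginary parts by setting $g=A+\i B$, where $A$ and $B$ are real polynomials of degree at most $d$. For real $x$ this gives $f(x)=A(x)^2+B(x)^2$. Since both sides are polynomials that agree on $\mathbb R$, the identity holds as polynomials.

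No step here is a serious obstacle. The only point that needs care is the even multiplicity of real roots, which follows from the sign-change argument. This decomposition $f=|g|^2$ with $\deg g=d$ is the form I expect to reuse in \lem{minorant-indicator}.
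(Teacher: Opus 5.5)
Your proposal is correct and follows essentially the same route as the paper. Both arguments factor over $\mathbb C$, using even multiplicity of the real roots and conjugate pairing of the non-real roots, then take a polynomial $h$ carrying half of the roots and write $h=r+\i s$ to obtain $p=h\bar h=r^2+s^2$; your sign-change argument and positive-leading-coefficient check justify steps the paper only asserts.
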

\begin{proof}
   Let $p$ be any non-negative univariate polynomial over the real numbers.
   Since $p$ is non-negative, the roots of $p$ must be either real with even multiplicity or appear in conjugate pairs, i.e.,
   \begin{align}
       p(x) &= c\prod_{i=1}^{m}(x-r_i)^{2k_i}\prod_{j=1}^{n}(x-z_j)^{s_j}(x-\bar z_j)^{s_j},
    \end{align}
    for some real number $c>0$. 
    Then define 
    \begin{align}
        h(x) = \sqrt{c}\prod_{i=1}^{m}(x-r_i)^{k_i}\prod_{j=1}^{n}(x-z_j)^{s_j}.
    \end{align}
    Let $h^*(x)$ be the polynomial obtained by taking the complex conjugate of the coefficients of $h(x)$. Then we have $p(x)=h(x)h^*(x)$.
    Let $h(x)=r(x)+\i s(x)$ for real polynomials $r$ and $s$. Then we have $h^*(x)=r(x)-\i s(x)$, and therefore $p(x) = r(x)^2 + s(x)^2$.
\end{proof}

For $x_i\in\mathbb R$, we write $1_{x_i}$ for the indicator function of $(-\infty,x_i]$, i.e., $1_{x_i}(x)=1$ if $x\leq x_i$ and $1_{x_i}(x)=0$ otherwise.

\begin{lemma}[Majorant of indicator function]\label{lem:majorant-indicator}
    For integer $\ell>1$, let $-\infty<x_1<x_2<\ldots<x_{\ell}<\infty$ be $\ell$ real numbers. 
    Then for every $i\in[\ell]$, the indicator function $1_{x_i}(x)$ has a polynomial majorant $U_i$ of degree at most $2\ell-2$ over $\mathbb R$ satisfying the following conditions:
    \begin{enumerate}
        \item $U_i(x)\geq 1_{x_i}(x)$ for all $x\in\mathbb R$.
        \item $U_i(x_j)=1$ if $j\leq i$ and $U_i(x_j)=0$ if $j>i$.
        \item $U_i'(x_j)=0$ for all $j\in[\ell]\setminus\{i\}$.
    \end{enumerate}
\end{lemma}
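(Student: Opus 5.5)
This is the classical Markov–Stieltjes construction, done by Hermite interpolation. We take $U_i$ to be the unique polynomial of degree at most $2\ell-1$ that satisfies $2\ell$ Hermite conditions: $U_i(x_j)=1$ for $j\le i$, $U_i(x_j)=0$ for $j>i$, and $U_i'(x_j)=0$ for all $j\neq i$. That is $2\ell-1$ conditions, one short of fixing a degree-$(2\ell-1)$ polynomial, so we save a degree by a direct formula. Write
\begin{align}
U_i(x)=1-\ell_i(x)^2\cdot(\text{correction}),
\end{align}
or, more transparently, build $U_i$ from its derivative. Set
\begin{align}
U_i'(x)=-c\prod_{j\neq i}(x-x_j)\cdot r(x),
\end{align}
with $r$ chosen to fix the values. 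The cleanest concrete choice, which we will verify, is
\begin{align}
U_i(x)=1-\int_{-\infty}^{x}\!\! \text{(nothing)}\quad\text{replaced by}\quad U_i(x)=1-c\int_{x_1}^{x}\prod_{j\ne i}(t-x_j)^2\,\frac{(t-x_i)}{?}\d t .
\end{align}
Since the degrees do not match, we instead use the following standard route.

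\textbf{Step 1: an explicit candidate.} For $x\le x_i$ we want $U_i\ge 1$, and for $x>x_i$ we want $U_i\ge 0$. Take
\begin{align}
U_i(x)=\sum_{j\le i} h_j(x),
\end{align}
where $h_j$ are the Hermite basis polynomials of degree $2\ell-1$ with nodes $x_1,\dots,x_\ell$: $h_j(x_m)=\delta_{jm}$ and $h_j'(x_m)=0$. Then drop the derivative condition at $x_i$. To get degree $2\ell-2$, we impose only the $2\ell-1$ conditions of items 2–3. The interpolation problem with these conditions has a unique solution of degree at most $2\ell-2$, because the homogeneous problem forces divisibility by $(x-x_i)\prod_{j\ne i}(x-x_j)^2$, which has degree $2\ell-1$. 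This gives existence and uniqueness of a $U_i$ satisfying items 2–3.

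\textbf{Step 2: the majorant property (main obstacle).} We count zeros of $U_i'$, which has degree at most $2\ell-3$. Two sources of zeros are known:
\begin{itemize}
\item[(a)] $U_i'(x_j)=0$ for the $\ell-1$ indices $j\ne i$.
\item[(b)] By Rolle's theorem, $U_i'$ has a zero strictly inside each interval $(x_j,x_{j+1})$ with both endpoints $\le x_i$, or with both endpoints $>x_i$, since $U_i$ takes equal values at the ends. That is $\ell-2$ further zeros, all distinct from the nodes.
\end{itemize}
These give $2\ell-3$ zeros, which is all of them, and all are simple. Hence $U_i'$ changes sign exactly at these points and nowhere else, and in particular does not vanish on $(x_i,x_{i+1})$.

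On that interval $U_i$ drops from $1$ to $0$, so $U_i'<0$ there. Following the sign changes outward:
\begin{itemize}
\item $U_i$ has a local minimum at each $x_j$ with $j>i$ (value $0$), separated by interior maxima. So $U_i\ge 0$ on $[x_i,\infty)$, using also the sign of $U_i'$ beyond $x_\ell$.
\item Similarly, $U_i$ has local minima of value $1$ at $x_j$, $j<i$, so $U_i\ge1$ on $(-\infty,x_i]$.
\end{itemize}

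The delicate part is the bookkeeping of sign alternation, including the unbounded end intervals and the edge cases $i=1$ and $i=\ell$, where one of the groups of Rolle zeros is empty. We handle it by noting that $U_i'$ is nonzero on $(-\infty,x_1)$ and on $(x_\ell,\infty)$, so its sign there is inherited from the adjacent interval across a simple zero. This confirms the required direction of monotonicity at both ends.
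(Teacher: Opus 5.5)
Your final argument (unique solvability of the square system given by items 2--3 via root counting, then locating all $2\ell-3$ zeros of $U_i'$ as the $\ell-1$ prescribed critical nodes plus $\ell-2$ Rolle zeros, then propagating the sign outward from $(x_i,x_{i+1})$) is exactly the paper's proof. Two cleanups are needed: delete the abandoned false starts, noting that the full Hermite interpolant $\sum_{j\le i}h_j$ (which also has $U_i'(x_i)=0$) is genuinely not a majorant because it increases to the value $1$ just left of $x_i$; and treat $i=\ell$ separately as the paper does, since there the Rolle count is $\ell-1$ rather than $\ell-2$, forcing $U_\ell'\equiv 0$ and hence $U_\ell\equiv 1$, and there is no interval $(x_\ell,x_{\ell+1})$ to anchor your sign argument.
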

\begin{proof}
    The second and third conditions impose $2\ell-1$ linear constraints on the $2\ell-1$ coefficients of a polynomial $U_i$ of degree at most $2\ell-2$.
    The corresponding linear system is nonsingular: a polynomial of degree at most $2\ell-2$ satisfying the homogeneous constraints vanishes at the $\ell$ nodes and has vanishing derivative at $\ell-1$ of them, i.e., it has at least $2\ell-1$ roots counted with multiplicity, so it is identically zero.
    Therefore $U_i$ exists and is uniquely determined by these constraints.
    If $i=\ell$, the constant polynomial $U_\ell\equiv 1$ satisfies all three conditions, so we assume $i\leq\ell-1$ in the rest of the proof.

    The degree of $U_i'$ is at most $2\ell-3$. 
    To see why $U_i$ is a majorant of the indicator function, 
    we first observe that $U_i(x_1)=\ldots=U_i(x_i)$ and $U_{i}(x_{i+1})=\cdots=U_i(x_\ell)$, and by Rolle's theorem, $U_i'$ must have $\ell-2$ roots in the intervals $(x_1,x_2),\ldots,(x_{i-1},x_i)$ and $(x_{i+1},x_{i+2}),\ldots,(x_{\ell-1},x_\ell)$.
    Let $y_j\in (x_j,x_{j+1})$ for $j\in[\ell-1]\setminus\{i\}$ be a root of $U_i'$.
    Combining these derived constraints (from the second condition) and the third condition, $U_i'$ has all $2\ell-3$ roots located.
    Thus all roots of $U_i'$ are real and simple. 
    
    Since $U_i'$ has no root in $(x_i,x_{i+1})$ (recall that $x_i$ is not among its roots) and $U_i(x_{i+1})=0<1=U_i(x_i)$, we have $U_i'<0$ in this interval, and this implies $U_i'>0$ in the interval $(x_{i+1},y_{i+1})$ and $y_{i+1}$ must be a local maximum (when $i=\ell-1$ there is no $y_\ell$, and $(x_{i+1},y_{i+1})$ is to be read as $(x_\ell,\infty)$, which is treated in the last paragraph of the proof).
    Repeating the same argument, we have seen $x_{i+1},\ldots,x_{\ell}$ are all local minima, and thus $U_i(x)\geq 0$ for $x\geq x_i$. 

    Similarly, since $U_i'(y_{i-1})=0$ and $U_i'(x_{i+1})=0$ and there is no root in $(y_{i-1}, x_{i+1})$, we must have $U_i'<0$ in the interval $(y_{i-1}, x_{i+1})$.
    Therefore $y_{i-1}$ is a local maximum, and for $x\in(x_{i-1},y_{i-1})$, $U_i'(x)>0$. 
    This implies that $x_{i-1}$ is also a local minimum. 
    Repeating the same argument, we have seen $x_1,\ldots,x_{i-1}$ are all local minima, and thus $U_i(x)\geq 1$ for $x_1\leq x\leq x_i$ (for $i=1$ there is nothing to show here).
    Finally, since all roots of $U_i'$ are simple and lie in $[x_1,x_\ell]$, $U_i'$ has a constant sign on each of the unbounded intervals $(-\infty,x_1)$ and $(x_\ell,\infty)$: $U_i$ is increasing on $(x_\ell,\infty)$ because $x_\ell$ is a local minimum, so $U_i(x)\geq U_i(x_\ell)=0$ for $x\geq x_\ell$; and $U_i$ is decreasing on $(-\infty,x_1)$, because $x_1$ is a local minimum when $i\geq 2$ and because $U_1'<0$ on $(-\infty,x_2)$ when $i=1$, so $U_i(x)\geq U_i(x_1)=1$ for $x\leq x_1$.
    This concludes that $U_i$ is a majorant of the indicator function $1_{x_i}$.
\end{proof}

\begin{lemma}[Minorant of indicator function]\label{lem:minorant-indicator}
    For integer $\ell>1$, let $-\infty<x_1<x_2<\ldots<x_{\ell}<\infty$ be $\ell$ real numbers. 
    Then for every $i\in[\ell]$, the indicator function $1_{x_i}(x)$ has a polynomial minorant $L_i$ of degree at most $2\ell-2$ over $\mathbb R$ satisfying the following conditions:
    \begin{enumerate}
        \item $L_i(x)\leq 1_{x_i}(x)$ for all $x\in\mathbb R$.
        \item $L_i(x_j)=1$ if $j<i$ and $0$ if $j\geq i$.
        \item $L_i'(x_j)=0$ for all $j\in[\ell]\setminus\{i\}$.
    \end{enumerate}
    Furthermore, $L_i=U_i-\ell_i^2$, where $\ell_i(x)=\prod_{j\in[\ell]\setminus\{i\}}\frac{x-x_j}{x_i-x_j}$ is the Lagrange interpolation polynomial.
\end{lemma}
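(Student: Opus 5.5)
We will take $L_i:=U_i-\ell_i^2$ as the definition, where $U_i$ is the majorant of \lem{majorant-indicator}. We then check the interpolation conditions directly and prove the global minorant property by repeating the Rolle-type argument from the proof of \lem{majorant-indicator}.

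\textbf{Conditions 2 and 3, and uniqueness.} Since $\deg\ell_i=\ell-1$, we have $\deg L_i\leq 2\ell-2$. Because $\ell_i(x_j)=\delta_{ij}$, we get $L_i(x_j)=U_i(x_j)-\delta_{ij}$, which equals $1$ for $j<i$ and $0$ for $j\geq i$; the value at $j=i$ is $1-1=0$. Also $(\ell_i^2)'=2\ell_i\ell_i'$ vanishes at every $x_j$ with $j\neq i$, so $L_i'(x_j)=U_i'(x_j)=0$ for $j\neq i$. These $2\ell-1$ linear constraints determine $L_i$ uniquely, by the same root-counting argument as for $U_i$. For $i=1$ all prescribed values are $0$, so the zero polynomial satisfies the constraints and uniqueness gives $L_1\equiv 0\leq 1_{x_1}$. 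From now on we assume $i\geq 2$, in which case $L_i$ is nonconstant since it takes both values $0$ and $1$.

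\textbf{Locating all critical points.} $L_i'$ has degree at most $2\ell-3$. It has the $\ell-1$ roots $x_j$ with $j\neq i$. Consecutive nodes $x_j,x_{j+1}$ at which $L_i$ takes equal values are all pairs except $(x_{i-1},x_i)$. By Rolle's theorem each such pair contributes a root $y_j\in(x_j,x_{j+1})$, giving $\ell-2$ more roots. This accounts for all $2\ell-3$ roots, so they are real and simple, $L_i'$ changes sign at each of them, and $L_i'$ has no root in $(x_{i-1},x_i)$. In particular $x_i$ is not a critical point.

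\textbf{Sign analysis.} On $(x_{i-1},x_i)$ we have $L_i'<0$, because $L_i$ drops from $1$ to $0$ there. This forces the following pattern.
\begin{itemize}
\item To the right, $L_i$ keeps decreasing past $x_i$ until $y_i$, which is therefore a local minimum. Alternating signs then make $x_{i+1},\dots,x_\ell$ local maxima with value $0$.
\item To the left, $x_{i-1}$ is a local maximum with value $1$. The alternation makes $y_{i-2},y_{i-3},\dots$ local minima and $x_{i-2},\dots,x_1$ local maxima with value $1$.
\end{itemize}
On the two unbounded intervals $L_i'$ has constant sign. The first root of $L_i'$ is $x_1$, which is a local maximum, so $L_i$ increases on $(-\infty,x_1)$ and hence $L_i\leq 1$ there. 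The last root is $x_\ell$ if $i<\ell$, and $x_{\ell-1}$ if $i=\ell$. In either case $L_i$ is decreasing to the right of that last root, so $L_i\leq 0$ on $[x_\ell,\infty)$; when $i=\ell$ this also uses $L_\ell(x_\ell)=0$ and monotone decrease on $(x_{\ell-1},\infty)$.

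\textbf{Conclusion.} On each interval between consecutive critical points, $L_i$ is monotone, so its supremum is attained at the local maxima just identified. Hence $L_i\leq 1$ on $(-\infty,x_i]$, using also the monotone decrease on $[x_{i-1},x_i]$, and $L_i\leq 0$ on $[x_i,\infty)$, using the decrease on $[x_i,y_i]$ and the maxima with value $0$ further right. This is exactly $L_i\leq 1_{x_i}$ on all of $\mathbb R$.

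The main obstacle is the bookkeeping of the alternating maxima and minima, especially at the boundary cases $i=2$ and $i=\ell$. It is handled exactly as in \lem{majorant-indicator}, once the full root count shows that all critical points are simple.
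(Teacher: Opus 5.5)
Your proposal is correct and follows essentially the paper's argument. Rolle's theorem plus root counting locates all $2\ell-3$ critical points as simple roots, and sign alternation together with monotonicity on the two unbounded intervals then gives $L_i\leq 1_{x_i}$.

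The only difference is in the ordering. You define $L_i:=U_i-\ell_i^2$ up front and verify conditions 2 and 3 algebraically, so the ``furthermore'' clause is immediate. The paper instead defines $L_i$ by the interpolation conditions and recovers $U_i-L_i=\ell_i^2$ at the end, from the nonnegativity of $U_i-L_i$ and its double roots at the $x_j$ with $j\neq i$.
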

\begin{proof}
    The proof is similar to the proof of \lem{majorant-indicator}.
    As in that proof, the last two conditions determine $L_i$ uniquely among polynomials of degree at most $2\ell-2$.
    If $i=1$, the zero polynomial $L_1\equiv 0$ satisfies all three conditions, so we assume $i\geq 2$ in the rest of the proof.
    
    The degree of $L_i'$ is at most $2\ell-3$.
    We observe that from the second condition and by Rolle's theorem, $L_i'$ must have $\ell-2$ roots in the intervals $(x_1,x_2),\ldots,(x_{i-2},x_{i-1})$ and $(x_{i},x_{i+1}),\ldots,(x_{\ell-1},x_\ell)$.
    Let $y_j\in (x_j,x_{j+1})$ for $j\in[\ell-1]\setminus\{i-1\}$ be a root of $L_i'$.
    Then we have seen $L_i'$ has all $2\ell-3$ roots located, in order $x_1<y_1<\ldots<x_{i-2}<y_{i-2}<x_{i-1}<y_i<x_{i+1}<y_{i+1}<\ldots<x_{\ell-1}<y_{\ell-1}<x_\ell$ (for $i=2$ the list starts with $x_1<y_2$, and for $i=\ell$ it ends with $y_{\ell-2}<x_{\ell-1}$), and they are all distinct. 

    To see why $L_i$ is a minorant, notice that $L_i'$ has no root in $(x_{i-1},y_i)$ (recall that $x_i$ is not among its roots) and $L_i(x_i)=0<1=L_i(x_{i-1})$, so $L_i'<0$ in the interval $(x_{i-1},y_i)$ and $y_i$ must be a local minimum (for $i=\ell$ the interval is $(x_{\ell-1},\infty)$, which contains no root of $L_\ell'$, so $L_\ell$ is decreasing there and $x_{\ell-1}$ is a local maximum).
    Since $L_i$ must alternate between local maxima and local minima at the (simple) roots of $L_i'$ in order, this implies that $x_1,\ldots,x_{i-1},x_{i+1},\ldots,x_\ell$ are local maxima and the $y_j$ are local minima; hence $L_i\leq 1$ on $[x_1,x_{i-1}]$ and $L_i\leq 0$ on $[x_i,x_\ell]$, while on $(x_{i-1},x_i)$ the function $L_i$ decreases from $1$ to $0$, so that $L_i\leq 1_{x_i}$ there.
    On the unbounded intervals $(-\infty,x_1)$ and $(x_\ell,\infty)$, which contain no roots of $L_i'$, $L_i$ is monotone with $L_i\leq L_i(x_1)=1$ and $L_i\leq L_i(x_\ell)=0$ respectively, because $x_1$ and $x_\ell$ are local maxima (for $i=\ell$, $L_\ell$ is decreasing on $(x_{\ell-1},\infty)$ as shown).
    Thus $L_i(x)\leq 1_{x_i}(x)$ for all $x\in\mathbb R$.

    Recall that in the proof of \lem{sos-decomposition}, we have seen that a non-negative polynomial must have real roots of even multiplicity and conjugate pairs of complex roots.
    Finally, since $U_i$ and $L_i$ intersect at $x_1,\ldots,x_{i-1},x_{i+1},\ldots,x_\ell$ and $U_i-L_i$ is a non-negative polynomial of degree at most $2\ell-2$, we have
    \begin{align}
        U_i-L_i = c \cdot \prod_{j\neq i} (x-x_j)^2. 
    \end{align}
    Now given that $U_i(x_i)-L_i(x_i)=1-0=1$, we have $c=\prod_{j\neq i} (x_i-x_j)^{-2}$, and thus $U_i-L_i = \ell_i^2$.
\end{proof}

\subsection{Upper bound on the Kolmogorov distance}

The majorant and minorant also yield the following classical bracketing of a distribution function at the nodes of its Gauss quadrature (the Chebyshev--Markov--Stieltjes inequalities), which we use in the proof of the main theorem and again in \sec{convergence-lower-bound}.

\begin{lemma}[Bracketing at the quadrature nodes]\label{lem:markov-stieltjes}
    Let $\mu$ be a probability distribution on $\mathbb R$ with distribution function $F_\mu$, and for an integer $\ell\geq 2$ let $x_1<\cdots<x_\ell$ and $w_1,\ldots,w_\ell$ be the nodes and weights of an $\ell$-node quadrature rule that is exact for $\mu$ on all polynomials of degree at most $2\ell-2$.
    Then, with $S_i:=\sum_{j\leq i}w_j$ and $S_0:=0$,
    \begin{align}
        S_{i-1}\leq F_\mu(x_i)\leq S_i \qquad\text{for all } i\in[\ell].
    \end{align}
\end{lemma}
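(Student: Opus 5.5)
We apply the polynomial majorant and minorant of \lem{majorant-indicator} and \lem{minorant-indicator} to the indicator $1_{x_i}$ of $(-\infty,x_i]$ and integrate against $\mu$. The identity $F_\mu(x_i)=\int 1_{x_i}\,\d\mu$ holds by definition, since $1_{x_i}$ includes the point $x_i$.

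\emph{Upper bound.} Let $U_i$ be the majorant from \lem{majorant-indicator}; it has degree at most $2\ell-2$. Since $U_i\geq 1_{x_i}$ pointwise, we have $F_\mu(x_i)\leq\int U_i\,\d\mu$. The quadrature rule is exact on polynomials of degree at most $2\ell-2$, so
\begin{align}
\int U_i\,\d\mu=\sum_j w_jU_i(x_j).
\end{align}
By the second condition of the lemma, $U_i(x_j)$ equals $1$ for $j\leq i$ and $0$ for $j>i$. Hence the sum equals $S_i$. For $i=\ell$ we may use $U_\ell\equiv1$ directly; this gives $\sum_j w_j=1$, which follows from exactness on constants.

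\emph{Lower bound.} Let $L_i$ be the minorant from \lem{minorant-indicator}; it also has degree at most $2\ell-2$. Since $L_i\leq 1_{x_i}$ pointwise, $F_\mu(x_i)\geq\int L_i\,\d\mu$. By exactness,
\begin{align}
\int L_i\,\d\mu=\sum_j w_jL_i(x_j).
\end{align}
Here $L_i(x_j)$ equals $1$ for $j<i$ and $0$ for $j\geq i$, so the sum equals $S_{i-1}$. When $i=1$, take $L_1\equiv0$, which gives $S_0=0$.

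Note that the weights need not be assumed nonnegative: only the pointwise inequalities between the polynomials and the indicator are integrated against the probability measure $\mu$, and $\mu$ itself is positive. The quadrature is used only as an identity for integrals of polynomials.

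The one point needing care is degree bookkeeping. The lemmas give $U_i$ and $L_i$ of degree at most $2\ell-2$, and this is exactly the exactness degree assumed in the statement. No other obstacle arises, since the majorant and minorant lemmas have already done the real work. We also need $\ell>1$ for those lemmas, which is ensured by the hypothesis $\ell\geq2$.
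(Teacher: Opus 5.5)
Your proposal is correct and is essentially the paper's proof: integrate the majorant $U_i$ and the minorant $L_i$ of $1_{x_i}$ against $\mu$, both of degree at most $2\ell-2$, then evaluate each integral by exactness using $U_i(x_j)=1$ for $j\le i$ (and $0$ otherwise) and $L_i(x_j)=1$ for $j<i$ (and $0$ otherwise). Your remark that nonnegativity of the weights is never used is also right; in fact the conclusion itself forces $w_i=S_i-S_{i-1}\ge 0$.
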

\begin{proof}
    By \lem{majorant-indicator} and \lem{minorant-indicator}, $L_i\leq 1_{x_i}\leq U_i$ with $\deg U_i,\deg L_i\leq 2\ell-2$.
    Taking expectations under $\mu$ and using exactness, $F_\mu(x_i)=\Exp_{\mu}[1_{x_i}(X)]\leq\Exp_{\mu}[U_i(X)]=\sum_j w_jU_i(x_j)=S_i$ and $F_\mu(x_i)\geq\Exp_{\mu}[L_i(X)]=\sum_j w_jL_i(x_j)=S_{i-1}$.
\end{proof}

Given the tools established in \sec{jacobi-gauss} and \sec{majorant-minorant}, we are now ready to prove the main theorem of this section, which gives an upper bound on the Kolmogorov distance between the distribution function of a unitary $k$-design and that of the Haar measure.

\begin{theorem}\label{thm:approximate-design-upper-bound}
    For integers $N\geq 2$ and $k\geq 1$, and $\epsilon\leq 1$, let $\mu(k,\epsilon)$ be any $\epsilon$-approximate unitary $k$-design on $\mathrm U(N)$ (see \sec{design}), let $\nu(k,\epsilon)$ be the distribution of $X=|\bra{0}U\ket{0}|^2$ for $U\sim\mu(k,\epsilon)$, and let $\nu=\Beta(1,N-1)$ be the distribution of the same quantity for a Haar random $U$.
    Then
    \begin{align}
        d_{\mathrm K}\big(\nu(k,\epsilon),\nu\big)=\sup_{x\in\R}|F_{\nu(k,\epsilon)}(x)-F_{\nu}(x)| = O\left(\sqrt{\frac{N}{k(k+N)}}+\epsilon\right).
    \end{align}
The same proof applies to any fixed entry $|\bra{x}U\ket{y}|^2$, since \thm{polynomial-relative-error} holds for every entry.
\end{theorem}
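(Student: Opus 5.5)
Take the $\ell=k+1$ node Jacobi--Gauss quadrature for $\nu=\Beta(1,N-1)$, i.e.\ $\mu_D$ with $D=N-1$, with nodes $x_1<\cdots<x_{k+1}$ and weights $w_i$. It is exact for polynomials of degree at most $2k+1$. Set $S_i=\sum_{j\le i}w_j$. By \lem{markov-stieltjes}, $S_{i-1}\le F_\nu(x_i)\le S_i$ for every $i$. The plan is to prove the analogous bracketing for $F_{\nu(k,\epsilon)}$ up to an additive $O(\epsilon)$ error, and then pass from the nodes to all $x\in\R$ by monotonicity. The case $k$ with $\ell=k+1\ge 2$ always holds, so \lem{majorant-indicator} and \lem{minorant-indicator} apply.

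\emph{Upper bracket for the design.} The majorant $U_i$ of $1_{x_i}$ from \lem{majorant-indicator} is nonnegative on $\R$, since $U_i\ge 1_{x_i}\ge 0$, and has degree at most $2k$. By \lem{sos-decomposition}, $U_i=r^2+s^2$ with $r,s$ real polynomials of degree at most $k$. Applying \eq{relative-error-p} of \thm{polynomial-relative-error} to $r$ and $s$ gives
\begin{align}
F_{\nu(k,\epsilon)}(x_i)\le \Exp_{\mu(k,\epsilon)}[U_i(X)]\le(1+\epsilon)\Exp_{\Haar}[U_i(X)]=(1+\epsilon)S_i\le S_i+\epsilon .
\end{align}

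\emph{Lower bracket for the design.} The minorant $L_i=U_i-\ell_i^2$ from \lem{minorant-indicator} is not a sum of squares, so we cannot apply the relative bound to it directly. Instead, treat the two pieces separately: $U_i$ transfers from below with factor $(1-\epsilon)$ by the same SOS argument, and $\ell_i^2$ transfers from above with factor $(1+\epsilon)$ because $\ell_i$ has degree $k$. Hence
\begin{align}
F_{\nu(k,\epsilon)}(x_i)\ge \Exp_{\mu(k,\epsilon)}[L_i]\ge (1-\epsilon)S_i-(1+\epsilon)w_i = S_{i-1}-\epsilon(S_i+w_i)\ge S_{i-1}-2\epsilon .
\end{align}
Here we used that $\Exp_{\Haar}[\ell_i^2]=w_i$ by exactness. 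Consequently both distribution functions lie in $[S_{i-1}-2\epsilon,\,S_i+\epsilon]$ at each node, and so $|F_{\nu(k,\epsilon)}(x_i)-F_\nu(x_i)|\le w_i+2\epsilon$.

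\emph{Interpolation between nodes.} Fix $x\in(x_i,x_{i+1})$. Monotonicity of both distribution functions gives, for each $F\in\{F_\nu,F_{\nu(k,\epsilon)}\}$, the sandwich $S_{i-1}-2\epsilon\le F(x_i)\le F(x)\le F(x_{i+1})\le S_{i+1}+\epsilon$. The resulting window has width $w_i+w_{i+1}+3\epsilon$, so the difference at $x$ is at most $2\max_j w_j+3\epsilon$.

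For $x<x_1$, both values of $F$ lie in $[0,S_1+\epsilon]=[0,w_1+\epsilon]$. For $x\ge x_{k+1}$, both lie in $[S_k-2\epsilon,1]=[1-w_{k+1}-2\epsilon,1]$. Points outside $[0,1]$ are trivial because both measures are supported on $[0,1]$. So $d_{\mathrm K}\le 2\max_i w_i+3\epsilon$.

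Finally, \thm{jacobi-weight-upper-bound} with $D=N-1$ and $\ell=k+1$ gives $\max_i w_i\le C\sqrt{(N-1)/((k+1)(k+N))}=O(\sqrt{N/(k(k+N))})$, which is the claimed bound.

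The main obstacle is the lower bracket, since $L_i$ takes negative values and the relative-error transfer only works for squares. The decomposition $L_i=U_i-\ell_i^2$, with each part being a sum of squares of degree-$k$ polynomials, is exactly what makes the relative-error transfer go through. Everything else is bookkeeping.
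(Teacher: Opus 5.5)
Your proof is correct and follows the paper's argument essentially step for step. Both use the $(k+1)$-node Jacobi--Gauss rule for $\Beta(1,N-1)$, the sum-of-squares transfer of the majorant $U_i$, the split $L_i=U_i-\ell_i^2$ for the lower bracket, interpolation between nodes by monotonicity, and \thm{jacobi-weight-upper-bound} for the weights. The only difference is cosmetic: you end with $2\max_i w_i+3\epsilon$, while the paper ends with $(2+\epsilon)\max_i w_i+\epsilon$.
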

\begin{proof}
    We choose 
    $\ell=k+1$
    to be the number of quadrature nodes.
    Let $0< x_1 < x_2 < \cdots < x_\ell < 1$ be the shifted Jacobi--Gauss quadrature nodes for $\nu=\mu_{N-1}=\Beta(1,N-1)$ (\sec{jacobi-gauss} with $D=N-1$), and let $w_i$ be the corresponding weight of $x_i$. 
    By the exactness of the Gauss quadrature rule (\sec{jacobi-gauss}), for every polynomial $f$ of degree at most $2\ell-1$, we have
    \begin{align}
        \Exp_{X\sim\Beta(1,N-1)}[f(X)] = \sum_{i=1}^\ell w_i f(x_i).
    \end{align}
    For every non-negative polynomial $f$ of degree at most $2k$, by \lem{sos-decomposition}, we can decompose it into a sum of two squares, i.e., $f(x)=p(x)^2+q(x)^2$, where $\deg(p),\deg(q)\leq k$. 
    Then by \thm{polynomial-relative-error} applied to $p$ and to $q$, we have
    \begin{align}
        &(1-\epsilon)\sum_{i=1}^\ell w_i f(x_i) = (1-\epsilon)\Exp_{X\sim\nu}[f(X)] \leq \Exp_{X\sim\nu(k,\epsilon)}[f(X)] \nonumber\\
        &\qquad\leq (1+\epsilon)\Exp_{X\sim\nu}[f(X)] = (1+\epsilon)\sum_{i=1}^\ell w_i f(x_i).
    \end{align}
    Now consider the indicator function $1_{x_i}$ of $(-\infty,x_i]$.
    Let $S_i=\sum_{m=1}^i w_m$.
    By \lem{majorant-indicator}, there is a degree-$(2\ell-2)$ polynomial majorant $U_i$ of $1_{x_i}$ over $\mathbb R$. 

    Since $U_i$ is a majorant of $1_{x_i}$, it is non-negative, and we have
    \begin{align}
        F_{\nu(k,\epsilon)}(x_i) 
        &= \Exp_{X\sim\nu(k,\epsilon)}[1_{x_i}(X)] 
        \leq \Exp_{X\sim\nu(k,\epsilon)}[U_i(X)] 
        \leq (1+\epsilon)\Exp_{X\sim\nu}[U_i(X)] \\
        &= (1+\epsilon)\sum_{j=1}^\ell w_j U_i(x_j) 
        = (1+\epsilon)\sum_{j=1}^i w_j = (1+\epsilon)S_i.
    \end{align}
    The second equality follows from the fact that $U_i$ is a polynomial of degree at most $2\ell-2$, and the third follows from the second property of $U_i$ in \lem{majorant-indicator}.
    On the other hand, by \lem{minorant-indicator}, we can write $L_i$ as the difference of two non-negative functions $L_i=U_i-\ell_i^2$ with each of degree no more than $2\ell-2$, and therefore
    \begin{align}
        F_{\nu(k,\epsilon)}(x_i)
        &\geq \Exp_{X\sim\nu(k,\epsilon)}[L_i(X)] 
        = \Exp_{X\sim\nu(k,\epsilon)}[U_i(X)-\ell_i^2(X)] \\
        &\geq (1-\epsilon)\Exp_{X\sim\nu}[U_i(X)] - (1+\epsilon)\Exp_{X\sim\nu}[\ell_i^2(X)].
    \end{align}
    The second expectation can be obtained using the quadrature rule:
    \begin{align}
        \Exp_{X\sim\nu}[\ell_i^2(X)] = \sum_{j=1}^\ell w_j \ell_i^2(x_j) = w_i \ell_i^2(x_i) = w_i.
    \end{align}
    This implies that we can bound
    \begin{align}
        F_{\nu(k,\epsilon)}(x_i) \geq (1-\epsilon)S_i - (1+\epsilon)w_i.
    \end{align}
    We take the convention that $x_0=0$, $x_{\ell+1}=1$, $w_0=w_{\ell+1}=0$, $S_{-1}=S_0=0$, and $S_{\ell+1}=1$.
    The intervals $[x_i,x_{i+1})$, $0\leq i\leq\ell$, partition $[0,1)$, and for $x\notin[0,1)$ both distribution functions coincide, so it suffices to consider $x\in[x_i,x_{i+1})$ for some $0\leq i\leq\ell$. By monotonicity of distribution functions, for such $x$ we have 
    \begin{align}
        (1-\epsilon)S_i - (1+\epsilon)w_i \leq F_{\nu(k,\epsilon)}(x) \leq (1+\epsilon)S_{i+1}
    \end{align}
    and, by \lem{markov-stieltjes} and monotonicity (for $i=0$ and $i=\ell$, the inequalities involving $x_0=0$ and $x_{\ell+1}=1$ follow from the conventions above, since $F_\nu(0)=0$ and $F_\nu(1)=1$), 
    \begin{align}
        S_{i-1} \leq F_\nu(x_i) \leq F_\nu(x) \leq F_\nu(x_{i+1}) \leq S_{i+1}.
    \end{align}
    We have 
    \begin{align}
        F_{\nu(k,\epsilon)}(x) - F_\nu(x) 
        &\leq (1+\epsilon)S_{i+1} - S_{i-1} 
        \leq (1+\epsilon)w_{i+1} + w_i + \epsilon S_i 
        \leq (2+\epsilon)\max_i w_i + \epsilon,
    \end{align}
    and 
    \begin{align}
        F_\nu(x) - F_{\nu(k,\epsilon)}(x) \leq S_{i+1} - (1-\epsilon)S_i + (1+\epsilon)w_i \leq (2+\epsilon)\max_i w_i + \epsilon.
    \end{align}
    By \thm{jacobi-weight-upper-bound} applied with $D=N-1$ and $\ell=k+1$, $\max_i w_i=O\big(\sqrt{(N-1)/((k+1)(k+N))}\big)=O\big(\sqrt{N/(k(k+N))}\big)$, and we conclude that
    \begin{align}
        \sup_{x\in\R}|F_{\nu(k,\epsilon)}(x)-F_\nu(x)| \leq (2+\epsilon)\max_i w_i + \epsilon = O\left(\sqrt{\frac{N}{k(k+N)}}+\epsilon\right),
    \end{align}
    where we used $\epsilon\leq 1$ to absorb the factor $2+\epsilon$ into the constant.
\end{proof}

\begin{remark}\label{rmk:block-kolmogorov}
    The proof of \thm{approximate-design-upper-bound} uses the specific law $\Beta(1,N-1)$ only through the weight bound of \thm{jacobi-weight-upper-bound}.
    The remaining steps apply to the probability $X_k$ of a fixed block of $a$ basis states in a fixed column, whose Haar counterpart is $\Beta(a,N-a)$ (\lem{block-factorization}) and to which \thm{polynomial-relative-error} applies, with the Gauss quadrature rule of $\Beta(a,N-a)$ in place of that of $\Beta(1,N-1)$: at the nodes $x_i$ of that rule one obtains $|F_{X_k}(x_i)-F_{\Beta(a,N-a)}(x_i)|\leq(1+\epsilon)w_i+\epsilon$, and hence a Kolmogorov bound $O(\max_i w_i+\epsilon)$.
    We do not pursue this generalization, since the statements about block probabilities needed in \sec{local-translation-invariant} are obtained there by different means: for $a\geq 2\pi k$ through characteristic functions (\thm{convergence-distribution}), and for $a=2$ through a pointwise comparison of the distribution functions in terms of the Christoffel function of $\Beta(2,N-2)$ (\lem{pointwise-df-to-christoffel} and \lem{pointwise-christoffel}), which in particular yields the Kolmogorov bound $O(k^{-1/2}+\epsilon)$ for the probability of a block of two basis states.
\end{remark}

\section{Optimality of the convergence bounds}\label{sec:convergence-lower-bound}

In this section, we show that there exists a unitary $k$-design where a single entry has a distribution far from that of its Haar counterpart.
Throughout this section, $D$ denotes the dimension of the unitary matrices ($D=N=2^n$ for an $n$-qubit design), so that $\Beta(1,D-1)$ is the Porter--Thomas law of a single entry.
In particular, we show that the Kolmogorov distance is 
\begin{align}\label{eq:ks-lower-bound}
    \Omega\left(\sqrt{\frac{D}{k(D+k)}}+\epsilon\right).
\end{align} 

To see how we show this, we first note that in \sec{convergence-upper-bound}, we have seen that there is a discrete measure that is close to $\Beta(1,D-1)$ in Kolmogorov distance bounded by $O(\max_i w_i)$, the maximum of the shifted Jacobi--Gauss quadrature weights $w_i$. 
In this section, we show this bound is tight: the Kolmogorov distance is at least $\Omega(\max_i w_i)$.
Then by embedding the discrete measure into a unitary distribution so that the resulting unitary forms an exact design (i.e., the relative error in \dfn{strong-design} is $\epsilon=0$) and the top-left entry is marginally distributed according to the discrete measure, we show that there is a design which fails to approximate a Haar random unitary to error $o(\max_i w_i)$ in Kolmogorov distance. 
Following this idea closely, by perturbing the weight of the discrete measure by an additive error $O(\epsilon)$, we can further strengthen our result to obtain an approximate design that justifies our bound in \eq{ks-lower-bound}.

In \sec{induced-measure}, we show how to embed a moment-matching distribution on $[0,1]$ into a unitary ensemble so that the resulting distribution forms a unitary design. 
Then in \sec{worst-case}, we establish a lower bound on the quadrature weight, and a linear lower bound on the Kolmogorov distance in the maximum quadrature weight.

\subsection{Unitary designs from moment matching distributions}\label{sec:induced-measure}

Our first step to establish a lower bound is to embed a moment-matching distribution $p$ (in the sense of \eq{moment-matching} below) into a unitary distribution so that marginally an entry's distribution is $p$. 
We first define the unitary distribution induced by a marginal distribution. 

\begin{definition}[{Unitary distribution induced by distribution on $[0,1]$}]\label{dfn:induced-measure}
    For integer $D>1$, let $L,R\sim\Haar(D-1)$ be independent Haar random unitaries of dimension $D-1$ and let $\Theta\sim\Uniform(0,2\pi)$ be independent of $(L,R)$; when $X\sim p$ is random below, it is taken independent of $(L,R,\Theta)$. For $x\in[0,1]$, define
    \begin{align}
        W(x) := \begin{bmatrix}
            \sqrt{x}e^{\i\Theta} & \sqrt{1-x} e_1^\top \\
            \sqrt{1-x} e_1 & -\sqrt{x}e^{-\i\Theta} e_1e_1^\top + (\Id-e_1e_1^\top)
        \end{bmatrix},
    \end{align}
    where $e_1$ is the first standard basis vector in $\mathbb C^{D-1}$, 
    and
    \begin{align}
        U(x) := 
        \begin{bmatrix}
            1 & 0 \\
            0 & L
        \end{bmatrix} 
        W(x) \begin{bmatrix}
            1 & 0 \\
            0 & R
        \end{bmatrix},
    \end{align}
    and let $\mu(p)$ be the distribution of $U(X)$ for $X\sim p$.
    We say that $\mu(p)$ is a distribution on $\mathrm U(D)$ induced by the distribution $p$ on $[0,1]$.
    For fixed $x\in[0,1]$, we say $U(x)$ is the unitary distribution induced by $x$. 
\end{definition}

We show that a unitary distribution induced by $\Beta(1,D-1)$ resembles a Haar random unitary on $\mathrm U(D)$. 
When we replace the beta distribution with a moment-matching distribution $p$, the Kolmogorov distance of that entry is exactly that between $\Beta(1,D-1)$ and $p$. 

\begin{lemma}\label{lem:haar-induced}
    For integer $D>1$, the unitary distribution on $\mathrm U(D)$ induced by $\Beta(1,D-1)$ (see \dfn{induced-measure} for the definition) is the Haar measure on $\mathrm U(D)$.
\end{lemma}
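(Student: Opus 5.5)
We will show that $U(X)$, with $X\sim\Beta(1,D-1)$ independent of $(L,R,\Theta)$, is left-invariant under a generating set of $\mathrm U(D)$; since the Haar measure is the unique left-invariant probability measure, the claim follows. A cleaner route is to show that the first column of $U(X)$ is uniform on the unit sphere and that, conditional on the first column, the remaining columns are Haar on the orthogonal complement. This description characterizes the Haar measure: write $U=[u\,|\,\cdots]$ and use the fibration $\mathrm U(D)\to S^{2D-1}$, $U\mapsto Ue_0$, with fiber $\mathrm U(D-1)$.

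\emph{Step 1: the first column.} The first column of $U(X)$ is $\big(\sqrt{X}e^{\i\Theta},\,\sqrt{1-X}\,Le_1\big)$. Here $Le_1$ is uniform on the unit sphere of $\mathbb C^{D-1}$ and independent of $(X,\Theta)$, and $\Theta$ is a uniform phase. A uniform unit vector $g/\|g\|$ in $\mathbb C^D$ decomposes into the same three components. Its first coordinate has modulus squared $\Beta(1,D-1)$ (\sec{preliminaries-beta}), its phase is uniform and independent of the rest, and the normalized remaining block is uniform on $S^{2D-3}$, independent of the first two components (by the Gaussian representation and rotation invariance of $(g_2,\dots,g_D)$). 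So the first column has the correct law.

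\emph{Step 2: the rest given the first column.} Let $V_x:=W(x)\,\mathrm{diag}(1,R)$. Then $U(x)=\mathrm{diag}(1,L)\,V_x$. The unitary $W(x)$ maps $e_0$ to a fixed vector $w$. Right-multiplying by $\mathrm{diag}(1,R)$ with $R$ Haar on $\mathrm U(D-1)$ leaves the first column equal to $w$. Since $W(x)$ maps $\mathrm{span}\{e_0\}^\perp$ isometrically onto $w^\perp$, the remaining columns of $W(x)\mathrm{diag}(1,R)$ are $W(x)|_{e_0^\perp}\circ R$, which is a Haar-distributed isometry onto $w^\perp$ by right invariance of $R$. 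Left-multiplying by $\mathrm{diag}(1,L)$ is a fixed unitary once we condition on $(L,\Theta,X)$. So conditional on the first column $u=\mathrm{diag}(1,L)w$, the other columns form a Haar random isometry onto $u^\perp$, independent of how $u$ was produced. Combined with Step 1, this gives exactly the Haar measure.

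The main obstacle is making Step 2 rigorous: we must verify that $W(x)$ is unitary, which requires checking the orthonormality of its columns, including the entry $-\sqrt{x}e^{-\i\Theta}$. We also need to handle the independence of $R$ from $(L,\Theta,X)$ correctly when conditioning. If the conditional-fibration argument feels too informal, the alternative is direct invariance. Left multiplication by $\mathrm{diag}(1,V)$ is absorbed into $L$, and right multiplication by $\mathrm{diag}(1,V)$ is absorbed into $R$. It then suffices to show invariance of the law under left multiplication by a single unitary mixing $e_0$ with $\mathrm{span}\{e_1,\dots\}$, which is reduced by the first-column computation. Since these subgroups generate $\mathrm U(D)$, this also gives the result.
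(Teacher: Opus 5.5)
Your proposal is correct and follows the paper's proof. You show that the first column $(\sqrt X e^{\i\Theta},\sqrt{1-X}\,Le_1)$ is uniform on the sphere via the Gaussian decomposition, and that right multiplication by $\mathrm{diag}(1,R)$ turns the remaining columns into a Haar-random orthonormal basis of the orthogonal complement, which is exactly how the paper concludes (your unitarity check of $W(x)$ and conditioning on $(L,\Theta,X)$ only make that step explicit); the generating-subgroup fallback is unnecessary and, as sketched, incomplete, since invariance under a unitary mixing $e_0$ with its complement does not follow from the first-column law alone.
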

\begin{proof}
    Let $X\sim\Beta(1,D-1)$ and $\Theta\sim\Uniform(0,2\pi)$ be independent random variables. 
    Following the notation in \dfn{induced-measure}, the first column of $U(X)$ is 
    \begin{align}
        \begin{bmatrix}
            \sqrt X e^{\i\Theta} \\
            \sqrt{1-X} L e_1
        \end{bmatrix},
    \end{align}
    where $L\sim\Haar(D-1)$ is independent of $X$ and $\Theta$. 
    Since $Le_1$ is uniform on the unit sphere in $\mathbb C^{D-1}$, the first column of $U(X)$ is uniform on the unit sphere in $\mathbb C^D$, which is exactly the same distribution as the first column of a Haar random unitary matrix: by the Gaussian representation of \sec{preliminaries-beta}, a uniform unit vector can be written as $g/\|g\|$ for a standard complex Gaussian vector $g\in\mathbb C^D$, and $|g_1|^2/\|g\|^2\sim\Beta(1,D-1)$, $\arg g_1\sim\Uniform(0,2\pi)$, and the direction $g_{2:D}/\|g_{2:D}\|$, which is uniform on the unit sphere in $\mathbb C^{D-1}$, are independent, so that $g/\|g\|$ has the law of the displayed vector. 
    Conditioned on the first column, the remaining $D-1$ columns of $U(X)$ form a basis in the subspace orthogonal to the first column. 
    Multiplication on the right by $R$ replaces this basis by a uniformly random orthonormal basis in the same subspace, and thus $U(X)$ has the same distribution as a Haar random unitary matrix.
\end{proof}

Next, we show this embedding gives a unitary design if the marginal distribution is moment-matching.

\begin{theorem}\label{thm:induced-exact-design}
    For every distribution $p$ on $[0,1]$
    whose moments of order up to $2k$ agree with those of $\Beta(1,D-1)$, i.e., such that
    for every polynomial $f$ of degree at most $2k$,
    \begin{align}\label{eq:moment-matching}
        \Exp_{X\sim p}[f(X)] = \Exp_{X\sim\Beta(1,D-1)}[f(X)],
    \end{align}
    the unitary distribution induced by $p$ (see \dfn{induced-measure} for the definition) is an exact unitary $k$-design.
\end{theorem}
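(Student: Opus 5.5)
By \thm{polynomial-to-design} with $\epsilon=0$ (equivalently, \lem{strong-design-polynomial}), it suffices to show that $\Exp_{U\sim\mu(p)}[g(U)]=\Exp_{U\sim\Haar(D)}[g(U)]$ for every polynomial $g$ in which each monomial contains $k$ entries of $U$ and $k$ entries of $U^*$. Indeed, taking $g=|f|^2$ then gives the two-sided inequality with $\epsilon=0$. By \lem{haar-induced}, the right-hand side equals $\Exp_{X\sim\Beta(1,D-1)}\Exp[g(U(X))]$. The plan is therefore to average over $(L,R,\Theta)$ with $x$ held fixed, and to show that
\begin{align}
    G(x):=\Exp_{L,R,\Theta}\big[g(U(x))\big]
\end{align}
is a polynomial in $x$ of degree at most $k$. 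Once this holds, the moment-matching hypothesis \eq{moment-matching} gives $\Exp_{X\sim p}[G(X)]=\Exp_{X\sim\Beta(1,D-1)}[G(X)]$, and this is exactly the desired equality of expectations.

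To establish the polynomiality, I will write out the entries of $U(x)$. We have $U_{00}=\sqrt{x}e^{\i\Theta}$. Entries in the first row and first column are $\sqrt{1-x}$ times entries of $R$ or of $L$. The lower-right block is $L\big(\Id-(1+\sqrt{x}e^{-\i\Theta})e_1e_1^\top\big)R$. Each entry of $U$, and likewise of $U^*$, is therefore affine in the two quantities $\sqrt{x}e^{\pm\i\Theta}$ and $\sqrt{1-x}$, with coefficients that are polynomials in the entries of $L,R,L^*,R^*$.

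A monomial of $g$ thus expands into terms of the form
\begin{align}
    c\,(\sqrt{x})^{a}e^{\i m\Theta}(\sqrt{1-x})^{b},
\end{align}
where $c$ is a polynomial in the entries of $L,R,L^*,R^*$. Averaging over $\Theta$ kills every term except those with phase exponent $m=0$. Averaging over $L,R$ kills every term except those that are invariant under the phase rotations $L\mapsto e^{\i\alpha}L$ and $R\mapsto e^{\i\beta}R$, which are Haar-invariant symmetries. The main obstacle is showing that these selection rules force both $a$ and $b$ to be even.

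For the $\Theta$-average, each factor $\sqrt{x}$ comes paired with a phase $e^{\pm\i\Theta}$, so $m\equiv a\pmod 2$. Hence $m=0$ forces $a$ to be even. For $b$, I will track the $U(1)$ charges. Give each factor from $U$ charge $+1$ and each factor from $U^*$ charge $-1$, and assign charges under the rotations of $L$ and $R$ separately. A factor $\sqrt{1-x}$ arising from a first-column entry $\sqrt{1-x}(Le_1)_j$ carries $L$-charge $\pm1$. A factor arising from a first-row entry carries $R$-charge $\pm1$. Lower-block entries carry both an $L$-charge and an $R$-charge of $\pm1$. A term survives the $L,R$ average only if its total $L$-charge and total $R$-charge both vanish. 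Counting parities, the number of first-column factors plus the number of lower-block factors must be even, and the number of first-row factors plus the number of lower-block factors must be even. Adding these, the number of first-row factors plus the number of first-column factors, which is $b$, must be even. A cleaner route to the same conclusion is to conjugate by $\diag(1,-\Id)$: this flips the sign of $\sqrt{1-x}$ while preserving the law of $(L,R)$.

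With $a$ and $b$ both even, every surviving term is $x^{a/2}(1-x)^{b/2}$ with $a+b\le 2k$. Therefore $G$ is a polynomial in $x$ of degree at most $k\le 2k$, and the argument of the first paragraph completes the proof.
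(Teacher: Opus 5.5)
Your proof takes the same route as the paper's. Each entry of $U(x)$ is affine in $\sqrt{x}e^{\pm\i\Theta}$ or in $\sqrt{1-x}$. The $\Theta$-average pairs up the $\sqrt{x}$-factors, and a sign symmetry of $(L,R)$ removes odd powers of $\sqrt{1-x}$. Your ``cleaner route'' with $\diag(1,-\Id)$ is the paper's conjugation by $[1]\oplus\diag(-1,1,\ldots,1)$, and your charge-parity count is a valid alternative to it. The averaged quantity is then a polynomial in $x$, and moment matching applies.

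There is, however, a degree-counting error in your reduction.
\begin{itemize}
\item If every monomial of $f$ contains $k$ entries of $U$ and $k$ entries of $U^*$, then $|f|^2=f\bar f$ contains $2k$ entries of $U$ and $2k$ entries of $U^*$. So $|f|^2$ is not in the class of $g$'s you reduce to, and ``taking $g=|f|^2$'' does not follow from what you prove.
\item Equality of the $(k,k)$ moments alone does not give an exact strong $(k,k)$-design. By polarization, the exact strong $(k,k)$ condition is equivalent to equality of all $(2k,2k)$ moments.
\end{itemize}

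The repair is to run your counting directly for $g=|f|^2$. Each factor contributes at most one of $\sqrt{x}$ or $\sqrt{1-x}$, so $a+b\le 4k$, and $G$ is a polynomial in $x$ of degree at most $2k$, not $k$. That is exactly the range the moment-matching hypothesis covers. The hypothesis is therefore used in full, not with the slack your ``$k\le 2k$'' suggests. With this correction your proof is complete.
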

\begin{proof}

    By \dfn{induced-measure}, let $U(x)$ be the unitary distribution induced by $x\in[0,1]$.
    For ease of notation, we define $c(x)=\sqrt x e^{\i\Theta}$ and $s(x)=\sqrt{1-x}$ for $x\in[0,1]$. 
    Then, writing $L,R$ also for the block matrices $[1]\oplus L$ and $[1]\oplus R$ with $L,R\sim\Haar(D-1)$ (a slight abuse of notation), the entries of $U(x)=LW(x)R$ can be written as linear polynomials of $c(x)$, $c^*(x)$, and $s(x)$. 
    By \lem{haar-induced}, when $X\sim\Beta(1,D-1)$, $U(X)$ is Haar random. 
    We now show that $U(X)$ for $X\sim p$ is an exact unitary $k$-design.

    For every polynomial $f$ of degree at most $4k$ in the matrix entries of $U$ and their conjugates, $f(U(x))=g_{L,R}(c(x),c^*(x),s(x))$ can be written as a polynomial $g_{L,R}$ of degree at most $4k$ in $c$, $c^*$, and $s$ whose coefficients only depend on the entries of $L,R$.
    Since $L,R$ are invariant under left and right multiplication by any unitary matrix on the $(D-1)$-dimensional subspace, for $M=[1]\oplus\diag(-1,1,1,\ldots,1)$, we have $\Exp[f(U)]=\Exp[f(LMWMR)]$ (conjugating $W(x)$ by $M$ flips the sign of $s(x)$ and leaves $c(x)$ unchanged), and consequently, for $X$ sampled either from $p$ or $\Beta(1,D-1)$, we have
    \begin{align}
        \Exp[f(U)] 
        &= \Exp_{L,R,\Theta,X}[f(LW(X)R)] 
        = \Exp_{L,R,\Theta,X}[g_{L,R}(C,C^*,S)] 
        = \Exp_{L,R,\Theta,X}[g_{L,R}(C,C^*,-S)],
    \end{align}
    where $C=c(X)$ and $S=s(X)$.
    Hence $\Exp[f(U)]=\Exp_{L,R,\Theta,X}[g^{\mathrm e}_{L,R}(C,C^*,S)]$, where $g^{\mathrm e}_{L,R}(c,c^*,s):=\frac12\big(g_{L,R}(c,c^*,s)+g_{L,R}(c,c^*,-s)\big)$ is the even part of $g_{L,R}$ in the variable $s$, i.e., the sum of those monomials of $g_{L,R}$ in which the power of $s$ is even.

    Furthermore, $\Theta$ is independent of $(L,R,X)$ and uniform on $[0,2\pi)$, and every monomial of $g^{\mathrm e}_{L,R}$ evaluated at $(C,C^*,S)$ takes the form $C^m (C^*)^n S^{2r}=X^{(m+n)/2}(1-X)^r e^{\i(m-n)\Theta}$ for some $m,n,r\in\mathbb{N}$ satisfying $m+n+2r\leq 4k$; its expectation over $\Theta$ vanishes unless $m=n$, in which case it equals $X^m(1-X)^r$ with $m+r\leq 2k$.
    Therefore, for each fixed $L,R$, the function $\tilde g_{L,R}(x):=\Exp_\Theta\big[g^{\mathrm e}_{L,R}(c(x),c^*(x),s(x))\big]$ is a polynomial in $x$ of degree at most $2k$, and $\Exp[f(U)]=\Exp_{L,R}\big[\Exp_X[\tilde g_{L,R}(X)]\big]$ for $X$ sampled either from $p$ or $\Beta(1,D-1)$.

    Now by \eq{moment-matching}, applied to the polynomial $\tilde g_{L,R}$ of degree at most $2k$ for each fixed $L,R$, we have
    \begin{align}
        \Exp_{U\sim\Haar(D)}[f(U)] 
        &= \Exp_{X\sim\Beta(1,D-1),L,R\sim\Haar(D-1)}[\tilde g_{L,R}(X)] \\
        &= \Exp_{X\sim p,L,R\sim\Haar(D-1)}[\tilde g_{L,R}(X)] 
        = \Exp_{U\sim\mu(p)}[f(U)].
    \end{align}
    Applying this identity to $f=|h|^2$ for every polynomial $h$ whose monomials contain $k$ entries of $U$ and $k$ entries of $U^*$, \thm{polynomial-to-design} shows that $\mu(p)$ is an exact unitary $k$-design.
\end{proof}

Now given the construction of an exact unitary $k$-design from a moment matching distribution, we can lift the statement in \thm{induced-exact-design} to present an explicit construction of an $\epsilon$-approximate unitary design induced by an approximately moment-matching distribution.

\begin{theorem}\label{thm:induced-approximate-design}
    Let $\epsilon\in[0,1]$. Let $p$ be a distribution on $[0,1]$ whose moments of order up to $2k$ agree with those of $\Beta(1,D-1)$, i.e., \eq{moment-matching} holds for every polynomial $f$ of degree at most $2k$.
    Suppose that there exists a measurable function $h:[0,1]\to \mathbb R$ such that
    \begin{align}
        \Exp_{X\sim p}[h(X)]=0, \quad \sup_{x\in[0,1]}|h(x)|\leq 1,
    \end{align}
    and define the distribution $p_\epsilon$ by $\d p_\epsilon(x):=(1+\epsilon h(x))\d p(x)$; it is a probability distribution since $1+\epsilon h\geq 0$ and $\Exp_{X\sim p}[h(X)]=0$, and it satisfies $\Exp_{X\sim p_\epsilon}[f(X)] = \Exp_{X\sim p}[f(X)] + \epsilon \Exp_{X\sim p}[h(X) f(X)]$ for every integrable $f$.
    Then the unitary distribution induced by $p_\epsilon$ is an $\epsilon$-approximate unitary $k$-design (see \rmk{design}).
\end{theorem}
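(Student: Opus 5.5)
The plan is to reduce the design condition to scalar inequalities via \lem{strong-design-polynomial}, and then to reuse the structure of the proof of \thm{induced-exact-design}. Fix a polynomial $f$ whose every monomial contains exactly $k$ entries of $U$ and $k$ entries of $U^*$. By \lem{strong-design-polynomial}, and with $\mu(p_\epsilon)$ the distribution induced by $p_\epsilon$, it suffices to show
\begin{align}
(1-\epsilon)\Exp_{\Haar(D)}[|f(U)|^2]\leq\Exp_{\mu(p_\epsilon)}[|f(U)|^2]\leq(1+\epsilon)\Exp_{\Haar(D)}[|f(U)|^2].
\end{align}
Set $F=|f|^2$, a polynomial of degree at most $4k$ in the entries of $U$ and their conjugates. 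Exactly as in the proof of \thm{induced-exact-design}, I would write $F(U(x))$ as a polynomial in $c(x),c^*(x),s(x)$, symmetrize under $s\mapsto -s$ using the invariance of $L,R$, and average over $\Theta$. This produces, for each fixed $L,R$, a polynomial $\tilde g_{L,R}(x)$ of degree at most $2k$ with
\begin{align}
\Exp_{U\sim\mu(q)}[F(U)]=\Exp_{L,R}\Exp_{X\sim q}[\tilde g_{L,R}(X)]
\end{align}
for any law $q$ on $[0,1]$; here $X$ is independent of $(L,R,\Theta)$. This step only needs independence of $X$, so it applies to $q=p_\epsilon$ as well.

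The key new observation is that $\tilde g_{L,R}(x)\geq 0$ for every $x\in[0,1]$ and every $L,R$. It is the $\Theta$-average and the $\pm s$ average of the nonnegative quantity $|f(LW(x)R)|^2$, because the sign flip is realized by conjugation by $M$, which gives the unitary $LMW(x)MR$. Hence, pointwise on $[0,1]$,
\begin{align}
(1-\epsilon)\tilde g_{L,R}(x)\leq(1+\epsilon h(x))\tilde g_{L,R}(x)\leq(1+\epsilon)\tilde g_{L,R}(x),
\end{align}
using $|h|\leq 1$. Integrating against $p$ gives
\begin{align}
(1-\epsilon)\Exp_{X\sim p}[\tilde g_{L,R}(X)]\leq\Exp_{X\sim p_\epsilon}[\tilde g_{L,R}(X)]\leq(1+\epsilon)\Exp_{X\sim p}[\tilde g_{L,R}(X)].
\end{align}
By the moment-matching hypothesis, since $\deg\tilde g_{L,R}\leq 2k$, we have $\Exp_p[\tilde g_{L,R}]=\Exp_{\Beta(1,D-1)}[\tilde g_{L,R}]$. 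Averaging over $L,R$ and applying \lem{haar-induced} turns the outer quantities into $(1\pm\epsilon)\Exp_{\Haar(D)}[F(U)]$, which gives the desired sandwich. \lem{strong-design-polynomial} (equivalently \thm{polynomial-to-design}) then concludes that $\mu(p_\epsilon)$ is a strong $\epsilon$-approximate $(k,k)$-design.

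The main obstacle is justifying the pointwise nonnegativity of $\tilde g_{L,R}$ and the order of integration. For nonnegativity, I would define $\tilde g_{L,R}(x)$ directly as $\frac12\Exp_\Theta\big[|f(LW(x)R)|^2+|f(LMW(x)MR)|^2\big]$. I would then check that this coincides with the even-part polynomial of the earlier proof, since the odd powers of $s$ cancel in this average. Measurability and integrability are routine because everything is a bounded polynomial on a compact set, so Fubini's theorem applies.
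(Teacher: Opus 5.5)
Your proof is correct and is essentially the paper's argument: positivity of the contribution at each fixed $x$, the pointwise sandwich $1-\epsilon\leq 1+\epsilon h\leq 1+\epsilon$, and moment matching to replace $p$ by $\Beta(1,D-1)$. The paper phrases the same computation at the channel level rather than through the quadratic forms $\Exp|f|^2$ of \lem{strong-design-polynomial}: with $\Delta_x$ the completely positive average over $L,R,\Theta$ at fixed $x$, it cites \thm{induced-exact-design} as a black box for $\Exp_{X\sim p}[\Delta_X]=\Delta_{\Haar}^{(k,k)}$ and concludes that $\epsilon\Exp_{X\sim p}[(1\pm h(X))\Delta_X]$ are completely positive. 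This spares you re-opening that proof, and in any case your per-$(L,R)$ nonnegativity check is unnecessary, because $x\mapsto\Exp_{L,R,\Theta}[|f(U(x))|^2]$ is already a nonnegative polynomial of degree at most $2k$ on $[0,1]$.
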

\begin{proof}
    By \thm{induced-exact-design}, the unitary distribution induced by $p$ is an exact unitary $k$-design. 
    Let $\mu(p)$ be the distribution of the unitary induced by $p$.
    Then let $\Delta_\mu=\Delta_{\mu}^{(k,k)}$, where $\Delta_{\mu}^{(k,k)}$ is defined as in \dfn{strong-design}, and let $\Delta_x$ be the channel induced by any $x\in[0,1]$. 
    By \thm{induced-exact-design}, we have
    \begin{align}
        \Exp_{X\sim p}[\Delta_X] = \Exp_{X\sim\Beta(1,D-1)}[\Delta_X].
    \end{align}
    Writing $\Delta_r:=\Exp_{X\sim r}[\Delta_X]$ for a distribution $r$ on $[0,1]$, and noting that each $\Delta_x$ is completely positive (it is an average of unitary conjugations) and that $\Delta_{p_\epsilon}=\Exp_{X\sim p}[(1+\epsilon h(X))\Delta_X]$ by the definition of $p_\epsilon$, we have
    \begin{align}
        \Delta_{p_\epsilon} - (1-\epsilon)\Delta_{\Beta(1,D-1)}
        &= 
        \Delta_{p_\epsilon} - (1-\epsilon)\Delta_{p} \\
        &= \epsilon \Exp_{X\sim p}[(1+h(X))\Delta_X] \geq 0,
    \end{align}
    where the inequality is in the completely positive order since $1+h\geq 0$. Similarly, we have
    \begin{align}
        (1+\epsilon)\Delta_{\Beta(1,D-1)} - \Delta_{p_\epsilon} 
        &= (1+\epsilon)\Delta_{p} - \Delta_{p_\epsilon} \\
        &= \epsilon \Exp_{X\sim p}[(1-h(X))\Delta_X] \geq 0.
    \end{align}
    We have shown that
    \begin{align}
        (1-\epsilon)\Delta_{\Beta(1,D-1)} \leq \Delta_{p_\epsilon} \leq (1+\epsilon)\Delta_{\Beta(1,D-1)},
    \end{align}
    and consequently, since $\Delta_{\Beta(1,D-1)}=\Delta_{\Haar}^{(k,k)}$ by \lem{haar-induced}, the unitary distribution induced by $p_\epsilon$ is an $\epsilon$-approximate unitary $k$-design.
\end{proof}

We have seen in \sec{jacobi-gauss} that we can use shifted Jacobi--Gauss quadrature to construct a distribution $p$ with support size $\ell=k+1$ that matches the moments of $\Beta(1,D-1)$ up to order $2\ell-1=2k+1$.
Then by \thm{induced-exact-design}, we can construct an exact unitary $k$-design whose distribution of the squared magnitude of the top-left entry is exactly $p$.
By \thm{induced-approximate-design}, we can adapt the shifted Jacobi--Gauss quadrature to construct an $\epsilon$-approximate unitary $k$-design.
This is formalized in the following corollary.

\begin{corollary}\label{cor:approximate-quadrature-design}
    For integers $D\geq 2$ and $k\geq 1$ and $\epsilon\in[0,1]$, let $0< x_1<x_2<\ldots<x_{k+1}< 1$ be the shifted Jacobi--Gauss quadrature nodes of $\Beta(1,D-1)$, and $w_1,\ldots,w_{k+1}$ be the corresponding weights.
    For a vector $\alpha=(\alpha_1,\ldots,\alpha_{k+1})$ satisfying $\max_{i\in[k+1]} |\alpha_i|\leq 1$ and $\sum_{i=1}^{k+1}\alpha_i w_i=0$, the unitary distribution induced by the discrete distribution with atoms $x_i$ and weights
    \begin{align}\label{eq:discrete-approximate-design}
        w_i' = w_i (1+\epsilon \alpha_i)
    \end{align}
    is an $\epsilon$-approximate unitary $k$-design.
\end{corollary}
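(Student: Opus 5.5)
The plan is to derive the corollary directly from \thm{induced-approximate-design}, with $p$ the shifted Jacobi--Gauss quadrature measure $p=\sum_{i=1}^{k+1}w_i\delta_{x_i}$ for $\Beta(1,D-1)$, and $h$ the function defined on the atoms by $h(x_i):=\alpha_i$ and $h(x):=0$ elsewhere on $[0,1]$. We then check the three hypotheses of that theorem in turn and identify $p_\epsilon$ with the discrete distribution of weights $w_i'$.

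\textbf{Moment matching.} We take $\ell=k+1$ nodes in the construction of \sec{jacobi-gauss} with $a=1$ and $D-1$ in place of $D$. That section shows the rule is exact for every polynomial of degree at most $2\ell-1=2k+1\geq 2k$, so \eq{moment-matching} holds for all polynomials $f$ of degree at most $2k$. The nodes lie in $(0,1)$, so $p$ is a distribution on $[0,1]$. The weights are positive by the proof of \lem{christoffel-function-weight}, and they sum to $1$ by exactness on $f\equiv 1$. This is the only place where the quadrature structure enters.

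\textbf{Properties of $h$.} The function $h$ is measurable, since it is a finite linear combination of indicators of points. We have $\sup_{x\in[0,1]}|h(x)|=\max_i|\alpha_i|\leq 1$ by hypothesis. Moreover,
\begin{align}
\Exp_{X\sim p}[h(X)]=\sum_{i=1}^{k+1}w_i\alpha_i=0
\end{align}
by the assumed constraint. The perturbed measure $\d p_\epsilon=(1+\epsilon h)\,\d p$ puts mass $w_i(1+\epsilon\alpha_i)=w_i'$ on each $x_i$ and no mass elsewhere, so it is exactly the discrete distribution in \eq{discrete-approximate-design}. These weights are nonnegative because $\epsilon\in[0,1]$ and $|\alpha_i|\leq 1$.

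\textbf{Conclusion.} With all hypotheses verified, \thm{induced-approximate-design} gives that the unitary distribution induced by $p_\epsilon$ is an $\epsilon$-approximate unitary $k$-design in the sense of \rmk{design}. I expect no real obstacle here. The one point to be careful about is the degree bookkeeping: the $(k+1)$-node rule must be exact up to degree $2k$ as required by \thm{induced-exact-design}, and it is, since it is exact up to degree $2k+1$.
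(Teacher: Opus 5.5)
Your proposal is correct and is essentially the paper's own proof: apply \thm{induced-approximate-design} with $p$ the $(k+1)$-node shifted Jacobi--Gauss measure (exact up to degree $2k+1\geq 2k$) and $h$ equal to $\alpha_i$ at $x_i$ and zero elsewhere, so that $(1+\epsilon h)\,p$ is exactly the discrete distribution with weights $w_i'$. Your additional checks (measurability of $h$, positivity and normalization of the $w_i$, nonnegativity of the $w_i'$) are all valid and just make explicit what the paper leaves implicit.
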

\begin{proof}
    Let $p$ be the distribution specified by the shifted Jacobi--Gauss quadrature nodes and weights; it matches the moments of $\Beta(1,D-1)$ up to order $2k+1$ (\sec{jacobi-gauss}).
    By \thm{induced-exact-design}, the unitary distribution induced by $p$ is an exact unitary $k$-design.
    To apply \thm{induced-approximate-design}, we choose $h$ to satisfy $h(x_i)=\alpha_i$ and $h(x)=0$ for $x\notin\{x_1,\ldots,x_{k+1}\}$ for $\alpha$ satisfying the given conditions; then $(1+\epsilon h)p$ is exactly the discrete distribution with weights $w_i'$.
    The new weight $w_i'=w_i(1+\epsilon\alpha_i)$ induces an $\epsilon$-approximate unitary $k$-design.
\end{proof}

\cor{approximate-quadrature-design} implies that there is an $\epsilon$-approximate $k$-design whose top-left entry has a discrete distribution specified by the weights in \eq{discrete-approximate-design}.

\subsection{Worst-case moment matching distribution}\label{sec:worst-case}

In \sec{induced-measure}, we have reduced the problem of lower bounding the Kolmogorov distance of an entry in an approximate design from its Haar counterpart to bounding the distance between the beta distribution and the corresponding shifted Jacobi--Gauss quadrature.
In this section, we prove a distance lower bound.

We start with a lemma that lower bounds the distance between any continuous distribution on $[0,1]$ and any discrete distribution supported on $\ell$ points.
Though we only consider the case that both measures are supported on $[0,1]$, the lemma can be generalized to any continuous distribution and any discrete distribution supported on finitely many points in $\mathbb R$.
Since the distribution is discrete, across the jumps at quadrature nodes $x_i$, the beta distribution must differ from one side by at least $w_i/2$. 
Therefore, we can establish the lower bound by bounding the largest quadrature weight.

\begin{lemma}\label{lem:kolmogorov-distance-continuous-discrete}
    For an integer $\ell\geq 1$, let $P$ be a continuous distribution function and $Q$ be a discrete distribution function supported on $\ell$ points in $[0,1]$ with maximum weight $w_*$. 
    Then the Kolmogorov distance between $P$ and $Q$ is at least $\frac{1}{2}w_*$.
\end{lemma}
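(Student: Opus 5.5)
Fix an atom $x_*\in[0,1]$ of $Q$ carrying the maximum weight $w_*$; we compare the two distribution functions on both sides of this jump. Since $Q$ is a distribution function, it is right-continuous with left limits, and the jump at the atom is $Q(x_*)-Q(x_*^-)=w_*$, where $Q(x_*^-)=\lim_{t\uparrow x_*}Q(t)$. The distribution function $P$ is continuous, so $P(x_*^-)=P(x_*)$.

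Suppose, for contradiction, that $\sup_x|P(x)-Q(x)|<w_*/2$. Evaluating at $x=x_*$ gives $P(x_*)> Q(x_*)-w_*/2$. For $t<x_*$ we have $|P(t)-Q(t)|<w_*/2$, and letting $t\uparrow x_*$ yields $|P(x_*)-Q(x_*^-)|\leq w_*/2$ by continuity of $P$ and the existence of the left limit of $Q$; in particular $P(x_*)\leq Q(x_*^-)+w_*/2=Q(x_*)-w_*/2$. This contradicts the strict inequality obtained at $x_*$. Hence $\sup_x|P(x)-Q(x)|\geq w_*/2$.

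Equivalently, without contradiction, we can write
\begin{align}
w_* = \big(Q(x_*)-P(x_*)\big)+\big(P(x_*)-Q(x_*^-)\big)\leq d_{\mathrm K}(P,Q)+\lim_{t\uparrow x_*}|P(t)-Q(t)|\leq 2\,d_{\mathrm K}(P,Q).
\end{align}
The only subtle point is the passage to the left limit: the supremum defining $d_{\mathrm K}$ is over points $t$, not one-sided limits, so we use that each $|P(t)-Q(t)|\leq d_{\mathrm K}(P,Q)$ for $t<x_*$ and pass to the limit. This uses only the continuity of $P$ and the right-continuity/left-limit structure of $Q$. The finiteness of the support and the restriction to $[0,1]$ are not needed beyond guaranteeing that a maximal atom exists.
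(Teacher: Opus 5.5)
Your proof is correct and takes essentially the same approach as the paper: both compare $P$ and $Q$ on the two sides of the jump of size $w_*$ at the heaviest atom and use the continuity of $P$. The only difference is that you evaluate at $x_*$ and take the left limit $Q(x_*^-)$ directly, while the paper evaluates at $x_j\pm\delta$ and lets the continuity tolerance go to zero, which is the same limiting argument written out by hand.
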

\begin{proof}
    Let $0\leq x_1<\ldots<x_\ell\leq 1$ be the support of $Q$, and let $w_i$ be the weight of $Q$ at $x_i$. 
    Let $j=\arg\max_i w_i$. 
    
    Since $P$ is continuous, for any $\eta>0$, there exists $\delta>0$ such that $|P(x_j+\delta)-P(x_j-\delta)|<\eta$. 
    By monotonicity of $Q$, since $Q(x_j+\delta)\geq S_j$ and $Q(x_j-\delta)\leq S_{j-1}$ for $S_j:=\sum_{i\leq j}w_i$, where we also take $\delta$ smaller than the distance from $x_j$ to its neighboring support points, we have $|Q(x_j+\delta)-Q(x_j-\delta)|\geq w_j$. 
    Let the Kolmogorov distance between $P$ and $Q$ be $\Delta$. 
    By definition, we have
    \begin{align}
        \Delta &\geq \max\left\{ |P(x_j+\delta)-Q(x_j+\delta)|, |P(x_j-\delta)-Q(x_j-\delta)|
        \right\} \\
        &\geq \frac{1}{2} \left( |P(x_j+\delta)-Q(x_j+\delta)| + |P(x_j-\delta)-Q(x_j-\delta)| \right) \\
        &\geq \frac{1}{2} \left( |P(x_j+\delta)-Q(x_j+\delta)| + |P(x_j+\delta)-Q(x_j-\delta)| - \eta \right) \\
        &\geq \frac{1}{2} \left( |Q(x_j+\delta)-Q(x_j-\delta)| - \eta \right).
    \end{align}
    Finally, taking $\eta\to 0$, the bound is at least $w_*/2$. 
\end{proof}

By \lem{kolmogorov-distance-continuous-discrete}, it suffices to give a lower bound on $w_*$. 
In the following theorem, we establish a matching lower bound to \thm{jacobi-weight-upper-bound}. 
As in \sec{jacobi-gauss}, $D$ there denotes the second parameter of $\Beta(1,D)$ rather than the dimension, and we apply it with $D-1$ in place of $D$ below.

\begin{theorem}\label{thm:quadrature-weights-lower-bound}
    For integers $D\geq 1$ and $\ell\geq 2$, let $0< x_1<x_2<\ldots<x_\ell< 1$ be the Gaussian quadrature points of $\Beta(1,D)$, and $w_1,\ldots,w_\ell$ be the corresponding weights. 
    Then it holds that 
    \begin{align}
        \max_{i\in[\ell]} w_i = \Omega\left(\sqrt{\frac{D}{\ell(\ell+D)}}\right).
    \end{align}
\end{theorem}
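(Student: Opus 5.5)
The plan is the one sketched in the technical overview: show that few nodes lie in $[0,1/(D+1)]$ while $\Beta(1,D)$ puts constant mass there, and then use the bracketing of \lem{markov-stieltjes}. Set $\kappa:=\sqrt{\ell(\ell+D)/D}$; the goal is $\max_i w_i=\Omega(1/\kappa)$. By \eq{beta-jacobi} the nodes are $x_j=(1+z)/2$, where $z$ runs over the zeros of $P_\ell^{(D-1,0)}$. By the symmetry \eq{jacobi-symmetry} these are $x=(1-z')/2$, where $z'$ runs over the zeros of $P_\ell^{(0,D-1)}$. In the ordering of \prop{jacobi-zeros}, the $j$th smallest node is therefore $x_j=(1-z_j^{(0,D-1)})/2$.

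\textbf{Step 1 (nodes near zero are sparse).} Apply the monotonicity part of \prop{jacobi-zeros} with $\alpha=0$, increasing $\beta$ from $0$ to $D-1$. This gives
\[
1-z_j^{(0,D-1)}\geq \frac{A_\ell(0,0)}{A_\ell(0,D-1)}\bigl(1-z_j^{(0,0)}\bigr).
\]
Here $A_\ell(0,0)=2\ell^2+2\ell+1=\Theta(\ell^2)$ and $A_\ell(0,D-1)=2\ell^2+2\ell D+D=\Theta(\ell(\ell+D))$. Bruns' inequality gives $1-z_j^{(0,0)}=2\sin^2(\theta_j/2)\geq 2\sin^2\bigl((2j-1)\pi/(2(2\ell+1))\bigr)=\Omega(j^2/\ell^2)$, since the angle is at most $\pi/2$ and $\sin u\geq 2u/\pi$ there. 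Combining the two bounds yields $x_j\geq c\,j^2/(\ell(\ell+D))$ for an absolute constant $c>0$. Consequently $x_j\leq t_0:=1/(D+1)$ forces $j^2\leq \ell(\ell+D)/(c(D+1))$, so the number $J$ of nodes in $[0,t_0]$ satisfies $J\leq C\kappa$.

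\textbf{Step 2 (mass comparison).} For $X\sim\Beta(1,D)$ we have $F(t_0)=1-(1-1/(D+1))^D\geq 1-e^{-D/(D+1)}\geq 1-e^{-1/2}>1/3$. If $J<\ell$, then $x_{J+1}>t_0$, and \lem{markov-stieltjes} (valid since $\ell\geq 2$ and the rule is exact to degree $2\ell-1$) gives $S_{J+1}\geq F(x_{J+1})\geq F(t_0)>1/3$. Hence
\[
\max_i w_i\geq \frac{S_{J+1}}{J+1}\geq \frac{1}{3(C\kappa+1)}=\Omega(1/\kappa),
\]
using $\kappa\geq 1$. If $J=\ell$, then $\max_i w_i\geq 1/\ell$. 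In that case all $\ell$ nodes lie in $[0,t_0]$, so $\ell\leq C\kappa$, and again $1/\ell=\Omega(1/\kappa)$. Finally, $1/\kappa=\sqrt{D/(\ell(\ell+D))}$, which is the claimed bound.

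\textbf{Main obstacle.} The delicate step is Step 1: getting the direction of the Dimitrov--Rafaeli monotonicity right under the symmetry swap between $(D-1,0)$ and $(0,D-1)$, and matching the indexing of the smallest nodes with the smallest angles $\theta_j$. I would double-check this against the trivial case $D=1$, where the nodes are the Legendre nodes and both sides agree. The bounding of the $A_\ell$ ratio and of $\sin^2$ is routine.
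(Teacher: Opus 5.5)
Your proposal is correct and follows the paper's proof essentially step for step. It uses the same symmetry swap to $P_\ell^{(0,D-1)}$, Bruns' inequality combined with the Dimitrov--Rafaeli monotonicity to get $x_j\geq c\,j^2/(\ell(\ell+D))$, and the same two-case counting argument via \lem{markov-stieltjes}; the only cosmetic difference is that you threshold at $1/(D+1)$ with $F\geq 1-e^{-1/2}$, where the paper uses $1/D$ with $F\geq 1-1/e$.
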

\begin{proof}
    For the lower bound, we first prove that in the interval $[0,1/D]$, there are at most $O\left(\sqrt{\frac{\ell(\ell+D)}{D}}\right)$ quadrature nodes. 
    From this we deduce that there exists a weight $w_i$ that is at least $\Omega\left(\sqrt{\frac{D}{\ell(\ell+D)}}\right)$.
    
    Let $\mu_D$ be the distribution of $\Beta(1,D)$; recall from \sec{jacobi-gauss} that the nodes $x_i$ are the roots of the orthonormal polynomial $q_\ell$ of degree $\ell$ and that the weights $w_i$ are positive with $\sum_i w_i=1$.
    We now lower bound $\max_i w_i$.
    Recall that the distribution function of $\Beta(1,D)$ is $F(x) = 1-(1-x)^D$.
    This means that $F(1/D)=1-(1-1/D)^D\geq 1-1/e=\Omega(1)$.
    Thus it suffices to show that $\{x_i\}\cap [0,1/D]$ has size at most $O\left(\sqrt{\frac{\ell(\ell+D)}{D}}\right)$.

    To see this, we invoke the known results on the spacing of the roots of Jacobi polynomials to prove that if $x_j\leq 1/D$, then $j=O\left(\sqrt{\frac{\ell(\ell+D)}{D}}\right)$.
    Let $z_1^{(\alpha,\beta)}>z_2^{(\alpha,\beta)}>\ldots >z_\ell^{(\alpha,\beta)}$ be the roots of the Jacobi polynomial $P_{\ell}^{(\alpha,\beta)}$ of degree $\ell$ with respect to the weight $(1-z)^\alpha (1+z)^\beta$.
    By the symmetry relation \eq{jacobi-symmetry} and \eq{beta-jacobi}, up to a nonzero normalization constant (which does not affect the roots), 
    \begin{align}
        q_\ell(x)\propto P_{\ell}^{(D-1,0)}(2x-1)=(-1)^\ell P_{\ell}^{(0,D-1)}(1-2x). 
    \end{align}
    Therefore we have 
    \begin{align}
        x_j = \frac{1}{2} \left(1 - z_j^{(0,D-1)}\right).
    \end{align}
    Now write each $z_j^{(\alpha,\beta)}=\cos\theta_{j}^{(\alpha,\beta)}$, and thus the $\theta_j^{(\alpha,\beta)}$ are ordered increasingly:
    \begin{align}
        0<\theta_1^{(\alpha,\beta)}<\ldots<\theta_\ell^{(\alpha,\beta)}<\pi.
    \end{align}
    By Bruns' inequality (\prop{jacobi-zeros}), we have
    \begin{align}\label{eq:theta-bound}
        \theta_j^{(0,0)} \geq \frac{(2j-1)\pi}{2\ell+1} \geq \frac{j\pi}{3\ell}. 
    \end{align}
    Then by the monotonicity of the zeros of Jacobi polynomials in \prop{jacobi-zeros}, with $A_\ell(\alpha,\beta)=2\ell^2 + 2\ell(\alpha+\beta+1)+(\alpha+1)(\beta+1)$, the quantity $A_\ell(\alpha,\beta)(1-z_j^{(\alpha,\beta)})$ is increasing in $\beta$.
    We have 
    \begin{align}
        1-z_j^{(0,D-1)} 
        &\geq \frac{A_\ell(0,0)}{A_\ell(0,D-1)} (1-z_j^{(0,0)}) \\
        &\geq \frac{A_\ell(0,0)}{A_\ell(0,D-1)} \cdot \frac{2j^2}{9\ell^2},
    \end{align}
    where the last inequality follows from \eq{theta-bound} and from the inequality that
    \begin{align}
        1-\cos x = 2\sin^2(x/2) \geq \frac{2x^2}{\pi^2}
    \end{align}
    for $x\in[0,\pi]$.
    We can further simplify the bound since 
    \begin{align}
        A_\ell(0,0) &= 2\ell^2+2\ell+1 \geq 2\ell^2, \\ 
        A_\ell(0,D-1) &= 2\ell^2+2\ell D +D \leq 2\ell^2 + 3\ell D\leq 3\ell(\ell+D).
    \end{align}
    These together imply that
    \begin{align}
        x_j \geq \frac{1}{2} \cdot \frac{2\ell}{3(\ell+D)} \cdot \frac{2j^2}{9\ell^2} = \frac{2j^2}{27\ell(\ell+D)}.
    \end{align}
    Therefore, if $x_j\leq 1/D$, then 
    \begin{align}
        j \leq \sqrt{\frac{27\ell(\ell+D)}{2D}}.
    \end{align}

    Suppose there are $T$ zeros in $[0,1/D]$.
    Consider the following two cases.
    \begin{enumerate}
    \item If $T=\ell$, then we have $x_{\ell}\leq 1/D$; since $\sum_i w_i=1$, we have $\ell \max_i w_i \geq 1$.
    This means that $\max_i w_i \geq 1/\ell=1/T$.
    
    \item If $T<\ell$, then we have $x_{T+1}>1/D$. 
    By monotonicity of $F$ and \lem{markov-stieltjes}, 
    \begin{align}
        F(1/D)\leq F(x_{T+1})\leq \sum_{i=1}^{T+1} w_i.
    \end{align} 
    This implies that there must be some $i\in[T+1]$ such that 
    \begin{align}
    w_i \geq \frac{F(1/D)}{T+1} \geq \frac{1-1/e}{T+1}.
    \end{align}
     
    \end{enumerate}
    Combining these two cases, since $T\leq \sqrt{\frac{27\ell(\ell+D)}{2D}}$, we have 
    \begin{align}
        \max_i w_i \geq \frac{1-1/e}{T+1} = \Omega\left(\sqrt{\frac{D}{\ell(\ell+D)}}\right).
    \end{align}
\end{proof}

Given the lower bound on the quadrature weights, we are ready to prove the existence of an $\epsilon$-approximate unitary $k$-design which approximates Porter--Thomas to error $\Omega\left(\sqrt{\frac{D}{k(D+k)}}+\epsilon\right)$ in Kolmogorov distance.

\begin{theorem}\label{thm:approximate-design-lower-bound}
    For integers $D\geq 2$ and $k\geq 1$ and $\epsilon\in[0,1]$, there exists an $\epsilon$-approximate unitary $k$-design on $\mathrm U(D)$ such that the squared magnitude of its top-left entry is at Kolmogorov distance
    $\Omega\left(\sqrt{\frac{D}{k(D+k)}}+\epsilon\right)$ from $\Beta(1,D-1)$.
\end{theorem}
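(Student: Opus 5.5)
We take the $(k+1)$-node shifted Jacobi--Gauss quadrature $x_1<\dots<x_{k+1}$, $w_1,\dots,w_{k+1}$ of $\Beta(1,D-1)$ and perturb its weights as in \cor{approximate-quadrature-design}. The resulting discrete law, embedded via \dfn{induced-measure}, is an $\epsilon$-approximate unitary $k$-design whose top-left squared entry has exactly this discrete law. It then suffices to show that this discrete law is $\Omega(w_*)$-far and $\Omega(\epsilon)$-far from $\Beta(1,D-1)$ in Kolmogorov distance, where $w_*$ is the maximal weight. This gives $\Omega(\max\{w_*,\epsilon\})=\Omega(w_*+\epsilon)$. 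By \thm{quadrature-weights-lower-bound} with $D-1$ in place of $D$ and $\ell=k+1$, $w_*=\Omega\big(\sqrt{(D-1)/((k+1)(k+D))}\big)=\Omega\big(\sqrt{D/(k(D+k))}\big)$ for $D\geq 2$.

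For the $w_*$ term, take any admissible $\alpha$ (including $\alpha=0$, or the choice below). The perturbed weights satisfy $w_i'\geq(1-\epsilon)w_i$. We want a bound independent of $\epsilon\le1$, so we handle the two regimes separately. If $\epsilon\leq 1/2$, \lem{kolmogorov-distance-continuous-discrete} gives distance at least $\max_i w_i'/2\geq w_*/4$. If $\epsilon>1/2$, the $\epsilon$ term alone dominates, since $w_*\leq 1\le 2\epsilon$. This splitting avoids having to prove both bounds for the same $\alpha$.

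For the $\epsilon$ term, choose $\alpha$ to shift mass from the left nodes to the right. Pick a threshold index $j$ with $S_j=\sum_{i\leq j}w_i$ and $1-S_j$ both bounded below by a constant. Since weights are at most $w_*$, partial sums increase in steps of at most $w_*$. When $w_*\le 1/4$, some $j$ therefore has $S_j\in[1/4,1/2]$. Set $\alpha_i=(1-S_j)$ for $i\le j$ and $\alpha_i=-S_j$ for $i>j$. This gives $|\alpha_i|\le1$ and $\sum\alpha_iw_i=0$. The perturbed cumulative sum at $x_j$ becomes $S_j'=S_j+\epsilon S_j(1-S_j)\geq S_j+\tfrac{3}{16}\epsilon$. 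By \lem{markov-stieltjes}, $F_{\Beta}(x_j)\leq S_j$, so the perturbed distribution function exceeds $F_{\Beta}$ at $x_j$ by at least $3\epsilon/16$. When $w_*>1/4$, we already have an $\Omega(1)$ lower bound from the unperturbed argument, and $\epsilon\le1$, so the bound $\Omega(w_*+\epsilon)=\Omega(1)$ follows from the first case with $\alpha=0$.

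Combining the cases gives the result, since $\max\{a,b\}\ge(a+b)/2$. The main subtlety is the bookkeeping of these regimes, which ensures that one construction (or one per regime) witnesses both terms. The weight lower bound itself is already supplied by \thm{quadrature-weights-lower-bound}.
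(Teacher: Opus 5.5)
Your proposal is correct and is essentially the paper's proof: it uses the same perturbed Jacobi--Gauss quadrature via \cor{approximate-quadrature-design} and the same case split on whether $w_*$ exceeds a constant (the paper uses $1/6$ and takes $\alpha=0$ there). Otherwise both use a two-level $\alpha$ that shifts mass onto a prefix of constant partial sum, and yours is exactly the paper's choice ($\alpha_i=1$ for $i\le m$, $\alpha_i=S_m/(S_m-1)$ for $i>m$) rescaled by $1-S_j$. The only cosmetic difference is how you secure the $w_*$ term: you split on $\epsilon\le 1/2$, whereas the paper notes $\alpha_i\ge -1/2$, so $w_i'\ge w_i/2$ for every $\epsilon\le 1$. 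Your $\alpha$ also satisfies this, since $\alpha_i\ge -S_j\ge -1/2$, so that split is not actually needed.
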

\begin{proof}
    By \cor{approximate-quadrature-design}, we can construct an $\epsilon$-approximate unitary $k$-design from a shifted Jacobi--Gauss quadrature with weights $\{w_i\}_{i=1}^{\ell}$ for $k=\ell-1$ and some weight vector $\alpha$.
    By \thm{quadrature-weights-lower-bound} applied to $\Beta(1,D-1)$ with $\ell=k+1$ nodes, $\max_i w_i=\Omega\big(\sqrt{(D-1)/((k+1)(k+D))}\big)=\Omega\left(\sqrt{\frac{D}{k(D+k)}}\right)$, where the last step uses $D\geq 2$ and $k\geq 1$.
    It suffices to find $\alpha$ to justify the statement. 

    Let $w_*=\max_i w_i$ be the largest weight. 
    If $w_*\geq 1/6$, choose $\alpha_i=0$ for every $i$.
    The resulting ensemble is an exact design, so it is also an $\epsilon$-approximate design.
    By \lem{kolmogorov-distance-continuous-discrete}, the Kolmogorov distance is at least $w_*/2\geq 1/12\geq \frac{1}{24}\left(\sqrt{\frac{D}{k(D+k)}}+\epsilon\right)$, since both terms in the parentheses are at most $1$.

    Otherwise, we have $w_*<1/6$.
    Let $m$ be the largest index such that $S_m=\sum_{i=1}^m w_i \leq 1/3$. 
    Such an $m$ exists and satisfies $1\leq m\leq\ell-1$, since $S_1=w_1<1/6$ and $S_\ell=1$.
    Since every $w_i\leq w_*$, we have $S_m \in (1/3-w_*,1/3]$.
    We set $\alpha_i=1$ for $i\in[m]$ and $\alpha_{i}=s$ for $m<i\leq\ell$, where $s$ is to be determined now.
    Since $\sum_i \alpha_i w_i=0=S_m + s(1-S_m)$, we have 
    $s = \frac{S_m}{S_m-1}$.
    Since $0\leq S_m\leq 1/3$, we have $s\in[-1/2,0]$ and therefore $|s|\leq 1$.
    Now we have $\alpha$ satisfying $\max_i|\alpha_i|\leq 1$ and $w\cdot\alpha=0$. 

    Now we show that the new weight vector $w_i'=(1+\epsilon\alpha_i)w_i$ forms a distribution that has Kolmogorov distance $\Omega\left(\sqrt{\frac{D}{k(D+k)}}+\epsilon\right)$ from $\Beta(1,D-1)$. 
    Consider the event that $X\leq x_m$. 
    Let $P$ be the distribution function of $\Beta(1,D-1)$. By \lem{markov-stieltjes}, we have 
    $S_{m-1} \leq P(x_m) \leq S_m$.
    Let $Q$ be the distribution function of the discrete distribution with weights $w_i'$.
    The probability of the event $x\leq x_m$ under the discrete distribution with weights $w_i'$ is 
    \begin{align}
        Q(x_m) = \sum_{i=1}^m w_i' = \sum_{i=1}^m (1+\epsilon\alpha_i)w_i = (1+\epsilon)S_m.
    \end{align}
    The difference is at least 
    \begin{align}\label{eq:epsilon-lower-bound}
        Q(x_m) - P(x_m) \geq (1+\epsilon)S_m - S_m = \epsilon S_m \geq \epsilon\left(\frac{1}{3}-w_*\right).
    \end{align}
    Since $w_*<1/6$, the right-hand side is at least $\epsilon/6$. 
    Therefore, the Kolmogorov distance is at least $\Omega(\epsilon)$. 
    
    It remains to show that the Kolmogorov distance is at least $\Omega\left(\sqrt{\frac{D}{k(D+k)}}\right)$.
    By \lem{kolmogorov-distance-continuous-discrete}, the distance is at least $\frac{1}{2}\max_i w_i'\geq \frac{1}{2}w_*(1-\epsilon/2)\geq w_*/4=\Omega\left(\sqrt{\frac{D}{k(D+k)}}\right)$ for $\epsilon\leq 1$. 
    Combining this and the lower bound $\Omega(\epsilon)$ in \eq{epsilon-lower-bound}, we have shown that the Kolmogorov distance is at least $\Omega\left(\sqrt{\frac{D}{k(D+k)}}+\epsilon\right)$.
\end{proof}

\thm{approximate-design-lower-bound} shows that there exists an $\epsilon$-approximate unitary $k$-design whose distribution of the squared magnitude of the top-left entry is $\Omega\left(\sqrt{\frac{D}{k(D+k)}}+\epsilon\right)$ away from Porter--Thomas in Kolmogorov distance.
This result implies that the bound in \thm{approximate-design-upper-bound} is tight up to a constant factor.

\section{Convergence in total variation from local invariance}\label{sec:local-translation-invariant}

In \sec{convergence-upper-bound} and \sec{convergence-lower-bound}, we have established a complete analysis of the Kolmogorov distance between the squared magnitude of a single entry of a generic approximate design and that of a Haar random unitary of the same size. 
In particular, by the lower bound in \thm{approximate-design-lower-bound}, the best Kolmogorov distance that can be guaranteed uniformly over generic designs is $\Omega(k^{-1/2})$ when $k<2^n$ and $\Omega(2^{n/2}/k)$ when $k\geq 2^n$.
Therefore, without imposing further assumptions on the design, no uniform guarantee can improve on the rate $1/k$ up to a constant.

In this section, we explore which property implies fast convergence. 
To circumvent the slow rate, we introduce a notion that we call a locally invariant design. 
In the following analysis, we use the notation $\mu(k,\epsilon)$ for the distribution of a design and $\mu$ for the Haar measure on unitaries of the same size. 

\begin{definition}[Locally invariant design]\label{dfn:locally-invariant}
    An $n$-qubit approximate design $\mu(k,\epsilon)$ is $m$-locally invariant if there exists a subset $S\subseteq[n]$ of $m$ qubits such that applying any fixed $m$-qubit unitary $V$ on the qubits in $S$ after the circuit does not change the distribution, i.e., for $U\sim\mu(k,\epsilon)$, $(V\otimes I)U\sim\mu(k,\epsilon)$ for any fixed $V$ acting on the qubits in $S$ (here and below $V\otimes I$ denotes $V$ acting on the qubits in $S$ and the identity on the remaining qubits).
\end{definition}

The following lemma explains how local invariance enters our analysis: it factorizes the output probability of a single basis state into the probability of a block of $a=2^m$ basis states, which is controlled by the design condition, times an independent beta random variable supplied by a Haar random local unitary.

\begin{lemma}[Block factorization]\label{lem:block-factorization}
    Let $1\leq m<n$, let $\mu$ be an $m$-locally invariant distribution on $\mathrm U(2^n)$ with invariant qubit set $S$, let $a=2^m$ and $N=2^n$, and fix $x,y\in\bit^n$.
    Write $x=(x_S,x_{\bar S})$ and let $B(x)=\{(s,x_{\bar S}):s\in\bit^m\}$ be the block of $a$ basis states that agree with $x$ outside $S$.
    For $U\sim\mu$, let
    \begin{align}
        X_k := \sum_{r\in B(x)} |\bra{r}U\ket{y}|^2
    \end{align}
    be the probability of the block $B(x)$.
    Then $|\bra{x}U\ket{y}|^2$ has the same distribution as $X_kY$, where $Y\sim\Beta(1,a-1)$ is independent of $X_k$.
    If $\mu$ is the Haar measure, then $X_k\sim\Beta(a,N-a)$; consequently, $XY\sim\Beta(1,N-1)$ for independent $X\sim\Beta(a,N-a)$ and $Y\sim\Beta(1,a-1)$ (this is also the product rule \eq{beta-product}, which holds for all integers $2\leq a<N$).
\end{lemma}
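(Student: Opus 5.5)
The plan is to use local invariance to insert an independent Haar random unitary on the qubits in $S$ without changing the law of $U$. Let $V\sim\Haar(a)$ act on $S$, independent of $U\sim\mu$. By \dfn{locally-invariant}, for every fixed $V$ we have $(V\otimes I)U\sim\mu$. Averaging over $V$ then shows that $U':=(V\otimes I)U$ has law $\mu$ and is independent of nothing problematic. Hence $|\bra{x}U\ket{y}|^2$ has the same law as $|\bra{x}U'\ket{y}|^2$, and it suffices to analyze the latter.

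The next step is to compute this quantity in block form. Write $\ket{\psi}=U\ket{y}$ and let $\ket{\phi}\in\mathbb C^a$ be the (unnormalized) vector of amplitudes $\bra{(s,x_{\bar S})}\psi\rangle$ for $s\in\bit^m$. Then $\|\phi\|^2=X_k$, and $V\otimes I$ acts on the block $B(x)$ exactly as $V$ acts on $\phi$, so $\bra{x}U'\ket{y}=\bra{x_S}V\ket{\phi}$.

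If $X_k>0$, write $\ket{\phi}=\sqrt{X_k}\,\ket{\hat\phi}$. Since $V$ is Haar and independent of $U$, conditional on $U$ the vector $V\ket{\hat\phi}$ is uniform on the unit sphere of $\mathbb C^a$. Therefore $Y:=|\bra{x_S}V\ket{\hat\phi}|^2\sim\Beta(1,a-1)$ by the Dirichlet facts in \sec{preliminaries-beta}. This conditional law does not depend on $U$, so $Y$ is independent of $X_k$ (in fact independent of all of $U$). If $X_k=0$, both sides vanish, and we can define $Y$ as an independent $\Beta(1,a-1)$ variable on that event, which preserves independence. This yields $|\bra{x}U'\ket{y}|^2=X_kY$ with the required independence. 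The only delicate point is making this conditioning and independence argument precise, including the degenerate event $X_k=0$. I expect that to be routine once it is phrased via the conditional law given $U$.

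For the Haar case, the column $U\ket{y}$ is uniform on the sphere, so the vector of squared magnitudes is $\Dir(1^N)$. By aggregation, $X_k\sim\Beta(a,N-a)$. The consequence that $XY\sim\Beta(1,N-1)$ then follows in either of two ways. One can note that $X_kY=|\bra{x}U\ket{y}|^2\sim\Beta(1,N-1)$ under Haar. Alternatively, one can invoke the product rule \eq{beta-product} with $(\alpha,\beta,\gamma)=(1,a-1,N-a)$.
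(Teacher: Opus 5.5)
Your proposal is correct and follows the paper's proof essentially step by step. Both arguments insert an independent $V\sim\Haar(a)$ via local invariance, write $\bra{x}(V\otimes I)U\ket{y}=\bra{x_S}V\ket{\psi_U}$ with $\|\psi_U\|^2=X_k$, use that the normalized overlap has conditional law $\Beta(1,a-1)$ given $U$ (with an auxiliary independent $Y$ on $\{X_k=0\}$), and apply Dirichlet aggregation in the Haar case. Your first route to $XY\sim\Beta(1,N-1)$ is exactly the paper's, and it implicitly uses that the Haar measure is itself $m$-locally invariant. Your product-rule alternative with $(\alpha,\beta,\gamma)=(1,a-1,N-a)$ is also valid.
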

\begin{proof}
    Let $V\sim\Haar(a)$ act on the qubits in $S$, independently of $U$.
    By local invariance, $(V\otimes I)U\sim\mu$, so $|\bra{x}U\ket{y}|^2$ has the same distribution as $|\bra{x}(V\otimes I)U\ket{y}|^2$.
    Expanding in the computational basis of the qubits in $S$,
    \begin{align}
        \bra{x}(V\otimes I)U\ket{y} = \sum_{s\in\bit^m}\bra{x_S}V\ket{s}\bra{s,x_{\bar S}}U\ket{y} = \bra{x_S}V\ket{\psi_U},
        \qquad \ket{\psi_U}:=\sum_{s\in\bit^m}\bra{s,x_{\bar S}}U\ket{y}\ket{s}\in\mathbb C^{a},
    \end{align}
    and $\|\psi_U\|^2=X_k$.
    Conditioned on $U$ with $\psi_U\neq 0$, the vector $V\ket{\psi_U}/\|\psi_U\|$ is uniformly distributed on the unit sphere of $\mathbb C^a$ by the invariance of the Haar measure, so $Y:=|\bra{x_S}V\ket{\psi_U}|^2/\|\psi_U\|^2$ has the law $\Beta(1,a-1)$ (\sec{preliminaries-beta}) regardless of $U$ (when $\psi_U=0$ we let $Y$ be an independent $\Beta(1,a-1)$ variable).
    Hence $Y$ is independent of $U$, in particular of $X_k$, and $|\bra{x}(V\otimes I)U\ket{y}|^2=X_kY$.
    Note also that the block probability is unchanged by $V$, i.e., $\sum_{r\in B(x)}|\bra{r}(V\otimes I)U\ket{y}|^2=X_k$, since $V$ acts within the block.

    If $U$ is Haar random, the column $U\ket{y}$ is uniformly distributed on the unit sphere of $\mathbb C^N$, so the vector $(|\bra{r}U\ket{y}|^2)_{r\in\bit^n}$ has the Dirichlet law $\Dir(1^N)$ (\sec{preliminaries-beta}), and the sum of $a$ of its coordinates has the law $\Beta(a,N-a)$ by the aggregation property of the Dirichlet distribution.
    The last claim follows by combining the two statements, since the Haar measure is $m$-locally invariant and $|\bra{x}U\ket{y}|^2\sim\Beta(1,N-1)$ for Haar random $U$.
\end{proof}

By \thm{polynomial-relative-error} (applied with the block $S=B(x)$ and the column $y$), the design condition controls the expectations of squared polynomials of degree at most $k$ in $X_k$ with relative error $\epsilon$; this is the only property of $X_k$ used in \sec{locally-invariant-ks} and \sec{locally-invariant-tv}.
When $m$ is small enough, the Haar random unitary can be approximated by a quantum circuit of small gate complexity. 
Then in \sec{beta-recurrence}, we show that the central moments of beta random variables can be bounded. 
In particular, for $\Beta(a,N-a)$, the $k$th central moment decays exponentially in $k$ if $a$ grows linearly in $k$ as well. 
In \sec{locally-invariant-ks}, we use this fact to prove the characteristic function can be well approximated by an expected polynomial, which in turn allows us to use \thm{polynomial-relative-error} to bound closeness. 
It is worth noting that local invariance makes the marginal absolutely continuous away from zero: a possible atom at zero remains, but its mass is exponentially small under the assumptions of \thm{convergence-distribution}, and the muting argument in \sec{locally-invariant-ks} controls this exceptional mass before Fourier inversion is applied. The barrier that a discrete marginal cannot converge in total variation therefore no longer exists.
In \sec{locally-invariant-tv}, we further strengthen the result to closeness in total variation distance. 

\subsection{Locally invariant approximate designs}\label{sec:locally-invariant-properties}

We next show that one can use an $m$-qubit Haar random unitary to convert a generic design to a locally invariant one. 

\begin{theorem}\label{thm:locally-invariant-design-haar}
    Let $U\sim\mu(k,\epsilon)$ be an $\epsilon$-approximate unitary $k$-design on $n$ qubits, and let $V$ be a Haar random unitary acting on any fixed subset $S$ of $m$ qubits, independent of $U$. Then $(V\otimes I)U$ is an $m$-locally invariant $\epsilon$-approximate unitary $k$-design.
\end{theorem}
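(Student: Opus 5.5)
Two properties must be checked for $W:=(V\otimes I)U$: that it is $m$-locally invariant, and that it is still a strong $\epsilon$-approximate $(k,k)$-design.

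\textbf{Local invariance.} Fix an $m$-qubit unitary $V'$ acting on $S$. Then $(V'\otimes I)W=((V'V)\otimes I)U$. Since $V$ is Haar distributed and independent of $U$, the left invariance of the Haar measure on $\mathrm U(2^m)$ gives $V'V\sim V$, still independent of $U$. Hence $((V'V)\otimes I)U$ has the same law as $W$, which is exactly \dfn{locally-invariant}.

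\textbf{Design property.} I would work with the twirling channels of \dfn{strong-design}. Write $R(A):=A^{\otimes k}\otimes (A^*)^{\otimes k}$, which is multiplicative, so $R(W)=R(V\otimes I)R(U)$. Conditioning on $V$ and using independence,
\begin{align}
\Delta_W^{(k,k)}(X)=\Exp_V\Big[R(V\otimes I)\,\Delta^{(k,k)}_{\mu(k,\epsilon)}(X)\,R(V\otimes I)^\dag\Big]=\mathcal C\circ\Delta^{(k,k)}_{\mu(k,\epsilon)}(X),
\end{align}
where $\mathcal C(Y):=\Exp_V[R(V\otimes I)YR(V\otimes I)^\dag]$ is completely positive.

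Post-composition with a completely positive map preserves the completely positive ordering. Indeed, if $\Psi-\Phi$ is completely positive, then so is $\mathcal C\circ(\Psi-\Phi)$. Therefore
\begin{align}
(1-\epsilon)\,\mathcal C\circ\Delta^{(k,k)}_{\Haar}\leq\Delta_W^{(k,k)}\leq(1+\epsilon)\,\mathcal C\circ\Delta^{(k,k)}_{\Haar}.
\end{align}

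It remains to show $\mathcal C\circ\Delta^{(k,k)}_{\Haar}=\Delta^{(k,k)}_{\Haar}$. This holds because $(V\otimes I)U_{\Haar}$ is Haar distributed for each fixed $V$, by left invariance of the Haar measure on $\mathrm U(2^n)$. The same conditioning computation, run with $U$ Haar random, then gives the identity. This completes the argument.

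The only point needing care is the bookkeeping. One must check that conditioning on $V$ is legitimate by Fubini, and that the conjugate factors $(U^*)^{\otimes k}$ also multiply correctly, i.e.\ $((V\otimes I)U)^*=(V^*\otimes I)U^*$. Both are routine, so I do not expect a real obstacle.
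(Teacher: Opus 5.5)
Your proposal is correct and follows essentially the same route as the paper's proof. Local invariance comes from left invariance of the Haar measure on $\mathrm U(2^m)$. The design property comes from writing the twirl of $(V\otimes I)U$ as the completely positive $V$-twirl composed with the design's twirl; the $V$-twirl fixes the Haar twirl by left invariance on $\mathrm U(2^n)$ and preserves the completely positive ordering.
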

\begin{proof}
    Write $\tilde V=V\otimes I$, and let $\mu$ be the Haar measure on $\mathrm U(2^n)$.
    For any fixed unitary $V_0$ on the qubits in $S$, the left invariance of the Haar measure on $\mathrm U(2^m)$ gives $(V_0\otimes I)\tilde V\sim\tilde V$, and hence $(V_0\otimes I)\tilde VU\sim\tilde VU$ by independence; thus $\tilde VU$ is $m$-locally invariant.
    It remains to show that $\tilde VU$ is an approximate design with the same parameters.
    Let $R(W):=W^{\otimes k}\otimes (W^*)^{\otimes k}$, so that $R(\tilde VU)=R(\tilde V)R(U)$, and define the completely positive map $\Phi(\rho):=\Exp_V[R(\tilde V)\rho R(\tilde V)^\dag]$.
    By the independence of $U$ and $V$, $\Delta^{(k,k)}_{\tilde VU}=\Phi\circ\Delta^{(k,k)}_{\mu(k,\epsilon)}$, and by the left invariance of the Haar measure on $\mathrm U(2^n)$, $\Phi\circ\Delta^{(k,k)}_{\mu}=\Delta^{(k,k)}_{\mu}$, where $\Delta^{(k,k)}_{\nu}$ is defined as in \dfn{strong-design}.
    Since $\Delta^{(k,k)}_{\mu(k,\epsilon)}-(1-\epsilon)\Delta^{(k,k)}_{\mu}$ and $(1+\epsilon)\Delta^{(k,k)}_{\mu}-\Delta^{(k,k)}_{\mu(k,\epsilon)}$ are completely positive by assumption, and the composition of completely positive maps is completely positive,
    \begin{align}
        \Delta^{(k,k)}_{\tilde VU}-(1-\epsilon)\Delta^{(k,k)}_{\mu} &= \Phi\circ\big(\Delta^{(k,k)}_{\mu(k,\epsilon)}-(1-\epsilon)\Delta^{(k,k)}_{\mu}\big)
        \quad\text{and} \nonumber\\
        (1+\epsilon)\Delta^{(k,k)}_{\mu}-\Delta^{(k,k)}_{\tilde VU} &= \Phi\circ\big((1+\epsilon)\Delta^{(k,k)}_{\mu}-\Delta^{(k,k)}_{\mu(k,\epsilon)}\big)
    \end{align}
    are completely positive.
    This justifies that $\tilde VU$ is an $\epsilon$-approximate $k$-design.
\end{proof}

\thm{locally-invariant-design-haar} implies that appending an $m$-qubit Haar random unitary turns a generic design into an $m$-locally invariant one. 
For the generation of such a design to be efficient (in the sense that the gate complexity scales no faster than a polynomial in the number of qubits $n$), 
it would be sufficient to have $m=O(\log n)$. 

Indeed, by \cite[Section~5]{mezzadriHowGenerateRandom2007}, one can sample a Haar random unitary of dimension $d$ by sampling a complex-valued Gaussian matrix using $2d^2$ standard Gaussian samples (for the real and the imaginary part of the entries). After performing a QR decomposition $Z=QR$ of the Gaussian matrix $Z$ and correcting the phases, the matrix $Q'=Q\cdot\diag(r_{11}/|r_{11}|,\ldots,r_{dd}/|r_{dd}|)$ is a Haar random unitary (the uncorrected factor $Q$ is not Haar distributed). 
Given the Haar random matrix, one can perform a standard exact synthesis (as in \cite{shendeSynthesisQuantumlogicCircuits2006}) of any $d\times d$ unitary into a sequence of $O(d^2)$ single-qubit gates and CNOT gates. 
Each single-qubit gate can then be approximated using the algorithm following from the Solovay--Kitaev theorem \cite{dawsonSolovayKitaevAlgorithm2005}. 
To obtain a circuit $C$ satisfying $\|C-V\|_{\op}\leq\epsilon$ (up to a global phase, which does not affect the quantities considered below), it suffices to approximate each of the $M=O(d^2)$ single-qubit gates to precision $\epsilon/M$ in operator norm, since the errors add up by a telescoping argument.
By the Solovay--Kitaev theorem, each single-qubit gate can be approximated using $O(\log^{3.97}(M/\epsilon))$ gates, and thus the total gate complexity is $d^2\log^{O(1)}(d^2/\epsilon)$.

Having a random quantum circuit $C$ that approximates a Haar random unitary matrix to error $\epsilon$ in operator norm, we can show that the squared magnitude of a linear combination of the entries is close in Kolmogorov distance.

For our use, we note that for any fixed $U$, $\bra{x}(V\otimes I)U\ket{y}$ can be written as $\bra{\phi'}V\ket{\psi'}$ for subnormalized vectors $\ket{\phi'}$ and $\ket{\psi'}$. 
Since scaling applied to $\ket{\phi'}$ and $\ket{\psi'}$ does not change the Kolmogorov distance, in the following theorem, we bound the Kolmogorov distance for normalized vectors, and the same bound directly applies to subnormalized ones.

When $U$ is drawn from some distribution, we can think of $\bra{x}(V\otimes I)U\ket{y}$ as a mixture of random variables specified by $U$. 
Since the same bound holds for every $U$, by an averaging argument, the following bound also holds when $U$ is either sampled from a design or a Haar random unitary.

\begin{theorem}\label{thm:solovay-kitaev-ks}
    For $d\geq 2$, let $V\sim\Haar(d)$ and let $C$ be a $d\times d$ random matrix (possibly dependent on $V$) such that
    \begin{align}
        \Pr\left[\|C-V\|_{\op}\leq\epsilon\right] \geq 1-\delta.
    \end{align}
    Then for the function $f(U):=|\bra{\phi}U\ket{\psi}|^2$ with any fixed unit vectors $\ket{\psi},\ket{\phi}$,
    \begin{align}
        \sup_{t\in\mathbb R}\left| F_{f(C)}(t) - F_{f(V)}(t)\right| \leq \delta + 2\sqrt{d-1}\epsilon,
    \end{align}
    where $F_X$ is the distribution function of $X$. 
\end{theorem}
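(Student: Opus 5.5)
We decouple the random perturbation from the distribution function by a sandwich argument. On the event $E=\{\|C-V\|_{\op}\leq\epsilon\}$ we have $|\bra{\phi}C\ket{\psi}-\bra{\phi}V\ket{\psi}|\leq\epsilon$. Writing $Z=\bra{\phi}V\ket{\psi}$, this gives $|f(C)|^{1/2}\in[|Z|-\epsilon,|Z|+\epsilon]$. For every $t\geq 0$ we therefore get the two inclusions
\begin{align}
    \{f(C)\leq t\}\cap E \subseteq \{|Z|\leq \sqrt t+\epsilon\},\qquad
    \{|Z|\leq \sqrt t-\epsilon\}\cap E\subseteq\{f(C)\leq t\}.
\end{align}
Taking probabilities and using $\Pr[E^c]\leq\delta$, we obtain
\begin{align}
    F_{|Z|}(\sqrt t-\epsilon)-\delta\leq F_{f(C)}(t)\leq F_{|Z|}(\sqrt t+\epsilon)+\delta .
\end{align}
For $t<0$ both distribution functions vanish, so there is nothing to prove. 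It remains to show that the law of $|Z|$ cannot put much mass in a window of width $\epsilon$ on either side of $\sqrt t$. Concretely, we need
\begin{align}
    F_{|Z|}(r+\epsilon)-F_{|Z|}(r)\leq 2\sqrt{d-1}\,\epsilon\quad\text{for all }r,
\end{align}
and the analogous bound for the left window. Subtracting $F_{f(V)}(t)=F_{|Z|}(\sqrt t)$ then gives the claimed bound.

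For this anticoncentration step, we use that $V\ket{\psi}$ is uniform on the unit sphere of $\mathbb C^d$ because $\ket{\psi}$ is a unit vector. Hence $|Z|^2\sim\Beta(1,d-1)$ by \sec{preliminaries-beta}, with distribution function $1-(1-s)^{d-1}$. The density of $|Z|$ at $r\in[0,1]$ is
\begin{align}
    g(r)=2(d-1)\,r(1-r^2)^{d-2}.
\end{align}
We bound $\sup_r g(r)$. For $d=2$ we have $g(r)=2r\leq 2=2\sqrt{d-1}$. For $d\geq 3$ the maximizer is $r^2=1/(2d-3)$, and there
\begin{align}
    g=2(d-1)(2d-3)^{-1/2}\Big(1-\tfrac{1}{2d-3}\Big)^{d-2}.
\end{align}
Since $(d-1)/\sqrt{2d-3}\leq\sqrt{d-1}$ whenever $d-1\leq 2d-3$, i.e.\ $d\geq 2$, the last factor is at most $1$ and we get $g\leq 2\sqrt{d-1}$. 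A window of length $\epsilon$ therefore carries mass at most $2\sqrt{d-1}\,\epsilon$. This yields $|F_{f(C)}(t)-F_{f(V)}(t)|\leq\delta+2\sqrt{d-1}\,\epsilon$.

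The one subtlety is the boundary of the window. When $\sqrt t-\epsilon<0$, the lower window is truncated at $0$, and it still carries mass at most $\sup g\cdot\epsilon$. When $\sqrt t+\epsilon>1$, the upper window is truncated at $1$, and the same bound holds. Because the argument is pointwise in $t$, taking the supremum over $t$ finishes the proof. The density bound is the only real computation, and I expect it to go through as sketched.
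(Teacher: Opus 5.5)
Your proposal is correct and follows essentially the same route as the paper: bound the density of $|\bra{\phi}V\ket{\psi}|$ by $2\sqrt{d-1}$ (maximizer $r^2=1/(2d-3)$), sandwich the distribution function on the event $\|C-V\|_{\op}\leq\epsilon$ at a cost of $\delta$, and conclude by the resulting Lipschitz bound. The only cosmetic difference is that the paper first proves the Kolmogorov bound for $|\bra{\phi}C\ket{\psi}|$ versus $|\bra{\phi}V\ket{\psi}|$ and then transfers it through the increasing map $x\mapsto x^2$, whereas you work directly with the thresholds $\sqrt t\pm\epsilon$.
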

\begin{proof}
    For $V\sim\Haar(d)$, $f(V)\sim\Beta(1,d-1)$, since $V\ket{\psi}$ is uniformly distributed on the unit sphere (\sec{preliminaries-beta}).
    Let $g(U):=|\bra{\phi}U\ket{\psi}|$.
    If $\|C-V\|_{\op}\leq\epsilon$, then  $|g(C)-g(V)|\leq\epsilon$. 
    Since $g(V)^2\sim\Beta(1,d-1)$, by change of variable, the probability density function of $g(V)$ can be written as
    \begin{align}\label{eq:density-gv}
        f_{g(V)}(t) = 2(d-1)t(1-t^2)^{d-2},
    \end{align}
    and therefore, since this density is maximized at $t^2=1/(2d-3)$ with maximum value $2(d-1)(2d-3)^{-1/2}(1-1/(2d-3))^{d-2}\leq 2(d-1)/\sqrt{2d-3}\leq 2\sqrt{d-1}$ for $d\geq 2$, 
    \begin{align}\label{eq:max-gv}
        \max_{t\in\mathbb R}f_{g(V)}(t)\leq 2\sqrt{d-1}.
    \end{align}
    
    Now let $E$ be the event that $|g(C)-g(V)|\leq\epsilon$.
    We have $\Pr[\bar E]\leq\delta$.
    Conditioned on $E$ happening, we have 
    $g(V)-\epsilon\leq g(C)\leq g(V)+\epsilon$. 
    Then this implies that when $g(C)\leq t$ happens, it must be true that $g(V)\leq t+\epsilon$.
    This implies that
    \begin{align}
        \Pr[g(C)\leq t\mid E] \leq \Pr[g(V)\leq t+\epsilon \mid E].
    \end{align}
    Since $\Pr[E]\geq 1-\delta$, we have
    \begin{align}\label{eq:cv}
        \Pr[g(C)\leq t] \leq \Pr[g(V) \leq t+\epsilon] + \delta.
    \end{align}
    For the lower bound, conditioned on $E$ happening, we have $g(C)-\epsilon\leq g(V) \leq g(C)+\epsilon$, and thus we can interchange $C$ and $V$ in \eq{cv} and replace $t$ with $t-\epsilon$ to get
    \begin{align}
        \Pr[g(V)\leq t-\epsilon] \leq \Pr[g(C) \leq t] + \delta.
    \end{align}
    Now we have 
    \begin{align}
        F_{g(V)}(t-\epsilon) - \delta \leq F_{g(C)}(t) \leq F_{g(V)}(t+\epsilon) + \delta. 
    \end{align}
    Now since $F_{g(V)}$ is $2\sqrt{d-1}$-Lipschitz by \eq{density-gv} and \eq{max-gv}, we have 
    \begin{align}
        F_{g(C)}(t) 
        &\leq F_{g(V)}(t) + 2\epsilon\sqrt{d-1} + \delta \\
        F_{g(C)}(t) 
        &\geq F_{g(V)}(t) - 2\epsilon\sqrt{d-1} - \delta,
    \end{align}
    and thus the Kolmogorov distance between $g(C)$ and $g(V)$ is at most $\delta+2\epsilon\sqrt{d-1}$.
    Since $g^2=f$ and the mapping $x\mapsto x^2$ is strictly increasing on $[0,\infty)$, we have that $\Pr[g(U)\leq \sqrt t]=\Pr[f(U)\leq t]$ holds for every $t\in\mathbb R^{\geq 0}$.
    This implies that the Kolmogorov distance between $f(C)$ and $f(V)$ is at most $\delta+2\epsilon\sqrt{d-1}$.  
\end{proof}

In \thm{solovay-kitaev-ks}, we have seen that as long as the Haar random unitary in use has small dimension, we can control the error in Kolmogorov distance induced by replacing the Haar random unitary with a quantum circuit.

\begin{remark}[Finite precision]\label{rmk:finite-precision}
    The sampling procedure described above was stated with exact Gaussian samples and exact arithmetic; we briefly indicate why a finite-precision implementation satisfies the hypothesis of \thm{solovay-kitaev-ks}.
    To reach Kolmogorov error $\zeta$, it suffices to take $\delta\leq\zeta/2$ and an operator-norm tolerance $\epsilon\leq\zeta/(4\sqrt{d-1})$, which changes the gate count $d^2\log^{O(1)}(d^2/\epsilon)$ above only inside the polylogarithm.
    The Gaussian entries can be sampled to $O(\log(d/\zeta))$ bits of precision, and the QR factorization, the phase correction, and the synthesis can be carried out in floating-point arithmetic with $O(\log(d/\zeta))$ bits.
    The smallest singular value of a $d\times d$ matrix $Z$ with independent standard complex Gaussian entries satisfies $\Pr[\sigma_{\min}(Z)\leq t/\sqrt d]\leq t^2$ for every $t>0$ \cite{edelmanEigenvaluesConditionNumbers1988}, so outside an event of probability at most $\delta/2$ the matrix $Z$ is well conditioned, $\|Z^{-1}\|_{\op}\leq\sqrt{2d/\delta}$, and the standard perturbation bounds for the QR factorization \cite{highamAccuracyStabilityNumerical2002} show that the unitary computed from the rounded data is within operator norm $\epsilon$ of $Q'$.
    The resulting classical sampler runs in time $\poly(d,\log(1/\zeta))$ and produces, together with the synthesis, a circuit $C$ coupled to $V=Q'$ with $\Pr[\|C-V\|_{\op}>\epsilon]\leq\delta$; we do not attempt to optimize the precision.
\end{remark}
Thus in the following subsections of \sec{local-translation-invariant}, we analyze the convergence of the squared magnitude of a single entry in a locally invariant design to that in a Haar random unitary. 
However, for a discrete unitary design, replacing the appended Haar random unitary with the efficiently sampled one results in a discrete measure. 
The replacement loses convergence in total variation in general.

\subsection{A recurrence relation of beta random variables}\label{sec:beta-recurrence}

We prove that the moments of a beta random variable admit a recursive form, which will be useful for our analysis in \sec{locally-invariant-ks} and \sec{locally-invariant-tv}. 

\begin{lemma}\label{lem:recurrence}
    Let $X\sim\Beta(a,N-a)$ and $Z=(X-\mu)/\mu$ for $\mu=\Exp[X]$. Let $\nu_k$ be the $k$-th moment of $Z$. Then for integer $k\geq 2$, $\nu_k$ satisfies the recurrence relation
    \begin{align}
        \nu_k = \frac{k-1}{N+k-1} \left[ \left(\frac{N}{a} - 2\right) \nu_{k-1} + \left(\frac{N}{a} - 1\right) \nu_{k-2} \right].
    \end{align}
\end{lemma}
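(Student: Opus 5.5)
The plan is to derive the recurrence from a first-order differential equation satisfied by the beta density, namely Stein's identity, and then rescale to $Z$. Write $\beta:=N-a$, so $X\sim\Beta(a,\beta)$ with density proportional to $x^{a-1}(1-x)^{\beta-1}$ and mean $\mu=a/N$. The density $f$ satisfies
\[
\bigl(x(1-x)f(x)\bigr)'=\bigl(a-Nx\bigr)f(x).
\]
For every polynomial $g$, integration by parts on $[0,1]$ gives
\[
\Exp[X(1-X)g'(X)]=\Exp[(NX-a)g(X)].
\]
The boundary terms vanish because $x(1-x)f(x)\to0$ at both endpoints, since $a,\beta\geq1$.

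Next I will rewrite everything in terms of $Z$. Substituting $X=\mu(1+Z)$ gives $NX-a=a Z$ and $1-X=(1-\mu)-\mu Z$. Take $g(x)=((x-\mu)/\mu)^{k-1}$, so that $g(X)=Z^{k-1}$ and $g'(X)=(k-1)Z^{k-2}/\mu$. The identity then becomes
\[
(k-1)\,\Exp\bigl[(1+Z)\bigl((1-\mu)-\mu Z\bigr)Z^{k-2}\bigr]=a\,\nu_k.
\]

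Expanding $(1+Z)((1-\mu)-\mu Z)=(1-\mu)+(1-2\mu)Z-\mu Z^2$ yields
\[
(k-1)\bigl[(1-\mu)\nu_{k-2}+(1-2\mu)\nu_{k-1}-\mu\nu_k\bigr]=a\nu_k.
\]
Solving for $\nu_k$ gives
\[
\nu_k=\frac{k-1}{a+(k-1)\mu}\bigl[(1-2\mu)\nu_{k-1}+(1-\mu)\nu_{k-2}\bigr].
\]
To finish, multiply the numerator and denominator by $N/a=1/\mu$. The prefactor becomes $(k-1)/(N+k-1)$, and the coefficients become $N/a-2$ and $N/a-1$, which is the claimed recurrence. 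This holds for $k\geq2$, with $\nu_0=1$ and $\nu_1=0$.

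The only delicate point is the boundary term in the integration by parts. It is harmless for integer $1\leq a<N$, because $x(1-x)f(x)$ vanishes at $0$ and $1$ whenever $a,\beta>0$. Beyond that, the computation is routine bookkeeping. As a cross-check, I can compare against the direct moment formula: for $k=2$ the recurrence gives $\nu_2=(N-a)/(a(N+1))$, which matches the known variance $\Var[X]/\mu^2$.
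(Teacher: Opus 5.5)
Your proof is correct and follows essentially the same route as the paper: integrate the beta density's first-order differential equation by parts against $Z^{k-1}$, substitute $X=\mu(1+Z)$, and solve for $\nu_k$. The only difference is cosmetic. You use the Stein form $\bigl(x(1-x)f\bigr)'=(a-Nx)f$, which puts the derivative on $g$ and makes the bookkeeping slightly cleaner. The paper instead uses $x(1-x)f'=(a-1-(N-2)x)f$ and differentiates $z^{k-1}x(1-x)$; the two computations coincide once the product rule is expanded.
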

\begin{proof}
    To prove the lemma, we apply an integration by parts to express $\nu_k$ in terms of $\nu_{k-1}$ and $\nu_{k-2}$.
    Recall that the pdf of $X$ is
    \begin{align}
        f_X(x) = \frac{\Gamma(N)}{\Gamma(a)\Gamma(N-a)}x^{a-1}(1-x)^{N-a-1}.
    \end{align}
    The natural logarithm of $f_X(x)$ is
    \begin{align}
        \ln f_X(x) = \ln \frac{1}{B(a,N-a)} + (a-1)\ln x + (N-a-1)\ln(1-x).
    \end{align}
    The derivative of $\ln f_X(x)$ is
    \begin{align}
        \frac{\d}{\d x}\ln f_X(x) 
        &= \frac{f_X'(x)}{f_X(x)} = \frac{a-1}{x} - \frac{N-a-1}{1-x}.
    \end{align}
    This implies that $x(1-x)\cdot f_X'(x) = (a-1-(N-2)x)\cdot f_X(x)$.
    Letting $x=\mu(z+1)$, we have 
    \begin{align}
        x(1-x) f_X'(x) 
        &= \Big(a-1-(N-2)\mu (z+1) \Big)\cdot f_X(x) \\
        &= \left[\left(\frac{2a}{N}-1 \right) + \left(\frac{2a}{N}-a\right)z \right] f_X(x).
    \end{align}
    Multiplying both sides by $z^{k-1}$ and integrating with respect to $x$ over $[0,1]$, by an integration by parts (the boundary term vanishes since $x(1-x)f_X(x)\propto x^a(1-x)^{N-a}$ vanishes at $x\in\{0,1\}$), we have
    \begin{align}
        \left(\frac{2a}{N}-a\right) \nu_{k} + \left(\frac{2a}{N}-1 \right)\nu_{k-1}
        &= \int_0^1 z^{k-1} x(1-x) f_X'(x) \d x \\
        &= -\int_0^1 \frac{\d}{\d x}\Big(z^{k-1} x(1-x)\Big) f_X(x) \d x \\
        &= -\int_0^1 \frac{\d}{\d z}\Big(z^{k-1} (-\mu z^2 + (1-2\mu)z + (1-\mu))\Big) f_X(x) \d x \\
        &= \mu(k+1) \nu_{k} - k (1-2\mu) \nu_{k-1} - (k-1) (1-\mu) \nu_{k-2}.
    \end{align}
    Substituting $a/N$ by $\mu$ and $a$ by $N\mu$, we have 
    \begin{align}
        \mu\Big[N+k-1\Big]\nu_{k} = \Big[(k-1)(1-2\mu)\Big] \nu_{k-1} + (k-1)(1-\mu)\nu_{k-2}.
    \end{align}
    This can be simplified to 
    \begin{align}
        \nu_{k} 
        &= \frac{k-1}{N+k-1} \left[ \frac{1-2\mu}{\mu} \nu_{k-1} + \frac{1-\mu}{\mu} \nu_{k-2} \right] \\
        &= \frac{k-1}{N+k-1} \left[ \left(\frac{N}{a}-2\right) \nu_{k-1} + \left(\frac{N}{a}-1\right) \nu_{k-2} \right].
    \end{align}
\end{proof}

Then we use the recurrence relation in \lem{recurrence} to show that the moments of a beta random variable can be bounded if the first parameter $a$ scales linearly in the order $k$ of the moment.

\begin{lemma}\label{lem:central-moment}
    Let $X\sim\Beta(a,N-a)$ and $Z=(X-\mu)/\mu$ for $\mu=\Exp[X]$. Then for integers $k\geq 1$, $a\geq \pi k$, and $N\geq 2a$, the $k$-th moment of $Z$ satisfies the bound
    \begin{align}
        \Exp_{Z}\left[Z^k\right] \leq \frac{1}{2}\cdot \frac{(k-1)!!}{(a/2)^{k/2}}.
    \end{align}
\end{lemma}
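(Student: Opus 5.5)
We will prove the bound by induction on $k$, using the recurrence of \lem{recurrence} with the initial values $\nu_0=1$ and $\nu_1=0$. Write $r:=N/a\geq 2$ and $c_j:=\frac{1}{2}\cdot\frac{(j-1)!!}{(a/2)^{j/2}}$, the target bound. Since $\nu_1=0$, the case $k=1$ is trivial. For $k=2$ the recurrence gives $\nu_2=\frac{r-1}{N+1}\leq \frac{r}{N}=\frac1a$, which must be compared with $c_2=\frac{1}{2}\cdot\frac{1}{a/2}=\frac1a$, so the base case holds (with the convention $1!!=1$).

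Odd moments may be negative, while the recurrence involves $\nu_{k-1}$ with the coefficient $r-2\geq 0$. To make the induction close, we will carry the stronger hypothesis $|\nu_j|\leq c_j$ for all $j<k$. The triangle inequality then gives
\begin{align}
|\nu_k|\leq \frac{k-1}{N+k-1}\Big[r\,c_{k-1}+r\,c_{k-2}\Big]\leq \frac{k-1}{a}\big(c_{k-1}+c_{k-2}\big),
\end{align}
where we used $\frac{r}{N+k-1}\leq \frac{r}{N}=\frac1a$. It remains to show that $\frac{k-1}{a}(c_{k-1}+c_{k-2})\leq c_k$.

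For the $c_{k-2}$ term, the ratio is $\frac{k-1}{a}\,c_{k-2}/c_k=\frac{k-1}{a}\cdot\frac{(k-3)!!}{(k-1)!!}\cdot\frac a2=\frac12$. For the $c_{k-1}$ term, the ratio is
\begin{align}
\frac{k-1}{a}\cdot\frac{(k-2)!!}{(k-1)!!}\cdot\sqrt{a/2}=\frac{(k-1)!!/(k-3)!!\cdot(k-2)!!}{(k-1)!!}\cdot\frac{1}{\sqrt{2a}}=\frac{(k-2)!!}{(k-3)!!}\cdot\frac{1}{\sqrt{2a}}.
\end{align}
By \prop{double-factorial-ratio} with $k-2$ in place of $k$, this is at most $\sqrt{\pi(k-1)/2}/\sqrt{2a}=\frac12\sqrt{\pi(k-1)/a}$, which is at most $\frac12$ because $a\geq\pi k$. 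The case $k=2$, where $(k-2)!!/(k-3)!!=0!!/(-1)!!=1$, also satisfies this, since $\frac{1}{\sqrt{2a}}\leq\frac12$. Summing the two contributions gives $|\nu_k|\leq c_k$, which completes the induction and yields the claimed bound on $\Exp[Z^k]\leq|\nu_k|$.

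The main obstacle is the bookkeeping in the $c_{k-1}$ term: the loss factor $\sqrt{k/a}$ must be compensated exactly. The hypothesis $a\geq\pi k$ is precisely what makes this term at most $\frac12$, so that the two halves sum to $1$. We also need to check that replacing $r-2$ and $r-1$ by $r$, and $N+k-1$ by $N$, costs nothing in the wrong direction, and here $N\geq 2a$ guarantees $r-2\geq0$, so the triangle-inequality step is legitimate.
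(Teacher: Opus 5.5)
Your proof is correct and is essentially the paper's argument: the same induction on the recurrence of \lem{recurrence}, with the $\nu_{k-2}$ term contributing exactly $\tfrac12$ of the target and \prop{double-factorial-ratio} together with $a\geq\pi k$ keeping the $\nu_{k-1}$ term at most $\tfrac12$. The only difference is that you carry $|\nu_j|\leq c_j$ and use the triangle inequality, whereas the paper notes that $N\geq 2a$ makes every $\nu_j$ nonnegative. One small point: your hypothesis should be stated for $1\leq j<k$ only, since $\nu_0=1>c_0=\tfrac12$; this is why $k=2$ has to be a separate base case, and it makes your remark about the ratio at $k=2$ unnecessary.
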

\begin{proof}
    We prove the lemma by induction on $k$ for $k\leq a/\pi$.
    Let $\nu_k:=\Exp_Z[Z^k]$.
    For $k=1$, $\nu_1 = 0$.
    For $k=2$, $\nu_2 = \frac{a(N-a)}{N^2(N+1)}\cdot \frac{N^2}{a^2} \leq \frac{1}{a}$.
    These two cases establish the bound for $k\in\{1,2\}$ (recall that $0!!=1!!=1$).
    For $k\geq 3$, we use the recurrence relation in \lem{recurrence}: since $N\geq 2a$, both coefficients $\frac{N}{a}-1$ and $\frac{N}{a}-2$ are nonnegative, so starting from $\nu_0=1$ and $\nu_1=0$, all moments $\nu_j$ are nonnegative by induction, and
    \begin{align}
        \nu_k
        &=\frac{k-1}{N+k-1} \left[ \left(\frac{N}{a} - 1\right) \nu_{k-2} + \left(\frac{N}{a} - 2\right) \nu_{k-1} \right]
        \leq \frac{k-1}{a} \left[ \nu_{k-2} + \nu_{k-1} \right].
    \end{align}
    Suppose that the bound holds for $k-1$ and $k-2$; then we have
    \begin{align}
        \nu_k 
        &\leq \frac{1}{2}\frac{k-1}{a} \left[ \frac{(k-3)!!}{(a/2)^{(k-2)/2}} + \frac{(k-2)!!}{(a/2)^{(k-1)/2}} \right] 
        \leq \frac{1}{4}\frac{(k-1)!!}{(a/2)^{k/2}} \left[ 1 + \frac{(k-2)!!}{(k-3)!!}\cdot \left(\frac{2}{a}\right)^{1/2} \right] \\
        &\leq \frac{1}{4}\frac{(k-1)!!}{(a/2)^{k/2}} \left[ 1 + \left(\frac{\pi k}{a}\right)^{1/2} \right] 
        \leq \frac{1}{2} \frac{(k-1)!!}{(a/2)^{k/2}}
    \end{align}
    for $k\leq a/\pi$, where the third inequality uses $\frac{(k-2)!!}{(k-3)!!}\leq\sqrt{\pi(k-1)/2}\leq\sqrt{\pi k/2}$ from \prop{double-factorial-ratio}. 
\end{proof}

As a corollary, the absolute moments can also be bounded.

\begin{corollary}\label{cor:absolute-moment}
    Let $X\sim\Beta(a,N-a)$ and $Z=(X-\mu)/\mu$ for $\mu=\Exp[X]$. Then for integers $k\geq 1$, $a\geq 2\pi k$, and $N\geq 2a$, the $k$-th absolute moment of $Z$ satisfies the bound
    \begin{align}
        \Exp_{Z}\left[|Z|^k\right] 
        \leq \left(\frac{4k}{ea}\right)^{k/2}.
    \end{align}
\end{corollary}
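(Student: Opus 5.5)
For even $k$ the absolute moment equals the ordinary moment, so \lem{central-moment} applies directly. The hypothesis $a\geq 2\pi k\geq\pi k$ is stronger than needed there, and we get
\begin{align}
    \Exp[|Z|^k]=\nu_k\leq \frac12\cdot\frac{(k-1)!!}{(a/2)^{k/2}}.
\end{align}
For odd $k$, we will use Lyapunov's (or the Cauchy--Schwarz) inequality to reduce to the even order $2k$. This is exactly why the hypothesis is $a\geq 2\pi k$ rather than $a\geq\pi k$: it lets \lem{central-moment} be applied at order $2k\leq a/\pi$. Concretely, we bound
\begin{align}
    \Exp[|Z|^k]\leq \Exp[Z^{2k}]^{1/2}\leq \left(\frac12\cdot\frac{(2k-1)!!}{(a/2)^{k}}\right)^{1/2}.
\end{align}
For uniformity, it is simplest to use this Cauchy--Schwarz route for every $k$, even or odd.

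Next, we invoke \prop{double-factorial} with $n=k$, which gives $(2k-1)!!\leq 2(2k/e)^k$. Substituting, the right-hand side becomes
\begin{align}
    \left(\frac{(2k/e)^k}{(a/2)^k}\right)^{1/2}=\left(\frac{4k}{ea}\right)^{k/2},
\end{align}
which is exactly the claimed bound. The factor $\tfrac12$ from \lem{central-moment} cancels the factor $2$ from \prop{double-factorial}, so no constants are lost.

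The only point needing care is checking the hypotheses of \lem{central-moment} at order $2k$. We need $2k\geq 1$, which is clear, and $a\geq\pi(2k)$ and $N\geq 2a$, both of which are assumed. So the argument is just two inequalities chained together. No step is a real obstacle; the main thing is to notice that doubling the order is what the strengthened hypothesis $a\geq 2\pi k$ is there for.
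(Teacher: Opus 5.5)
Your proposal is correct and is exactly the paper's argument: Cauchy--Schwarz to reduce to $\Exp[Z^{2k}]$, then \lem{central-moment} at order $2k$ (which is what the hypothesis $a\geq 2\pi k$ is for), then \prop{double-factorial}, whose factor $2$ cancels the $\tfrac12$. Your separate treatment of even $k$ is unnecessary but harmless, since the uniform Cauchy--Schwarz route already covers every $k$.
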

\begin{proof}
    By the Cauchy--Schwarz inequality, \lem{central-moment} applied with $2k$ in place of $k$ (which requires $a\geq 2\pi k$), and \prop{double-factorial}, we have 
    \begin{align}
        \Exp[|Z|^k]^2 \leq \Exp[Z^{2k}] \leq \frac{1}{2}\cdot \frac{(2k-1)!!}{(a/2)^{k}} \leq \left(\frac{4k}{ea}\right)^{k}.
    \end{align}
\end{proof}

The constants in the conditions $a\geq\pi k$ and $a\geq 2\pi k$ can likely be improved, though we do not attempt to optimize them here since they are not crucial for our purpose.

\subsection{Closeness in distribution with locally invariant design}\label{sec:locally-invariant-ks}

To bound the closeness between the distribution functions, we may apply the Berry--Esseen smoothing lemma (\lem{berry-esseen}). This requires us to bound the closeness between the characteristic functions, i.e., between the expectations of $e^{\i t\ln X}$ under the two distributions. 
Since we have only closeness in moments, we can at most bound the closeness for low-degree polynomials. 
However, a direct truncation applied to the exponential function would give a truncation error proportional to the 
$k$th moments multiplied by $|t|^k/k!$.

Following the argument in \lem{central-moment}, 
the $k$th central moment of $X\sim\Beta(a,N-a)$, normalized by its mean $\mu$, decays exponentially in $k$ for $a=\Omega(k)$.

Applying the Berry--Esseen smoothing lemma (\lem{berry-esseen}) directly would require us to choose a window size $T$ that is at most $O(\sqrt{ak}/\mu)$, where $\mu=a/N$ is the mean of $X\sim\Beta(a,N-a)$, which only gives a polynomially decaying bound in $k$.
In this case, the smoothing bound scales like $O(\sqrt{a}/(\mu T))=O(1/\sqrt k)$, thereby giving a polynomially decaying bound in $k$ for the closeness between the distribution functions.

To improve the bound to exponentially decaying in $k$, we rely on the fact that for independent $X\sim\Beta(a,N-a)$ and $Y\sim\Beta(1,a-1)$, we have $XY\sim\Beta(1,N-1)$ (the product rule \eq{beta-product}). 
The idea is to replace $X$ with the block probability $X_k$ of an $\epsilon$-approximate $k$-design (\lem{block-factorization}), and then bound the closeness between the distribution functions of $X_kY$ and $XY$. 
Taking $a$ to be a power of two, the random variable $Y$ is produced by a Haar random unitary acting on only $\log a$ qubits. 
We work with this idealized $Y$; its replacement by an efficiently sampled circuit was discussed in \sec{locally-invariant-properties}. 

Throughout \sec{locally-invariant-ks} and \sec{locally-invariant-tv}, $X\sim\Beta(a,N-a)$ denotes the Haar block probability, $X_k$ the probability of the same block for $U\sim\mu(k,\epsilon)$ (the dependence on $\epsilon$ is suppressed in the notation), and $Y\sim\Beta(1,a-1)$ is independent of both.
As explained at the beginning of \sec{preliminaries}, distribution functions and distances written with random variables refer to their laws.
When $\mu(k,\epsilon)$ is $m$-locally invariant and the block is $B(x)$, \lem{block-factorization} gives $X_kY\sim\nu(k,\epsilon)$ and $XY\sim\nu$, so the results below are statements about $d_{\mathrm K}(\nu(k,\epsilon),\nu)$ and $d_{\mathrm{TV}}(\nu(k,\epsilon),\nu)$; see \cor{locally-invariant-entry}.

\begin{proposition}[Beta integral]\label{prop:beta-integral}
        For integer $m\geq 2$, $\int_{-\infty}^\infty \left(\frac{1}{t^2+1}\right)^{m/2} \d t = \frac{\sqrt\pi \Gamma(\frac{m-1}{2})}{\Gamma(\frac{m}{2})}$ (substitute $t=\tan\theta$ to obtain a beta integral).
    \end{proposition}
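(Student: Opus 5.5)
The plan is to follow the hint in the statement and substitute $t=\tan\theta$ with $\theta\in(-\pi/2,\pi/2)$. Then $\d t=\sec^2\theta\,\d\theta$ and $t^2+1=\sec^2\theta$, so the integrand $(t^2+1)^{-m/2}\,\d t$ becomes $\cos^{m}\theta\cdot\cos^{-2}\theta\,\d\theta=\cos^{m-2}\theta\,\d\theta$. The integral therefore equals
\begin{align}
    \int_{-\pi/2}^{\pi/2}\cos^{m-2}\theta\,\d\theta = 2\int_0^{\pi/2}\cos^{m-2}\theta\,\d\theta.
\end{align}
For $m\geq 2$ the exponent $m-2$ is nonnegative, so this is a proper integral. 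The original integral converges since the integrand decays like $|t|^{-m}$ with $m\geq 2$.

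Next we evaluate the integral as a beta integral. I would use the standard identity $2\int_0^{\pi/2}\sin^{2p-1}\theta\cos^{2q-1}\theta\,\d\theta=B(p,q)$, valid for $p,q>0$. This identity follows from the substitution $u=\sin^2\theta$, which turns the left-hand side into $\int_0^1u^{p-1}(1-u)^{q-1}\d u$. Here we take $p=1/2$ and $q=(m-1)/2$, which is positive because $m\geq 2$. This gives
\begin{align}
    2\int_0^{\pi/2}\cos^{m-2}\theta\,\d\theta=B\!\left(\tfrac12,\tfrac{m-1}{2}\right)=\frac{\Gamma(\frac12)\Gamma(\frac{m-1}{2})}{\Gamma(\frac m2)}.
\end{align}
Finally, $\Gamma(1/2)=\sqrt\pi$ yields the claimed formula.

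There is no real obstacle in this argument. The only points to check are that the change of variables is a monotone bijection of $(-\pi/2,\pi/2)$ onto $\mathbb R$, and that $q>0$ for every $m\geq 2$. As an alternative for integer $m$, one could cross-check against \prop{wallis}, which handles the even and odd cases of $m-2$ separately. The beta-function route is cleaner and covers both parities at once.
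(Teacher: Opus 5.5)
Your proposal is correct and follows exactly the paper's route: the paper's only justification is the parenthetical hint to substitute $t=\tan\theta$ and recognize a beta integral. You carry that hint out in full, reducing to $B\bigl(\tfrac12,\tfrac{m-1}{2}\bigr)$ and using $\Gamma(1/2)=\sqrt\pi$.
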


We will need to bound the integral of $\frac{1}{|t|}\big|\binom{\i t}{\ell}\binom{\i t+a}{a}^{-1}\big|$ for $a\geq\ell+2$:
    \begin{lemma}\label{lem:binomial-integral}
        For positive integers $a\geq \ell+2$ and $\ell\geq 1$, we have 
        \begin{align}
            \int_{-\infty}^\infty \frac{1}{|t|}\left|\binom{\i t}{\ell}\binom{\i t+a}{a}^{-1}\right| \d t \leq \frac{a}{\ell}\frac{\sqrt{\pi}\Gamma((a-\ell)/2)}{\Gamma((a-\ell+1)/2)}.
        \end{align}
    \end{lemma}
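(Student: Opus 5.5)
The plan is to compute the integrand in closed form and then compare it with a power of $(1+t^2/a^2)^{-1}$, so that \prop{beta-integral} applies directly. First I will expand both binomial coefficients as products. Writing $\binom{\i t}{\ell}=\frac{1}{\ell!}\prod_{j=0}^{\ell-1}(\i t-j)$ and $\binom{\i t+a}{a}=\frac{1}{a!}\prod_{j=1}^{a}(\i t+j)$, I use $|\i t-j|=|\i t+j|=\sqrt{t^2+j^2}$.

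The factor $j=0$ in the numerator has modulus $|t|$, and it cancels the prefactor $1/|t|$; this removes the only apparent singularity at $t=0$. The remaining numerator factors $j=1,\ldots,\ell-1$ cancel the first $\ell-1$ factors of the denominator. The expected outcome is the exact identity
\begin{align}
    \frac{1}{|t|}\left|\binom{\i t}{\ell}\binom{\i t+a}{a}^{-1}\right|
    = \frac{a!}{\ell!}\prod_{j=\ell}^{a}\frac{1}{\sqrt{t^2+j^2}}
    = \frac{1}{\ell}\prod_{j=\ell}^{a}\Big(1+\frac{t^2}{j^2}\Big)^{-1/2},
\end{align}
where the second equality uses $\frac{a!}{\ell!}=\prod_{j=\ell+1}^{a}j$, so that $\frac{a!}{\ell!}\prod_{j=\ell}^a j^{-1}=\frac1\ell$.

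Next, since $j\leq a$ in every factor, I bound each factor by $(1+t^2/a^2)^{-1/2}$. There are $a-\ell+1$ factors, so the integrand is at most $\frac1\ell(1+t^2/a^2)^{-(a-\ell+1)/2}$.

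Finally, I substitute $t=as$, which produces a factor $a$. I then apply \prop{beta-integral} with $m=a-\ell+1$; this value is at least $3$ because $a\geq\ell+2$, so the integral converges. This gives
\begin{align}
    \int_{-\infty}^\infty \frac1\ell\Big(1+\frac{t^2}{a^2}\Big)^{-(a-\ell+1)/2}\d t=\frac{a}{\ell}\cdot\frac{\sqrt\pi\,\Gamma((a-\ell)/2)}{\Gamma((a-\ell+1)/2)},
\end{align}
which is exactly the claimed bound.

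The only step requiring care is the bookkeeping in the cancellation. One must check that after cancelling, exactly the indices $j=\ell,\ldots,a$ survive in the denominator, giving $a-\ell+1$ factors. One must also check that the constant collapses to $1/\ell$ rather than $a!/\ell!$ or something similar. The remaining steps are a single monotonicity bound and a change of variables, so I do not expect a real obstacle there.
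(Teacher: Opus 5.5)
Your proposal is correct and follows essentially the same route as the paper's proof. You expand both binomials to get the exact identity $\frac{1}{\ell}\prod_{j=\ell}^{a}(1+t^2/j^2)^{-1/2}$, bound each of the $a-\ell+1$ factors by the $j=a$ factor, rescale $t=as$, and apply \prop{beta-integral} with $m=a-\ell+1$.
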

    \begin{proof}
        Using \prop{beta-integral} with $m=a-\ell+1\geq 3$ in the last step, we have 
        \begin{align}
            \int_{-\infty}^\infty \frac{1}{|t|}\left|\binom{\i t}{\ell}\binom{\i t+a}{a}^{-1}\right| \d t 
            &= \frac{a!}{\ell!}\int_{-\infty}^\infty \frac{|\i t(\i t-1)\ldots (\i t-\ell+1)|}{|\i t(\i t+1)\ldots (\i t+a)|} \d t \\
            &= \frac{1}{\ell}\int_{-\infty}^\infty \left(\prod_{j=\ell}^{a} \left(\frac{t^2}{j^2} + 1\right)\right)^{-1/2} \d t \\
            &\leq \frac{1}{\ell}\int_{-\infty}^\infty \left(\frac{t^2}{a^2} + 1\right)^{-(a-\ell+1)/2} \d t \\
            &= \frac{a}{\ell}\int_{-\infty}^\infty \left(t^2 + 1\right)^{-(a-\ell+1)/2} \d t \\
            &= \frac{a}{\ell}\frac{\sqrt{\pi}\Gamma((a-\ell)/2)}{\Gamma((a-\ell+1)/2)} = O\left(\frac{a}{\ell(a-\ell)^{1/2}}\right).
        \end{align}
    \end{proof}
    When $a\geq C\ell$ for some constant $C>1$, this bound can be further simplified since
    \begin{align}
        \frac{a}{\ell(a-\ell)^{1/2}} \leq \frac{\sqrt a}{\ell (1-1/C)^{1/2}} = O\left(\frac{\sqrt a}{\ell}\right).
    \end{align}

\begin{theorem}\label{thm:convergence-distribution}
    Let $k$ be a positive integer, let $a\geq 2\pi k$ and $N>2a$ be integers, and let $\epsilon\in[0,1]$.
    Let $X_k=\sum_{r\in S}|\bra{r}U\ket{y}|^2$ be the probability of a fixed block $S$ of $a$ basis states in a fixed column $y$ for $U$ sampled from an $\epsilon$-approximate $k$-design on $\mathrm U(N)$ (e.g., the block probability of \lem{block-factorization}), let $X\sim\Beta(a,N-a)$ be its Haar counterpart (\lem{block-factorization}), and let $Y\sim\Beta(1,a-1)$ be independent of $X_k$ and $X$.
    Then
    \begin{align}
        d_{\mathrm K}(X_kY,XY)=\sup_{x\in\R}|F_{X_kY}(x) - F_{XY}(x)| = 2^{-\Omega(k)} + O(\epsilon).
    \end{align}
\end{theorem}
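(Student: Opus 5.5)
We work in logarithmic coordinates and apply the Berry--Esseen smoothing lemma (\lem{berry-esseen}) to $\ln(X_kY)=\ln X_k+\ln Y$ and $\ln(XY)=\ln X+\ln Y$. Since $x\mapsto\ln x$ is monotone, the Kolmogorov distance is unchanged. By independence, $\varphi_{\ln(X_kY)}(t)=\varphi_{\ln X_k}(t)\varphi_{\ln Y}(t)$, and similarly for $X$. The Beta factor gives the explicit transform
\begin{align}
\varphi_{\ln Y}(t)=\Exp[Y^{\i t}]=\frac{B(1+\i t,a-1)}{B(1,a-1)}=\binom{\i t+a-1}{a-1}^{-1},
\end{align}
whose modulus decays like $(1+t^2/a^2)^{-(a-1)/2}$. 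This factor is what suppresses the uncontrolled high frequencies.

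First, we discard small values. Write $X=\mu(1+Z)$ with $\mu=a/N$, and $X_k=\mu(1+Z_k)$. By \cor{absolute-moment}, or \lem{central-moment} with an even order $2\lfloor k/2\rfloor$, we have $\Exp[Z^{2j}]\le(4j/(ea))^{j}$. With $a\ge 2\pi k$ this is $2^{-\Omega(k)}$. By \thm{polynomial-relative-error} applied to $p(X)=Z^{j}$ with $j\le k/2$, the same bound holds for $Z_k$ up to a factor $1+\epsilon$. Markov's inequality then gives $\Pr[|Z|\ge 1/2],\Pr[|Z_k|\ge1/2]\le 2^{-\Omega(k)}$. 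This is the muting step: on the complement event, $\ln(1+Z)$ is analytic and has a convergent expansion. The excluded event, including a possible atom at zero, contributes only $2^{-\Omega(k)}$ to the distribution functions.

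Next, we compare characteristic functions. We have $(1+Z)^{\i t}=\sum_{j\ge0}\binom{\i t}{j}Z^j$. We truncate at degree $\ell\approx k$. The truncated part is a polynomial of degree at most $k$ in $X$. By the $p\equiv1$ case of \thm{polynomial-relative-error}, its expectations under $X_k$ and $X$ differ by at most $\epsilon\big(\Exp|\sum_{j\le\ell}\binom{\i t}{j}Z^j|^2\big)^{1/2}$. For the remainder, we use a Taylor bound of the form $|\binom{\i t}{\ell+1}|\,|Z|^{\ell+1}$, with a $(1+|Z|)$-type correction on $|Z|<1/2$. Its expectation is controlled by the moment bounds, which the design transfers through $p=Z^{\lceil(\ell+1)/2\rceil}$. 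Altogether we get
\begin{align}
|\varphi_{\ln X_k}(t)-\varphi_{\ln X}(t)|\lesssim \sum_{j\le \ell}\Big|\binom{\i t}{j}\Big|\big(\epsilon\, a^{-j/2}\sqrt{j!!}\big)+\Big|\binom{\i t}{\ell+1}\Big|\Big(\frac{C\ell}{a}\Big)^{(\ell+1)/2}+2^{-\Omega(k)}.
\end{align}

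Then we plug this into \lem{berry-esseen}, with $T=\infty$ or $T$ exponentially large so that the $24M/(\pi T)$ term is negligible. Here $M$ is the density bound of $\ln(XY)$, which is a bounded Gumbel-like density. After multiplying by $|\varphi_{\ln Y}(t)|/|t|$, each term integrates using \lem{binomial-integral} with $a-1\ge\ell+2$. The term $\binom{\i t}{\ell+1}$ contributes $O(\sqrt a/\ell)\,(C\ell/a)^{(\ell+1)/2}=2^{-\Omega(k)}$ when $a\ge 2\pi k$ and $C\ell/a<1$. The $\epsilon$-terms sum to $\epsilon\sum_j O(\sqrt a/j)(Cj/a)^{j/2}=O(\epsilon)$, since the ratio decays geometrically in $j$; the $j=0$ term cancels because both characteristic functions are $1$ at $t=0$.

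\textbf{Main obstacle.} The hardest part is the truncation remainder: a rigorous bound on $(1+z)^{\i t}-\sum_{j\le\ell}\binom{\i t}{j}z^j$ that is uniform in $t$ and has exactly the $|\binom{\i t}{\ell+1}|$ factor, so that \lem{binomial-integral} can absorb the growth in $t$. I would first try the integral form of the remainder for $|z|<1/2$, bounding it by $|\binom{\i t}{\ell+1}||z|^{\ell+1}$ times an $O(1)$ factor. The event $|z|\ge1/2$ would be handled separately by the muting step. The cost of that split is that \lem{berry-esseen} must be applied to the conditioned or muted variables, and the reinsertion must be paid for in Kolmogorov distance.
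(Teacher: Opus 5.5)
Your overall route is the paper's: logarithmic coordinates, \lem{berry-esseen} with $M=1$ and $T\to\infty$, muting of a small-probability event (paid for once in Kolmogorov distance), the integral-form Taylor remainder carrying the factor $|\binom{\i t}{\ell}|$, and \lem{binomial-integral} against $|\varphi_{\ln Y}(t)|/|t|$. Muting $|Z|\geq 1/2$ instead of only $Z<-1/2$, and transferring the remainder moment through $Z^{\lceil(\ell+1)/2\rceil}$, are harmless variations.

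The step that fails is the additive $2^{-\Omega(k)}$ in your characteristic-function bound, for two reasons.

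\emph{First, it is not a valid uniform bound.} Berry--Esseen has to be applied to the muted variables, but the design controls $\Exp[P_\ell(Z_k)]$, not $\Exp[P_\ell(\tilde Z_k)]$. The discrepancy is $\Exp[(P_\ell(W)-P_\ell(0))\mathbf 1_{\{W\neq\tilde W\}}]=\Exp[Q_\ell(W)\mathbf 1_{\{W\neq\tilde W\}}]$. It grows with $|t|$ through the coefficients $\binom{\i t}{j}$, so it is not bounded by a $t$-independent constant.

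\emph{Second, a constant term cannot be integrated.} Even a genuinely constant term does not vanish at $t=0$. Then $\int 2^{-\Omega(k)}|\varphi_{\ln Y}(t)|\,|t|^{-1}\,\mathrm dt$ diverges logarithmically at the origin, and \lem{binomial-integral} cannot absorb it. Your remark that the $j=0$ term cancels handles only the $\epsilon$-part, not the muting part.

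The missing idea is to mute to the value $0$, where $P_\ell(0)=1$, and bound the correction by Cauchy--Schwarz:
$|\Exp[Q_\ell(W)\mathbf 1_{\{W\neq\tilde W\}}]|\leq\sqrt{\Pr[W\neq\tilde W]}\,\big(\Exp|Q_\ell(W)|^2\big)^{1/2}$.
For $W=Z_k$, use $\Exp|Q_\ell(Z_k)|^2\leq(1+\epsilon)\Exp|Q_\ell(Z)|^2$ from \thm{polynomial-relative-error}. The correction then has the same form $\sum_{j\geq 1}|\binom{\i t}{j}|\sqrt{\Exp[Z^{2j}]}$ as your $\epsilon$-term. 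It vanishes at $t=0$ and integrates to $O(\sqrt{\eta})=2^{-\Omega(k)}$ by the same geometric-series argument, where $\eta$ bounds the muting probability. This is exactly how the paper proceeds. It is also why the paper's final bound is $2\eta+O(\epsilon+\sqrt{\eta})$ plus the remainder contribution, rather than $O(\eta+\epsilon)$.
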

\begin{proof}
    Let $Z=NX/a-1$ be the shifted and scaled version of $X$ and $Z_k=NX_k/a-1$ be the shifted and scaled version of $X_k$.
    Then we have $XY = \frac{a}{N}(Z+1)Y$ and $X_kY = \frac{a}{N}(Z_k+1)Y$.
    We bound the closeness between the CDFs of the logarithms of $(Z+1)$ and $(Z_k+1)$ first, since 
    \begin{align}
        |\varphi_{\ln XY}(t) - \varphi_{\ln X_kY}(t)|
        &= |\varphi_{\ln (Z+1)}(t) - \varphi_{\ln (Z_k+1)}(t)| \cdot |\varphi_{\ln Y}(t)| \cdot |(a/N)^{\i t}|,
    \end{align}
    and clearly $|(a/N)^{\i t}|=1$. 
    (Strictly speaking, we will apply this identity to the muted versions of $Z$ and $Z_k$ introduced below, which are bounded away from $-1$ so that all logarithms are finite; note that $Z_k=-1$ may occur with positive probability for a discrete design.)

    Let $f(z)=(1+z)^{\i t}$. We write $f(z)=P_k(z)+R_k(z)$, where $P_k$ is the Taylor polynomial of degree $k-1$ and $R_k$ is the remainder term.
    By the integral form of the Taylor remainder, the remainder can be written as 
    \begin{align}
        R_k(z) = \frac{1}{(k-1)!}\int_0^z (z-w)^{k-1}f^{(k)}(w)\d w,
    \end{align}
    where $f^{(k)}(w)$ is the $k$-th derivative of $f$ at $w$, and can be simplified as
    \begin{align}
        f^{(k)}(w) = \frac{\partial^k}{\partial w^k} (1+w)^{\i t} = \prod_{m=0}^{k-1} (\i t - m) \cdot (1+w)^{\i t-k}.
    \end{align}
    Therefore the remainder for $z>0$ can be bounded as
    \begin{align}
        |R_k(z)| 
        &\leq k \cdot \left|\binom{\i t}{k}\right| \cdot \int_0^z (z-w)^{k-1} |(1+w)^{\i t -k}|\d w \\
        &= k \cdot \left|\binom{\i t}{k}\right| \cdot \int_0^z (z-w)^{k-1} (1+w)^{-k}\d w \\
        &= k \cdot \left|\binom{\i t}{k}\right| \cdot z^k \cdot \int_0^1 (1-u)^{k-1}(1+zu)^{-k}\d u \\
        &= k \cdot \left|\binom{\i t}{k}\right| \cdot z^k \cdot \int_0^1 u^{k-1}((1+z)-zu)^{-k}\d u.
    \end{align}
    We can further simplify the integral via a change of variable $v=\frac{u}{1+z-zu}$. 
    This gives $\d u = \frac{1+z}{(1+zv)^2}\d v$, and 
    \begin{align}
        \int_0^1 u^{k-1}((1+z)-zu)^{-k}\d u
        = \int_0^1 \frac{v^{k-1}}{1+zv}\d v.
    \end{align}
    For $z>0$, the integral is bounded by $1/k$. For $-1<z<0$, the same computation applies (the substitution $w=zu$ reverses the orientation of the integral, which is accounted for by taking absolute values, the map $u\mapsto v$ is still increasing, and $1+zv\geq 1+z>0$), and the integral is bounded by $\frac{1}{k(1+z)}$.
    So overall the integral is bounded by $\frac{1}{k(1+\min\{0,z\})}$.
    For real $z>-1$, we have 
    \begin{align}\label{eq:remainder-bound}
        |R_k(z)| \leq \left|\binom{\i t}{k}\right| \cdot \frac{|z|^k}{1+\min\{0,z\}}.
    \end{align}
    To show closeness between the distribution functions, we must deal with the behavior of $(z+1)^{\i t}$ as $z$ approaches the branch point $-1$.
    While $Z>-1$ almost surely and $Z_k\geq-1$, they can be arbitrarily close to $-1$ (and $Z_k$ may equal $-1$) with non-zero probability. 
    Thus the bound \eq{remainder-bound} on the remainder term $R_k(z)$ is not uniform near $z=-1$ (and $(1+z)^{\i t}$ itself is undefined at $z=-1$), which makes it hard to directly bound the closeness between the distribution functions.

    Let $\ell=k$. (We keep a separate letter for the truncation degree: the argument below uses the design property only for polynomials of degree at most $\ell$ in $X_k$, so it applies to any $1\leq\ell\leq k$ with $a\geq 2\pi\ell$.)
    To deal with this issue, we introduce a derived random variable from $Z$ and $Z_k$: for $W\in\{Z,Z_k\}$, let $\tilde W$ be a muted version of $W$, defined as
    \begin{align}\label{eq:muted}
        \tilde W = \begin{cases}
            W, &\text{if } W\geq -1/2 \\
            0, &\text{if } W<-1/2
        \end{cases}.
    \end{align}
    The distribution functions of $W$ and $\tilde W$ are close:
    \begin{align}
        \Delta_0 &:= \sup_{x\in\R} |F_{a(W+1)Y/N}(x) - F_{a(\tilde W+1)Y/N}(x)| \\\label{eq:muted-bound}
        &\leq \Pr[\tilde W\neq W] 
        = \Pr[W<-1/2] 
        \leq 2^\ell \Exp[|W|^{\ell}] 
        \leq \sqrt{1+\epsilon}\cdot 2^\ell \left(\frac{4\ell}{ea}\right)^{\ell/2} \leq 2 \cdot \left(\frac{16\ell}{ea}\right)^{\ell/2},
    \end{align}
    for $\epsilon\leq 1$, by Markov's inequality.
    Here the bound on $\Exp[|W|^\ell]$ is \cor{absolute-moment} for $W=Z$ (recall $a\geq 2\pi\ell$), while for $W=Z_k$ we use
    \begin{align}\label{eq:zk-moment-transfer}
        \Exp[|Z_k|^\ell]\leq\Exp[Z_k^{2\ell}]^{1/2}\leq\big((1+\epsilon)\Exp[Z^{2\ell}]\big)^{1/2}\leq\sqrt{1+\epsilon}\left(\frac{4\ell}{ea}\right)^{\ell/2},
    \end{align}
    where the second inequality is \thm{polynomial-relative-error} applied to the real polynomial $x\mapsto(Nx/a-1)^\ell$ of degree $\ell\leq k$ in $X_k$, and the third is the bound $\Exp[Z^{2\ell}]\leq(4\ell/(ea))^{\ell}$ from the proof of \cor{absolute-moment} (note that $|Z_k|^\ell$ itself need not be a polynomial in $Z_k$ when $\ell$ is odd).
    For notational convenience, let us denote 
    \begin{align}\label{eq:eta}
    \eta=2\left(\frac{16\ell}{ea}\right)^{\ell/2}.
    \end{align}
    This bound is $2^{-\Omega(k)}$ when $a\geq C k$ for some sufficiently large constant $C$. 
    We then bound the difference of the characteristic functions of $\ln(\tilde Z+1)$ and $\ln(\tilde Z_k+1)$, apply \lem{berry-esseen}, and finally apply a triangle inequality to finish the proof. 

    Recall that $(1+z)^{\i t}$ can be expanded as a Taylor series around $z=0$ as $f(z)=\sum_{m=0}^\infty \binom{\i t}{m} z^m$.
    Let $Q_\ell(Z)=P_\ell(Z)-P_\ell(0) = P_\ell(Z)-1$. 
    Then, for the truncation at degree $\ell-1$, the polynomial $P_\ell$ satisfies
    \begin{align}
        \left| \Exp[P_\ell(Z)]-\Exp[P_\ell(Z_k)]\right| 
        &= \left| \Exp[Q_\ell(Z)]-\Exp[Q_\ell(Z_k)]\right| \leq \epsilon \sqrt{\Exp[|Q_\ell(Z)|^2]}.
    \end{align}
    The inequality is the first-moment bound \eq{relative-error-pq} of \thm{polynomial-relative-error} (with $p\equiv 1$), applied to the polynomial $Q_\ell$ of degree $\ell-1\leq k$ with complex coefficients, viewed as a polynomial in $X_k=\frac{a}{N}(Z_k+1)$.
    We must bound the closeness when $Z$ and $Z_k$ are replaced by their muted versions $\tilde Z$ and $\tilde Z_k$.
    To show this, for $W\in\{Z,Z_k\}$, since $W$ and $\tilde W$ differ only when $W<-1/2$, we have
    \begin{align}
        \left|\Exp_W[P_\ell(W)]-\Exp_{\tilde W}[P_\ell(\tilde W)]\right|
        &= \left|\Exp_{W}\left[(P_\ell(W)-P_\ell(0))\mathbf{1}_{\{W<-1/2\}}\right]\right| \\
        &\leq \sqrt{\eta\Exp_W\left[|P_\ell(W)-P_\ell(0)|^2\right]} 
        \leq \sqrt{\eta\Exp_W\left[|Q_\ell(W)|^2\right]}.
    \end{align}
    Therefore, by the triangle inequality, the bound $\Exp_{Z_k}[|Q_\ell(Z_k)|^2]\leq(1+\epsilon)\Exp_Z[|Q_\ell(Z)|^2]$ from \eq{relative-error-p}, and Minkowski's inequality, we have for $\epsilon\leq 1$, 
    \begin{align}\label{eq:truncation-bound}
        \mathcal I_1(t) 
        &:=\left|\Exp[P_\ell(\tilde Z_k)] - \Exp[P_\ell(\tilde Z)] \right| \\
        &\leq\left|\Exp[P_\ell(Z_k)] - \Exp[P_\ell(Z)] \right| + \sqrt{\eta\Exp_Z\left[|Q_\ell(Z)|^2\right]} + \sqrt{\eta\Exp_{Z_k}\left[|Q_\ell(Z_k)|^2\right]}\\
        &\leq (\epsilon+\sqrt\eta(1+\sqrt{1+\epsilon})) \Exp_Z[|Q_\ell(Z)|^2]^{1/2} 
        \leq  (\epsilon+3\sqrt\eta) \cdot \sum_{j=1}^{\ell-1} \left|\binom{\i t}{j}\right| \cdot \sqrt{\Exp_Z\left[Z^{2j}\right]}.
    \end{align}
    It is crucial that this bound vanishes as $t\to 0$, since we will be applying the Berry--Esseen smoothing lemma, which has a factor of $1/|t|$ in the integral.

    For the remainder term, we can bound the expectation of $|R_\ell(\tilde Z)|$ as
    \begin{align}\label{eq:remainder-bound-2}
        \Exp_Z[|R_\ell(\tilde Z)|]
        &\leq \left|\binom{\i t}{\ell}\right| \cdot \Exp_Z\left[\frac{|\tilde Z|^\ell}{1+\min\{0,\tilde Z\}}\right] \\
        &\leq \left|\binom{\i t}{\ell}\right| \cdot 2 \Exp_Z[|Z|^\ell] 
        \leq \left|\binom{\i t}{\ell}\right| \cdot 2 \cdot \left(\frac{4\ell}{ea}\right)^{\ell/2}.
    \end{align}
    The same bound holds for $\tilde Z_k$ up to an extra factor $\sqrt{1+\epsilon}\leq \sqrt 2$, by \eq{zk-moment-transfer}.

    Now we bound the closeness between $\Exp[(1+\tilde Z)^{\i t}]$ and $\Exp[(1+\tilde Z_k)^{\i t}]$.
    By triangle inequality, we have
    \begin{align}
        \left|\Exp[(1+\tilde Z_k)^{\i t}] - \Exp[(1+\tilde Z)^{\i t}]\right| 
        &\leq \mathcal I_1(t)  + \mathcal I_2(t),
    \end{align}
    where $\mathcal I_1(t)$ is defined in \eq{truncation-bound}, and 
    \begin{align}\label{eq:remainder-bound-3}
        \mathcal I_2(t)
        &:=|\Exp[R_\ell(\tilde Z)]| + |\Exp[R_\ell(\tilde Z_k)]|,
    \end{align}    
    which, by \eq{remainder-bound-2} and the remark following it (note $2+2\sqrt 2\leq 6$), satisfies 
    \begin{align}
        \mathcal I_2(t) 
        &\leq  \left|\binom{\i t}{\ell}\right| \cdot 6 \cdot \left(\frac{4\ell}{ea}\right)^{\ell/2} = \frac{3\eta}{2^{\ell}} \cdot \left|\binom{\i t}{\ell}\right|.
    \end{align}

    We now apply \lem{berry-esseen} to the random variables $\ln(\tilde X_kY)$ and $\ln(\tilde XY)$, where $\tilde X:=\frac{a}{N}(\tilde Z+1)$ and $\tilde X_k:=\frac{a}{N}(\tilde Z_k+1)$; since $\ln$ is increasing, their Kolmogorov distance equals that between $\tilde X_kY$ and $\tilde XY$. The density of $\ln(\tilde XY)=\ln\tilde X+\ln Y$ is bounded by the maximum of the density of $\ln Y$, which is $(a-1)e^u(1-e^u)^{a-2}\leq 1$ for $u<0$, so we may take $M=1$ and let $T\to\infty$ in \lem{berry-esseen} (the integrals below converge). This amounts to integrating every part of the bound against the kernel $1/|t|$ multiplied by the characteristic function of $\ln Y$, which is $\varphi_{\ln Y}(t)=\Exp[Y^{\i t}]=\frac{\Gamma(a)\Gamma(1+\i t)}{\Gamma(a+\i t)}=\binom{\i t+a-1}{a-1}^{-1}$.
    For the first part, by \lem{binomial-integral} applied with $a-1$ in place of $a$ (note that $a-1\geq j+2$), we have
    \begin{align}
        \Delta_1 &:= \int_{-\infty}^\infty \frac{\mathcal I_1(t)}{|t|} |\varphi_{\ln Y}(t)| \d t \\
        &\leq (\epsilon+3\sqrt\eta) \cdot \sum_{j=1}^{\ell-1} \sqrt{\Exp[Z^{2j}]} \cdot \int_{-\infty}^\infty \frac{1}{|t|}\left|\binom{\i t}{j}\right| \cdot |\varphi_{\ln Y}(t)| \d t \\\label{eq:delta1-asymptotic}
        &\leq (\epsilon+3\sqrt\eta) \cdot \sum_{j=1}^{\ell-1} \left(\frac{4j}{ea}\right)^{j/2} \cdot \frac{a-1}{j}\frac{\sqrt{\pi}\Gamma((a-j-1)/2)}{\Gamma((a-j)/2)} 
        = O(\epsilon+\sqrt{\eta}).
    \end{align}
    The asymptotic form in \eq{delta1-asymptotic} can be shown by bounding the sequence in the summation by a decreasing sequence.
    In more detail, denoting $\beta_j:=\left(\frac{4j}{ea}\right)^{j/2} \cdot \frac{a-1}{j}\frac{\sqrt{\pi}\Gamma((a-j-1)/2)}{\Gamma((a-j)/2)}$,
    we compute the ratio $\beta_{j+1}/\beta_j$ as
    \begin{align}
        \frac{\beta_{j+1}}{\beta_j}
        &= \left(1+\frac{1}{j}\right)^{j/2}\left(\frac{4(j+1)}{ea}\right)^{1/2} \cdot \frac{a-1}{j+1} \frac{\sqrt{\pi}\Gamma((a-j-2)/2)}{\Gamma((a-j-1)/2)} \cdot \frac{j}{a-1}\frac{\Gamma((a-j)/2)}{\sqrt{\pi}\Gamma((a-j-1)/2)} \\
        &= \left(1+\frac{1}{j}\right)^{(j-1)/2} \left(\frac{4j}{ea}\right)^{1/2} \frac{\Gamma((a-j-2)/2)\Gamma((a-j)/2)}{\Gamma((a-j-1)/2)^2}  \\
        &= \left(1+\frac{1}{j}\right)^{(j-1)/2}\left(\frac{4j}{ea}\right)^{1/2} \cdot \frac{2}{a-j-2}\frac{\Gamma((a-j)/2)^2}{\Gamma((a-j-1)/2)^2} \\\label{eq:ratio-bound}
        &\leq 2\sqrt{\frac{j}{a}}\left(\frac{a-j}{a-j-2}\right) 
        \leq 2\sqrt{\frac{\ell}{2\pi\ell}}\left(1 + \frac{2}{(2\pi-1)\ell-1}\right) 
        \leq \sqrt{\frac{2}{\pi}} \cdot \left(1+\frac{2}{4\pi-3} \right) < 0.965,
    \end{align}
    since $\ell\geq 3$ and $a\geq 2\pi k=2\pi\ell$ (for $\ell\leq 2$ the sum has at most one term), where the fourth line uses $(1+1/j)^{(j-1)/2}\leq\sqrt e$ and the inequality $\Gamma(x+1/2)/\Gamma(x)\leq\sqrt x$ for $x>0$ (Wendel's inequality). 
    The bound in \eq{ratio-bound} implies that the summation can be bounded by a geometric series with ratio strictly smaller than 1.
    Thus the entire series can be bounded by a constant multiple of the first term, $\beta_1$, which is at most an absolute constant: by the inequality $\Gamma(x)/\Gamma(x+1/2)\leq\sqrt{x+1/2}/x$ (also due to Wendel), $\beta_1\leq 2\sqrt{2\pi/e}\,(a-1)^{3/2}/(\sqrt a(a-2))\leq 4$ for $a\geq 7$. 
    For the second part, we have
    \begin{align}
        \Delta_2 &:=\int_{-\infty}^\infty \frac{\mathcal I_2(t)}{|t|} |\varphi_{\ln Y}(t)| \d t \leq \frac{3\eta}{2^\ell} \cdot \int_{-\infty}^\infty \frac{1}{|t|}\left|\binom{\i t}{\ell}\right| \cdot |\varphi_{\ln Y}(t)| \d t \\
        &\leq  \frac{3\eta}{2^\ell} \cdot \frac{a-1}{\ell}\frac{\sqrt{\pi}\Gamma((a-\ell-1)/2)}{\Gamma((a-\ell)/2)} = O\left(\frac{\eta}{2^\ell}\cdot \frac{\sqrt a}{\ell}\right) = O\left(\frac{1}{\sqrt\ell}\left(\frac{4\ell}{ea}\right)^{(\ell-1)/2}\right),
    \end{align}
    where we used \lem{binomial-integral} with $a-1$ in place of $a$ and the identity $\eta\sqrt a/(2^\ell\ell)=\frac{4}{\sqrt{e\ell}}(4\ell/(ea))^{(\ell-1)/2}$.
    Finally, by the triangle inequality (the muting bound \eq{muted-bound} applied to $W=Z$ and to $W=Z_k$, and \lem{berry-esseen} for the muted variables, where we drop the factor $1/\pi$), we have
    \begin{align}
        \sup_{x\in\R}|F_{X_kY}(x) - F_{XY}(x)|
        &\leq 2\eta + \Delta_1 + \Delta_2 
        \leq O\left(\eta+\sqrt{\eta}+\epsilon\right) + O\left(\frac{1}{\sqrt\ell}\left(\frac{4\ell}{ea}\right)^{(\ell-1)/2}\right).
    \end{align}
    For $a\geq 2\pi k=2\pi\ell$, $\frac{16\ell}{ea}\leq \frac{8}{\pi e}< 0.94$ and $\frac{4\ell}{ea}<0.24$; therefore $\sqrt{\eta}<\sqrt 2\cdot 0.94^{\ell/4}=2^{-\Omega(k)}$, and the last term is $2^{-\Omega(k)}$ for $\ell\geq 2$ (for $\ell=k=1$ the claim is trivial since the Kolmogorov distance is at most $1$).
\end{proof}

\subsection{Lift to closeness in total variation}\label{sec:locally-invariant-tv}

Our approach to prove closeness in distribution makes it possible to lift the convergence in distribution to convergence in total variation. 

We first prove an analogue of \lem{binomial-integral} that will be useful for proving \thm{convergence-total-variation}.
\begin{lemma}\label{lem:integral-bound}
        For integers $j\geq 1$ and $a\geq 2j+3$,
        \begin{align}
            \int_{-\infty}^{\infty} \left|\binom{\i t}{j}\right| \cdot \left|\binom{\i t+a-1}{a-1}\right|^{-1} \d t < 16.
        \end{align}
\end{lemma}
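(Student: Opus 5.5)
The plan is to follow the proof of \lem{binomial-integral}, but without the factor $1/|t|$. The first step is to write both factors as products. Since $\left|\binom{\i t}{j}\right| = \frac{1}{j!}\prod_{m=0}^{j-1}|\i t-m|$ and $\left|\binom{\i t+a-1}{a-1}\right|^{-1} = (a-1)!\prod_{m=1}^{a-1}|\i t+m|^{-1}$, we have $|\i t - m| = |\i t + m| = \sqrt{t^2+m^2}$. This lets us cancel the factors with $1\le m\le j-1$ that appear in both products. What remains is
\begin{align}
    \left|\binom{\i t}{j}\right|\left|\binom{\i t+a-1}{a-1}\right|^{-1}
    = \frac{(a-1)!}{j!\,(j-1)!^{2}}\cdot |t|\cdot\prod_{m=j}^{a-1}(t^2+m^2)^{-1/2}\cdot \prod_{m=1}^{j-1}\frac{1}{1}\,,
\end{align}
after the constants are arranged. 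More cleanly, one gets $\frac{|t|}{j}\prod_{m=j}^{a-1}\bigl(1+t^2/m^2\bigr)^{-1/2}$ times $\frac{(a-1)!}{(j-1)!\,\prod_{m=j}^{a-1} m}=1$. I will verify this normalization directly: the integrand equals
\begin{align}
    \frac{|t|}{j}\prod_{m=j}^{a-1}\Bigl(1+\frac{t^2}{m^2}\Bigr)^{-1/2}.
\end{align}
This is exactly the computation in \lem{binomial-integral} with $a-1$ in place of $a$, except that the leftover power of $|t|$ is now in the numerator.

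The second step bounds the product. Each factor satisfies $1+t^2/m^2\ge 1+t^2/(a-1)^2$, and there are $a-j$ factors. Hence the integrand is at most $\frac{|t|}{j}\bigl(1+t^2/(a-1)^2\bigr)^{-(a-j)/2}$. Substituting $t=(a-1)s$, the integral is at most
\begin{align}
    \frac{(a-1)^2}{j}\int_{-\infty}^\infty |s|(1+s^2)^{-(a-j)/2}\,\d s = \frac{2(a-1)^2}{j(a-j-2)},
\end{align}
because $\int_0^\infty s(1+s^2)^{-p}\,\d s = \frac{1}{2(p-1)}$ for $p>1$, and here $p=(a-j)/2$. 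The condition $a\ge 2j+3$ ensures $a-j-2\ge j+1>0$.

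The obstacle is that this crude bound grows like $a/j$ and does not give an absolute constant. The crude step replaces every $m$ by $a-1$, which throws away the decay coming from the small factors $m\approx j$. To fix this, I will keep the factors with $m\in[j,2j]$ separately, or more simply use $\prod_{m=j}^{a-1}(1+t^2/m^2)\ge 1+t^2\sum_{m=j}^{a-1} m^{-2}$ together with higher elementary symmetric terms. The plan is to use $\prod_m(1+x_m)\ge \bigl(1+\tfrac{t^2}{M}\bigr)^{r}$ with the $r=j+1$ smallest indices $m\in[j,2j]$, where $M=(2j)^2$. That block has at least $j+1\ge 2$ factors, which $a\ge 2j+3$ guarantees. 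The integrand is then at most $\frac{|t|}{j}(1+t^2/(4j^2))^{-(j+1)/2}$. Substituting $t=2js$ gives
\begin{align}
    \frac{4j^2}{j}\cdot\frac{2}{2\bigl(\tfrac{j+1}{2}-1\bigr)}=\frac{8j}{j-1}
\end{align}
for $j\ge 2$, which is at most $16$. The case $j=1$ needs the exponent to exceed $1$. There I will instead take the block $m\in[1,a-1]$ with $a\ge5$, giving at least $4$ factors, each bounded using $m\le 4$. That gives $|t|(1+t^2/16)^{-2}$, whose integral is $16\cdot\frac{2}{2}\cdot\frac{1}{1}=16$. This is borderline, so I will use $m\le3$ for three of the factors together with the fourth factor to get strict inequality. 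Checking these constants carefully, especially the strict inequality, is the only delicate step.
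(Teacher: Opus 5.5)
Your argument is correct and is essentially the paper's. The paper also reduces the integrand to $\frac{|t|}{j}\prod_{r=j}^{a-1}(1+t^2/r^2)^{-1/2}$, discards the large-index factors (phrased as monotonicity in $a$, which reduces to $a=2j+3$), and bounds the surviving factors by the largest index, obtaining $\frac{2(2j+2)^2}{j(j+1)}=8(1+1/j)\le 16$. Keeping the block $r\in[j,2j+2]$ instead of your $[j,2j]$ leaves $j+3\ge 4$ factors for every $j$, so your separate $j=1$ case (which coincides with the paper's $j=1$ computation) is not needed; in both versions the strict inequality at the borderline value of $j$ holds because the pointwise bound is strict for $t\neq 0$.
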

    \begin{proof}
        The integral as a function of $a$ is decreasing, so it suffices to consider the case where $a=2j+3$.
        By direct calculation, we have
        \begin{align}
            &\int_{-\infty}^{\infty} \left|\binom{\i t}{j}\right| \cdot \left|\binom{\i t+a-1}{a-1}\right|^{-1} \d t
            = \frac{2}{j} \int_0^\infty t \prod_{r=j}^{a-1}\left(1 + \frac{t^2}{r^2}\right)^{-1/2} \d t \\
            &\qquad< \frac{2}{j} \int_0^\infty t \left(1 + \frac{t^2}{(a-1)^2}\right)^{-(a-j)/2} \d t 
            \leq \frac{(a-1)^2}{j} \int_1^\infty \frac{\d u}{u^{(a-j)/2}} \\
            &\qquad\leq \frac{2(a-1)^2}{j(a-j-2)} 
            = \frac{2(2j+2)^2}{j(j+1)} = 8\left(1+\frac{1}{j}\right) \leq 16.
        \end{align}
    \end{proof}

\begin{theorem}\label{thm:convergence-total-variation}
    Under the assumptions of \thm{convergence-distribution} (in particular $\epsilon\in[0,1]$),
    \begin{align}
        d_{\mathrm{TV}}(X_kY, XY) = O(\epsilon\log(2/\epsilon)) + 2^{-\Omega(k)},
    \end{align}
    where at $\epsilon=0$ the first term is interpreted as zero.
\end{theorem}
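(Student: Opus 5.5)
We work in logarithmic coordinates, as in the proof of \thm{convergence-distribution}, and reuse the muted variables $\tilde Z,\tilde Z_k$ from \eq{muted}. Set $\tilde X=\frac aN(\tilde Z+1)$ and $\tilde X_k=\frac aN(\tilde Z_k+1)$. Since $\ln$ is a bijection, the total variation distance is unchanged, so $d_{\mathrm{TV}}(X_kY,XY)=d_{\mathrm{TV}}(\ln(X_kY),\ln(XY))$. By \eq{muted-bound}, muting moves each law by at most $\eta$ in total variation (the variables differ only on an event of probability at most $\eta$). It therefore suffices to compare $A_k:=\ln\tilde X_k+\ln Y$ and $A:=\ln\tilde X+\ln Y$. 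Both have bounded densities, since $\ln Y$ has density at most $1$ and is independent of the other summand.

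First, I would bound the densities pointwise by Fourier inversion: $\sup_u|f_{A_k}(u)-f_A(u)|\le\int|\varphi_{A_k}(t)-\varphi_A(t)|\,\d t$. The characteristic functions factor as in the previous proof, with $|\varphi_{A_k}-\varphi_A|\le(\mathcal I_1(t)+\mathcal I_2(t))|\varphi_{\ln Y}(t)|$. I would integrate the bounds \eq{truncation-bound} and $\mathcal I_2$ against $|\varphi_{\ln Y}(t)|$, now without the factor $1/|t|$. This is exactly the quantity controlled by \lem{integral-bound}, which bounds each term $\int|\binom{\i t}{j}||\varphi_{\ln Y}|\,\d t$ by $16$ as long as $a\ge 2j+3$; this holds because $a\ge2\pi\ell$. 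Summing $\sqrt{\Exp[Z^{2j}]}\le(4j/(ea))^{j/2}$ over $j$ gives a convergent geometric series. The result is $\|f_{A_k}-f_A\|_\infty\le\delta:=O(\epsilon+\sqrt\eta)+O(\eta 2^{-\ell})$, and for $a\ge2\pi k$ this is $O(\epsilon)+2^{-\Omega(k)}$.

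Second, I would convert the $L^\infty$ bound into an $L^1$ bound. Let $c=\ln(a/N)$ be the location of the mean of $\ln\tilde X$, and take the window $I=[c-R,c+R]$. Then $\int_I|f_{A_k}-f_A|\le2R\delta$. Outside $I$, I would use tail bounds. Under Haar, $XY\sim\Beta(1,N-1)$, so in the variable $s=NXY$ the law is roughly exponential with mean $1$. The upper tail $\Pr[\ln(XY)>c'+R]$ therefore decays like $e^{-e^{R}}$, and the lower tail $\Pr[NXY<e^{-R}]\le e^{-R}$. The muting shifts these probabilities only by $\eta$.

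For the design law, I would avoid a density bound outside $I$. The total mass of $A_k$ outside $I$ is $\Pr[A_k\notin I]\le\Pr[A\notin I]+2d_{\mathrm K}(A_k,A)$, where the Kolmogorov term is $O(\epsilon)+2^{-\Omega(k)}$ by \thm{convergence-distribution} and its muted analogue. Hence $\int_{I^c}|f_{A_k}-f_A|\le2\Pr[A\notin I]+2d_{\mathrm K}=O(e^{-R})+O(\epsilon)+2^{-\Omega(k)}$.

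Summing the pieces gives $d_{\mathrm{TV}}\le2\eta+R\delta+O(e^{-R})+O(\epsilon)+2^{-\Omega(k)}$. I would choose $R=\ln(2/\delta)$, which gives $O(\delta\log(2/\delta))$. Since $\delta=O(\epsilon)+2^{-\Omega(k)}$ and $x\log(1/x)$ is increasing for small $x$, this is $O(\epsilon\log(2/\epsilon))+2^{-\Omega(k)}$, because $k2^{-\Omega(k)}$ is still $2^{-\Omega(k)}$. At $\epsilon=0$ only the $2^{-\Omega(k)}$ term remains.

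The main obstacle I expect is the Fourier step. Without the $1/|t|$ factor, integrability at large $|t|$ relies entirely on the decay of $|\varphi_{\ln Y}|$ against $|\binom{\i t}{j}|$, which grows like $|t|^j$. This is why \lem{integral-bound} requires $a\ge2j+3$. I also need to confirm that the $j=\ell$ remainder term $\mathcal I_2$ stays integrable, which requires $a\ge2\ell+3$; this follows from $a\ge2\pi\ell$. A secondary point is to justify Fourier inversion for the muted variables: $|\varphi_{\ln Y}|$ is integrable for $a\ge3$, so both densities exist and are continuous.
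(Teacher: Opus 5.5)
Your proposal follows the paper's proof essentially step for step. Both arguments mute, bound the $L^\infty$ distance of the log-densities by Fourier inversion with \lem{integral-bound} (the paper's $\gamma_k$ is your $\delta$), and use a window of half-width $R$. Outside the window both use the Haar tails of $NXY$ plus twice the Kolmogorov distance from \thm{convergence-distribution}. Both then choose $R=\ln(2/\gamma_k)$ and split $\gamma_k\ln(2/\gamma_k)$ into an $\epsilon$-part and a $2^{-\Omega(k)}$-part.

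One slip needs fixing: the center of the window. It must sit at the location of $A=\ln\tilde X+\ln Y$, namely $-\ln N$, the logarithm of the mean of $XY$. This is the paper's window $|\ln(N\tilde XY)|\le R$. Your stated center $c=\ln(a/N)$ is the location of $\ln\tilde X$ alone. Since $\ln Y$ sits near $-\ln a$, centering at $\ln(a/N)$ makes the lower tail $\Pr[NXY<ae^{-R}]\approx\min\{1,ae^{-R}\}$. With $R=\ln(2/\delta)$ this is of order $\min\{1,a\delta\}$. That is not controlled, because $a$ is not bounded above in the hypotheses: for fixed $\epsilon>0$ and $a\to\infty$ it tends to $1$. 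Your actual tail estimates, $\Pr[NXY<e^{-R}]\le e^{-R}$ and $\Pr[NXY>e^{R}]\le e^{-e^R/2}$, are exactly the ones for the correct center. So you only need to define $I$ consistently with them.

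Also dispose of the trivial case $\delta\ge 2$, where $R=\ln(2/\delta)\le 0$; the paper handles this by assuming $\gamma_k\le 1$.
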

\begin{proof}
    We borrow the notations introduced in the proof of \thm{convergence-distribution}.
    We first show that the infinity norm of the difference between the density functions of $\ln((\tilde Z_k+1)Y)$ and $\ln((\tilde Z+1)Y)$ is small (these random variables have densities since $\tilde Z,\tilde Z_k\geq-1/2$ and $Y$ has a density), and then convert this into an $L^1$ bound by a truncation argument. 

    Since the density function of a random variable is the inverse Fourier transform of its characteristic function (the characteristic functions are integrable here because $|\varphi_{\ln Y}(t)|=O(|t|^{-(a-1)})$; as in \sec{berry-esseen}, we omit the factor $1/(2\pi)$), we can bound the infinity norm of the difference as
    \begin{align}\label{eq:infinite-norm-bound}
        \sup_{x\in\R} |f_{\ln((\tilde Z_k+1)Y)}(x) - f_{\ln((\tilde Z+1)Y)}(x)| 
        &\leq \int_{-\infty}^\infty (\mathcal I_1(t) + \mathcal I_2(t)) \cdot |\varphi_{\ln Y}(t)| \d t,
    \end{align}
    where $\mathcal I_1(t)$ and $\mathcal I_2(t)$ are defined as in \eq{truncation-bound} and \eq{remainder-bound-3} respectively. 
    Define $\eta$ as in \eq{eta}. 

    By \lem{integral-bound} (applicable since $a\geq 2\pi\ell\geq 2\ell+3$), we have
    \begin{align}
        \int_{-\infty}^\infty \mathcal I_1(t) \cdot |\varphi_{\ln Y}(t)| \d t 
        &\leq (\epsilon+3\sqrt\eta) \cdot \sum_{j=1}^{\ell-1} \sqrt{\Exp[Z^{2j}]} \cdot \int_{-\infty}^\infty \left|\binom{\i t}{j}\right| \cdot \left|\binom{\i t+a-1}{a-1}\right|^{-1} \d t \\
        &\leq 16 \cdot (\epsilon+3\sqrt\eta) \cdot \sum_{j=1}^{\ell-1} \left(\frac{4j}{ea}\right)^{j/2}.
    \end{align}
    Letting $r_j = \left(\frac{4j}{ea}\right)^{j/2}$, we use the ratio test: for $1\leq j<\ell$, 
    \begin{align}
        \frac{r_{j+1}}{r_j} = \frac{\left(\frac{4(j+1)}{ea}\right)^{(j+1)/2}}{\left(\frac{4j}{ea}\right)^{j/2}} = \left(\frac{j+1}{j}\right)^{j/2} \cdot \left(\frac{4(j+1)}{ea}\right)^{1/2} \leq 2\cdot \sqrt{\frac{\ell}{a}} < \sqrt{\frac{2}{\pi}} < 0.8,
    \end{align}
    and therefore the summation is bounded by $r_1/(1-0.8)\leq 5$, since $r_1=(4/(ea))^{1/2}<1$. 

    For the second term in \eq{infinite-norm-bound}, again by \lem{integral-bound}, we have 
    \begin{align}
        \int_{-\infty}^\infty \mathcal I_2(t) \cdot |\varphi_{\ln Y}(t)| \d t
        &\leq 6 \cdot \left(\frac{4\ell}{ea}\right)^{\ell/2} \cdot \int_{-\infty}^\infty \left|\binom{\i t}{\ell}\right| \cdot |\varphi_{\ln Y}(t)| \d t \\
        &\leq 96 \cdot \left(\frac{4\ell}{ea}\right)^{\ell/2}  %
        = O\left(\frac{\eta}{2^{\ell}}\right).
    \end{align}
    Therefore, \eq{infinite-norm-bound} is at most 
    \begin{align}
        \gamma_k := 80(\epsilon+3\sqrt\eta)+48\eta 2^{-\ell} = 2^{-\Omega(k)}+O(\epsilon)
    \end{align}
    for $a\geq 2\pi k$ and $\ell=k$. We may assume $\gamma_k\leq 1$: otherwise $\epsilon\geq 1/240$ or $\eta\geq 2^{-18}$ (if $\epsilon<1/240$ and $\eta<2^{-18}$, then $\gamma_k<\frac13+240\cdot 2^{-9}+48\cdot 2^{-19}<1$), so the right-hand side of the claimed bound is $\Omega(1)$ (for a suitable choice of the constants) and the claim is trivial because the total variation distance is at most $1$.

    To obtain the closeness in total variation, we first show that $\ln((\tilde Z_k+1)Y)$ and $\ln((\tilde Z+1)Y)$ have densities that are close in $L^1$ norm.
    By \thm{convergence-distribution}, since the Kolmogorov distance between $XY$ and $X_kY$ is $O(\epsilon) + 2^{-\Omega(k)}$, the distance between the muted random variables $\tilde X_kY$ and $\tilde XY$, where $\tilde X=a(\tilde Z+1)/N$ and $\tilde X_k=a(\tilde Z_k+1)/N$, is also small:
    \begin{align}
        \delta_k &:= \sup_{x\in\mathbb R}|F_{\tilde X_k Y}(x) - F_{\tilde X Y}(x)| \\
        &\leq \sup_{x\in\mathbb R}|F_{X_k Y}(x) - F_{XY}(x)| + \Pr[Z<-1/2] + \Pr[Z_k<-1/2] \\
        &= O(\epsilon) + 2^{-\Omega(k)}.
    \end{align}
    Since for $W\in\{Z,Z_k\}$, $\tilde W\geq-1/2$, we have $(\tilde W+1)\in[1/2,N/a]$ and $\ln((\tilde W+1)Y)\in(-\infty,\ln(N/a)]$. 
    We then choose a cutoff $R>0$ and optimize the bound by choosing $R$ appropriately:
    \begin{align}
        \|f_{\ln((\tilde Z_k+1)Y)} - f_{\ln((\tilde Z+1)Y)}\|_{1} 
        &\leq \int_{-R-\ln a}^{R-\ln a} \gamma_k \d x + \Pr[|\ln(a(\tilde Z+1)Y)| > R] + \Pr[|\ln(a(\tilde Z_k+1)Y)| > R] \\
        &\leq \int_{-R-\ln a}^{R-\ln a} \gamma_k \d x + 2\cdot \Pr[|\ln(a(\tilde Z+1)Y)| > R] + 2\delta_k \\\label{eq:tail2}
        &\leq \int_{-R-\ln a}^{R-\ln a} \gamma_k \d x + 2\cdot \Pr[|\ln(a(Z+1)Y)| > R] + 2\eta + 2\delta_k,
    \end{align}
    where the last inequality uses the fact 
    \begin{align}
        \Pr[\tilde Z\neq Z] = \Pr[Z<-1/2] \leq \eta.
    \end{align}
    Note that the random variable $\ln(a(Z+1)Y)$ in \eq{tail2} is the logarithm of the scaled $\Beta(1,N-1)$ random variable $NXY$, i.e., $\ln(a(Z+1)Y)=\ln(NXY)$ with $XY\sim\Beta(1,N-1)$ by the product rule \eq{beta-product}.
    Then by directly using the probability density function of $XY$, we have for $R>0$,
    \begin{align}
        \Pr[\ln(NXY)<-R] \leq e^{-R}, \qquad \Pr[\ln(NXY)>R] \leq e^{-e^R/2} \leq e^{-R}.
    \end{align}
    This implies that
    \begin{align}
        \|f_{\ln((\tilde Z_k+1)Y)} - f_{\ln((\tilde Z+1)Y)}\|_{1} 
        &\leq 2R\gamma_k + 4e^{-R} + 2\eta + 2\delta_k.
    \end{align}
    We optimize the bound by choosing $R=\ln(2/\gamma_k)>0$, which gives
    \begin{align}
        \|f_{\ln((\tilde Z_k+1)Y)} - f_{\ln((\tilde Z+1)Y)}\|_{1} \leq 2\gamma_k\ln(2/\gamma_k)+2\gamma_k+2\eta+2\delta_k.
    \end{align}
    Writing $\gamma_k=C\epsilon+\rho$ with $C=80$ and $\rho=240\sqrt\eta+48\eta2^{-\ell}=2^{-\Omega(k)}$, and using that $g(u)=u\ln(2/u)$ is concave on $(0,2]$ with $g(0^+)=0$, hence subadditive there, we get $\gamma_k\ln(2/\gamma_k)\leq g(C\epsilon)+g(\rho)\leq C\epsilon\ln(2/\epsilon)+2^{-\Omega(k)}$. Together with $\gamma_k+\eta+\delta_k=O(\epsilon)+2^{-\Omega(k)}$ and $\log(2/\epsilon)\geq 1$ for $\epsilon\leq 1$, the right-hand side is $O(\epsilon\log(2/\epsilon)) + 2^{-\Omega(k)}$.
    Since the total variation distance is invariant under the bijection $u\mapsto e^u$ and equals half the $L^1$ distance between densities \eq{tv-l1}, this bounds $d_{\mathrm{TV}}((\tilde Z_k+1)Y,(\tilde Z+1)Y)$. It remains to show that the total variation distance between $(W+1)Y$ and $(\tilde W+1)Y$ is small for $W\in\{Z,Z_k\}$. 
    To see this, we apply the coupling technique: $\tilde W$ and $W$ are defined on the same probability space as in \eq{muted} and differ only on the event $\{W<-1/2\}$, so by \eq{muted-bound}
    \begin{align}
        d_{\mathrm{TV}}\big((\tilde W+1)Y,(W+1)Y\big) \leq \Pr[W<-1/2] \leq 2 \cdot \left(\frac{16\ell}{ea}\right)^{\ell/2}=\eta.
    \end{align}
    For $a\geq 2\pi\ell$, this is $2^{-\Omega(k)}$.
    Now combining the bounds through the triangle inequality
    \begin{align}
        d_{\mathrm{TV}}(X_kY,XY)\leq d_{\mathrm{TV}}(X_kY,\tilde X_kY)+d_{\mathrm{TV}}(\tilde X_kY,\tilde XY)+d_{\mathrm{TV}}(\tilde XY,XY),
    \end{align}
    and recalling that scaling by $a/N$ does not change the total variation distance, we obtain
    \begin{align}
        d_{\mathrm{TV}}(X_kY,XY) = O(\epsilon\log(2/\epsilon)) + 2^{-\Omega(k)}.
    \end{align}
\end{proof}

Combining \lem{block-factorization} with \thm{convergence-distribution} and \thm{convergence-total-variation} yields the statements announced in \thm{ks-locally-invariant} and \thm{tv-locally-invariant}.

\begin{corollary}\label{cor:locally-invariant-entry}
    Let $k$ be a positive integer, let $a=2^m\geq 2\pi k$ and $N=2^n>2a$, and let $\epsilon\in[0,1]$.
    Let $\mu(k,\epsilon)$ be an $m$-locally invariant $\epsilon$-approximate unitary $k$-design on $n$ qubits, and for fixed $x,y\in\bit^n$ let $\nu(k,\epsilon)$ be the law of $|\bra{x}U\ket{y}|^2$ for $U\sim\mu(k,\epsilon)$, and let $\nu$ be the law of $|\bra{x}U\ket{y}|^2$ for a Haar random $U$ on $n$ qubits, i.e., $\nu=\Beta(1,N-1)$.
    Then
    \begin{align}
        d_{\mathrm K}\big(\nu(k,\epsilon),\nu\big)\leq 2^{-\Omega(k)}+O(\epsilon),
        \qquad
        d_{\mathrm{TV}}\big(\nu(k,\epsilon),\nu\big)\leq 2^{-\Omega(k)}+O(\epsilon\log(2/\epsilon)).
    \end{align}
\end{corollary}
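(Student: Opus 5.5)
The corollary should follow by chaining \lem{block-factorization} with \thm{convergence-distribution} and \thm{convergence-total-variation}; no new estimates are needed. The only work is to check that the hypotheses line up and that both distances pass through the identification of laws.

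\textbf{Step 1: identify the laws.}
Let $S$ be the invariant qubit set of $\mu(k,\epsilon)$ given by \dfn{locally-invariant}, and let $B(x)$ be the block of $a=2^m$ basis states agreeing with $x$ outside $S$. Put $X_k:=\sum_{r\in B(x)}|\bra{r}U\ket{y}|^2$ for $U\sim\mu(k,\epsilon)$.

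By \lem{block-factorization}, applicable since $N>2a$ forces $m<n$, the law $\nu(k,\epsilon)$ equals the law of $X_kY$, where $Y\sim\Beta(1,a-1)$ is independent of $X_k$. For Haar $U$, the same lemma gives that the block probability $X$ is distributed as $\Beta(a,N-a)$ and that $XY\sim\Beta(1,N-1)=\nu$.

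\textbf{Step 2: check the hypotheses.}
$X_k$ is the probability of a fixed block $S=B(x)$ of $a$ basis states in the fixed column $y$, for $U$ drawn from an $\epsilon$-approximate $k$-design on $\mathrm U(N)$. This is exactly the setting of \thm{convergence-distribution}. Its numerical conditions are $a\geq 2\pi k$, $N>2a$ and $\epsilon\in[0,1]$, which are the hypotheses of the corollary, and $Y$ is independent of both $X_k$ and $X$ as required.

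\textbf{Step 3: conclude.}
Both distances depend only on the laws involved, so
\[d_{\mathrm K}(\nu(k,\epsilon),\nu)=d_{\mathrm K}(X_kY,XY)\leq 2^{-\Omega(k)}+O(\epsilon)\]
by \thm{convergence-distribution}. Likewise,
\[d_{\mathrm{TV}}(\nu(k,\epsilon),\nu)=d_{\mathrm{TV}}(X_kY,XY)\leq 2^{-\Omega(k)}+O(\epsilon\log(2/\epsilon))\]
by \thm{convergence-total-variation}, under the same hypotheses.

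\textbf{Main obstacle.}
The only delicate point is that \lem{block-factorization} states equality in law of $|\bra{x}U\ket{y}|^2$ with $X_kY$, where $Y$ is built from an auxiliary Haar unitary $V$. One must make sure that the $X_k$ appearing there is the block probability of the original design $U$, not of $(V\otimes I)U$. This holds because $V$ acts within the block and leaves $X_k$ unchanged, as noted in that proof, so the design condition via \thm{polynomial-relative-error} applies to this $X_k$.
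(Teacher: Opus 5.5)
Your proposal is correct and matches the paper's proof: \lem{block-factorization} identifies $\nu(k,\epsilon)$ and $\nu$ as the laws of $X_kY$ and $XY$, and the two bounds then follow directly from \thm{convergence-distribution} and \thm{convergence-total-variation}. Your side checks are sound. Since $a\geq 2\pi k$ forces $m\geq 3$ and $N>2a$ gives $m<n$, the lemma applies. Also, $X_k$ is the block probability of the original design, and $V$ leaves it unchanged.
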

\begin{proof}
    By \lem{block-factorization}, $\nu(k,\epsilon)$ is the law of $X_kY$, where $X_k$ is the probability of the block $B(x)$ in column $y$ for $U\sim\mu(k,\epsilon)$ and $Y\sim\Beta(1,a-1)$ is independent of $X_k$, while $\nu$ is the law of $XY$ for $X\sim\Beta(a,N-a)$ independent of $Y$.
    The claims are therefore \thm{convergence-distribution} and \thm{convergence-total-variation}.
\end{proof}

\subsection{Local invariance on a single qubit}\label{sec:one-local-invariance}

In this section, we show that the total variation distance is at most $O(k^{-1/2} + \epsilon)$ for $1$-locally invariant $\epsilon$-approximate unitary $k$-design.  

First, we show that we can bound the total variation distance through pointwise closeness between the distribution functions. 

\begin{lemma}\label{lem:total-variation-df}
    Let $\mu,\nu$ be two distributions on $[0,\infty)$.
    For a distribution $\rho$ on $[0,\infty)$, let $\mathcal M\rho$ be the distribution of $XY$ for independent $X\sim\rho$ and $Y\sim\Uniform[0,1]$ (the uniform distribution on $[0,1]$), and let $F_\rho$ be the distribution function of $\rho$.
    Then for every $b>0$,
    \begin{align}
        d_{\mathrm{TV}}(\M\mu,\M\nu) \leq F_\mu(b) + F_\nu(b) + \int_b^\infty \frac1t |F_\mu(t)-F_\nu(t)| \d t.
    \end{align}
\end{lemma}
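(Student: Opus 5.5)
The plan is to write down the density of $\M\rho$ explicitly, express the difference of the two densities through $\Delta(t):=F_\mu(t)-F_\nu(t)$ by an integration by parts, and then bound its $L^1$ norm, treating $[0,b]$ and $[b,\infty)$ separately.

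\textbf{Density of $\M\rho$.} For $X\sim\rho$ and an independent $Y\sim\Uniform[0,1]$, conditioning on $X$ gives, for $x>0$,
\begin{align}
    \Pr[XY>x]=\Exp\big[(1-x/X)\,1_{X>x}\big].
\end{align}
Differentiating in $x$ (justified by monotone or dominated convergence, or simply by verifying the formula through Fubini), the density is
\begin{align}
    g_\rho(x)=\Exp\big[X^{-1}1_{X>x}\big]=\int_{(x,\infty)}t^{-1}\,\d\rho(t).
\end{align}
An atom of $\rho$ at $0$ only produces an atom of $\M\rho$ at $0$ of the same mass, which is at most $F_\rho(b)$. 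I will carry this atom along separately. By Scheff\'e's identity \eq{tv-l1}, $2d_{\mathrm{TV}}$ is the atom difference plus $\int_0^\infty|g_\mu-g_\nu|\,\d x$.

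\textbf{Splitting at $b$.} On $[0,b]$, I bound $|g_\mu-g_\nu|\le g_\mu+g_\nu$. Together with the atoms, the contributions on $[0,b]$ are
\begin{align}
    \Pr_{\M\mu}[\,\cdot\le b\,]+\Pr_{\M\nu}[\,\cdot\le b\,],
\end{align}
and these are not directly at most $F_\mu(b)+F_\nu(b)$. Fubini gives
\begin{align}
    \Pr[XY\le b]=F_\rho(b)+\Exp\big[(b/X)1_{X>b}\big].
\end{align}
So rather than this crude bound, I will keep the tail piece and merge it into the integration by parts below.

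\textbf{Integration by parts for $x\ge b$.} Writing $\d(\mu-\nu)=\d\Delta$, I get
\begin{align}
    g_\mu(x)-g_\nu(x)=\int_{(x,\infty)}t^{-1}\,\d\Delta(t)=-\frac{\Delta(x)}{x}+\int_x^\infty \frac{\Delta(t)}{t^2}\,\d t,
\end{align}
where the boundary term at infinity vanishes since $\Delta\to0$. Then
\begin{align}
    \int_b^\infty|g_\mu-g_\nu|\,\d x\le\int_b^\infty\frac{|\Delta(x)|}{x}\,\d x+\int_b^\infty\!\int_x^\infty\frac{|\Delta(t)|}{t^2}\,\d t\,\d x.
\end{align}
Exchanging the order of integration, the double integral equals
\begin{align}
    \int_b^\infty|\Delta(t)|\,\frac{t-b}{t^2}\,\d t\le\int_b^\infty\frac{|\Delta(t)|}{t}\,\d t.
\end{align}
So the $L^1$ part over $[b,\infty)$ is at most $2\int_b^\infty|\Delta|/t$, which gives exactly the integral term after the factor $\tfrac12$.

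\textbf{Main obstacle: the $[0,b]$ part.} After halving I need the $[0,b]$ contribution to be at most $F_\mu(b)+F_\nu(b)$. The crude bound gives $\tfrac12\big(\Pr_{\M\mu}[\le b]+\Pr_{\M\nu}[\le b]\big)$, which contains the extra terms $\tfrac12\Exp[(b/X)1_{X>b}]$. To absorb these, I will do the integration by parts on $[0,b]$ as well, applied to the part of the measure above $b$. For $x\le b$,
\begin{align}
    g_\rho(x)=\int_{(x,b]}t^{-1}\,\d\rho(t)+\int_{(b,\infty)}t^{-1}\,\d\rho(t).
\end{align}
The second summand is constant in $x$, so its difference between $\mu$ and $\nu$ is
\begin{align}
    -\frac{\Delta(b)}{b}+\int_b^\infty\frac{\Delta(t)}{t^2}\,\d t.
\end{align}
Integrating this over $x\in[0,b]$ costs at most
\begin{align}
    |\Delta(b)|+b\int_b^\infty\frac{|\Delta(t)|}{t^2}\,\d t.
\end{align}
The term $|\Delta(b)|$ is at most $F_\mu(b)+F_\nu(b)$. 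The remaining term combines with the earlier $\tfrac{t-b}{t^2}$ weight to give exactly $\int_b^\infty|\Delta|/t$. The first summand, together with the atoms at $0$, integrates to at most $F_\mu(b)+F_\nu(b)$. Summing everything and halving then yields the claimed bound.
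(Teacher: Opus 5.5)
Your proposal is correct and is essentially the paper's proof: the same density $\int_{(x,\infty)}t^{-1}\,\d\rho(t)$ with the atom at $0$ carried separately through Scheff\'e's identity, and the Stieltjes integration by parts on $[b,\infty)$. On $(0,b)$ you use the same split into the $(x,b]$ part (bounded via Fubini by $\mu((0,b])+\nu((0,b])$) and the $x$-independent $(b,\infty)$ part, whose contribution $b\int_b^\infty|\Delta(t)|t^{-2}\,\d t$ combines with the exact weight $(t-b)/t^2$ to give $\int_b^\infty|\Delta(t)|t^{-1}\,\d t$.

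In the write-up, drop the intermediate relaxation $(t-b)/t^2\le 1/t$ that you stated for the $[b,\infty)$ part. Using it and then adding the $(0,b)$ term would leave an extra $\frac b2\int_b^\infty|\Delta(t)|t^{-2}\,\d t$ beyond the claimed bound.
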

\begin{proof}
    We do not assume that $\rho$ has a density; in particular, $\rho$ may have an atom at $0$, in which case $\M\rho$ has an atom of the same mass at $0$.
    Conditioned on $X=t>0$, the product $XY$ is uniformly distributed on $[0,t]$.
    Hence $\M\rho(\{0\})=\rho(\{0\})$, and on $(0,\infty)$ the distribution $\M\rho$ has the density
    \begin{align}\label{eq:m-rho}
        f_{\M\rho}(x) = \int_{(x,\infty)} \frac{1}{t}\d\rho(t), \qquad x>0.
    \end{align}
    Integrating by parts in the Stieltjes sense (the boundary term at infinity vanishes since $F_\rho(t)/t\to 0$), we can express this density through the distribution function: for every $x>0$,
    \begin{align}\label{eq:m-rho-df}
        f_{\M\rho}(x) = -\frac{F_\rho(x)}{x} + \int_x^\infty \frac{F_\rho(t)}{t^2}\d t.
    \end{align}
    Write $\Delta(t):=F_\mu(t)-F_\nu(t)$.
    By \eq{tv-l1}, applied with $\lambda$ the sum of the Dirac measure at $0$ and the Lebesgue measure on $(0,\infty)$, the total variation distance is
    \begin{align}
        d_{\mathrm{TV}}(\M\mu,\M\nu) = \frac12\big|\mu(\{0\})-\nu(\{0\})\big| + \frac12\int_0^\infty \big|f_{\M\mu}(x)-f_{\M\nu}(x)\big|\d x,
    \end{align}
    and we split the integral at the cutoff $b$.

    For $x\geq b$, \eq{m-rho-df} gives $|f_{\M\mu}(x)-f_{\M\nu}(x)|\leq |\Delta(x)|/x+\int_x^\infty |\Delta(t)|/t^2\d t$.
    Integrating over $x\in[b,\infty)$ and exchanging the order of integration in the second term,
    \begin{align}\label{eq:tv-df-above}
        \frac12\int_b^\infty \big|f_{\M\mu}(x)-f_{\M\nu}(x)\big|\d x
        &\leq \frac12\int_b^\infty \frac{|\Delta(x)|}{x}\d x + \frac12\int_b^\infty |\Delta(t)|\,\frac{t-b}{t^2}\d t \nonumber\\
        &= \int_b^\infty \frac{|\Delta(t)|}{t}\d t - \frac b2\int_b^\infty\frac{|\Delta(t)|}{t^2}\d t.
    \end{align}

    For $0<x<b$, by \eq{m-rho}, $f_{\M\mu}(x)-f_{\M\nu}(x) = \int_{(x,b]}\frac1t\d(\mu-\nu)(t) + \delta_b$, where $\delta_b:=\int_{(b,\infty)}\frac1t\d(\mu-\nu)(t)$ does not depend on $x$.
    By Fubini's theorem,
    \begin{align}
        \int_0^b\Big|\int_{(x,b]}\frac1t\d(\mu-\nu)(t)\Big|\d x
        &\leq \int_0^b\int_{(x,b]}\frac1t\d(\mu+\nu)(t)\d x \\
        &=\int_{(0,b]}\frac{t}{t}\d(\mu+\nu)(t)= \mu((0,b])+\nu((0,b]),
    \end{align}
    and by \eq{m-rho} and \eq{m-rho-df} at $x=b$, $b|\delta_b| = \big|{-\Delta(b)}+b\int_b^\infty \Delta(t)/t^2\d t\big|\leq |\Delta(b)|+b\int_b^\infty|\Delta(t)|/t^2\d t$.
    Hence
    \begin{align}\label{eq:tv-df-below}
        \frac12\int_0^b \big|f_{\M\mu}(x)-f_{\M\nu}(x)\big|\d x \leq \frac12\Big(\mu((0,b])+\nu((0,b])+|\Delta(b)|\Big)+\frac b2\int_b^\infty\frac{|\Delta(t)|}{t^2}\d t.
    \end{align}
    Adding \eq{tv-df-above}, \eq{tv-df-below}, and the atom term $\frac12|\mu(\{0\})-\nu(\{0\})|\leq\frac12(\mu(\{0\})+\nu(\{0\}))$, the two integrals of $|\Delta(t)|/t^2$ cancel, and using $|\Delta(b)|\leq F_\mu(b)+F_\nu(b)$ we obtain
    \begin{align}
        d_{\mathrm{TV}}(\M\mu,\M\nu)\leq \frac12\big(F_\mu(b)+F_\nu(b)\big)+\frac12\big(F_\mu(b)+F_\nu(b)\big)+\int_b^\infty\frac{|\Delta(t)|}{t}\d t,
    \end{align}
    which is the claim.
\end{proof}

For our purposes, $\mu$ and $\nu$ will be the distribution $\sigma$ of $NX$ for $X\sim\Beta(2,N-2)$ and the corresponding distribution $\sigma(k,\epsilon)$ of the scaled block probability of a design.
To see how \lem{total-variation-df} is used, suppose for the moment that we only knew a uniform bound $\eta$ on the Kolmogorov distance between $\sigma$ and $\sigma(k,\epsilon)$ (\lem{pointwise-df-to-christoffel} and \lem{pointwise-christoffel} below give $\eta=O(k^{-1/2}+\epsilon)$).
We have $F_\sigma(t)\leq t^2/2$, $\Exp_{T\sim\sigma}[T]=2$ and $\Exp_{T\sim\sigma}[T^2]=6N/(N+1)\leq 6$, and by the Cauchy--Schwarz inequality and \thm{polynomial-relative-error}, $\Exp_{T\sim\sigma(k,\epsilon)}[T]\leq\sqrt{(1+\epsilon)\Exp_{T\sim\sigma}[T^2]}\leq 4$.
Hence, by Markov's inequality, for $b\leq 1\leq R$,
\begin{align}
    d_{\mathrm{TV}}(\M\sigma,\M\sigma(k,\epsilon))
    &\leq b^2 + \eta + \int_b^R \frac{\eta}{x}\d x + \int_R^\infty \frac{\Pr_{T\sim\sigma}[T>x]+ \Pr_{T\sim\sigma(k,\epsilon)}[T>x]}{x}\d x \\
    &\leq b^2 + \eta + \eta\ln\frac{R}{b} + \int_R^\infty \frac{6}{x^2}\d x
    = b^2 + \eta + \eta\ln\frac{R}{b} + \frac{6}{R}.
\end{align}
Choosing $b=\sqrt\eta$ and $R=2/\eta$, we have that
\begin{align}
d_{\mathrm{TV}}(\M\sigma,\M\sigma(k,\epsilon)) =O(\eta\log(2/\eta))=O\left(\frac{\log k}{k^{1/2}} + \epsilon\log(2/\epsilon)\right).
\end{align}

To remove the $O(\log k)$ factor that appears in the bound on the total variation distance, we have to give an $x$-dependent bound on $|F_\sigma(x)-F_{\sigma(k,\epsilon)}(x)|$. 
To give such a bound, we observe that in our proof for the Kolmogorov convergence in \sec{convergence-upper-bound}, we consider the majorant $U_i$ and $L_i$ for the step function $1_{x\leq x_i}$ and quadrature nodes $x_i$ which are roots of the $(k+1)$st shifted Jacobi polynomial. 
This allows us to bound the Kolmogorov distance at the nodes by the weight $w_i$. 
Hence, applying the same argument to $\sigma$ (see \rmk{block-kolmogorov}), at each node $x=x_i$ of the $(k+1)$-node Gauss quadrature rule of $\sigma$ the difference $|F_\sigma(x)-F_{\sigma(k,\epsilon)}(x)|$ is bounded by $O(w_i+\epsilon)$, where $w_i=\lambda_{\sigma,k+1}(x_i)$ by \lem{christoffel-function-weight}; here $\lambda_{\sigma,k+1}$ is the Christoffel function of $\sigma$ in \eq{christoffel-mu-d} with $\ell=k+1$, whose admissible polynomials have degree at most $k$. 
For $x\notin\{x_1,\ldots,x_{k+1}\}$, we rely on the monotonicity of distribution functions to bound the distance by $O(\max_i w_i+\epsilon)$.

Therefore, to get an $x$-dependent bound, for every $x\in\mathbb R$, it is fair to attempt that one can bound the difference at $x$ by the Christoffel function. 
We then construct a quadrature with a node at $x$ and then the majorant $U_x$ and minorant $L_x$ for the step function $f(t)=1_{t\leq x}$. 
This is possible because the constructions of the majorant in \lem{majorant-indicator} and of the minorant in \lem{minorant-indicator} do not depend on the particular choice of the nodes. 

In \eq{christoffel-mu-d}, the Christoffel function can be written as the expectation of some non-negative function.
Since \thm{polynomial-relative-error} controls squared polynomials of degree at most $k$, the relevant Christoffel function throughout this section is $\lambda_{\sigma,k+1}$.
For any fixed $x$, let $p_x$ be the polynomial of degree at most $k$ that attains the variational form in \eq{christoffel-mu-d} for $\lambda_{\sigma,k+1}(x)$, i.e.,
\begin{align}\label{eq:christoffel-minimizer}
    \lambda_{\sigma,k+1}(x)=\Exp_{X\sim\sigma}[p_x(X)^2]
\end{align}
 with $p_x(x)=1$.
Then choosing the quadrature nodes to be the roots of $p_x$, as long as the roots of $p_x$ are all real and simple, we then use the constructions in \lem{majorant-indicator} and \lem{minorant-indicator} to give a majorant and minorant. 

The following lemma shows that the roots of $p_x$ are all real and simple. 
\begin{lemma}\label{lem:saturating-polynomial}
    Let $\sigma$ be a distribution whose support is an interval and which has a density that is positive on every open subinterval of its support, and let $p_x$ be the polynomial of degree at most $k$ that attains the variational form in \eq{christoffel-mu-d} for $\lambda_{\sigma,k+1}(x)$.
    Then all the roots of $p_x$ are real and simple.
\end{lemma}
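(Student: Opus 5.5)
By \lem{christoffel-function-orthogonal-polynomial}, the minimizer is given explicitly by
\begin{align}
p_x(t)=\frac{K(x,t)}{K(x,x)},\qquad K(x,t)=\sum_{j=0}^{k}q_j(x)q_j(t).
\end{align}
Here $q_j$ are the (real) orthonormal polynomials of $\sigma$, which exist because the support is infinite. In particular $p_x$ has real coefficients and satisfies $p_x(x)=1$. The plan is a sign-change counting argument of the kind used for zeros of orthogonal polynomials (\sec{jacobi}). The minimizer is characterized by orthogonality: if $r$ is any polynomial with $\deg r\leq k$ and $r(x)=0$, then $p_x+sr$ is admissible for every real $s$. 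Minimality at $s=0$ forces
\begin{align}
\int p_x(t)r(t)\d\sigma(t)=0\qquad\text{for all } r \text{ with } \deg r\leq k,\ r(x)=0.
\end{align}

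Let $\deg p_x=d\leq k$. Let $y_1,\dots,y_m$ be the distinct points in the interior of the support at which $p_x$ changes sign, i.e.\ the real roots of odd multiplicity. We aim to show $m\geq d-1$. Suppose instead $m\leq d-2$. Set
\begin{align}
r(t)=(t-x)\prod_{i=1}^m(t-y_i),
\end{align}
whose degree is $m+1\leq d-1\leq k$, and which satisfies $r(x)=0$. We want $p_x r$ to have constant sign on the support. The product $\prod_i(t-y_i)$ cancels every sign change of $p_x$, but the extra factor $(t-x)$ introduces a new sign change at $x$. To fix this, we instead take $r(t)=(t-x)^2\prod_i(t-y_i)$ when possible. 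Its degree is $m+2\leq d\leq k$, it vanishes at $x$, and $p_x r$ does not change sign. Since $p_x r$ is a nonzero polynomial and the density of $\sigma$ is positive on every open subinterval, $\int p_xr\d\sigma\neq 0$, a contradiction. Hence $p_x$ has at least $d-1$ distinct sign changes in the support.

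The last root still has to be located. Its location depends on whether $x$ itself is a sign change of $p_x$. It is not a root, since $p_x(x)=1$, so we need a second pass. We apply the same test with $r(t)=(t-x)\prod_i(t-y_i)$ and compare signs. Suppose the $d-1$ odd roots are the only real roots, so the remaining root, counted with multiplicity, is a single real root $z$. Odd total degree would force it to be real anyway, and complex roots come in pairs, so exactly one more real root must exist; we need only show it is simple and distinct from the $y_i$. If $z$ coincided with some $y_i$, that root would have multiplicity at least $2$, and this is where we use the orthogonality against $r=(t-x)\prod_{j\neq i}(t-y_j)\cdot(\cdot)$, suitably chosen, to obtain a contradiction. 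Alternatively, all roots are real because $K(x,\cdot)$ is a quasi-orthogonal polynomial: it is a combination $a\,q_{k+1}+b\,q_k$ divided by $(t-x)$. This follows from the Christoffel--Darboux formula
\begin{align}
K(x,t)=c_k\,\frac{q_{k+1}(t)q_k(x)-q_k(t)q_{k+1}(x)}{t-x}.
\end{align}
I will take this cleaner route as the main argument.

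By Christoffel--Darboux, the roots of $p_x$ are the roots of $Q(t)=q_{k+1}(t)q_k(x)-q_k(t)q_{k+1}(x)$ other than $t=x$. The polynomial $Q$ is a real linear combination of $q_{k+1}$ and $q_k$, so it is quasi-orthogonal of order one: it is orthogonal to every polynomial of degree at most $k-1$. The sign-change argument above, applied with $r=\prod_i(t-y_i)$ of degree at most $k-1$, shows that $Q$ has at least $k$ sign changes in the interior of the support. The remaining root is real because complex roots pair up. All $k+1$ roots are simple, because in the degenerate case $q_k(x)=0$ the polynomial $Q$ is a multiple of $q_k$, whose roots are simple. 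In the generic case, a double root would use up two of the $k+1$ roots while the interlacing of the zeros of $q_k$ and $q_{k+1}$ (\sec{jacobi}) forces $k+1$ distinct roots. Concretely, $Q/q_k$ is monotone between consecutive zeros of $q_k$ and takes every value on each such interval, including the two unbounded ones. Removing the root $t=x$, which is simple, leaves $p_x$ with real, simple roots.

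The main obstacle is making the simplicity argument rigorous via this interlacing and monotonicity of $q_{k+1}/q_k$; that monotonicity follows from the Christoffel--Darboux identity on the diagonal, which gives $q_{k+1}'q_k-q_k'q_{k+1}>0$.
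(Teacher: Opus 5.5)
Your main argument, the Christoffel--Darboux route, is correct, but it takes a genuinely different route from the paper's.

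The paper's proof uses no formula for $p_x$. From minimality it gets $\Exp_{X\sim\sigma}[p_x(X)v(X)]=0$ for every $v$ with $\deg v\leq k$ and $v(x)=0$. If $p_x=(t-s)^2r$ or $p_x=((t-s)^2+u^2)r$, it tests against $v=(t-x)^2r$. Then $p_xv\geq 0$ is a nonzero polynomial, so its integral is positive, a contradiction. This single perturbation rules out repeated real roots and conjugate pairs at once, with no orthogonal polynomials and only moments of order $2k$.

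Your route identifies $x$ together with the roots of $p_x$ as the zero set of the quasi-orthogonal polynomial $Q=q_k(x)q_{k+1}-q_{k+1}(x)q_k$. This buys extra structure: for $q_k(x)\neq0$ there is exactly one zero of $Q$ in each gap cut out by the zeros of $q_k$, i.e.\ a Gauss--Radau-type node set with prescribed node $x$. The price is the Christoffel--Darboux formula and its confluent form, and the absence of common zeros of $q_k$ and $q_{k+1}$. You also need $q_{k+1}$ to exist, which requires moments up to order $2k+2$. That is harmless here, since every $\sigma$ the lemma is applied to is compactly supported.

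Your abandoned first pass was already a complete proof, and essentially the paper's. The test polynomial $(t-x)^2\prod_i(t-y_i)$ is exactly the paper's $(t-x)^2$ device. Once $p_x$, of degree $d$, has $m\geq d-1$ distinct sign changes, each of odd multiplicity, counting multiplicities forces every $y_i$ to be simple. It leaves room for at most one further root. That root is real, since non-real roots come in conjugate pairs, and it is distinct from the $y_i$, because their multiplicities are already accounted for. So the ``second pass'' with an unspecified test polynomial was unnecessary. The same count, applied to $Q$ with at least $k$ sign changes, also makes your monotonicity argument for $q_{k+1}/q_k$ redundant.
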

\begin{proof}
    Since $p_x$ attains the minimum, for every polynomial $v$ of degree no more than $k$ satisfying that $v(x)=0$, we consider the degree-2 real polynomial in $\alpha$:
    \begin{align}
        g(\alpha) = \Exp_{X\sim\sigma}[(p_x(X)+\alpha v(X))^2].
    \end{align}
    Then $g(\alpha)$ is minimized at $\alpha=0$. 
    This implies that 
    \begin{align}\label{eq:vanish-inner-product}
        \Exp_{X\sim\sigma}[p_x(X)v(X)]=0, \qquad \deg(v)\leq k \text{ and } v(x)=0.
    \end{align}
    
    Now we argue by contradiction. Suppose that $p_x$ has a real root $s$ of multiplicity at least $2$, i.e., there is a polynomial $r$ of degree at most $k-2$ such that $p_x(t)=(t-s)^2 r(t)$ (note that $s\neq x$ since $p_x(x)=1$). 
    Then we show that there exists $v$ that violates \eq{vanish-inner-product}.
    Let $v(t)=(t-x)^2 r(t)$ be a polynomial of degree at most $k$ satisfying $v(x)=0$.
    Then we have 
    \begin{align}
        \Exp_{X\sim\sigma}[p_x(X)v(X)]
        = \Exp_{X\sim\sigma}[(X-s)^2(X-x)^2 r^2(X)] > 0
    \end{align}
    since the polynomial $(t-s)^2(t-x)^2 r(t)^2$ is non-negative, vanishes only at finitely many points, and $\sigma$ has a positive density on an interval. 

    It remains to show that all the roots of $p_x$ are real. 
    Suppose not, then there exists a pair of conjugate roots $s+\i u$ and $s-\i u$ for $s,u\in\mathbb R$. 
    Then following the same idea as for the first case closely, there exists a polynomial $r$ such that $p_x(t)=((t-s)^2+u^2)r(t)$.
    Letting $v(t)=(t-x)^2 r(t)$, for the same reason we conclude that $\Exp_{X\sim\sigma}[p_x(X)v(X)]>0$---a contradiction. 
\end{proof}

By \lem{saturating-polynomial}, all the roots $t_1<\cdots<t_d$ of $p_x$ are real and simple, where $d=\deg(p_x)\leq k$, and none of them equals $x$.
Taking the $d+1$ points $\{x,t_1,\ldots,t_d\}$ as nodes, \lem{majorant-indicator} and \lem{minorant-indicator} provide a majorant $U_x$ and a minorant $L_x$ of $1_{t\leq x}$, both polynomials of degree at most $2d\leq 2k$, such that $U_x(t)-L_x(t)=h(t)^2$, where $h$ is the Lagrange basis polynomial of the node $x$,
\begin{align}
    h(t) = \prod_{i=1}^{d} \frac{t-t_i}{x-t_i}.
\end{align}
(If $d=0$, i.e., $p_x\equiv 1$, we take $U_x\equiv 1$ and $L_x\equiv 0$.)
Since $h$ and $p_x$ are polynomials with the same simple roots $t_1,\ldots,t_d$ and no others, and both take the value $1$ at $x$, it must hold that $p_x=h$.
By \eq{christoffel-minimizer},
\begin{align}\label{eq:difference-majorant-minorant}
    \Exp_{X\sim\sigma}[U_x(X)-L_x(X)] = \Exp_{X\sim\sigma}[p_x(X)^2] = \lambda_{\sigma,k+1}(x).
\end{align}

Now we are ready to bound the difference between the distribution functions pointwise. 

\begin{lemma}\label{lem:pointwise-df-to-christoffel}
    For integer $k\geq 1$ and real number $0\leq\epsilon\leq 1$, let $\sigma$ be a distribution as in \lem{saturating-polynomial} and let $\sigma(k,\epsilon)$ be a distribution such that for every real polynomial $p$ of degree at most $k$, 
    \begin{align}\label{eq:pointwise-bound-assumption}
        (1-\epsilon)\Exp_{X\sim\sigma}[p^2(X)] \leq 
        \Exp_{X\sim\sigma(k,\epsilon)}[p^2(X)]
        \leq 
        (1+\epsilon)
        \Exp_{X\sim\sigma}[p^2(X)].
    \end{align}
    Then for every $x\in\mathbb R$, 
    \begin{align}
        |F_{\sigma(k,\epsilon)}(x) - F_\sigma(x)|
        \leq \epsilon\min\{F_\sigma(x),1-F_\sigma(x)\} + (1+\epsilon)\lambda_{\sigma,k+1}(x).
    \end{align}
\end{lemma}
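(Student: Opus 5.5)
Fix $x\in\mathbb R$ and use the majorant $U_x$ and minorant $L_x$ of $1_{t\leq x}$ built just above from the roots of the Christoffel minimizer $p_x$ via \lem{saturating-polynomial}, \lem{majorant-indicator} and \lem{minorant-indicator}. Both have degree at most $2k$, satisfy $L_x\leq 1_{t\leq x}\leq U_x$ on $\mathbb R$, and by \eq{difference-majorant-minorant}, $\Exp_\sigma[U_x-L_x]=\lambda_{\sigma,k+1}(x)$. If $\deg p_x=0$, then $\lambda_{\sigma,k+1}(x)=1$ and the claim is trivial. Since $U_x\geq 0$ has degree at most $2k$, \lem{sos-decomposition} writes it as $r^2+s^2$ with $\deg r,\deg s\leq k$. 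Applying \eq{pointwise-bound-assumption} to $r$ and $s$ gives
\[
(1-\epsilon)\Exp_\sigma[U_x]\leq\Exp_{\sigma(k,\epsilon)}[U_x]\leq(1+\epsilon)\Exp_\sigma[U_x].
\]
Likewise $L_x=U_x-p_x^2$ with $\deg p_x\leq k$, so $\Exp_{\sigma(k,\epsilon)}[p_x^2]\leq(1+\epsilon)\lambda_{\sigma,k+1}(x)$.

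\emph{Upper bound.} Compare $F_{\sigma(k,\epsilon)}(x)\leq\Exp_{\sigma(k,\epsilon)}[U_x]\leq(1+\epsilon)\Exp_\sigma[U_x]$ with $\Exp_\sigma[U_x]\leq F_\sigma(x)+\lambda_{\sigma,k+1}(x)$. The latter holds because $\Exp_\sigma[U_x]=\Exp_\sigma[L_x]+\lambda_{\sigma,k+1}(x)\leq F_\sigma(x)+\lambda_{\sigma,k+1}(x)$. Together these give
\[
F_{\sigma(k,\epsilon)}(x)-F_\sigma(x)\leq\epsilon F_\sigma(x)+(1+\epsilon)\lambda_{\sigma,k+1}(x).
\]

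\emph{Lower bound.} We have
\[
F_{\sigma(k,\epsilon)}(x)\geq\Exp_{\sigma(k,\epsilon)}[U_x]-\Exp_{\sigma(k,\epsilon)}[p_x^2]\geq(1-\epsilon)\Exp_\sigma[U_x]-(1+\epsilon)\lambda_{\sigma,k+1}(x)\geq(1-\epsilon)F_\sigma(x)-(1+\epsilon)\lambda_{\sigma,k+1}(x),
\]
using $\Exp_\sigma[U_x]\geq F_\sigma(x)$. Hence $F_\sigma(x)-F_{\sigma(k,\epsilon)}(x)\leq\epsilon F_\sigma(x)+(1+\epsilon)\lambda_{\sigma,k+1}(x)$. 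Both one-sided bounds therefore carry the factor $\epsilon F_\sigma(x)$.

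To get the factor $\epsilon(1-F_\sigma(x))$, apply the same argument to the complementary indicator $1_{t>x}=1-1_{t\leq x}$. Its majorant is $1-L_x$ and its minorant is $1-U_x$; they have the same degree and the same gap $p_x^2$. The majorant $1-L_x$ is nonnegative, so it is again a sum of two squares of degree at most $k$, and the relative-error control applies. (Note that $\Exp[1]$ transfers exactly since both are probability measures, but using the sum-of-squares form of $1-L_x$ directly avoids needing it.) Repeating the two computations with $1-F$ in place of $F$ yields both one-sided bounds with $\epsilon(1-F_\sigma(x))$ in place of $\epsilon F_\sigma(x)$. Taking, for each side, the better of the two bounds gives the $\min$. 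Since every estimate is an honest inequality with no limits, it holds for every $x$, including $x$ outside the support, where $\lambda$ is tiny or $F\in\{0,1\}$.

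The main obstacle is checking that the minorant side really costs only the gap $\lambda_{\sigma,k+1}(x)$. This rests on the identity $U_x-L_x=p_x^2$, which was established just above. It also needs the observation that $U_x$ and $1-L_x$ are nonnegative of degree at most $2k$, so that \lem{sos-decomposition} brings them within the reach of \eq{pointwise-bound-assumption}. No other step is delicate.
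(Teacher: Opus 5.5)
Your proof is correct and follows essentially the same route as the paper: you bound $F_{\sigma(k,\epsilon)}(x)$ through $U_x$ and $L_x=U_x-p_x^2$, transfer expectations via the sum-of-squares decomposition, bracket $\Exp_\sigma[U_x]$ between $F_\sigma(x)$ and $F_\sigma(x)+\lambda_{\sigma,k+1}(x)$, and then repeat the argument with $1-L_x$ and $1-U_x$ for the complementary distribution function to obtain the minimum. Your separate treatment of the case $\deg p_x=0$, where $\lambda_{\sigma,k+1}(x)=1$ and the bound is trivial, is equivalent to the paper's choice $U_x\equiv 1$, $L_x\equiv 0$.
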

\begin{proof}
    The proof idea is very similar to that for \thm{approximate-design-upper-bound}.
    Since $U_x$ is a non-negative polynomial of degree at most $2k$, it is a sum of two squares of polynomials of degree at most $k$ by \lem{sos-decomposition}, so \eq{pointwise-bound-assumption} applies to it (and likewise to $1-L_x$ and $p_x^2$ below). Hence 
    \begin{align}
        F_{\sigma(k,\epsilon)}(x) 
        = \Exp_{X\sim\sigma(k,\epsilon)}[1_{X\leq x}] 
        \leq \Exp_{X\sim\sigma(k,\epsilon)}[U_x(X)] 
        \leq (1+\epsilon)\Exp_{X\sim\sigma}[U_x(X)].
    \end{align}
    Then by \eq{christoffel-minimizer} and \eq{difference-majorant-minorant}, we have 
    \begin{align}
        F_{\sigma(k,\epsilon)}(x)&\geq \Exp_{X\sim\sigma(k,\epsilon)}[L_x(X)] 
        = \Exp_{X\sim\sigma(k,\epsilon)}[U_x(X) - p_x(X)^2] \\
        &\geq 
        (1-\epsilon)\Exp_{X\sim\sigma}[U_x(X)] - (1+\epsilon) \Exp_{X\sim\sigma}[p_x(X)^2] \\
        &=
        (1-\epsilon)\Exp_{X\sim\sigma}[U_x(X)] - (1+\epsilon) \lambda_{\sigma,k+1}(x).
    \end{align}
    Letting $A:=\Exp_{X\sim\sigma}[U_x(X)]$ and $w:=\lambda_{\sigma,k+1}(x)$, we have 
    \begin{align}\label{eq:pointwise-approximate}
        (1-\epsilon)A - (1+\epsilon) w \leq F_{\sigma(k,\epsilon)}(x) \leq (1+\epsilon) A
    \end{align}
    It remains to bound $A$ by $F_{\sigma}(x)$: since $U_x\geq 1_{t\leq x}\geq L_x$ and $U_x-L_x=p_x^2$, we have 
    \begin{align}\label{eq:pointwise-exact}
        F_{\sigma}(x) \leq A \leq F_{\sigma}(x) + w.
    \end{align}
    Combining \eq{pointwise-approximate} and \eq{pointwise-exact}, we have
    \begin{align}\label{eq:pointwise-bound-1}
        (1-\epsilon)F_\sigma(x) - (1+\epsilon) w \leq F_{\sigma(k,\epsilon)}(x) \leq (1+\epsilon)(F_\sigma(x) + w)
    \end{align} 
    Some reordering applied on the inequalities gives
    \begin{align}
        |F_{\sigma(k,\epsilon)}(x)-F_\sigma(x)| \leq \epsilon F_\sigma(x) + (1+\epsilon)\lambda_{\sigma,k+1}(x).
    \end{align}

    Similarly, for the complementary distribution functions, let $\bar F_\rho=1-F_\rho$ for distribution $\rho$. 
    Letting $B:=\Exp_{X\sim\sigma}[1-L_x(X)]$, since $1-L_x$ is non-negative, we have
    \begin{align}
        \bar F_{\sigma(k,\epsilon)}(x)
        \leq (1+\epsilon)\Exp_{X\sim\sigma}[1-L_x(X)] = (1+\epsilon)B,
    \end{align}
    and
    \begin{align}
        \bar F_{\sigma(k,\epsilon)}(x) \geq \Exp_{X\sim\sigma(k,\epsilon)}[1-U_x(X)] &\geq (1-\epsilon)\Exp_{X\sim\sigma}[1-L_x(X)]-(1+\epsilon)\Exp_{X\sim\sigma}[p_x(X)^2] \\
        &= (1-\epsilon)B - (1+\epsilon)w.
    \end{align}
    On the other hand,
    $\bar F_\sigma(x) \leq B\leq \bar F_\sigma(x)+w$, which gives
    \begin{align}
        (1-\epsilon)\bar F_\sigma(x)-(1+\epsilon)w
        \leq
        \bar F_{\sigma(k,\epsilon)}(x) \leq (1+\epsilon)(\bar F_\sigma(x) + w).
    \end{align}
    The inequalities look exactly the same as \eq{pointwise-bound-1}, except that the distribution functions are replaced with their complements. 
    Thus we have 
    \begin{align}
        |\bar F_{\sigma(k,\epsilon)}(x)-\bar F_\sigma(x)| \leq \epsilon \bar F_\sigma(x) + (1+\epsilon)\lambda_{\sigma,k+1}(x).
    \end{align}
\end{proof}

\lem{pointwise-df-to-christoffel} reduces the problem of bounding the distribution functions pointwise to bounding the Christoffel function.
Before we prove a bound in \lem{pointwise-christoffel}, we present an auxiliary lemma that the polynomial $p_x(t)$ that attains $\lambda_{\rho_R,\ell}(x)$ is concentrated around $x$, for the arcsine measure $\rho_R$ on $[0,R]$. 

\begin{lemma}\label{lem:localization}
    For integers $\ell\geq 1$ and $R\geq 64\ell$, let $\rho_R$ be the arcsine measure on $[0,R]$, i.e., $\d\rho_R(t)=\frac{\d t}{\pi\sqrt{t(R-t)}}$, and for $x\in[0,R]$ let $p_x(t):=K_{\ell,R}(x,t)/K_{\ell,R}(x,x)$ be the polynomial of degree at most $\ell-1$ attaining $\lambda_{\rho_R,\ell}(x)$, as in \eq{test-polynomial} and \lem{christoffel-function-orthogonal-polynomial}.
    Then there exists an absolute constant $C$ such that for all $x\in[0,1]$ and $t\in[0,R]$,
    \begin{align}
        |p_x(t)| \leq \frac{C}{1+\ell|\theta-\phi|},
    \end{align}
    where $\theta=2\arcsin\sqrt{x/R}$ and $\phi=2\arcsin\sqrt{t/R}$.
\end{lemma}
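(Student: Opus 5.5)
The approach is to work in the angular variable, where the arcsine kernel becomes a trigonometric kernel with an explicit closed form. Write $x=R\sin^2(\theta/2)$ and $t=R\sin^2(\phi/2)$. By \eq{arcsine-basis}, $q_j(x)q_j(t)=2\cos(j\theta)\cos(j\phi)$ for $j\geq 1$, so
\begin{align}
 K_{\ell,R}(x,t)=1+2\sum_{j=1}^{\ell-1}\cos(j\theta)\cos(j\phi)=\tfrac12\big(D_{\ell-1}(\theta-\phi)+D_{\ell-1}(\theta+\phi)\big),
\end{align}
where $D_m(u)=\sum_{|j|\leq m}e^{\i ju}=\sin((m+\tfrac12)u)/\sin(u/2)$ is the Dirichlet kernel. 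The denominator is already controlled by \eq{kernel-lower}: $K_{\ell,R}(x,x)\geq \ell/4$. It therefore suffices to show that the numerator satisfies $|K_{\ell,R}(x,t)|\leq C\ell/(1+\ell|\theta-\phi|)$.

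The two terms are handled with the standard bound $|D_m(u)|\leq\min\{2m+1,\pi/|u|\}$ for $|u|\leq\pi$, which uses $\sin(u/2)\geq u/\pi$. Since $\theta-\phi\in[-\pi,\pi]$, this gives $|D_{\ell-1}(\theta-\phi)|\leq \min\{2\ell,\pi/|\theta-\phi|\}\leq C\ell/(1+\ell|\theta-\phi|)$ directly.

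The second term, with argument $\theta+\phi\in[0,2\pi]$, is the main obstacle, because near $\theta+\phi=2\pi$ the Dirichlet kernel peaks again. This is where the hypotheses $x\leq 1$ and $R\geq 64\ell$ enter. They give $\theta=2\arcsin\sqrt{x/R}\leq \pi\sqrt{x/R}\leq \pi/(8\sqrt\ell)\leq\pi/8$, so $\theta+\phi\leq \pi+\pi/8$. Using $2\pi$-periodicity, $|D_m(u)|\leq \pi/\operatorname{dist}(u,2\pi\mathbb Z)$, and the distance of $\theta+\phi$ to $2\pi\mathbb Z$ is $\min\{\theta+\phi,\,2\pi-\theta-\phi\}$.

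To bound this distance below, note that $\theta+\phi\geq|\theta-\phi|$ because both angles are nonnegative. For the other branch, $2\pi-\theta-\phi\geq 2\pi-9\pi/8=7\pi/8\geq \tfrac{7}{8}|\theta-\phi|$, since $|\theta-\phi|\leq\pi$. Hence the distance is at least $\tfrac78|\theta-\phi|$, and together with the trivial bound $2\ell$ this gives $|D_{\ell-1}(\theta+\phi)|\leq C\ell/(1+\ell|\theta-\phi|)$ as well.

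Combining the two terms with the denominator bound yields $|p_x(t)|\leq 4C/(1+\ell|\theta-\phi|)$. The case $\ell=1$ is immediate, since then $p_x\equiv1$ and the claim holds with $C\geq 2$.
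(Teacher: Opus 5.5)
Your proof is correct and follows essentially the same route as the paper: the same Dirichlet-kernel decomposition $2K_{\ell,R}(x,t)=D_{\ell-1}(\theta+\phi)+D_{\ell-1}(\theta-\phi)$, the same denominator bound \eq{kernel-lower}, and the same use of $x\le 1$, $R\ge 64\ell$ to make $\theta$ small so that $\theta+\phi$ stays away from $2\pi$. The only differences are cosmetic: the paper uses $\theta+\phi<5\pi/3$ with a linear lower bound on $\sin$ over $[0,5\pi/6]$, where you use the distance to $2\pi\mathbb Z$; and your separate $\ell=1$ remark needs $C\ge 1+\pi$ rather than $C\ge 2$, though it is unnecessary since your general argument already covers $\ell=1$.
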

\begin{proof}
    Applying the change of variables $x=R\sin^2(\theta/2)$ and $t=R\sin^2(\phi/2)$ with $\theta,\phi\in[0,\pi]$, we can write
    \begin{align}
        \d\rho_R(t) = \frac{R/2\cdot\sin\phi \d\phi}{\pi\cdot R/2\cdot \sin\phi} = \frac{\d\phi}{\pi}.
    \end{align}
    For $x\in[0,1]$, $\theta\leq 2\arcsin(\sqrt{1/R})\leq 2\arcsin(1/8)\leq 4\pi/45$, and therefore $|\theta-\phi|\leq\theta+\phi\leq 49\pi/45<5\pi/3$.

    By \eq{arcsine-basis}, the orthonormal polynomials for $\rho_R$ are $q_0=1$ and $q_j(t)=\sqrt{2}T_j(2t/R-1)=\sqrt 2 T_j(-\cos\phi)=\sqrt 2(-1)^j\cos(j\phi)$ for $j\geq 1$.
    Hence
    \begin{align}
        K_{\ell,R}(x,t)
        &= \sum_{j=0}^{\ell-1} q_j(x)q_j(t) = 1 + 2\sum_{j=1}^{\ell-1} \cos(j\theta)\cos(j\phi) \\
        &= 1 + \sum_{j=1}^{\ell-1} \cos(j(\theta+\phi)) + \sum_{j=1}^{\ell-1} \cos(j(\theta-\phi)).
    \end{align}
    To simplify, we define the Dirichlet kernel
    \begin{align}
        D_{m}(\alpha) &:= 1+ 2\sum_{j=1}^{m} \cos(j\alpha)
        = \sum_{j=-m}^{m} e^{\i j\alpha} = \frac{\sin((m+1/2)\alpha)}{\sin(\alpha/2)},
    \end{align}
    so that $2K_{\ell,R}(x,t) = D_{\ell-1}(\theta+\phi) + D_{\ell-1}(\theta-\phi)$.
    We claim that $|D_m(\alpha)|\leq C_1\min\{m+1,1/|\alpha|\}$ for $|\alpha|\leq 5\pi/3$, with $C_1=10\pi/3$.
    Indeed, $|D_m(\alpha)|\leq D_m(0)=2m+1\leq 2(m+1)$, and for $0<|\alpha|\leq 5\pi/3$ the concavity of $\sin$ on $[0,5\pi/6]$ gives $|\sin(\alpha/2)|\geq \frac{3|\alpha|}{10\pi}$, so that
    \begin{align}
        |D_m(\alpha)|\leq \frac{1}{|\sin(\alpha/2)|} \leq \frac{10\pi}{3|\alpha|}.
    \end{align}
    Since $|\theta+\phi|\geq|\theta-\phi|$, we obtain
    \begin{align}
        2|K_{\ell,R}(x,t)| &\leq |D_{\ell-1}(\theta+\phi)| + |D_{\ell-1}(\theta-\phi)|
        \leq 2C_1 \min\left\{\ell,\frac{1}{|\theta-\phi|}\right\}.
    \end{align}
    By \eq{kernel-lower}, $K_{\ell,R}(x,x)\geq \ell/4$, and therefore
    \begin{align}
        |p_x(t)| = \frac{|K_{\ell,R}(x,t)|}{K_{\ell,R}(x,x)} \leq \frac{4C_1}{\ell}\min\left\{\ell,\frac{1}{|\theta-\phi|}\right\}
        = 4C_1 \cdot \min\left\{1,\frac{1}{\ell|\theta-\phi|}\right\}
        \leq \frac{8C_1}{1+\ell|\theta-\phi|}.
    \end{align}
\end{proof}

Now given the concentration result \lem{localization}, we are ready to give a pointwise bound for the Christoffel function for the scaled beta measure. 

\begin{lemma}\label{lem:pointwise-christoffel}
    Let $\sigma$ be the distribution satisfying $NX\sim\sigma$ for $X\sim\Beta(2,N-2)$. 
    Then there exists an absolute constant $C>0$ such that for $N\geq 4$ and $k\geq 2$, 
    \begin{align}
        \lambda_{\sigma,k+1}(t) \leq 
        \begin{cases}
            Ck^{-1/2}(t+k^{-1}) & 0\leq t\leq 1 \\
            Ck^{-1/2}t^{-2} & t\geq 1.
        \end{cases}
    \end{align} 
\end{lemma}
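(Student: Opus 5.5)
The plan follows the outline in \sec{overview-smoothing}. We use the variational form \eq{christoffel-mu-d} with test polynomials of degree at most $k$, and we compare $\sigma$ with measures whose Christoffel functions we already control. The density of $\sigma$ is
\begin{align}
    g(u)=\frac{(N-1)(N-2)}{N^2}\,u\,(1-u/N)^{N-3}\leq u\,e^{-u(N-3)/N}\leq u\,e^{-u/4}\qquad (0\leq u\leq N,\ N\geq 4).
\end{align}
We treat the ranges $t\geq 1$ and $0\leq t\leq 1$ separately.

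\textbf{Tail $t\geq 1$.} We take $p(u)=(u/t)\,r(u)$. Here $r$ is the minimizer, from \lem{exponential} and its scaling \eq{scaling}, for the Christoffel function of degree $\le k-1$ of a slowly decaying exponential measure $\omega_c$ with density $c\,e^{-cu}$, evaluated at $t$, where $c$ is a small constant. Then $p(t)=1$ and $\deg p\le k$, and
\begin{align}
    \int p^2\,\d\sigma\leq t^{-2}\int u^3 e^{-u/4}\,r(u)^2\,\d u.
\end{align}
Since $u^3e^{-u/4}\le C_c\,e^{-cu}$ for $c<1/4$, this is at most $C t^{-2}\lambda_{\omega_c,k}(t)=C t^{-2}\lambda_{\omega,k}(ct)\le C' k^{-1/2}t^{-2}$. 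The factor $u^3$ is the extra polynomial weight that is absorbed by the slower decay rate.

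\textbf{Bulk $0\le t\le1$.} We take $R=64(k+1)$ and the arcsine test polynomial $p_t=K_{\ell,R}(t,\cdot)/K_{\ell,R}(t,t)$ with $\ell=k+1$, and we split $\int p_t^2\,\d\sigma$ into three pieces.

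The tail piece is $u>R$. There \eq{polynomial-growth} and $g(u)\le ue^{-u/4}$ give a contribution of order $64^{2k}e^{-R/4}\cdot\mathrm{poly}(R)$. This is exponentially small only if the constant in $R$ is tuned against the decay rate $1/4$, so we may instead take $p=p_t^{\,}$ with $R=C_0k$ for a large $C_0$. The growth factor is $(4u/R)^{2k}$, and once $C_0$ is large the tail contribution is $e^{-\Omega(k)}\le k^{-3/2}$.

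The near piece is the window $|u-t|\le v$, where $v$ is the width in $u$ corresponding to $|\phi-\theta|\lesssim1/\ell$. With $u=R\sin^2(\phi/2)$ we have $\d u\approx (R/2)\sin\phi\,\d\phi\approx\sqrt{uR}\,\d\phi$, so $v=O(\sqrt{(t+v)R}/k)=O(\sqrt{t/k}+1/k)$. Using $|p_t|\le C$ and $g(u)\le u$, this piece is at most $2v(t+v)$.

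The far piece is the rest of $[0,R]$. By \lem{localization}, $|p_t|\le C/(1+\ell|\theta-\phi|)$, and we integrate $g(u)\,\d u\le \sqrt{u}\sqrt{u}\,\d u$ against this. In the $\phi$ variable the measure $g(u)\,\d u$ is about $u\sqrt{uR}\,e^{-u/4}\,\d\phi$. Writing $u\approx R\phi^2/4$, the far piece becomes
\begin{align}
    \int \frac{R^{3/2}\phi^3 e^{-R\phi^2/16}}{\bigl(1+k|\phi-\theta|\bigr)^2}\,\d\phi .
\end{align}
The factor $R^{3/2}\phi^3e^{-R\phi^2/16}$ is concentrated at $\phi\sim R^{-1/2}\sim k^{-1/2}$ with mass $O(1/R^{1/2})$ in $\d\phi$, and there the denominator is about $k\cdot k^{-1/2}$. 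This should give $O(k^{-1/2}\cdot k^{-1})$ plus the near-window term, which is consistent with the target.

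Summing the three pieces and using AM--GM to get $\sqrt{t/k}\cdot k^{-1/2}\le k^{-1/2}(t+1/k)$ yields $\lambda\le Ck^{-1/2}(t+1/k)$.

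\textbf{Main obstacle.} The hardest step is the far piece in the bulk range. We have to check that the polynomial decay $(1+k|\theta-\phi|)^{-2}$ from \lem{localization}, combined with the weight $u\,e^{-u/4}$, still gives a bound of order $k^{-1/2}(t+1/k)$ rather than only $k^{-1/2}$. The danger is the region $u\sim1$, which is at distance about $k^{-1/2}$ in $\phi$ from a small $\theta$. The decay factor there is about $(k^{1/2})^{-2}=k^{-1}$, and the mass is about $k^{-1/2}$, so that region contributes $k^{-3/2}\le k^{-1/2}\cdot k^{-1}$, which is acceptable. I would carry this out by splitting dyadically in $|\phi-\theta|$.
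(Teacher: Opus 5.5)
Your tail case $t\geq 1$ matches the paper's argument; the paper takes the slower rate $1/8$ and $r$ attaining $\lambda_{\omega,k}(t/8)$. The bulk case $0\leq t\leq 1$, however, has a genuine gap. You test with the arcsine kernel polynomial $p_t=K_{k+1,R}(t,\cdot)/K_{k+1,R}(t,t)$ itself, so \lem{localization} only gives $p_t^2\leq C(1+k|\theta-\phi|)^{-2}$, and this decay is too weak. Your far-piece change of variables also drops a factor $R^{1/2}$. Since $\d u\approx\sqrt{uR}\,\d\phi$ and $u\approx R\phi^2/4$, one has $u e^{-u/4}\,\d u\approx\frac{R^2}{8}\phi^3e^{-R\phi^2/16}\,\d\phi$, and this integrates to $16$, not $O(R^{-1/2})$. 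The same slip appears in your ``main obstacle'' paragraph: the $\sigma$-mass of the region $u\asymp 1$ is $\Theta(1)$, not $k^{-1/2}$, which is only its angular width. For $t\ll k^{-1/2}$ this mass sits at angular distance $\asymp R^{-1/2}$ from $\theta$, where your envelope is $\asymp R/k^2$. The far piece is therefore of order $R/k^2$, at least of order $1/k$, which exceeds the target $Ck^{-1/2}(t+k^{-1})\asymp k^{-3/2}$.

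This is not an artifact of the envelope. At $t=0$ one has $p_0=\frac{\sin((k+\frac12)\phi)}{(2k+1)\sin(\phi/2)}$. Averaging $\sin^2((k+\frac12)\phi)$ over its $\Theta(\sqrt k)$ oscillations on $u\in[1/4,1]$ gives $\int p_0^2\,\d\sigma\geq cR/k^2$ for large $k$. Tuning $R$ does not help either, since $R$ must be at least a constant multiple of $k$ for the range $u>R$ to be negligible. Your route therefore proves only $\lambda_{\sigma,k+1}(t)=O(k^{-1/2}t+k^{-1})$ on $[0,1]$. That weaker bound would still suffice for \thm{one-local-invariance}, because $\int_{1/k}^1\frac{\d t}{kt}=\frac{\ln k}{k}$, but it is not the stated lemma.

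The missing idea, which the paper uses and states in its technical overview, is to take the \emph{square} $p_t^2$ as the test polynomial. Here $p_t=K_{m+1,R}(t,\cdot)/K_{m+1,R}(t,t)$ with $m=\lfloor k/2\rfloor$ and $R=64(m+1)$, so $\deg p_t^2\leq k$ and the envelope becomes $(1+m|\theta-\phi|)^{-4}$. With this, even the crude domination $e^{-s/4}\,\d s\leq\pi\sqrt{2R/e}\,\d\rho_R(s)$ suffices. Using $s\leq t+\frac{\sqrt{tR}}{2}|\theta-\phi|+\frac R4|\theta-\phi|^2$ and $\int_{\mathbb R}|v|^j(1+m|v|)^{-4}\,\d v\asymp m^{-j-1}$ for $j=0,1,2$, the bulk integral is $O\big(\sqrt R\,(t/m+\sqrt{tR}/m^2+R/m^3)\big)=O(t/\sqrt m+\sqrt t/m+m^{-3/2})$. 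AM--GM then gives $O(k^{-1/2}(t+k^{-1}))$, and the range $s>R$ is handled by $|p_t(s)|\leq 8(4s/R)^m$ as you planned.
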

\begin{proof}
    For $t\geq 1$, we rely on the fact that the tail is dominated by a constant multiple of the density of an exponential random variable.
    Indeed, recall that the probability density function of $\sigma$ is 
    \begin{align}\label{eq:scaled-beta-exponential}
        f_\sigma(t) = \frac{1}{N} f_{\Beta(2,N-2)}\left(\frac{t}N\right)= \frac{(N-1)(N-2)}{N^2} t\left(1-\frac{t}{N}\right)^{N-3}
        \leq te^{-t(1-3/N)} \leq te^{-t/4},
    \end{align}
    where the last inequality holds for $N\geq 4$. 
    By \eq{christoffel-mu-d}, $\lambda_{\sigma,k+1}(t)$ is bounded by $\Exp_\sigma[p^2]$ for any polynomial $p$ of degree at most $k$ with $p(t)=1$.
    Let $\omega$ be the mean-one exponential measure and $\lambda_{\omega,k}$ its Christoffel function, as in \lem{exponential} (its admissible polynomials have degree at most $k-1$), and let $r$ be a polynomial of degree at most $k-1$ with $r(t)=1$ such that $u\mapsto r(8u)$ attains $\lambda_{\omega,k}(t/8)$, i.e.,
    \begin{align}
        \lambda_{\omega,k}(t/8) = \int_0^\infty r(8u)^2 e^{-u}\d u = \frac18\int_0^\infty r(s)^2 e^{-s/8}\d s.
    \end{align}
    The polynomial $s\mapsto (s/t)\, r(s)$ has degree at most $k$ and takes the value $1$ at $s=t$.
    Therefore, by \eq{christoffel-mu-d} and \eq{scaled-beta-exponential},
    \begin{align}
        \lambda_{\sigma,k+1}(t)
        &\leq \int_0^\infty \frac{s^2}{t^2} r(s)^2 f_\sigma(s)\d s
        \leq \int_0^\infty \frac{s^2}{t^2} r(s)^2 s e^{-s/4}\d s \\
        &\leq \frac{C_1}{t^2}\int_0^\infty r(s)^2 e^{-s/8}\d s
        = \frac{8C_1}{t^2}\cdot \lambda_{\omega,k}(t/8),
    \end{align}
    where $C_1:=\max_{s\geq 0} s^3e^{-s/8}=(24/e)^3$.
    By \lem{exponential}, for $k\geq 2$ and some absolute constant $C_2$, $\lambda_{\sigma,k+1}(t)\leq \frac{C_2}{t^2\sqrt{k}}$.

    It remains to prove the bound for $0\leq t\leq 1$. 
    Recall that in the proof of \lem{exponential}, we compare the exponential measure with the arcsine measure to get a uniform bound on the Christoffel function of the exponential measure. 
    Recall from \eq{arcsine} that $\d\rho_R(s)=\frac{\d s}{\pi\sqrt{s(R-s)}}$ is the arcsine measure on $[0,R]$, for a parameter $R$ to be determined later.
    Following the same idea as in \eq{interior-domination}, for $s\in[0,R]$, 
    \begin{align}\label{eq:interior-domination-2}
        \frac{e^{-s/4}\d s}{\d\rho_R(s)} \leq \pi\sqrt R \sup_{s\geq 0} s^{1/2} e^{-s/4} \leq \pi\sqrt{\frac{2R}{e}}. 
    \end{align}
    Let $m=\lfloor k/2\rfloor\geq 1$ and let $p_t(s):=K_{m+1,R}(t,s)/K_{m+1,R}(t,t)$ be the polynomial attaining $\lambda_{\rho_R,m+1}(t)$, as in \eq{test-polynomial}.
    By definition, $\deg(p_t)\leq m$ and $p_t(t)=1$, so that $\deg(p_t^2)\leq 2m\leq k$.
    Since $p_t^2$ is admissible, by \eq{scaled-beta-exponential} and \eq{interior-domination-2}, the Christoffel function can be bounded as
    \begin{align}
        \lambda_{\sigma,k+1}(t)
        &\leq \int_0^\infty p_t(s)^4 f_\sigma(s)\d s
        \overset{\eq{scaled-beta-exponential}}{\leq} \int_0^\infty p_t(s)^4 se^{-s/4}\d s \\\label{eq:318}
        &\overset{\eq{interior-domination-2}}{\leq} \pi\sqrt{\frac{2R}{e}} \int_0^R s\cdot p_t(s)^4\d\rho_R(s) + \int_R^\infty p_t(s)^4 s e^{-s/4}\d s.
    \end{align}
    We bound each integral as follows.
    By \eq{polynomial-growth} (applied with $\ell=m+1$), $|p_t(s)|\leq 8(4s/R)^m$ for $s\geq R$, and hence, substituting $s=Ru$,
    \begin{align}
        \mathcal I_2(t) := \int_R^\infty p_t(s)^4 s e^{-s/4}\d s
        \leq
        8^4\int_R^\infty s\left(\frac{4s}{R}\right)^{4m} e^{-s/4}\d s
        \leq 2^{10}R^2\int_1^\infty (4u)^{4m+1} e^{-uR/4}\d u.
    \end{align}
    Using the fact that $\ln(x)\leq x-1$ for $x\geq 1$, we have, provided $R/4>4(4m+1)$,
    \begin{align}
        \int_1^\infty (4u)^{4m+1} e^{-uR/4}\d u \leq \int_1^\infty e^{-uR/4 + (4m+1)(4u-1)} \d u \leq \frac{e^{-4m-1}}{R/4-4(4m+1)}.
    \end{align}
    Choosing $R=64(m+1)$ as required in \lem{localization}, we have
    \begin{align}\label{eq:319}
        \mathcal I_2(t)\leq \frac{2^{10}R^2 e^{-4m-1}}{12}\leq 32R^2 e^{-4m}=e^{-\Omega(k)}.
    \end{align}

    It remains to bound the first integral in \eq{318}.
    Let $t=R\sin^2(\phi/2)$ and $s=R\sin^2(\theta/2)$ with $\theta,\phi\in[0,\pi]$, so that $\d\rho_R(s)=\d\theta/\pi$.
    By \lem{localization} (applied with $\ell=m+1$ and the roles of the two points as indicated there; recall $t\in[0,1]$), for some absolute constant $C_3$,
    \begin{align}
        |p_t(s)| \leq \frac{C_3}{1+(m+1)|\theta-\phi|}\leq \frac{C_3}{1+m|\theta-\phi|}.
    \end{align}
    We also need a bound on $s$ in terms of $t$ and $|\theta-\phi|$.
    Since $\sin^2(\theta/2)-\sin^2(\phi/2)=\sin\frac{\theta-\phi}{2}\sin\frac{\theta+\phi}{2}$ and $\sin\frac{\theta+\phi}{2}\leq\sin\frac{\phi}{2}+\frac{|\theta-\phi|}{2}$, using $R\sin(\phi/2)=\sqrt{tR}$ we obtain
    \begin{align}\label{eq:s-vs-t}
        s\leq t+\frac{\sqrt{tR}}{2}|\theta-\phi|+\frac{R}{4}|\theta-\phi|^2.
    \end{align}
    Changing variables to $u=\theta-\phi$ and extending the range of integration to the whole real line,
    \begin{align}
        \mathcal I_1(t) := \int_0^R s\cdot p_t(s)^4\d\rho_R(s)
        &\leq \frac{C_3^4}{\pi}\int_{-\infty}^{\infty} \frac{t+\frac{\sqrt{tR}}{2}|u|+\frac R4 u^2}{(1+m|u|)^4} \d u.
    \end{align}
    Using $\int_{-\infty}^\infty \frac{\d u}{(1+m|u|)^4}=\frac{2}{3m}$, $\int_{-\infty}^\infty \frac{|u|\d u}{(1+m|u|)^4}=\frac{1}{3m^2}$, and $\int_{-\infty}^\infty \frac{u^2\d u}{(1+m|u|)^4}=\frac{2}{3m^3}$, we get
    \begin{align}\label{eq:328}
        \mathcal I_1(t) \leq \frac{C_3^4}{3\pi}\left(\frac{2t}{m} + \frac{\sqrt{tR}}{2m^2}+\frac{R}{2m^3}\right).
    \end{align}
    Finally, by \eq{319} and \eq{328}, recalling that $R=64(m+1)\leq 128m$ and $m=\lfloor k/2\rfloor\geq k/4$ for $k\geq 2$,
    \begin{align}
        \lambda_{\sigma,k+1}(t) \leq \pi\sqrt{\frac{2R}{e}} \cdot \mathcal I_1(t) + \mathcal I_2(t)
        = O\left(\frac{t}{\sqrt m}+\frac{\sqrt t}{m}+\frac{1}{m^{3/2}}\right)+e^{-\Omega(k)}
        = O\left(\frac{t}{k^{1/2}}+\frac{1}{k^{3/2}}\right),
    \end{align}
    where in the last step we used $\sqrt t/m\leq (t+1/m)/(2\sqrt m)$ by the inequality of arithmetic and geometric means.
\end{proof}

The $t$-dependence of the bound for $t\geq 1$ can be improved by using a Christoffel function of the exponential measure of lower degree, but we do not pursue the improvement here since the current form is sufficient for proving \thm{one-local-invariance}.

\begin{theorem}\label{thm:one-local-invariance}
    For integers $n\geq 2$ and $k\geq 2$ and $\epsilon\in[0,1]$, let $\mu(k,\epsilon)$ be the distribution of any $1$-locally invariant $\epsilon$-approximate $n$-qubit unitary $k$-design, and for fixed $x,y\in\bit^n$ let $\nu(k,\epsilon)$ be the distribution of $|\bra{x} U\ket{y}|^2$ for $U\sim\mu(k,\epsilon)$.
    Let $\nu$ be the distribution of $|\bra{x}U\ket{y}|^2$ for Haar random $U$ on $n$ qubits.
    Then there exists an absolute constant $C$ such that 
    \begin{align}
        d_{\mathrm{TV}}(\nu(k,\epsilon),\nu) \leq C\cdot (k^{-1/2}+\epsilon).
    \end{align}
\end{theorem}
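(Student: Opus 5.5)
The plan is to combine the block factorization with $a=2$, the density formula for a uniform multiplier, and the pointwise Christoffel bounds. Apply \lem{block-factorization} with $m=1$ and $a=2$. Then $\nu(k,\epsilon)$ is the law of $X_kY$ and $\nu$ is the law of $XY$, where $Y\sim\Beta(1,1)=\Uniform[0,1]$ is independent, $X_k$ is the probability of the two-element block $B(x)$ under the design, and $X\sim\Beta(2,N-2)$. Since total variation is invariant under scaling by $N$, it suffices to bound $d_{\mathrm{TV}}(\M\sigma(k,\epsilon),\M\sigma)$. Here $\sigma$ is the law of $NX$ and $\sigma(k,\epsilon)$ is the law of $NX_k$, and $\M$ denotes multiplication by an independent uniform variable as in \lem{total-variation-df}. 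That lemma gives, for any $b>0$,
\begin{align}
d_{\mathrm{TV}}\leq F_{\sigma(k,\epsilon)}(b)+F_\sigma(b)+\int_b^\infty \frac{|\Delta(t)|}{t}\d t,\qquad \Delta=F_{\sigma(k,\epsilon)}-F_\sigma .
\end{align}

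Next, check the hypotheses of \lem{pointwise-df-to-christoffel}. The measure $\sigma$ has a positive density on the interval $[0,N]$. By \thm{polynomial-relative-error} applied to the block $B(x)$, composed with the rescaling $t=Nx$, every real polynomial $p$ of degree at most $k$ satisfies \eq{pointwise-bound-assumption}. Hence $|\Delta(t)|\leq \epsilon\min\{F_\sigma(t),1-F_\sigma(t)\}+2\lambda_{\sigma,k+1}(t)$. Note that $N=2^n\geq 4$ since $n\geq 2$, so \lem{pointwise-christoffel} applies.

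Choose $b=1/k$ and bound the three contributions separately.
\begin{itemize}
\item \emph{The $\epsilon$-term.} Use $F_\sigma(t)\leq t^2/2$ on $[b,1]$ and $1-F_\sigma(t)\leq \Exp_\sigma[T^2]/t^2\leq 6/t^2$ on $[1,\infty)$. This gives $\int_b^\infty \min\{F_\sigma,1-F_\sigma\}\,t^{-1}\d t\leq \int_0^1 t/2\,\d t+\int_1^\infty 6t^{-3}\d t=O(1)$, so the contribution is $O(\epsilon)$.
\item \emph{The Christoffel term.} By \lem{pointwise-christoffel}, on $[1/k,1]$ we have $\lambda(t)/t\leq Ck^{-1/2}(1+1/(kt))$, which integrates to $Ck^{-1/2}(1+\ln k/k)=O(k^{-1/2})$. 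On $[1,\infty)$ we have $\lambda(t)/t\leq Ck^{-1/2}t^{-3}$, which integrates to $O(k^{-1/2})$.
\item \emph{The mass terms.} First, $F_\sigma(b)\leq b^2/2=O(k^{-2})$. For the design, use $F_{\sigma(k,\epsilon)}(b)\leq F_\sigma(b)+|\Delta(b)|$ together with the pointwise bound at $t=b$: $|\Delta(b)|\leq \epsilon b^2/2+2Ck^{-1/2}(2/k)$. So this term is also $O(k^{-1/2}+\epsilon)$.
\end{itemize}
Summing the three contributions gives $C(k^{-1/2}+\epsilon)$.

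The main obstacle has already been handled by \lem{pointwise-christoffel}: its small-$t$ bound carries the factor $t+1/k$, which cancels the $1/t$ weight and avoids a $\log k$ loss. What remains is bookkeeping, namely that the moment transfer applies to real polynomials in the rescaled block variable. An atom of $X_k$ at $0$ is harmless, because \lem{total-variation-df} allows atoms.
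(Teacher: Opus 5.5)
Your proposal is correct and follows the paper's proof essentially step for step: block factorization with $a=2$, \lem{total-variation-df}, the pointwise comparison of \lem{pointwise-df-to-christoffel} combined with \lem{pointwise-christoffel}, and the choice $b=1/k$. The only differences are cosmetic. You bound the tail $\int_1^\infty(1-F_\sigma(t))\,t^{-1}\d t$ through the second moment $\Exp_\sigma[T^2]\leq 6$, whereas the paper uses $\Exp_\sigma[T]=2$. You handle $F_{\sigma(k,\epsilon)}(b)$ via $F_\sigma(b)+|\Delta(b)|$ rather than the one-sided bound \eq{pointwise-bound-1}. Both routes are equally valid.
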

\begin{proof}
    Let $\mu(k,\epsilon)$ be locally invariant on the $i$th qubit, and let $N=2^n\geq 4$.
    Let $\sigma(k,\epsilon)$ be the distribution of $N\cdot \sum_{z\in\{0,e_i\}}|\bra{x\oplus z}U\ket{y}|^2$ for $U\sim\mu(k,\epsilon)$, i.e., $N$ times the probability of the block $B(x)=\{x,x\oplus e_i\}$, and let $\sigma$ be the distribution of the same random variable for $U\sim\Haar(N)$, so that $Z/N\sim\Beta(2,N-2)$ for $Z\sim\sigma$ by \lem{block-factorization}.
    For $Y\sim\Uniform[0,1]=\Beta(1,1)$, let $\M\sigma$ be the distribution of $XY$ for $X\sim\sigma$ and $\M\sigma(k,\epsilon)$ be that of $XY$ for $X\sim\sigma(k,\epsilon)$, as in \lem{total-variation-df}.
    By \lem{block-factorization} with $m=1$, $Z/N\sim\nu(k,\epsilon)$ for $Z\sim\M\sigma(k,\epsilon)$ and $Z/N\sim\nu$ for $Z\sim\M\sigma$.
    Moreover, $\sigma$ has a positive density on $(0,N)$, and by \thm{polynomial-relative-error} applied to the block $B(x)$ (note that $p(NX)$ is a polynomial in $X$ of the same degree as $p$), the pair $\sigma,\sigma(k,\epsilon)$ satisfies \eq{pointwise-bound-assumption}; hence \lem{pointwise-df-to-christoffel} applies.
    
    By \lem{total-variation-df} and \lem{pointwise-df-to-christoffel}, together with $F_{\sigma(k,\epsilon)}(b)\leq(1+\epsilon)(F_\sigma(b)+\lambda_{\sigma,k+1}(b))$ from \eq{pointwise-bound-1} and $F_\sigma(b)\leq b^2/2$, for $b\in(0,1]$ we have
    \begin{align}
        &d_{\mathrm{TV}}(\M\sigma(k,\epsilon),\M\sigma)
        \leq
        F_{\sigma(k,\epsilon)}(b) + F_{\sigma}(b) + \int_b^\infty \frac1t|F_{\sigma(k,\epsilon)}(t)-F_\sigma(t)|\d t \\\label{eq:315}
        &\qquad\leq (1+\epsilon/2)b^2 + (1+\epsilon)\lambda_{\sigma,k+1}(b) + \epsilon\int_b^\infty \frac1t\min\{F_\sigma(t),\bar F_\sigma(t)\} \d t + (1+\epsilon)\int_b^\infty \frac{\lambda_{\sigma,k+1}(t)}{t}\d t.
    \end{align}
    We bound each term in \eq{315}.
    For the first integral, 
    \begin{align}
        \mathcal I_1(b) := \int_b^\infty \frac1t\min\{F_\sigma(t),\bar F_\sigma(t)\} \d t
        &\leq 
        \int_0^\infty \frac1t\min\{F_\sigma(t),\bar F_\sigma(t)\} \d t \\
        &\leq 
        \int_0^1 \frac1t F_\sigma(t) \d t 
        +  
        \int_1^\infty \frac1t \bar F_\sigma(t) \d t.
    \end{align}
    Using the inequality that $F_\sigma(t)\leq t^2/2$, we have 
    \begin{align}
        \int_0^1 \frac1t F_\sigma(t) \d t \leq \frac14.
    \end{align}
    For the second term, recalling that $\sigma=N\Beta(2,N-2)$, we have $\Exp_{Z\sim\sigma}[Z]=2$, and 
    \begin{align}
        \int_1^\infty \frac1t \bar F_\sigma(t) \d t
        \leq 
        \int_1^\infty \bar F_\sigma(t) \d t \leq \int_0^\infty \bar F_\sigma(t) \d t = \Exp_{Z\sim\sigma}[Z] = 2.
    \end{align}
    Here we use the identity that for non-negative random variable $X$ whose complementary distribution function is $\bar F$, $\Exp[X]=\int_0^\infty \bar F(t)\d t$.
    Combining these inequalities, we have $\mathcal I_1(b)\leq 9/4$.

    For the second integral, by \lem{pointwise-christoffel}, 
    \begin{align}
        \mathcal I_2(b) := \int_b^\infty \frac{\lambda_{\sigma,k+1}(t)}{t}\d t
        &\leq \frac{C}{\sqrt k}\left[\int_b^1 \left(1+\frac{1}{kt}\right)\d t
        +\int_1^\infty\frac1{t^3} \d t 
        \right] \\
        &\leq \frac{C}{\sqrt k}\left[(1-b)+\frac{\ln(1/b)}k
        +\frac12 
        \right].
    \end{align}
    Choosing $b=k^{-1}$, we have $\mathcal I_2(b)\leq \frac{C}{\sqrt k}\left(\frac32+\frac{\ln k}{k}\right)\leq\frac{2C}{\sqrt k}$ and, by \lem{pointwise-christoffel}, $\lambda_{\sigma,k+1}(b)\leq 2Ck^{-3/2}$.
    Therefore, using $\epsilon\leq 1$ and $k\geq 2$,
    \begin{align}
        d_{\mathrm{TV}}(\M\sigma(k,\epsilon),\M\sigma)
        &\leq (1+\epsilon/2)b^2 + (1+\epsilon)\lambda_{\sigma,k+1}(b) + \epsilon\mathcal I_1(b) + (1+\epsilon)\mathcal I_2(b) \\
        &\leq
        \frac{3}{2k^2} + \frac{4C}{k^{3/2}} + \frac{9\epsilon}{4} + \frac{4C}{\sqrt k}\leq \frac{6C+1}{\sqrt k} + \frac{9\epsilon}{4}.
    \end{align}
    We have bounded the total variation distance for the random variables multiplied by $N$.
    Since scaling does not change the distance, we conclude the theorem.
\end{proof}

We close this section by combining \thm{one-local-invariance} with a recent result on strong designs, which turns the theorem into a quantitative Porter--Thomas statement for one of perhaps the most commonly studied random circuit ensembles: one-dimensional brickwork circuits.
In this ensemble, independent Haar random two-qubit gates are applied in alternating layers to neighboring pairs of qubits arranged on a line, the gates of each layer acting on disjoint pairs.
Schuster, Ma, Lombardi, Brand\~ao, and Huang \cite[Lemma~3 and Appendix~A.4]{schusterStrongRandomUnitaries2025} show, by combining the spectral gap bound of Chen et al.~\cite{chenIncompressibilitySpectralGaps2025} with their translation from additive to relative error for strong designs, that brickwork circuits of depth $\Omega(\log^7(p+q)\,(n(p+q)+\log(1/\epsilon)))$ on $n$ qubits form strong $\epsilon$-approximate $(p,q)$-designs with relative error whenever $\epsilon\geq 2(p+q)^2/2^n$.

Such circuits are moreover locally invariant without any modification: the final layer applies independent Haar random gates to disjoint pairs of qubits, and the left invariance of the Haar measure on $\mathrm U(4)$ makes the circuit $1$-locally invariant on every qubit acted on by the final layer (in fact $2$-locally invariant on every such pair).
We therefore obtain the following corollary.

\begin{corollary}[Porter--Thomas statistics of brickwork circuits]\label{cor:brickwork}
    For $n\geq 2$ and $d\geq 1$, let $U$ be a one-dimensional brickwork circuit of depth $d$ on $n$ qubits with independent Haar random two-qubit gates, and for fixed $x,y\in\bit^n$ let $\nu_d$ be the distribution of $|\bra{x}U\ket{y}|^2$ and $\nu=\Beta(1,2^n-1)$ the Porter--Thomas distribution.
    There exist absolute constants $c_1,c_2>0$ such that the following hold.
    \begin{enumerate}
        \item For every integer $k\geq 2$ and every $\epsilon\in[8k^2/2^n,1]$, if $d\geq c_1\log^7(2k)\,(nk+\log(1/\epsilon))$, then $d_{\mathrm{TV}}(\nu_d,\nu)\leq C(k^{-1/2}+\epsilon)$, where $C$ is the constant of \thm{one-local-invariance}.
        \item For every $\gamma\in[c_2 2^{-n/5},1]$, if $d\geq c_2\,n\gamma^{-2}\log^7(2/\gamma)$, then $d_{\mathrm{TV}}(\nu_d,\nu)\leq\gamma$.
    \end{enumerate}
\end{corollary}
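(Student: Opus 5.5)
Part 1 is a composition of the cited strong-design result with \thm{one-local-invariance}, and part 2 follows from part 1 by a choice of $k$ and $\epsilon$.

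For part 1, we invoke \cite[Lemma~3]{schusterStrongRandomUnitaries2025} with $p=q=k$, so the total order is $p+q=2k$. Their hypothesis $\epsilon\geq 2(p+q)^2/2^n=8k^2/2^n$ is exactly our range of $\epsilon$. Their depth requirement $\Omega(\log^7(2k)(2nk+\log(1/\epsilon)))$ is implied by $d\geq c_1\log^7(2k)(nk+\log(1/\epsilon))$ for a suitable absolute $c_1$. Hence the depth-$d$ brickwork ensemble is a strong $\epsilon$-approximate $(k,k)$-design with relative error.

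Two points need a short justification. First, the cited result is stated for a particular depth. For larger $d$ we argue that extra layers preserve the property: an extra layer is an independent random unitary $V$, and appending it to a relative-error strong design keeps the two-sided completely positive sandwich. The argument is the same as in \thm{locally-invariant-design-haar}: compose with $\Phi(\rho)=\Exp_V[R(V)\rho R(V)^\dag]$ and use $\Phi\circ\Delta_\mu=\Delta_\mu$ by Haar invariance. Alternatively, we can cite their statement as holding for all depths above the threshold. Second, the last layer consists of independent Haar two-qubit gates, and some qubit is acted on by that layer when $n\geq2$. Left invariance of $\Haar(4)$ then gives $1$-local invariance in the sense of \dfn{locally-invariant}. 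Applying \thm{one-local-invariance} (with $n\geq2$, $k\geq2$, $\epsilon\in[0,1]$) gives $d_{\mathrm{TV}}(\nu_d,\nu)\leq C(k^{-1/2}+\epsilon)$.

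For part 2, given $\gamma$, we take $k=\lceil (2C/\gamma)^2\rceil$ and $\epsilon=\gamma/(2C)$, assuming $C\geq1$ without loss of generality. Then $C(k^{-1/2}+\epsilon)\leq\gamma/2+\gamma/2=\gamma$, provided part 1 applies.

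Two conditions have to be checked. The first is $\epsilon\geq 8k^2/2^n$. Since $k=O(\gamma^{-2})$, we have $8k^2/2^n=O(\gamma^{-4}2^{-n})$, and we need this to be at most $\gamma/(2C)$, which holds iff $\gamma^5\gtrsim 2^{-n}$. This is exactly where the assumption $\gamma\geq c_2 2^{-n/5}$ is used, with $c_2$ large enough. The second condition, $\epsilon\leq1$, is immediate from $\gamma\leq1$.

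It remains to check the depth. We have $\log^7(2k)=O(\log^7(2/\gamma))$ and $nk=O(n\gamma^{-2})$. Also $\log(1/\epsilon)=O(\log(2/\gamma))=O(n)$, because $\gamma\geq c_2 2^{-n/5}$; this term is dominated by $n\gamma^{-2}$. So part 1's threshold is at most $c_2 n\gamma^{-2}\log^7(2/\gamma)$ for large $c_2$.

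The main obstacle is bookkeeping rather than any new idea. One task is matching the exact form of the cited lemma: its constant $2(p+q)^2$, its depth expression, and whether it covers all larger depths. The other is choosing $c_2$ to absorb several competing constants at once, namely $C$, $c_1$, and the factor $8(2C)^4$ in the $\epsilon$ condition. I would handle these by fixing $C\geq1$ first, then $c_1$, and finally taking $c_2$ to be the maximum of the constants required by the three checks above.
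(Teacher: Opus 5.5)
Your proposal is correct and follows the paper's proof essentially step for step. It uses the same citation with $p=q=k$, the same $1$-local invariance from a Haar gate in the final layer, and the same choice $k=\lceil 4C^2/\gamma^2\rceil$, $\epsilon=\gamma/(2C)$ with the $\gamma\gtrsim 2^{-n/5}$ condition. Your remark that appending layers preserves the relative-error sandwich (via $\Phi\circ\Delta_\mu=\Delta_\mu$) is a small extra justification that the paper leaves implicit by reading the cited depth threshold as holding for all larger depths.
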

In particular, a brickwork circuit of linear depth $O(n)$ reproduces the Porter--Thomas distribution of each output probability up to any fixed total variation distance, and for fixed $c$, depth $O(n^{2c+1}\log^7 n)$ suffices for total variation distance $n^{-c}$.
\begin{proof}
    Fix a gate of the final layer, say acting on the neighboring qubits $i$ and $i+1$; it is distributed as $G\sim\Haar(4)$ independently of the remaining gates, so we may write $U=(G\otimes I)U'$ with $U'$ consisting of the remaining gates.
    For any fixed $V\in\mathrm U(2)$ acting on qubit $i$, the left invariance of the Haar measure on $\mathrm U(4)$ gives $(V\otimes I_2)G\sim G$, and hence $(V\otimes I)U=((V\otimes I_2)G\otimes I)U'\sim U$ by independence.
    Thus the distribution of $U$ is $1$-locally invariant with invariant qubit set $\{i\}$.

    For the first statement, we apply \cite[Lemma~3]{schusterStrongRandomUnitaries2025} with $p=q=k$, so that $p+q=2k$ and the condition there reads $\epsilon\geq 2(2k)^2/2^n=8k^2/2^n$: there is an absolute constant $c_1$ such that for $d\geq c_1\log^7(2k)(nk+\log(1/\epsilon))$ the distribution of $U$ is a strong $\epsilon$-approximate $(k,k)$-design with relative error.
    Since this distribution is $1$-locally invariant, \thm{one-local-invariance} yields $d_{\mathrm{TV}}(\nu_d,\nu)\leq C(k^{-1/2}+\epsilon)$.

    For the second statement, we may assume that $C\geq 1$.
    Given $\gamma\in(0,1]$, let $k=\lceil 4C^2/\gamma^2\rceil$ and $\epsilon=\gamma/(2C)$, so that $k\geq 4$, $\epsilon\leq 1/2$, and $C(k^{-1/2}+\epsilon)\leq\gamma$.
    Since $4C^2/\gamma^2\geq 1$, we have $k\leq 8C^2/\gamma^2$, so the condition $\epsilon\geq 8k^2/2^n$ is implied by $\gamma/(2C)\geq 512C^4\gamma^{-4}2^{-n}$, i.e., by $\gamma\geq 4C\cdot 2^{-n/5}$.
    Moreover, $\log(2k)\leq\log(16C^2)+2\log(1/\gamma)\leq c_0\log(2/\gamma)$ with $c_0=\log(16C^2)\geq 4$, where the last inequality uses $\log(1/\gamma)=\log(2/\gamma)-1$ and $\log(2/\gamma)\geq 1$; and $nk+\log(1/\epsilon)\leq 8C^2n\gamma^{-2}+\log(2C/\gamma)\leq (8C^2+2C)\,n\gamma^{-2}$, where we used $\log(2C/\gamma)\leq 2C/\gamma\leq 2Cn/\gamma^2$.
    Hence the depth requirement of the first statement is implied by $d\geq c_2 n\gamma^{-2}\log^7(2/\gamma)$ for $c_2=\max\{c_1c_0^7(8C^2+2C),4C\}$, and the first statement gives $d_{\mathrm{TV}}(\nu_d,\nu)\leq\gamma$.
\end{proof}

The linear dependence on $n$ in \cor{brickwork} is inherent to this route: circuits of independent Haar random two-qubit gates require depth $\Omega(n)$ in any geometry to form strong $2$-designs with relative error \cite[Proposition~3]{schusterStrongRandomUnitaries2025}, and one-dimensional circuits require depth $\Omega(n)$ to form strong designs with error $\epsilon<1/4$ under any of the error notions \cite[Proposition~1]{schusterStrongRandomUnitaries2025}.
Whether circuits of sublinear depth reproduce the Porter--Thomas distribution of a single output probability in total variation is an interesting open question.

\section{From marginal closeness to extractable randomness}\label{sec:random-number}

Now we have seen that the distribution of $|U_{ij}|^2$ for $U$ sampled from the approximate design is close to that of the Haar measure.
As an application of this result, in this section, we show that approximate designs reproduce the expected Shannon entropy of a Haar random unitary, and that the Shannon entropy of a single sample concentrates around this value. 
In the following analysis, we call the quantity $S=\sum_{x\in\bit^n} P_x \log(1/P_x)$, where $P_x=|\bra{x}U\ket{0}|^2$ and $0\log(1/0)=0$, the Shannon entropy of the output distribution of $U$.
We compare the expected Shannon entropy when $U$ is sampled from a Haar random unitary and when it is sampled from a design. 

Our main contribution in this section is to show that the expected Shannon entropy of the output distribution of an approximate design matches its Haar value up to an additive error that is independent of the dimension.
Previously, such a guarantee was established for Haar random unitaries \cite{aaronsonCertifiedRandomnessQuantum2023}. 
For certifying random number generation, if the adversary can coordinate different samples arbitrarily, then quantifying the difference of the expected Shannon entropy is insufficient to conclude entropy accumulation without introducing a verification that robustly certifies the absence of correlation.
Here, we do not provide a solution to the certification problem; instead, we show that when there is a quantum computer that independently and perfectly samples from a unitary design, then the expected entropy is high.
Since such a device produces independent rounds, the asymptotic equipartition property \cite{tomamichelFullyQuantumAsymptotic2009}, the special case of entropy accumulation \cite{metgerGeneralisedEntropyAccumulation2024} for independent rounds, shows that repeated sampling accumulates entropy at the rate of the expected Shannon entropy per round, which can then be extracted (see \sec{overview-entropy}). 

In \sec{expectation-shannon}, we prove that the mean of the Shannon entropy is close to the ideal mean in absolute difference $O(\eta^{1/2})$, where $\eta$ is the Kolmogorov distance we obtain in \sec{convergence-upper-bound} (for a generic design) and in \sec{local-translation-invariant} (for a locally invariant design). 
The asymptotic equipartition property only requires that the expected Shannon entropy per round is high enough, so this is already sufficient for an honest device to generate randomness. 

We then prove a stronger statement: not only is the expectation close, but the Shannon entropy of a single sample from the design is also concentrated around its Haar value.
In \sec{variance-shannon}, we perform a second-moment analysis and show that the standard deviation of the Shannon entropy is within $O(\delta^{1/4})$ of its Haar value, where $\delta$ is the Kolmogorov distance for linear combinations of two output probabilities (\dfn{pair-distance}), which we bound by $O(k^{-1/2}+\epsilon)$ for every $\epsilon$-approximate $k$-design.
Combining the two estimates with Chebyshev's inequality in \sec{concentration-shannon}, we show that the Shannon entropy deviates from its Haar value $n-\Theta(1)$ by more than $t+O(\eta^{1/2})$ with probability at most $O\big((2^{-n}+\delta^{1/2})/t^2\big)$.

\subsection{Expected entropy from marginal closeness}\label{sec:expectation-shannon}

In this section, we prove the mean of the Shannon entropy is close to the ideal mean in absolute difference $O(\eta^{1/2})$, when the Kolmogorov distance between the laws of each $P_x$ under the design and under the Haar measure is at most $\eta$. 

To show this, we consider a global cutoff $B/N$ for a cutoff parameter $B$ to be optimized later and bound the difference for the cases $P_x\leq B/N$ and $P_x>B/N$ for every $x\in\bit^n$. 
In the former case, since each truncated summand is bounded by $O(B/N)$ and has interval superlevel sets, we use the closeness in Kolmogorov distance to show that the two expectations differ by $O(\eta B)$. 
In the latter, since the second moments of each $P_x$ under both distributions---either from a Haar random unitary or from the design---are bounded by $O(1/N^2)$, we can conclude this part is insignificant to the difference when $B$ is large enough. 
Concretely, we get a bound that is $O(1/B)$. 
Choosing $B=\Theta(\eta^{-1/2})$ to balance the contributions from these two cases, we get an $O(\eta^{1/2})$ bound. 

In the previous sections, we have bounded closeness of the distributions both in Kolmogorov distance and in total variation distance.
The latter is strictly stronger: any two distributions that are $\eta$-close in total variation distance are also $\eta$-close in Kolmogorov distance, but the converse does not hold in general. 
Thus here we prove the following lemma that uses only the weaker notion of closeness to bound the closeness of the expectation values, and the same bound holds if closeness in total variation distance can be justified. 

\begin{lemma}\label{lem:expectation-ks}
    Let $\mu,\nu$ be the distributions of two random variables in $[0,1]$ whose Kolmogorov distance is at most $\eta$. 
    Then for $M\geq 0$ and measurable function $f:[0,1]\to[0,M]$ such that every superlevel set $\{x\in[0,1]:f(x)>t\}$, $t\geq 0$, is an interval, it holds that 
    \begin{align}
        \left|\Exp_{X\sim\mu}[f(X)] - \Exp_{X\sim\nu}[f(X)]\right| \leq M\cdot \min\{2\eta, 1\}.
    \end{align}
\end{lemma}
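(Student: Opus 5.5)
We will use the layer-cake representation. Since $0\leq f\leq M$, for every probability distribution $\rho$ on $[0,1]$ Fubini's theorem gives
\begin{align}
    \Exp_{X\sim\rho}[f(X)] = \int_0^M \Pr_{X\sim\rho}[f(X)>t]\,\d t .
\end{align}
It follows that
\begin{align}
    \left|\Exp_{X\sim\mu}[f(X)]-\Exp_{X\sim\nu}[f(X)]\right|\leq \int_0^M \big|\mu(A_t)-\nu(A_t)\big|\,\d t,
\end{align}
where $A_t:=\{x\in[0,1]:f(x)>t\}$. By hypothesis each $A_t$ is an interval. (Measurability of the integrand in $t$ is automatic, since it is a difference of two monotone functions of $t$.) The whole proof therefore reduces to a pointwise bound $|\mu(A_t)-\nu(A_t)|\leq \min\{2\eta,1\}$ for every interval $A_t$. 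Integrating this bound over $t\in[0,M]$ then yields the claim.

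For the interval bound, write $A_t$ with endpoints $c\leq d$, each endpoint either included or excluded. Its measure under $\rho$ is a difference of two one-sided limits of the distribution function, namely $\rho(A_t)=G_\rho(d)-H_\rho(c)$. Here $G_\rho(d)$ is either $F_\rho(d)$ or $F_\rho(d^-)$, and $H_\rho(c)$ is either $F_\rho(c^-)$ or $F_\rho(c)$, depending on which endpoints are included. The Kolmogorov bound $|F_\mu(x)-F_\nu(x)|\leq\eta$ for every $x$ passes to the left limits, giving $|F_\mu(x^-)-F_\nu(x^-)|\leq\eta$ by taking limits along $x_j\uparrow x$. Hence each of the two terms differs by at most $\eta$ between $\mu$ and $\nu$, and $|\mu(A_t)-\nu(A_t)|\leq 2\eta$. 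The trivial bound $|\mu(A_t)-\nu(A_t)|\leq 1$ holds because both quantities lie in $[0,1]$, which gives the $\min\{2\eta,1\}$. Empty sets and degenerate intervals are covered by the same reasoning.

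The only point needing care is this endpoint bookkeeping: open versus closed ends, and the atoms that $\mu$ or $\nu$ may have at them. We handle it uniformly by passing the Kolmogorov bound to left limits, as above. We also note that a superlevel set may extend to the boundary points $0$ or $1$. There we use $F_\rho(0^-)=0$ and $F_\rho(1)=1$ for both distributions, so the corresponding term contributes zero difference and the bound only improves.
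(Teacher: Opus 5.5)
Your proof is correct and follows the paper's route: the layer-cake identity $\Exp_{X\sim\rho}[f(X)]=\int_0^M\Pr_{X\sim\rho}[f(X)>t]\,\d t$, together with the bound $2\eta$ on the probability of an interval, which comes from passing the Kolmogorov bound to left limits. Your endpoint bookkeeping and your remark on measurability in $t$ are more explicit than the paper's, but the argument is the same.
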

\begin{proof}
    Since the Kolmogorov distance is at most $\eta$, we have for every interval $I\subseteq[0,1]$, 
    \begin{align}
        \left|\Pr_{X\sim\mu}[X\in I]-\Pr_{X\sim\nu}[X\in I]\right| 
        &\leq 2\eta,
    \end{align}
    regardless of whether $I$ is open or closed at either end, since the bound $|\Pr_\mu[X\leq s]-\Pr_\nu[X\leq s]|\leq\eta$ also holds for the left limits $\Pr[X<s]=\lim_{s'\uparrow s}\Pr[X\leq s']$.
    Since $f$ is measurable, we have 
    \begin{align}\label{eq:expectation-df}
        \Exp_{X\sim\mu}[f(X)] 
        = \int_0^M \d t \Pr_{X\sim\mu}[f(X)>t]
        = \int_0^M \d t \Pr_{X\sim\mu}[X\in I_t],
    \end{align}
    where $I_t:=\{x\in[0,1]:f(x)>t\}$.
    Since the equality in \eq{expectation-df} holds for every distribution, it holds for $\nu$ in place of $\mu$.
    Therefore, 
    \begin{align}
        \left|\Exp_{X\sim\mu}[f(X)] - \Exp_{X\sim\nu}[f(X)]\right|
        &\leq \int_0^M \left| \Pr_{X\sim\mu}[f(X)>t] - \Pr_{X\sim\nu}[f(X)>t] \right| \d t \\
        &\leq \min\{2\eta,1\} \int_0^M \d t
        = M\cdot \min\{2\eta, 1\}.
    \end{align}
\end{proof}

To use \lem{expectation-ks} to prove closeness of the expected Shannon entropy, we note that the function $f(x)=x\log(1/x)$ on $[0,1]$ (with $f(0)=0$) has a single maximum at $x_*=1/e$, is increasing on $(0,x_*]$ and decreasing on $[x_*,1]$, and satisfies $f(0)=f(1)=0$; hence every superlevel set of $f$ is an interval. The same holds for the truncated and rescaled functions $\phi_B^+$ defined in the proof below, which are of the form $x\mapsto cf(x/c)\mathbf 1[x\leq c]$ with $c=B/N$. 

\begin{theorem}\label{thm:shannon-entropy-mean}
    For $\epsilon\in[0,1]$ and $k\geq 1$, let $\mu(k,\epsilon)$ be the distribution of any $n$-qubit $\epsilon$-approximate $k$-design, and $\mu$ be the Haar measure.
    Let $S=S(U)$ be the Shannon entropy of the output distribution of $U$, and let $\eta=\eta(k,\epsilon)$ be the maximum over $x\in\{0,1\}^n$ of the Kolmogorov distance between the law of $P_x$ for $U\sim\mu(k,\epsilon)$ and that for $U\sim\mu$.
    Then
    \begin{align}
        \left| \Exp_{U\sim\mu(k,\epsilon)}[S] - \Exp_{U\sim\mu}[S]\right| \leq \frac{4\sqrt{3\eta}}{e\ln 2}.
    \end{align}
\end{theorem}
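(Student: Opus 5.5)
The plan follows the outline sketched in \sec{overview-entropy}. Let $f(x)=x\log(1/x)$ and $c=B/N$ for a cutoff $B\geq 1$ to be chosen later. For each $x$ we split
\begin{align}
    f(P_x)=\phi_B^+(P_x)+\phi_B^-(P_x),\qquad \phi_B^+(p):=f(p)\mathbf 1[p\leq c],\quad \phi_B^-(p):=f(p)\mathbf 1[p>c].
\end{align}
Rather than working with $f$ directly on the low part, I would rescale using the identity $f(p)=c f(p/c)+p\log(1/c)$. Since $\sum_x P_x=1$, the total $\sum_x P_x\log(N/B)=\log(N/B)$ is deterministic and cancels in the difference of expectations.

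The low part then reduces to $\sum_x c f(P_x/c)\mathbf 1[P_x\leq c]$. Each summand is a function of the form $p\mapsto c f(p/c)\mathbf 1[p\leq c]$, which takes values in $[0,c/(e\ln 2)]$ and has interval superlevel sets, as noted just before the theorem. Applying \lem{expectation-ks} with $M=B/(eN\ln 2)$ to each of the $N$ marginals gives a total discrepancy of at most $N\cdot 2\eta B/(eN\ln 2)=2\eta B/(e\ln 2)$.

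The remaining high part is $\sum_x \big(f(P_x)-P_x\log(1/c)\big)\mathbf 1[P_x>c]=-\sum_x P_x\log(P_x/c)\mathbf 1[P_x>c]$. This is nonpositive, and its magnitude is at most $\sum_x P_x\log(P_x/c)\mathbf 1[P_x>c]$. Using $\log u\leq (u-1)/(e\ln 2)\cdot e\leq u/ (e\ln 2)$, or more simply $\ln u\leq u/e$ for $u>0$, the magnitude is bounded by $\sum_x P_x^2/(c\,e\ln 2)$. Under Haar, $\Exp[P_x^2]=2/(N(N+1))\leq 2/N^2$. Under the design, \thm{polynomial-relative-error} with $k\geq 1$ gives $\Exp[P_x^2]\leq (1+\epsilon)2/N^2\leq 4/N^2$. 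Summing over $N$ entries and dividing by $c=B/N$, each high part has absolute expectation at most $4/(eB\ln 2)$ under the design and $2/(eB\ln 2)$ under Haar. The difference of the two is therefore at most $4/(eB\ln 2)$, using that both terms are nonpositive so the difference is bounded by the larger magnitude.

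Combining the two parts gives a total discrepancy of at most $\frac{1}{e\ln 2}\big(2\eta B+4/B\big)$. Optimizing with $B=\sqrt{2/\eta}$ yields $\frac{2\sqrt{8\eta}}{e\ln 2}=\frac{4\sqrt{2\eta}}{e\ln 2}$, which is within the claimed $\frac{4\sqrt{3\eta}}{e\ln2}$. If $B<1$ or $\eta$ is large, the claimed bound exceeds the trivial range and there is nothing to prove, so that case needs only a brief remark.

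The main obstacle is getting the constants in the high part right. The $\sqrt 3$ in the statement suggests the authors used cruder moment bounds, for instance $\Exp[P_x^2]\leq 3/N^2$ for the design. I would keep the argument flexible by bounding with $(2+2(1+\epsilon))/B\leq 6/B$ if needed, which then gives exactly $\frac{2\sqrt{2\cdot 6}\sqrt\eta}{e\ln 2}=\frac{4\sqrt{3\eta}}{e\ln2}$. This suggests the intended proof adds the two high-part magnitudes rather than taking their maximum.
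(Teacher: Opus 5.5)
Your proposal is correct and is essentially the paper's proof: the rescaling identity $f(p)=cf(p/c)+p\log(1/c)$ reproduces exactly the paper's decomposition $S=\log(N/B)+\sum_x\phi_B^+(P_x)-\sum_x\phi_B^-(P_x)$. As in the paper, the low part is controlled by \lem{expectation-ks}, and the high part by $\ln u\leq u/e$ together with the second moments from \thm{polynomial-relative-error}.

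The only difference is in combining the high parts. The paper adds the two magnitudes, which gives $(2+\epsilon)/B$ and hence the $\sqrt3$. You instead use that both high parts are nonpositive and take the larger magnitude, which yields the slightly sharper $4\sqrt{2\eta}/(e\ln 2)$.

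Three small points. Your first inequality chain, $\log u\leq u/(e\ln 2)$ via $u-1$, is false as written, but the alternative $\ln u\leq u/e$ that you also give is the correct one. The $B<1$ caveat is unnecessary, since no lower bound on $B$ is needed and $\eta\leq 1$ already forces $B\geq\sqrt2$. The case $\eta=0$ should be handled separately, e.g.\ by letting $B\to\infty$.
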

\begin{proof}
    By definition, we have $S=\sum_{x\in\bit^n}-P_x\log(P_x)$, where $P_x=|\bra{x}U\ket{0}|^2$. 
    If $\eta=0$, the law of each $P_x$ is the same under the two measures, hence so is $\Exp[S]=\sum_x\Exp[-P_x\log P_x]$; we therefore assume $\eta>0$.
    Now define 
    \begin{align}\label{eq:shannon-entropy-decomposition}
        \phi_B^+(x) := 
        \begin{cases}
        x\log\left(\frac{B}{Nx}\right) & \text{if } B/N\geq x \\
        0 & \text{otherwise,} \\
        \end{cases}
        \qquad
        \phi_B^-(x) := 
        \begin{cases}
        -x\log\left(\frac{B}{Nx}\right) & \text{if } B/N\leq x \\
        0 & \text{otherwise.} \\
        \end{cases}
    \end{align} 
    Then by definition,
    \begin{align}\label{eq:entropy-deficit}
        S(P) = \log(N/B) + \sum_{x\in\bit^n} \phi_B^+(P_x) - \sum_{x\in\bit^n} \phi_B^-(P_x).
    \end{align}
    Since the first term is distribution-independent, it suffices to bound the closeness of the remaining two terms in expectation. 
    Marginal closeness in Kolmogorov distance gives 
    \begin{align}
        &\left|\Exp_{U\sim\mu(k,\epsilon)}\left[\sum_{x\in\bit^n} \phi_B^+(P_x)\right]- \Exp_{U\sim\mu}\left[\sum_{x\in\bit^n} \phi_B^+(P_x)\right]\right| \\
        &\qquad\leq 
        \sum_{x\in\bit^n}\left|\Exp_{U\sim\mu(k,\epsilon)}\left[\phi_B^+(P_x)\right]- \Exp_{U\sim\mu}\left[\phi_B^+(P_x)\right]\right| \leq 
        N \cdot 2\|\phi_B^+\|_\infty \cdot \eta 
        \leq \frac{2\eta B}{e\ln 2},
    \end{align}
    where the last inequality follows from \lem{expectation-ks} and the fact that $\|\phi_B^+\|_\infty \leq \frac{B}{N}\max_{y\in(0,1]} y\log(1/y) = \frac{B}{Ne\ln 2}$.

    For the second part, since $\ln x\leq x/e$ for $x\geq 1$, we have 
    $\phi_B^-(x)\leq \frac{Nx^2}{eB\ln 2}$, and therefore, for $\tau\in\{\mu(k,\epsilon),\mu\}$, 
    \begin{align}
        0\leq \sum_{x\in\bit^n} \Exp_{U\sim\tau}[\phi_B^{-}(P_x)] 
        &\leq \frac{N}{eB\ln 2}\Exp_{U\sim\tau}\left[\sum_{x\in\bit^n} P_x^2\right] 
        \leq \frac{2(1+\epsilon)}{eB\ln 2},
    \end{align}
    where the last inequality holds since $\Exp_{U\sim\mu}[P_x^2]=\frac{2}{N(N+1)}$ for every $x$, and for $\mu(k,\epsilon)$ the second moment is larger by a multiplicative factor of at most $1+\epsilon$ by \thm{polynomial-relative-error} (applied to the entry $(x,0)$ with $p(u)=u$, which requires $k\geq 1$). 

    Therefore we have 
    \begin{align}
        \left| \Exp_{U\sim\mu(k,\epsilon)}[S] - \Exp_{U\sim\mu}[S]\right| \leq
        \frac{2}{e\ln 2} \left(\eta B + \frac{2+\epsilon}{B}\right) \leq \frac{4\sqrt{(2+\epsilon)\eta}}{e\ln 2},
    \end{align}
    by choosing $B=\sqrt{(2+\epsilon)/\eta}$, which is at most $\frac{4\sqrt{3\eta}}{e\ln 2}$ for $\epsilon\leq 1$.
\end{proof}

\thm{shannon-entropy-mean} implies that the mean of the Shannon entropy of a design approximates that of the Haar random unitary well.

We also note that in \thm{shannon-entropy-mean}, we bound closeness in the expected Shannon entropy using elementwise closeness in Kolmogorov distance, without introducing a multiplicative loss proportional to the dimension $N=2^n$. 
A direct coordinatewise estimate would incur such a loss in general. 
Here we can avoid the loss by exploiting the identity in \eq{entropy-deficit}, which expresses the entropy deficit as an average of functions of the normalized coordinates. 
Marginal closeness in Kolmogorov distance bounds the difference of the expected truncated entropy contributions by $O(\eta B/N)$ per output probability. Summing over the $N$ outputs therefore gives an $O(\eta B)$ bound.
For the contribution of the tail, we then rely on the moment closeness inherent in the unitary design.

\subsection{Entropy fluctuations from pairwise closeness}\label{sec:variance-shannon}

In this section, we show that the standard deviation of the Shannon entropy of an $\epsilon$-approximate unitary $k$-design is close to that of a Haar random unitary acting on the same number of qubits.
The proof of \thm{shannon-entropy-mean} used only the law of each single output probability.
The variance of $S$ involves pairs of output probabilities, and the quantity that replaces the Kolmogorov distance $\eta$ is the following.

\begin{definition}\label{dfn:pair-distance}
    For an $n$-qubit distribution $\mu(k,\epsilon)$, let $\delta=\delta(k,\epsilon)$ be the supremum, over $x,x'\in\bit^n$ (not necessarily distinct) and $s,t\geq 0$ not both zero, of the Kolmogorov distance between the law of $sP_x+tP_{x'}$ for $U\sim\mu(k,\epsilon)$ and the law of the same random variable for $U\sim\mu$, where $\mu$ denotes the Haar measure on $n$ qubits.
\end{definition}

Since the Kolmogorov distance is invariant under multiplication of both random variables by the same positive constant, $\delta$ is also the supremum over the normalized projections $(sP_x+tP_{x'})/(s+t)$, which take values in $[0,1]$.
Taking $t=0$ shows that $\delta\geq\eta$, where $\eta$ is the single-entry Kolmogorov distance of \thm{shannon-entropy-mean}.
We first show that $\delta=O(k^{-1/2}+\epsilon)$ for every $\epsilon$-approximate $k$-design, by combining the pointwise comparison of \lem{pointwise-df-to-christoffel} with uniform bounds on the Christoffel functions of $\Beta(1,N-1)$ and $\Beta(2,N-2)$, which we extract from \thm{jacobi-weight-upper-bound} and \lem{pointwise-christoffel}.

\begin{lemma}[Uniform Christoffel bounds]\label{lem:uniform-christoffel}
    There is an absolute constant $C_0$ such that the following hold, where $\mu_{a,D}$ denotes the law of $\Beta(a,D)$ as in \sec{jacobi-gauss}.
    \begin{enumerate}
        \item For all integers $D\geq 1$ and $\ell\geq 1$ and all $x\in[0,1]$, $\lambda_{\mu_{1,D},\ell}(x)\leq C_0\ell^{-1/2}$.
        \item For all integers $D\geq 2$ and $\ell\geq 3$ and all $x\in[0,1]$, $\lambda_{\mu_{2,D},\ell}(x)\leq C_0\ell^{-1/2}$.
    \end{enumerate}
\end{lemma}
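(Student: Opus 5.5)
For part 1, the plan is to repeat the domination argument of Step 2 in the proof of \thm{jacobi-weight-upper-bound}, but now at an arbitrary point $x\in[0,1]$ instead of only at a quadrature node. By \eq{beta-exponential-domination}, $\mu_{1,D}\leq \e\,\omega_D$ as measures on $[0,\infty)$, where we extend $\mu_{1,D}$ by zero outside $[0,1]$. The variational form \eq{christoffel-mu-d} is monotone in the measure, so $\lambda_{\mu_{1,D},\ell}(x)\leq \e\,\lambda_{\omega_D,\ell}(x)$. The scaling identity \eq{scaling} holds for every real $x$, since its proof never used that $x$ was a node. Hence $\lambda_{\omega_D,\ell}(x)=\lambda_{\omega,\ell}(Dx)$, and \lem{exponential} gives the bound $C\ell^{-1/2}$ uniformly in $u=Dx\geq 0$. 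This also covers $\ell=1$, where the Christoffel function equals $1$.

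For part 2, the first option is \lem{pointwise-christoffel}. It would require the scaling $\sigma=$ law of $NX$ with $N=D+2$ and $X\sim\Beta(2,D)$, since \eq{christoffel-mu-d} is invariant under affine rescaling of the variable. The lemma needs $N\geq 4$ and $k=\ell-1\geq 2$, i.e. $D\geq 2$ and $\ell\geq 3$, which matches the hypotheses. However, its bound $Ck^{-1/2}(t+1/k)$ on $[0,1]$ and $Ck^{-1/2}t^{-2}$ on $[1,\infty)$ is uniformly $O(k^{-1/2})$ only where $t+1/k=O(1)$, which holds on $[0,1]$ as well as on $[1,\infty)$. So $\sup_t\lambda_{\sigma,k+1}(t)\leq 2Ck^{-1/2}$, and $k\geq \ell/2$ turns this into $O(\ell^{-1/2})$.

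The one point to check is that \lem{pointwise-christoffel} was stated for $N$ a power of two only implicitly. Its proof uses only $N\geq 4$, through \eq{scaled-beta-exponential}, so it applies to every integer $D\geq 2$.

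I expect the main obstacle to be bookkeeping rather than mathematics. The index conventions must line up: $\ell$ nodes means degree at most $\ell-1$, so $\lambda_{\sigma,k+1}$ corresponds to $\ell=k+1$. One must also check that the domination and scaling steps are stated for every $x$, not only at nodes. If either cited lemma had hidden node-specific assumptions, the fallback would be to redo the exponential domination for $\Beta(2,D)$ directly. Its scaled density is at most $te^{-t/4}\leq 4e^{-1}e^{-t/8}\cdot 2$, which dominates $\sigma$ by a constant multiple of a slowed exponential measure, and then \lem{exponential} together with rescaling gives the uniform $O(\ell^{-1/2})$ bound immediately.
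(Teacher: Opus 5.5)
Your proposal is correct and follows the paper's proof. Part 1 uses the domination $\mu_{1,D}\le e\,\omega_D$, the scaling identity at an arbitrary point, and \lem{exponential}; part 2 rescales to the law of $NX$ with $N=D+2$ and applies \lem{pointwise-christoffel} with $k=\ell-1\ge 2$. Your fallback is also valid and even bypasses \lem{pointwise-christoffel}: since $t e^{-t/4}\le (8/e)\,e^{-t/8}$, the rescaled measure is dominated by a constant multiple of a mean-$8$ exponential measure.
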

\begin{proof}
    For the first statement, we use Step~2 of the proof of \thm{jacobi-weight-upper-bound}, which applies to every $x\in[0,1]$ and not only to the quadrature nodes: the measure domination $\mu_{1,D}\leq\e\,\omega_D$ and the variational definition \eq{christoffel-mu-d} give $\lambda_{\mu_{1,D},\ell}(x)\leq\e\,\lambda_{\omega_D,\ell}(x)=\e\,\lambda_{\omega,\ell}(Dx)$, where the equality is the scaling identity \eq{scaling}, and \lem{exponential} bounds the right-hand side by $\e C\ell^{-1/2}$.

    For the second statement, let $N:=D+2\geq 4$, and let $\sigma$ be the law of $NX$ for $X\sim\Beta(2,D)=\Beta(2,N-2)$, as in \lem{pointwise-christoffel}.
    The map $p\mapsto p(N\,\cdot)$ is a bijection between the polynomials of degree at most $\ell-1$ with value $1$ at $Nx$ and those with value $1$ at $x$, and $\Exp_{Z\sim\sigma}[p(Z)^2]=\Exp_{X\sim\Beta(2,D)}[p(NX)^2]$; hence $\lambda_{\mu_{2,D},\ell}(x)=\lambda_{\sigma,\ell}(Nx)$.
    By \lem{pointwise-christoffel} applied with $k=\ell-1\geq 2$, and using $\ell/(\ell-1)\leq 3/2$, we have $\lambda_{\sigma,\ell}(t)\leq C(\ell-1)^{-1/2}(t+(\ell-1)^{-1})\leq\frac32C(\ell-1)^{-1/2}\leq (\frac32)^{3/2}C\ell^{-1/2}$ for $t\in[0,1]$ and $\lambda_{\sigma,\ell}(t)\leq C(\ell-1)^{-1/2}t^{-2}\leq (\frac32)^{1/2}C\ell^{-1/2}$ for $t\geq 1$.
\end{proof}

\begin{lemma}[Pair projections]\label{lem:pair-projection}
    For integers $n\geq 2$ and $k\geq 2$ and $\epsilon\in[0,1]$, let $\mu(k,\epsilon)$ be any $n$-qubit $\epsilon$-approximate $k$-design.
    Then $\delta(k,\epsilon)\leq C_1(k^{-1/2}+\epsilon)$ for an absolute constant $C_1$.
\end{lemma}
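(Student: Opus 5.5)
The plan is to apply \lem{pointwise-df-to-christoffel} to the scalar random variable $V:=sP_x+tP_{x'}$, or rather to its normalized version $V/(s+t)\in[0,1]$. The required moment condition follows from \rmk{several-blocks}: a real polynomial $p$ of degree at most $k$ in $V$ is a polynomial of total degree at most $k$ in the two block probabilities $P_x,P_{x'}$ of the column. Hence \eq{relative-error-p} holds for $p(V)$, which is exactly \eq{pointwise-bound-assumption} with $\sigma$ the Haar law of $V$ and $\sigma(k,\epsilon)$ the design law. The lemma then gives
\begin{align}
|F_{\sigma(k,\epsilon)}(z)-F_\sigma(z)|\leq\epsilon/2+(1+\epsilon)\lambda_{\sigma,k+1}(z)\quad\text{for every } z,
\end{align}
so it suffices to show $\sup_z\lambda_{\sigma,k+1}(z)=O(k^{-1/2})$.

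Before that, I would check the hypothesis of \lem{saturating-polynomial}: $\sigma$ should have support an interval and a density positive in its interior. I would treat the cases separately. If $x=x'$, or if one of $s,t$ is zero, then $V$ is a positive multiple of $P_x$. Its law is a scaled $\Beta(1,N-1)$, and because Christoffel functions are invariant under rescaling, item~1 of \lem{uniform-christoffel} with $\ell=k+1$ gives the bound directly.

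Now take $x\neq x'$ with $s,t>0$, and normalize to $s+t=1$, assuming $s\geq t$ by symmetry. The pair $(P_x,P_{x'})$ has density $(N-1)(N-2)(1-u-v)^{N-3}$. Integrating along lines gives, for $v=su+t(\cdot)$, the density of $V$ at $z\in(0,s)$:
\begin{align}
f_V(z)=\frac{(N-1)(N-2)}{st}\int (1-u-u')^{N-3}\,\d(\cdot),
\end{align}
where the integral runs over the segment $su+tu'=z$. I would compare $f_V$ with two reference densities by domination. When $t/s$ is small, $V\approx sP_x$ and $f_V\leq C\,f_{\Beta(1,N-1)}$ after rescaling by $s$. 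When $t/s$ is bounded below (say $t\geq s/2$), $V$ is comparable to $(P_x+P_{x'})/2$ and $f_V$ should be dominated by a constant times the density of $\tfrac{1}{2}\Beta(2,N-2)$ rescaled.

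Concretely, for $s\geq 2t$ I would bound the segment integral by $\int_0^{z/t}$ of a decaying exponential, obtaining $f_V(z)\leq \frac{C(N-1)}{s}(1-z)^{N-3}$. Also, $V\le s$ fails to hold for $z\in(s,1)$, so that range must be covered too; there I would use a similar bound. Domination of measures transfers to Christoffel functions through the variational form \eq{christoffel-mu-d}, as in Step~2 of \thm{jacobi-weight-upper-bound}. Combining this with the uniform bounds of \lem{uniform-christoffel} (item~1, and item~2 with $\ell=k+1\geq3$) yields $\lambda_{\sigma,k+1}=O(k^{-1/2})$ and therefore $\delta\le\epsilon/2+2C\,k^{-1/2}$.

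The main obstacle is making this density domination uniform in the ratio $t/s$ and in $N$, in particular near the kink at $z=t$ and in the regime $t\approx s$. If a clean domination fails there, my fallback is to prove an exponential-type bound $f_V(z)\leq C\,N\min\{1,Nz\}\,e^{-cNz}$ directly. I would then rerun the arguments of \lem{exponential} and \lem{pointwise-christoffel}, whose proofs use only such upper bounds on the density, to obtain the uniform $O(k^{-1/2})$ Christoffel bound.
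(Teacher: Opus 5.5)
Your plan is essentially the paper's proof: the moment condition from \rmk{several-blocks} feeds \lem{pointwise-df-to-christoffel}, the Haar law of the normalized pair projection is dominated by twice $\Beta(1,N-1)$ when $t/s\le 1/2$ and by twice $\Beta(2,N-2)$ when $t/s\ge 1/2$, and \lem{uniform-christoffel} then finishes. The obstacle you anticipate does not arise, so neither kink analysis nor the fallback is needed. Normalize $s=1$ and write $\theta=t/s$; the density is $(N-1)(N-2)\int_0^{v_{\max}}(1-w-(1-\theta)v)^{N-3}\,\d v$ with $v_{\max}=\min\{w/\theta,(1-w)/(1-\theta)\}$. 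Using $v_{\max}\le w/\theta$ and bounding the integrand by $(1-w)^{N-3}$ gives $f\le\theta^{-1}f_{\Beta(2,N-2)}$ for every $w$. Using $v_{\max}\le(1-w)/(1-\theta)$ and integrating exactly gives $f\le(1-\theta)^{-1}f_{\Beta(1,N-1)}$ for every $w$. Also, in your normalization $s+t=1$ we have $V\le s$ always, so there is no mass on $(s,1)$ to cover.
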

\begin{proof}
    Fix $x,x'\in\bit^n$ and $s,t\geq 0$ not both zero.
    By symmetry we may assume $s\geq t$, so that $s>0$, and by scale invariance we may replace $sP_x+tP_{x'}$ by $W:=P_x+\theta P_{x'}$ with $\theta:=t/s\in[0,1]$.
    If $x=x'$, then $W=(1+\theta)P_x$, and by scale invariance again we may take $\theta=0$.
    Let $\sigma$ be the law of $W$ for $U\sim\mu$ and $\sigma(k,\epsilon)$ its law for $U\sim\mu(k,\epsilon)$; both are supported in $[0,1]$.

    For every real polynomial $p$ of degree at most $k$, $p(u+\theta v)$ is a polynomial of total degree at most $k$ in the two output probabilities $P_x,P_{x'}$ (in the single probability $P_x$ if $x=x'$), so \rmk{several-blocks} (or \thm{polynomial-relative-error} if $x=x'$) gives
    \begin{align}
        (1-\epsilon)\Exp_{W\sim\sigma}[p(W)^2]\leq\Exp_{W\sim\sigma(k,\epsilon)}[p(W)^2]\leq(1+\epsilon)\Exp_{W\sim\sigma}[p(W)^2],
    \end{align}
    which is the hypothesis \eq{pointwise-bound-assumption} of \lem{pointwise-df-to-christoffel}.

    If $x=x'$, then $\theta=0$ and $\sigma=\mu_{1,N-1}$.
    Let $x\neq x'$.
    For a Haar random unitary, $(P_x,P_{x'},1-P_x-P_{x'})\sim\Dir(1,1,N-2)$, and $(P_x,P_{x'})$ has the density $g(u,v)=(N-1)(N-2)(1-u-v)^{N-3}$ on the triangle $\{u,v\geq 0,\ u+v\leq 1\}$ (\sec{preliminaries-beta}).
    For $\theta\in(0,1)$ and $w\in(0,1)$, substituting $u=w-\theta v$, the density of $W$ is
    \begin{align}\label{eq:pair-density}
        f_\theta(w)=(N-1)(N-2)\int_0^{v_{\max}(w)}\big(1-w-(1-\theta)v\big)^{N-3}\d v,
        \qquad v_{\max}(w):=\min\Big\{\frac{w}{\theta},\frac{1-w}{1-\theta}\Big\},
    \end{align}
    which is positive because the integrand is positive on $[0,v_{\max}(w))$.
    Since the integrand is at most $(1-w)^{N-3}$ and $v_{\max}(w)\leq w/\theta$,
    \begin{align}
        f_\theta(w)
        &\leq (N-1)(N-2)\int_0^{w/\theta} (1-w)^{N-3}\d v \\\label{eq:pair-domination-2}
        &=\frac{(N-1)(N-2)}{\theta}\,w(1-w)^{N-3}=\frac1\theta f_{\mu_{2,N-2}}(w),
    \end{align}
    where $f_{\mu_{2,N-2}}$ is the density of $\Beta(2,N-2)$; on the other hand, extending the range of integration to $[0,(1-w)/(1-\theta)]$ and integrating,
    \begin{align}\label{eq:pair-domination-1}
        f_\theta(w)
        &\leq(N-1)(N-2)\int_0^{\frac{1-w}{1-\theta}}\big(1-w-(1-\theta)v\big)^{N-3}\d v \\
        &=\frac{(N-1)(1-w)^{N-2}}{1-\theta}=\frac{1}{1-\theta}f_{\mu_{1,N-1}}(w).
    \end{align}
    For $\theta=1$, $W=P_x+P_{x'}$ is the probability of a block of two basis states, so $\sigma=\mu_{2,N-2}$ by the aggregation property (\sec{preliminaries-beta}) and \eq{pair-domination-2} holds with equality; for $\theta=0$, $\sigma=\mu_{1,N-1}$ and \eq{pair-domination-1} holds with equality.
    In all cases, $\sigma$ has support $[0,1]$ and a positive density on $(0,1)$, so \lem{pointwise-df-to-christoffel} applies to the pair $\sigma,\sigma(k,\epsilon)$, and we have the domination of measures $\sigma\leq 2\mu_{2,N-2}$ if $\theta\geq 1/2$ and $\sigma\leq 2\mu_{1,N-1}$ if $\theta\leq 1/2$.

    By the variational definition \eq{christoffel-mu-d}, for $\rho\in\{\mu_{1,N-1},\mu_{2,N-2}\}$, a domination $\sigma\leq 2\rho$ implies $\lambda_{\sigma,k+1}(z)\leq 2\lambda_{\rho,k+1}(z)$ for every $z$, so \lem{uniform-christoffel} gives $\lambda_{\sigma,k+1}(z)\leq 2C_0k^{-1/2}$ for all $z\in[0,1]$ (for both cases, we use $(k+1)^{-1/2}\leq k^{-1/2}$).
    Both $\sigma$ and $\sigma(k,\epsilon)$ are supported in $[0,1]$, so their distribution functions coincide outside $[0,1]$, and for $z\in[0,1]$ \lem{pointwise-df-to-christoffel} gives
    \begin{align}
        |F_{\sigma(k,\epsilon)}(z)-F_\sigma(z)|
        &\leq\epsilon\min\{F_\sigma(z),1-F_\sigma(z)\}+(1+\epsilon)\lambda_{\sigma,k+1}(z) 
        \leq\frac\epsilon2+4C_0k^{-1/2}.
    \end{align}
    Hence the Kolmogorov distance between $\sigma(k,\epsilon)$ and $\sigma$ is at most $4C_0k^{-1/2}+\epsilon/2$, and the lemma follows with $C_1=\max\{4C_0,1/2\}$.
\end{proof}

We now carry out the second-moment analysis.

\begin{theorem}\label{thm:shannon-entropy-variance}
    For integers $n\geq 2$ and $k\geq 2$ and $\epsilon\in[0,1]$, let $\mu(k,\epsilon)$ be the distribution of any $n$-qubit $\epsilon$-approximate $k$-design and $\mu$ the Haar measure, let $S$ be the Shannon entropy of the output distribution of $U$, and let $\delta=\delta(k,\epsilon)$ be as in \dfn{pair-distance}.
    Then there is an absolute constant $C$ such that 
    \begin{align}
        \left|\Sd_{U\sim\mu(k,\epsilon)}[S]-\Sd_{U\sim\mu}[S]\right|\leq 17\,\delta^{1/4}\leq C\,(k^{-1/2}+\epsilon)^{1/4},
    \end{align}
    where $\Sd_{U\sim\tau}[S]$ denotes the standard deviation of $S$ for $U\sim\tau$.
\end{theorem}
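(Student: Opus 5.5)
We follow the decomposition sketched in \sec{overview-entropy}. Working in natural logarithms (the factor $1/\ln 2$ only changes constants), we write
\begin{align}
    S\ln 2=\ln(N/B)+G_B-R_B,\qquad G_B:=\sum_x P_x\ln\Big(1+\frac{B}{NP_x}\Big),\qquad R_B:=\sum_x P_x\ln\Big(1+\frac{NP_x}{B}\Big).
\end{align}
This identity follows from $\ln(1/p)=\ln(N/B)+\ln(1+B/(Np))-\ln(1+Np/B)$ and $\sum_xP_x=1$. The first term is deterministic. By Minkowski's inequality for standard deviations, $|\Sd[S]-\Sd[G_B/\ln 2]|\leq\Sd[R_B]/\ln 2$ under each measure. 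It therefore suffices to bound three quantities: $\Sd[R_B]$ under both $\mu$ and $\mu(k,\epsilon)$, and $|\Sd_{\mu(k,\epsilon)}[G_B]-\Sd_\mu[G_B]|$.

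\textbf{Tail part.} Since $\ln(1+y)\leq y$, we have $0\leq R_B\leq (N/B)\sum_xP_x^2$. Hence $\Sd[R_B]\leq\sqrt{\Exp[R_B^2]}\leq (N/B)\sqrt{\Exp[(\sum_xP_x^2)^2]}$. Expanding $(\sum_xP_x^2)^2=\sum_{x,x'}P_x^2P_{x'}^2$ gives a polynomial of total degree $4$ in output probabilities of a single column. For Haar measure, the Dirichlet moments give $\Exp[P_x^4]=O(N^{-4})$ and $\Exp[P_x^2P_{x'}^2]=O(N^{-4})$, so the sum is $O(N^{-2})$. For the design, we need $k\geq 2$: this expectation is $\Exp[|f|^2]$ with $f=\sum_xP_x^2$ of degree $2$, and \rmk{several-blocks} then transfers the bound with a factor $1+\epsilon\leq 2$. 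Thus $\Sd[R_B]=O(1/B)$ under both measures, with an explicit constant.

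\textbf{Smooth part.} We use the Laplace representation
\begin{align}
    x\ln\Big(1+\frac{B}{Nx}\Big)=\frac BN-\frac1N\int_0^\infty w_B(s)e^{-sNx}\d s,\qquad w_B(s)=\frac{1-e^{-sB}}{s^2}-\frac{B e^{-sB}}{s}\geq 0.
\end{align}
This is checked from $\ln(1+c/y)=\int_0^\infty (1-e^{-sc})e^{-sy}\,\d s/s$ and one integration by parts; the total mass $\int w_B=B$ follows by letting $x\to0$. Then $G_B=B-\frac1N\sum_x\int w_B(s)e^{-sNP_x}\d s$, so
\begin{align}
    \Var[G_B]=\frac1{N^2}\sum_{x,x'}\iint w_B(s)w_B(t)\Big(\Exp[e^{-N(sP_x+tP_{x'})}]-\Exp[e^{-sNP_x}]\Exp[e^{-tNP_{x'}}]\Big)\d s\,\d t.
\end{align}
Each expectation is of a function of the form $e^{-c\,(\cdot)}\in[0,1]$, which is monotone in a nonnegative linear combination. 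By \dfn{pair-distance} and \lem{expectation-ks} (monotone functions have interval superlevel sets), every such expectation differs between the two measures by at most $2\delta$. The product terms then differ by at most $4\delta$. Summing over the $N^2$ pairs and integrating against the weights, whose total mass is $B^2$, gives $|\Var_{\mu(k,\epsilon)}[G_B]-\Var_\mu[G_B]|\leq 6\delta B^2$.

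\textbf{Main obstacle: converting the variance bound to a standard-deviation bound without losing $N$-dependence.} We use $|\sqrt a-\sqrt b|\leq\sqrt{|a-b|}$, which gives $|\Sd_{\mu(k,\epsilon)}[G_B]-\Sd_\mu[G_B]|\leq\sqrt6\,\delta^{1/2}B$. Combining the three bounds,
\begin{align}
    \big|\Sd_{\mu(k,\epsilon)}[S]-\Sd_\mu[S]\big|\leq\frac{1}{\ln 2}\Big(\sqrt6\,\delta^{1/2}B+\frac{c}{B}\Big).
\end{align}
Choosing $B=\delta^{-1/4}$ gives $O(\delta^{1/4})$. The explicit constant $17$ then comes from tracking $c$ in the fourth-moment computation. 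The second inequality in the statement follows from \lem{pair-projection}.

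The delicate points are the following. First, the Laplace weight must be genuinely nonnegative with mass exactly $B$; this is what keeps the pair-sum free of $N$. Second, the tail fourth-moment constant must be bounded uniformly in $N$. Third, the case $\delta\geq1$ is handled trivially, since then $B\leq1$ is still admissible and the bound holds with a large constant.
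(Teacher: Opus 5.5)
Your proof is correct and follows the paper's proof almost step for step: the same smooth decomposition $S=\log(N/B)+G_B-R_B$, the same nonnegative Laplace weight $w_B(s)=\int_0^B u e^{-su}\,\d u$ of total mass $B$, the same use of \lem{expectation-ks} with the pair distance to get $|\Var_{\mu(k,\epsilon)}[G_B]-\Var_\mu[G_B]|\leq 6\delta B^2$ (in nats), and the same $\sqrt{|a-b|}$ conversion followed by a choice of $B$. The only deviation is in the tail, where you bound $\Exp[(\sum_xP_x^2)^2]$ directly via \rmk{several-blocks} instead of the paper's Cauchy--Schwarz reduction to $\sum_x\Exp[P_x^4]$; this only sharpens the constant, to about $10.5\,\delta^{1/4}$ even with $B=\delta^{-1/4}$. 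Two small corrections: your concern about $\delta\geq 1$ is vacuous, since $\delta\leq 1$ and no constraint on $B$ is ever used, while the case $\delta=0$ does need separate treatment, by letting $B\to\infty$.
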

\begin{proof}
    The second inequality follows from the first and \lem{pair-projection}; we prove the first.
    Throughout, $c:=1/\ln 2$, the symbol $\tau\in\{\mu(k,\epsilon),\mu\}$ denotes either of the two measures, and $B>0$ is a parameter to be chosen at the end.
    In place of the truncated functions in \eq{shannon-entropy-decomposition}, we use the smooth functions
    \begin{align}
        \phi_B^+(x):=x\ln\Big(1+\frac{B}{Nx}\Big),\qquad \phi_B^-(x):=x\ln\Big(1+\frac{Nx}{B}\Big),\qquad x\in(0,1],
    \end{align}
    with $\phi_B^\pm(0):=0$.
    For $x\in(0,1]$ we have $c\big(\phi_B^+(x)-\phi_B^-(x)\big)=cx\ln\frac{B}{Nx}=x\log\frac1x-x\log\frac NB$, and both sides vanish at $x=0$; summing over $x\in\bit^n$ and using $\sum_xP_x=1$,
    \begin{align}\label{eq:entropy-smooth-decomposition}
        S=\log\frac NB+G_B-R_B,\qquad G_B:=c\sum_{x\in\bit^n}\phi_B^+(P_x),\qquad R_B:=c\sum_{x\in\bit^n}\phi_B^-(P_x).
    \end{align}
    Since the first term is a constant, the triangle inequality for the $L^2$ norm applied to the centered random variables gives
    \begin{align}\label{eq:sd-sandwich}
        \Sd_\tau[G_B]-\Sd_\tau[R_B]\leq\Sd_\tau[S]\leq\Sd_\tau[G_B]+\Sd_\tau[R_B],
    \end{align}
    and therefore
    \begin{align}\label{eq:sd-difference}
        \left|\Sd_{\mu(k,\epsilon)}[S]-\Sd_{\mu}[S]\right|\leq\left|\Sd_{\mu(k,\epsilon)}[G_B]-\Sd_{\mu}[G_B]\right|+\Sd_{\mu(k,\epsilon)}[R_B]+\Sd_{\mu}[R_B].
    \end{align}

    \emph{The tail part $R_B$.}
    Since $\ln(1+u)\leq u$ for $u\geq 0$, we have $0\leq\phi_B^-(x)\leq Nx^2/B$.
    By the Cauchy--Schwarz inequality, $\big(\sum_x\phi_B^-(P_x)\big)^2\leq N\sum_x\phi_B^-(P_x)^2\leq\frac{N^3}{B^2}\sum_xP_x^4$, and hence
    \begin{align}
        \Sd_\tau[R_B]^2\leq\Exp_\tau[R_B^2]\leq\frac{c^2N^3}{B^2}\sum_{x\in\bit^n}\Exp_\tau[P_x^4].
    \end{align}
    For the Haar measure, $P_x\sim\Beta(1,N-1)$ and $\Exp[P_x^4]=\frac{24}{N(N+1)(N+2)(N+3)}\leq\frac{24}{N^4}$ (\sec{preliminaries-beta}); for $\mu(k,\epsilon)$, \thm{polynomial-relative-error} applied to the entry $(x,0)$ with $p(u)=u^2$ (which requires $k\geq 2$) gives $\Exp_{\mu(k,\epsilon)}[P_x^4]\leq(1+\epsilon)\Exp_{\mu}[P_x^4]$.
    Hence, for both measures and $\epsilon\leq 1$,
    \begin{align}\label{eq:sd-rb}
        \Sd_\tau[R_B]\leq\sqrt{\frac{24(1+\epsilon)c^2}{B^2}}\leq\frac{7c}{B}.
    \end{align}

    \emph{A Laplace representation of $G_B$.}
    For $x\in(0,1]$, since $\frac{\d}{\d u}\,x\ln(u+Nx)=\frac{x}{u+Nx}$,
    \begin{align}
        \phi_B^+(x)=\int_0^B\frac{x}{u+Nx}\d u=\frac BN-\frac1N\int_0^B\frac{u}{u+Nx}\d u
        =\frac BN-\frac1N\int_0^\infty w_B(s)\,e^{-sNx}\d s,
    \end{align}
    where $w_B(s):=\int_0^Bu\,e^{-su}\d u$; here we wrote $\frac{1}{u+Nx}=\int_0^\infty e^{-s(u+Nx)}\d s$ and exchanged the order of integration, which is permitted since all integrands are non-negative.
    The last expression also equals $\phi_B^+(0)=0$ at $x=0$, because $\int_0^\infty w_B(s)\d s=\int_0^Bu\int_0^\infty e^{-su}\d s\,\d u=B$.
    Consequently, with
    \begin{align}
        \Lambda:=\sum_{x\in\bit^n}\int_0^\infty w_B(s)\,e^{-sNP_x}\d s\in[0,NB],
    \end{align}
    we have $G_B=c\,(B-\Lambda/N)$ and $\Var_\tau[G_B]=\frac{c^2}{N^2}\Var_\tau[\Lambda]$, and by Tonelli's theorem
    \begin{align}\label{eq:lambda-moments}
        \Exp_\tau[\Lambda]&=\sum_{x\in\bit^n}\int_0^\infty w_B(s)\,\Exp_\tau\big[e^{-NsP_x}\big]\d s, \\
        \Exp_\tau[\Lambda^2]&=\sum_{x,x'\in\bit^n}\int_0^\infty\!\!\int_0^\infty w_B(s)w_B(t)\,\Exp_\tau\big[e^{-N(sP_x+tP_{x'})}\big]\d s\,\d t.
    \end{align}

    \emph{Closeness of the variances of $G_B$.}
    Fix $x,x'\in\bit^n$ and $s,t\geq 0$ not both zero, and let $V:=(sP_x+tP_{x'})/(s+t)\in[0,1]$.
    By \dfn{pair-distance}, the laws of $V$ under $\mu(k,\epsilon)$ and under $\mu$ are within Kolmogorov distance $\delta$, and $f(v):=e^{-N(s+t)v}$ is a decreasing function from $[0,1]$ to $[0,1]$, so every superlevel set of $f$ is an interval; by \lem{expectation-ks} with $M=1$,
    \begin{align}\label{eq:laplace-closeness}
        \Big|\Exp_{\mu(k,\epsilon)}\big[e^{-N(sP_x+tP_{x'})}\big]-\Exp_{\mu}\big[e^{-N(sP_x+tP_{x'})}\big]\Big|\leq 2\delta,
    \end{align}
    and for $s=t=0$ both expectations equal $1$.
    Inserting \eq{laplace-closeness} into \eq{lambda-moments} and using $\int_0^\infty w_B(s)\d s=B$, we obtain $\big|\Exp_{\mu(k,\epsilon)}[\Lambda^2]-\Exp_{\mu}[\Lambda^2]\big|\leq 2\delta N^2B^2$ and $\big|\Exp_{\mu(k,\epsilon)}[\Lambda]-\Exp_{\mu}[\Lambda]\big|\leq 2\delta NB$; since $0\leq\Exp_\tau[\Lambda]\leq NB$, the latter gives $\big|\Exp_{\mu(k,\epsilon)}[\Lambda]^2-\Exp_{\mu}[\Lambda]^2\big|\leq 2\delta NB\cdot 2NB=4\delta N^2B^2$.
    Therefore $|\Var_{\mu(k,\epsilon)}[\Lambda]-\Var_{\mu}[\Lambda]|\leq 6\delta N^2B^2$, i.e.,
    \begin{align}
        \left|\Var_{\mu(k,\epsilon)}[G_B]-\Var_{\mu}[G_B]\right|\leq 6\delta c^2B^2,
    \end{align}
    and since $|\sqrt a-\sqrt b|\leq\sqrt{|a-b|}$ for $a,b\geq 0$,
    \begin{align}\label{eq:sd-gb}
        \left|\Sd_{\mu(k,\epsilon)}[G_B]-\Sd_{\mu}[G_B]\right|\leq\sqrt{6\delta}\,cB.
    \end{align}

    \emph{Conclusion.}
    By \eq{sd-difference}, \eq{sd-rb}, and \eq{sd-gb}, for every $B>0$,
    \begin{align}
        \left|\Sd_{\mu(k,\epsilon)}[S]-\Sd_{\mu}[S]\right|\leq\sqrt{6\delta}\,cB+\frac{14c}{B}.
    \end{align}
    If $\delta=0$, letting $B\to\infty$ shows that the two standard deviations coincide.
    Otherwise we choose $B=\sqrt{14}\,(6\delta)^{-1/4}$, for which both terms equal $c\sqrt{14}\,(6\delta)^{1/4}$, and obtain the bound $2\sqrt{14}\cdot 6^{1/4}\,c\,\delta^{1/4}<17\,\delta^{1/4}$.
\end{proof}

\subsection{Concentration of the Shannon entropy}\label{sec:concentration-shannon}

Since sampling from an $n$-qubit Haar random unitary (i.e., dimension $N=2^n$) yields a random vector that follows $\mathrm{Dir}(1^N)$, the mean and the variance were previously known (see, e.g., \cite{wolpertEstimatingFunctionsProbability1995} for Dirichlet distributions of any shape in general; the closed forms below are the case $\beta=1$ of \cite[Eqs.~(8)--(9)]{nemenmanEntropyInferenceRevisited2002}, which restate the formulas of \cite{wolpertEstimatingFunctionsProbability1995} with small corrections).

\begin{theorem}[{\cite[Theorems~7 and~8]{wolpertEstimatingFunctionsProbability1995}; see also \cite[Eqs.~(8)--(9)]{nemenmanEntropyInferenceRevisited2002}}]\label{thm:wolpert-wolf}
    For a unitary $U$, let $S$ be the Shannon entropy of the output distribution of $U$.
    Then for a Haar random unitary $U\sim\mu$ on $n$ qubits, the mean and the variance of $S$ are
    \begin{align}
        \Exp_{U\sim\mu}[S] = \frac{1}{\ln 2}\left(\psi(N+1) - \psi(2)\right) = n - \Theta(1),
    \end{align}
    and
    \begin{align}
        \Var_{U\sim\mu}[S] =
        \frac{1}{\ln^2 2} \left(\frac{2}{N+1}\left(\frac{\pi^2}{6}-1\right) - \sum_{j=N+1}^\infty \frac{1}{j^2}\right) = \Theta\left(\frac{1}{N}\right).
    \end{align}
\end{theorem}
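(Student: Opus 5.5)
The plan is to derive both formulas from the Dirichlet representation of \sec{preliminaries-beta}: for Haar random $U$ the vector $(P_x)_x$ is $\Dir(1^N)$, realized as $P_x=E_x/T$ with $E_x$ i.i.d.\ standard exponentials and $T=\sum_xE_x\sim\mathrm{Gamma}(N,1)$, and $T$ is independent of the normalized vector. We work in natural logarithms throughout and divide by $\ln 2$ (resp.\ $\ln^2 2$) at the end.

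\textbf{Step 1 (the mean).} By linearity, $\Exp[S\ln 2]=-N\,\Exp[P\ln P]$ with $P\sim\Beta(1,N-1)$. We compute this through the derivative of the moment formula: $\Exp[P^{s}]=B(1+s,N-1)/B(1,N-1)$, so
\begin{align}
\Exp[P\ln P]=\frac{\d}{\d s}\Exp[P^s]\Big|_{s=1}=\Exp[P]\big(\psi(2)-\psi(N+1)\big)=\frac{\psi(2)-\psi(N+1)}{N}.
\end{align}
Multiplying by $-N$ gives the first formula. Since $\psi(N+1)=\ln N+O(1/N)$ and $\psi(2)=1-\gamma$, the mean equals $n-(1-\gamma)/\ln 2+o(1)=n-\Theta(1)$.

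\textbf{Step 2 (the second moment).} Write $\Exp[(S\ln 2)^2]=N\,\Exp[P_1^2\ln^2P_1]+N(N-1)\,\Exp[P_1\ln P_1\,P_2\ln P_2]$. Both terms are mixed partial derivatives of Dirichlet moments. For the first, take the second derivative in $s$ at $s=2$ of $\Exp[P_1^s]$, which gives the factor $\Exp[P_1^2]\big[(\psi(3)-\psi(N+2))^2+\psi'(3)-\psi'(N+2)\big]$ with $\Exp[P_1^2]=2/(N(N+1))$. For the second, take $\partial_s\partial_t$ at $s=t=1$ of $\Exp[P_1^sP_2^t]=B(\alpha+m)/B(\alpha)$ with $\alpha=1^N$, which gives
\begin{align}
\frac{1}{N(N+1)}\Big[(\psi(2)-\psi(N+2))^2-\psi'(N+2)\Big].
\end{align}
Subtracting the square of the mean and simplifying with $\psi(N+2)=\psi(N+1)+1/(N+1)$, $\psi(3)=\psi(2)+1/2$, $\psi'(3)=\pi^2/6-1-1/4$ and $\psi'(N+2)=\sum_{j\geq N+2}j^{-2}$ should produce $\frac{2}{N+1}(\pi^2/6-1)-\sum_{j\geq N+1}j^{-2}$, up to the normalization $1/\ln^2 2$.

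\textbf{Main obstacle.} The only delicate part is this final algebraic cancellation, where the $\ln^2N$-order terms must cancel exactly. I would organize the computation by writing $\Psi:=\psi(N+1)-\psi(2)$ and expanding every term as a polynomial in $\Psi$ plus $O(1/N)$ corrections. The $\Psi^2$ coefficients then sum to $2N/(N+1)+N(N-1)/(N(N+1))-1=0$, and the same bookkeeping of the linear-in-$\Psi$ terms leaves only trigamma terms. Alternatively, since the statement is cited from \cite{wolpertEstimatingFunctionsProbability1995,nemenmanEntropyInferenceRevisited2002}, one may simply invoke their closed forms at $\beta=1$.

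For the asymptotic claims, $\sum_{j>N}j^{-2}=1/N+O(N^{-2})$ and $\frac{2}{N+1}(\pi^2/6-1)=\Theta(1/N)$. Since $2(\pi^2/6-1)\approx1.29>1$, the variance is positive and of order $\Theta(1/N)$.
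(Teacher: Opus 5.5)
Your proposal is correct. It differs from the paper in that the paper does not prove this statement at all: it imports the closed forms from Wolpert--Wolf and from Nemenman et al.\ (their Eqs.~(8)--(9) at $\beta=1$, which the paper notes restate the original formulas with small corrections). You instead rederive both from the Dirichlet moment formula $\Exp[P_1^sP_2^t]=\Gamma(1+s)\Gamma(1+t)\Gamma(N)/\Gamma(N+s+t)$ by differentiating in the exponents. The citation buys brevity. Your computation makes the statement self-contained and independently confirms the exact constants, which is worthwhile given the corrections issue.

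Two small points on rigor. The paper states the moment formula only for integer exponents, but you need it for real $s,t$ near $1$ and $2$. This follows from the same density computation, and differentiation under the integral is legitimate because $x^s|\ln x|^j$ is bounded on $[0,1]$ for $s$ bounded away from $0$. The exponential/Gamma representation you set up at the start is never used.

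There is one arithmetic slip in your bookkeeping. The $\Psi^2$ coefficient of the diagonal term is $N\cdot\frac{2}{N(N+1)}=\frac{2}{N+1}$, not $\frac{2N}{N+1}$; with the latter the sum would be $\frac{2N-2}{N+1}\neq 0$. With the correct value, $\frac{2}{N+1}+\frac{N-1}{N+1}-1=0$. The linear terms also cancel, since $\frac{2}{N+1}\cdot 2\bigl(\frac{1}{N+1}-\frac12\bigr)+\frac{N-1}{N+1}\cdot\frac{2}{N+1}=0$.

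What survives is not only trigamma terms but also the rational terms $\frac{1}{2(N+1)}-\frac{1}{(N+1)^2}$, and these are exactly what closes the identity. Using $\psi'(3)=\frac{\pi^2}{6}-\frac54$, in nats:
\[
\Var[S\ln 2]=\frac{2}{N+1}\Bigl(\psi'(3)+\frac14\Bigr)-\Bigl(\psi'(N+2)+\frac{1}{(N+1)^2}\Bigr)=\frac{2}{N+1}\Bigl(\frac{\pi^2}{6}-1\Bigr)-\sum_{j\geq N+1}\frac{1}{j^2}.
\]
Your mean computation and asymptotic conclusions then stand as written.
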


Combining \thm{shannon-entropy-mean}, \thm{shannon-entropy-variance}, and Chebyshev's inequality, we obtain the following concentration bound.

\begin{theorem}\label{thm:shannon-entropy-concentration}
    Under the assumptions of \thm{shannon-entropy-variance}, let $\eta=\eta(k,\epsilon)$ be the single-entry Kolmogorov distance of \thm{shannon-entropy-mean}, let $\delta=\delta(k,\epsilon)$ be as in \dfn{pair-distance}, and let $S_{\Haar}:=\Exp_{U\sim\mu}[S]$ be the expected Shannon entropy of a Haar random unitary.
    Then for every $t>0$,
    \begin{align}
        \Pr_{U\sim\mu(k,\epsilon)}\left[|S-S_{\Haar}|>t+\frac{4\sqrt{3\eta}}{e\ln 2}\right]
        \leq\frac{2\Var_{U\sim\mu}[S]+578\,\delta^{1/2}}{t^2}
        =O\left(\frac{1}{Nt^2}+\frac{\delta^{1/2}}{t^2}\right).
    \end{align}
\end{theorem}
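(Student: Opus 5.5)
The plan is to combine the mean bound of \thm{shannon-entropy-mean} with the standard deviation bound of \thm{shannon-entropy-variance} via Chebyshev's inequality. Write $m_k:=\Exp_{U\sim\mu(k,\epsilon)}[S]$ and $\Delta:=\frac{4\sqrt{3\eta}}{e\ln 2}$. By \thm{shannon-entropy-mean}, $|m_k-S_{\Haar}|\leq\Delta$. The triangle inequality then gives an inclusion of events: if $|S-S_{\Haar}|>t+\Delta$, then $|S-m_k|>t$.

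Applying Chebyshev's inequality under $\mu(k,\epsilon)$ yields
\begin{align}
\Pr_{U\sim\mu(k,\epsilon)}\big[|S-S_{\Haar}|>t+\Delta\big]\leq\frac{\Var_{U\sim\mu(k,\epsilon)}[S]}{t^2}.
\end{align}
It remains to bound the design variance by Haar quantities. By \thm{shannon-entropy-variance},
\begin{align}
\Sd_{\mu(k,\epsilon)}[S]\leq\Sd_\mu[S]+17\delta^{1/4}.
\end{align}
Squaring and using $(a+b)^2\leq 2a^2+2b^2$ gives
\begin{align}
\Var_{\mu(k,\epsilon)}[S]\leq 2\Var_\mu[S]+2\cdot 289\,\delta^{1/2}=2\Var_\mu[S]+578\,\delta^{1/2}.
\end{align}
This matches the stated constant.

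Finally, \thm{wolpert-wolf} gives $\Var_\mu[S]=\Theta(1/N)$, which produces the $O\big(\frac{1}{Nt^2}+\frac{\delta^{1/2}}{t^2}\big)$ form. The proof involves no real obstacle. The only point to check is that the hypotheses of \thm{shannon-entropy-variance} ($n,k\geq 2$ and $\epsilon\leq 1$) are assumed, and that \thm{shannon-entropy-mean} applies under them.
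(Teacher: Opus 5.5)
Your proposal is correct and follows essentially the same route as the paper's proof. Both use the mean bound of \thm{shannon-entropy-mean} to get the event inclusion, apply Chebyshev under $\mu(k,\epsilon)$, square the bound of \thm{shannon-entropy-variance} via $(a+b)^2\leq 2a^2+2b^2$ to get $2\Var_\mu[S]+578\,\delta^{1/2}$, and invoke \thm{wolpert-wolf} for the $\Theta(1/N)$ form.
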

\begin{proof}
    By \thm{shannon-entropy-mean}, $|\Exp_{\mu(k,\epsilon)}[S]-S_{\Haar}|\leq\frac{4\sqrt{3\eta}}{e\ln 2}$, so the event on the left-hand side is contained in $\{|S-\Exp_{\mu(k,\epsilon)}[S]|>t\}$, whose probability is at most $\Var_{\mu(k,\epsilon)}[S]/t^2$ by Chebyshev's inequality.
    By \thm{shannon-entropy-variance} and $(a+b)^2\leq 2a^2+2b^2$,
    \begin{align}
        \Var_{\mu(k,\epsilon)}[S]\leq\big(\Sd_{\mu}[S]+17\delta^{1/4}\big)^2\leq 2\Var_{\mu}[S]+578\,\delta^{1/2},
    \end{align}
    and $\Var_{\mu}[S]=\Theta(1/N)$ by \thm{wolpert-wolf}.
\end{proof}

For an $\epsilon$-approximate $k$-design without any further structure, \thm{approximate-design-upper-bound} and \lem{pair-projection} give $\eta=O(k^{-1/2}+\epsilon)$ and $\delta=O(k^{-1/2}+\epsilon)$, and by \thm{approximate-design-lower-bound} the former cannot be improved in general for $k\leq N$.
Hence the Shannon entropy of a single sample from the design lies within $t+O\big((k^{-1/2}+\epsilon)^{1/2}\big)$ of its Haar value $S_{\Haar}=n-\Theta(1)$ with probability at least $1-O\big((1/N+k^{-1/4}+\epsilon^{1/2})/t^2\big)$, which tends to one whenever $t\gg k^{-1/8}+\epsilon^{1/4}$ and $t\gg N^{-1/2}$.

Local invariance improves the first of these two parameters but not the second.
For an $m$-locally invariant design with $2^m\geq 2\pi k$ and $n\geq m+2$, \cor{locally-invariant-entry} gives $\eta=2^{-\Omega(k)}+O(\epsilon)$, so the shift $O(\sqrt\eta)$ of the center in \thm{shannon-entropy-concentration} becomes $2^{-\Omega(k)}+O(\epsilon^{1/2})$.
The pair-projection distance $\delta$, however, enters \thm{shannon-entropy-variance} through all $N^2$ pairs of outputs, and for a pair $x,x'$ that differ outside the invariant qubits, we do not know how to exploit local invariance beyond the moment condition, so \lem{pair-projection} remains the only bound on $\delta$ available to us.

\section*{AI statement}

The results of this paper were partly developed through conversations with Claude~Fable~5 and ChatGPT~5.6-Sol.
The manuscript was written by the author, drawing on information from AI-generated notes.
The technical contributions were checked and the manuscript proofread using Claude~Fable~5, Claude~Fable~5.1, and ChatGPT~5.6-Sol, and the paper was revised based on the issues found by these tools.
The author takes sole responsibility for the correctness of the paper.

\section*{Acknowledgments}

The author thanks Scott Aaronson for introducing the idea of adding a small Haar random unitary to make the sampling statistically close.
This generous sharing motivated the author to study the convergence of unitary designs and write this paper.
The author also thanks Nick Hunter-Jones for helpful discussions on unitary designs. 
This work is supported by the National Science and Technology Council in Taiwan, Grant~No.~114-2222-E-002-002-MY3, and by National Taiwan University, Grant~No.~NTU-CC-114L895503.

\bibliographystyle{plainurl}
\bibliography{ref}

\end{document}